\numberwithin{equation}{section}
\newtheorem{proposition}{Proposition}
\newtheorem{theorem}{Theorem}
\newtheorem{Definition}{Definition}
\newtheorem{lemma}{Lemma}
\newtheorem{remark}{Remark}
\newcommand{\rmnum}[1]{\romannumeral #1}
\newcommand{\Rmnum}[1]{\expandafter\@slowromancap\romannumeral #1@}
\newcommand\abstractname{Abstract}
  \newenvironment{abstract}{%
      \@beginparpenalty\@lowpenalty
      \begin{center}%
        \bfseries \abstractname
        \@endparpenalty\@M
      \end{center}}%
     {\par\vfil\null\endtitlepage}
  \newenvironment{abstract}{%
      \if@twocolumn
        \section*{\abstractname}%
      \else
        \small
        \begin{center}%
          {\bfseries \abstractname\vspace{-.5em}\vspace{\z@}}%
        \end{center}%
        \quotation
      \fi}
      {\if@twocolumn\else\endquotation\fi}
\def\cleardoublepage{\clearpage\if@twoside \ifodd\c@page\else
  \hbox{}
  \thispagestyle{empty}
  \newpage
  \if@twocolumn\hbox{}\newpage\fi\fi\fi}
\begin{document}
\pagestyle{plain}
\pagenumbering{Roman}
\begin{center}
\addcontentsline{toc}{chapter}{Abstract}
\Large Capacity and performance analysis for multi-user system under distributed opportunistic scheduling in a time dependent channel\\[0.5cm]

\large O. Shmuel, A. Cohen, and O. Gurewitz\\[1cm]
\end{center}

\begin{abstract}
    \boldmath
    Consider the problem of a multiple access channel with a large number of users. While several multi-user coding techniques exist, in practical scenarios, not all users can be scheduled simultaneously. Thus, a key problem is which users to schedule in a given time slot. Although the problem has been studied in various time-independent scenarios, capacity scaling laws and algorithms for time-dependent channels (e.g., Markov channels) remains relatively unexplored. The basic assumption that users' capacities are \textit{i.i.d.} random variables is no longer valid, and a more realistic approach should be taken. The channel observed by a user is time varying, hence its distribution changes in time, i.e., a channel with memory. Since intelligent user selection has significant advantages, analyzing the distribution of the maximal capacity seen by a user in the system can give a good approximations for the capacity scaling law.

    In this work, we consider a channel with two channel states, Good and Bad, where in each point in time a user is associated with one of them. The process of moving between states is modeled by a Markovian process (Gilbert-Elliott Channel). First, we derive the expected capacity under scheduling for this time dependent environment and show that its scaling law is $O(\sigma_g\sqrt{2\log K}+\mu_g)$, were $\sigma_g, \mu_g$ are the good channel parameters (assuming Gaussian capacity approximation, e.g., under MIMO) and $K$ is the number of users. Analysis uses Extreme Value Theory (EVT) along with finding the suitable normalizing constants.

    In addition, a distributed algorithm for this scenario is suggested along with it's channel capacity analysis. The algorithm is threshold-based and the rate for exceeding it is analyzed using Point Process Approximation (PPA). The analysis is done using the properties of the extremes of chain dependent sequences, which is proved in this study to converge to one of the extreme value distributions. The expected capacity, while imposing the discussed algorithm, scales as $O\left(e^{-1}(\sigma_g\sqrt{2\log K}+\mu_g)\right)$, hence, there is no loss in optimality due to the distributed algorithm. The foundation for extending this study to more channel states is set and can be easily done.

    Finally, we turn to performance analysis of such systems while assuming the users are not necessarily fully backlogged, focusing on the queueing problem and, especially, on the strong dependence between the queues. We adopt the celebrated model of Ephremides and Zhu to give new results on the convergence of the probability of collision to its average value (as the number of users grows), and hence for the ensuing system performance metrics, such as throughput and delay. We further utilize this finding to suggest a much simpler approximate model, which accurately describes the system behavior when the number of queues is large. The system performance as predicted by the approximate models shows excellent agreement with simulation results.

\end{abstract}

\tableofcontents
\newpage

\pagestyle{fancy}
\pagenumbering{arabic}
\thispagestyle{empty}
\chapter{Introduction}\label{intro}

    Numerous innovations, such as sophisticated coding techniques and advances in the endpoint equipment, have been suggested over recent years in order to cope with the high traffic demands over the last-hop wireless medium. Yet, the rapid growth in the number of users and devices which are connected via this last mile wireless access architecture requires efficient scheduling techniques as well, since in practical scenarios, not all of them can be scheduled together. Indeed, scheduling schemes have been widely studied is such multiuser environments, trying to exploit the multiuser diversity which is inherent in wireless medium, by scheduling users when their instantaneous channel quality is at its best. Utilizing multiple antennas both at the transmitter and the receiver has also been adopted by state of the art wireless standards such as 802.11n, 802.11ac and LTE advanced. Multiuser diversity strategies under such Multiple-Input Multiple-Output (MIMO) technologies have been exploited in many scheduling studies, e.g., in \cite{kim2005scheduling}, \cite{yoo2006optimality}, where Zero-Forcing Beamforming (ZFBF) was investigated and user selection was performed in order to avoid interference among users streams.\\

    Scheduling schemes can be categorized as \emph{centralized}, typically requiring high overhead in collecting channel reports from many users by the central entity (e.g., access point or base station),  in order to perform informed user selection, or \emph{distributed}, which are designed such that the users can schedule themselves. One such opportunistic distributed scheme which has shown excellent exploitation of multiuser diversity, lets the users transmit based on their instantaneous channel condition. Specifically, users may transmit only if their channel gain is above a given threshold. \\

    While the problem in its many forms, as described above, has been studied in various time-independent scenarios, capacity and algorithms for time-dependent channels (e.g., Markov channels) remained relatively unexplored. We consider the situation for the upstream traffic in a distributed manner with the assumption that a user experiences a time varying channel, therefore a time dependent model is suggested, which reflects the time varying channel distribution. Using this model and its properties we were able to achieve closed analytic expressions in different ways for the channel capacity while using two statistical tools, Extreme Value Theory (EVT) and Point Process Approximation (PPA). Those enabled us to examine the limit distribution of the maximal capacity value for the transmitting user, i.e. multi-user diversity. We also looked on a model that describes the segmentation of users according to the different channel states they exist in, and tried to obtain also the expression for the channel capacity. \\

    Obviously, a distributive paradigm in which users schedule themselves for transmission can result in collisions between users, which, in turn, affects channel utilization and quality of service (QoS). In this work we are interested in characterizing these important metrics. We study users' performance under realistic traffic patterns, in which users are not fully backlogged and the arrival process of each user is a random process. Specifically, we assumes a realistic model in which the arrival process of each user is modeled as a Poisson process. Each user transmits packets in a first-in-first-out (FIFO) manner, such that packets arriving to a user which is already busy with a pending packet transmission, wait for their turn to be transmitted. Accordingly, each user maintains a queue in which it stores packets waiting for transmission. We study performance metrics other than capacity, such as delay and buffer occupancy. We suggest a general model, consisting with a system of $K$ users in a random, multiple-access system who interact and affect each other. This theoretical problem was studied in the past and it has been shown that the interaction between the queues in such systems makes the analysis inherently difficult. Thus there is no clear understanding of the performance of such systems with large number of users (more than 3), due to the complexity in describing the strong interdependence between the user's queues. Only approximate models were developed in order to describe the system behaviour.\\

    This work is organized as follows. In Chapter \ref{related work}, literature survey is given which will review related work in the subject. In Chapter \ref{preliminaries} we will review some of the relevant materials for better understanding the problem and it's properties. When part of which will cover some results concerning EVT and PPA. Those are necessary to carry out the asymptotic analysis of the channel capacity. We will elaborate on results that are relative for time dependent processes (stationary sequences), which we shall use in our time dependent situation. In Chapter \ref{model description} we will describe our model of the system and basic assumptions for this work. As mentioned before, this work is divided to two principal areas of research, the first is given in chapter \ref{Capacity under time dependent channel}, there we will focus on the design and analysis of an algorithm which schedules users in the time-dependent regime in a distributed manner. The algorithm is threshold based, and takes the advantages of Multi-User diversity in its operation. The resulting channel capacity is given analytically, simulations and bounds are also given. In chapter \ref{Delay and QoS under the distributed algorithm}, the second area will be given, which will examine the performance analysis of the queues in our distributed multiple-access system. Models for the independent and dependent time scenarios is presented along with numerical results.

    \section{Main Contributions}\label{main contributions}
    \begin{itemize}
      \item In chapter \ref{Capacity under time dependent channel}, we start with giving a closed expression for the channel capacity scaling laws, in a centralized time-dependent environment, while the best user, with the highest channel quality, is chosen to utilize the channel. Then we present a distributed algorithm for the users scheduling mechanism, and the time dependent channel capacity scaling laws under this algorithm. During the analysis of the distributed algorithm, we show a proof for the extreme values convergence of our chain dependent process to one of the EVT distributions, while using the dependency condition $D$ and $D'$ for stationary sequences. And as part of this convergence, we derive the normalizing constants for the convergence to extreme value distribution, where the marginal distribution is a mixture of normal distributions. All of these analytical developments, set foundations for the extension of the channel model to more than two states, and finding the maximal value distribution for this case to derive the expected channel capacity.
      \item In chapter \ref{Delay and QoS under the distributed algorithm}, we present approximate models for the systems' queues behaviour. We first concentrate and review the model presented in \cite{ephremides1987delay}, while fitting it to our setting. Numerical results and interesting observations are presented for this elaborated model. Then, we present a simpler approximate model which in contrary to the former, is able to describe the systems' behaviour when the number of queues is large. Lastly, we suggest a time dependent queue model which relates to our time dependent channel and shows very good agrement with simulations results.

    \end{itemize}

\chapter{Related Work}\label{related work}

    Two of the most important results in network information theory are the capacity of the Gaussian multiple-access channel (MAC) and the capacity of the Gaussian broadcast channel (BC). The desire to maximally utilize both channel capacities is in the heart of many of the studies dealing with shared medium communication. As mentioned in the pervious chapter, scheduling and user selection is a crucial part in achieving these capacities when it comes to large population. In this study we are dealing with the Gaussian MAC model, nevertheless many ideas from works concerning the Gaussian BC are extremely relevant such as the utilization of multi-user diversity and the use of order statistic (EVT) for statistical deductions. In the following sections we review some of state of the art works concerning both types of channels and the attempts to obtain the optimal capacity by using different methods for scheduling and user selection along with variety of statistical analysis.\\

    The communication system considered in this work is single user MIMO - Multiple Input Multiple Output. Basically it means to use multiple antennas at both the transmitter and the receiver in order to increase data rates through multiplexing or to improve performance through diversity. Multiplexing is obtained by exploiting the structure of the channel gain matrix to obtain independent signalling paths that can be used to send independent data. \\
    By using this system we can achieve high data rate without increasing the total transmission power or bandwidth. Additionally the use of multiple antennas at both the transmitter and receiver provides significant increase in the capacity \cite{telatar1999capacity,goldsmith2003capacity,goldsmith2005wireless,foschini1998limits}. MIMO has been supported by several novel wireless standards such as 802.11ac and LTE Advanced.\\


    \section{Scheduling Algorithms}
    While considering a large multi-user networks, each user will experience different channel state which caused by various conditions. This diversity in users' channel state may be used to enhanced the ergodic capacity of the channel. In \cite{knopp1995information}, it was shown that in order to improve spectral efficiency, at any given time, only the user (or users) with the strongest signal should transmit over the entire bandwidth. Hence, we must choose intelligently the user with minimum complexity and overhead.
    The work of \cite{bettesh1998low} improved the work of \cite{knopp1995information} when considering delay aspects, when they treat the situation of starvation. Another work which strengthened the fact that appropriate user selection is essential, and in several cases can even achieve optimality with sub-optimal detectors is \cite{choi2010user}, which discussed various user selection methods in several MIMO detection schemes.

    \subsection{Broadcast Channel}
    Dirty Paper Coding (DPC) for Gaussian broadcast channel, have proven to be the optimal solution for achieving the channel capacity \cite{weingarten2006capacity}. In practice this technique is not used as it involve complex computations and it is not feasible for large population of users. Hence, with the use of multi-user diversity, simpler methods for achieving channel capacity ware derived. In \cite{yoo2006optimality}, the researchers investigate a technique to achieve the MU-MIMO capacity called Zero-Forcing Beamforming (ZFBF). The paper investigates if, for a large number of users, simpler schemes can achieve the same performance as dirty paper coding (DPC) technique.\\
    Beamforming (BF) is a strategy that can serve multiple users at a time, where each user stream is coded independently and multiplied by a beamforming weight vector for transmission through multiple antennas. By choosing the weight vectors mutual interference between different streams could be reduced or disappear, allowing spatial separation between users and as a result support multiple users simultaneously. In the paper the researchers have managed to present a low complexity algorithm for semi orthogonal user selection, which achieves a sum rate close to the optimal rate given by DPC. In addition they presented a fairly scheduling scheme that is asymptotically optimal and has good performance.\\
    As BF and DPC show similar performances of the sum rate capacity, a comparison was made with time sharing capacity in \cite{sharif2007comparison}, and the researchers' conclusion was that the sum rate of DPC and BF outperform of the time sharing if the transmitter is equipped with multiple antennas.
    In \cite{zakhour2011min}, asymptotically optimal scheme for downstream traffic for coordinated beamforming where multiple BSs jointly optimize their respective beamforming vectors was given.
    In \cite{kim2005scheduling}, another scheduling algorithm over a MIMO broadcast channel is introduced. Given a set of users, the scheduler selects more than one user and transmits independent data to them simultaneously by using zero-forcing beamforming. Taking computational complexity into account, a greedy method finding the best and most orthogonal channel vectors is proposed. Additionally, considering fairness among asymmetric users, the researchers also propose an asymptotically fair scheduling algorithm.
    The technique, opportunistic beamforming and nulling, was introduced in \cite{viswanath2002opportunistic}.It is used to induce fast and large fluctuations in the SINR of the users so that multiuser diversity can still be exploited . This technique amplifies the possible multiuser diversity gain while satisfying reasonable latency requirements.
    In \cite{caire2006mimo}, low complexity scheme are considered and compered in means of complexity and performance. It is shown that under quiet simple settings linear beamforming achieves the best tradeoff between performance and complexity.
    In \cite{pun2007opportunistic}, the authors suggested opportunistic beamforming and scheduling schemes with different linear combining techniques. The authors used EVT to derive throughput and scaling laws of the effective signal-to-interference ratios.\\

    In \cite{chen2006enhancing}, a scheduling scheme in TDMA system is suggested which uses the Wishart channel matrix \textbf{$HH^\dag$} eigenvalues analysis.
    Another work regarding user selection is \cite{airy2003spatially}. The paper considers the benefit of opportunistic channel state dependent scheduling which can provide significant performance gains for wireless networks by exploiting the independence of fading statistics across the user population.
    A combination of spatial multiplexing with multiuser diversity is given in \cite{choi2004downlink}. The researchers develop an optimal cross-layer scheduling mechanism that executes fair scheduling at the upper layer and optimal antenna assignment at the physical layer. For fair scheduling, they propose a framework that achieves the objective of maximum capacity and proportional fairness. For optimal antenna assignment, they consider the Hungarian algorithm that maximally utilizes the characteristics of MIMO systems by adopting the graph theoretical approach.\\
    More relevant works are \cite{primolevo2005channel,yoo2006finite}, and a survey on limits of MIMO BC can be found  in \cite{hassibi2007fundamental}.\\

    The characteristics of Quality of Service (QoS) parameters can also be used as appropriate scheduling decisions rules, in \cite{song2004joint}, a multi-user downlink data scheduler with quality-of-service (QoS) provisioning was introduced. In addition to algorithms that are aware of both channel and queue state information to achieve the maximum aggregate utility in the network. Another scheduler using queue-length-based scheduling is introduced in \cite{eryilmaz2005fair}, which implements a congestion control mechanism either at the base station or at the end users and can lead to fair resource allocation and queue-length stability.
    In \cite{swannack2004low} a practical low-complexity scheduling strategy was suggested by users' queue state, along with sub-optimal multiplexing strategy.\\

    Another relevant work which use order statistics is \cite{wang2007coverage}, where the authors suggested a subcarrier assignment algorithm for Orthogonal Frequency Division Multiplexing (OFDM) based systems, and used order statistics to derive an expression for the resulting link outage probability. Order statistics (or EVT) is also required when one wishes to determine the distribution of the selected users, rather than the a priori distribution.\\
    As we can see appropriate user selection is essential for achieving best utilization of the channel capacity.

    \subsection{Multiple-Access Channel}

    Centralized scheduling systems can use information given to them from the users in order to preform better scheduling decision and pass them along to the users.
    In \cite{liu2003opportunistic}, a framework for opportunistic scheduling over multiple wireless channels is introduced. Users are selected for data transmission at any time, though with different throughput and system resource requirements. The article introduced and analyzed Multi-channel Fair Scheduler (MFS), the first wireless scheduling algorithm that provide long term deterministic (MFS-D) and probabilistic (MFS-P) fairness guarantees respectively over multiple wireless channels.
    In \cite{zhang2008low}, the writers present a new low complexity user selection algorithm based on the adaptive Markov chain Monte Carlo (AMCMC) optimization technique to maximize the sum capacity in an uplink MIMO multiuser system. With the proposed algorithm, 99\% of the optimal capacity obtained by exhaustive search method can be achieve.\\
    In \cite{bandemer2007capacity}, a novel scheduling metric was derived based on ergodic sum capacity. A per-user receiver-side correlated channel model was combined with a second-order Taylor series of the ergodic sum capacity as a function of SNR. Thus, an ergodic capacity approximation can be written as an explicit function of the channel model parameters, i.e., the receiver-side channel correlation matrices.\\
    This approximation can directly serve as a measure of spatial compatibility among a given set of users, i.e., as a scheduling metric. The scheduling metric presented in this work differs from previously proposed metrics mainly in the fact that no instantaneous channel state information is required. Long-term channel statistics are sufficient, implying scheduling robustness and reduced effort in channel sounding.
    In \cite{lee2009novel}, an effective MIMO precoding scheme to handle the intercell interference for uplink communication in a multicell environment is introduced.
    The precoding scheme allows to select the number of transmit streams adaptively to the surrounding environments.\\

    \subsubsection{Distributed Scheduling}

    In contrast to the centralized approach, distributed scheduling system decisions is preformed autonomously relying solely on information available at the user. The basic assumption is the the users has all information they need in order to preserve the integrity of the algorithm and updates from the BS are negligible. \\
    In \cite{qin2003exploiting,qin2006distributed}, a decentralized MAC protocol for Orthogonal Frequency Division Multiple Access (OFDMA) channels is given, it is a threshold based algorithm and each user estimates his channel gain and compares it to the threshold, in case of exceedance the user can transmit. The optimal threshold indeed maximizes the number of slots in which only one user exceeds it. This distributed scheme achieves $1/e$ of the capacity which can be achieved by scheduling the strongest user. \cite{bai2006opportunistic} extended the distributed threshold scheme to multi-channel setup, where each user competes on $m$ channels and examined proportional fairness in heterogeneous systems.
    In \cite{qin2008distributed}, the authors used a similar approach for power allocation in the multi-channel setup, and suggested an algorithm that asymptotically achieves the optimal water filling solution. To reduce channel contention, \cite{qin2003exploiting}\cite{qin2004opportunistic} introduced a splitting algorithm which resolves collision by allocating several mini-slots devoted to finding the best user. Assuming all users are equipped with a Collision Detection (CD) mechanism, the authors also analyzed the suggested protocol for users not fully backlogged, where the packets randomly arrive with a total arrival rate $\lambda$ and for channels with memory. \cite{to2010exploiting} used a similar splitting approach to exploit idle channels in a multichannel setup, and showed improvement of 63\% compared to the original scheme in \cite{qin2003exploiting}. In \cite{garcia2012distributed}, the authors perform a joint optimization of both the thresholds and the access probabilities, which provides a fair allocation that achieves a good tradeoff between total throughput and fairness. \cite{kampeas2012capacity} extended the suggested scheme in \cite{qin2003exploiting} and gave a novel way to analyse such threshold-based algorithms, the Point Process Approximation (PPA), and apply it to non identically distributed scenarios as well.

    \textbf{Extreme Value Theorem }- one of the ways to analyse such threshold exceedances is to obtain the distribution of the maximum channel gain. A convenient tool for analyzing this kind of extreme distributions is the Extreme Value Theorem (EVT). As the number of users rises ($K\rightarrow\infty$) the distribution function of the maximal capacity degenerates, therefore EVT helps to obtain the limit distribution for large number of \textit{i.i.d.} users.
    Prior work which is using this tool and refer to these maximal distributions can be found in the article of \textit{Choi and Andrews (2008)}\cite{choi2008capacity}, where we can find the scaling laws of maximal base station scheduling via Extreme Value Theory (EVT). By scheduling the user with the strongest channel among $K$ users one can gain a factor of $O(\sqrt{2\log K})$ in the expected capacity compared to arbitrary (random, Round-Robin) scheduling.
    Works mentioned in this chapter such as \cite{kampeas2012capacity,pun2007opportunistic,wang2007coverage}, also used EVT for statistical deductions. EVT has been proved useful for analyzing reliable broadcasting as well \cite{xiao2010extreme,xiao2012reliable}. In this work EVT is used in the context of stationary process since this work, in contrary to the works presented above, assume the realistic assumption of time dependency.

    \section{Time Dependent Channel}
    Since this study extends the problem to the case which the users experience a time varying channel, the system model should be described as a system with memory which has different channel states. In the article of \textit{Gillbert (1960)} \cite{gilbert1960}, we can find a model for burst-noise binary channel, where the channel uses a Markov chain in order to illustrate a case of error free with a good state and a case of probability for error in transmission with a bad state. The channel capacity for this model is derived. Proceeding to the article of \textit{Gillbert}, in the article of \textit{elliott (1963)} \cite{elliott1963}, a means for estimating error rates for binary block codes was introduced. In \cite{zhang1999finite}, the researchers form a finite-state Markov channel model to represent Rayleigh fading channels using the expended model of \textit{Gillbert-elliott}. They develop and analyze a methodology to partition the received signal-to-noise ratio (SNR) into a finite number of states according to the time duration of each state. Each state corresponds to different channel quality indicated by bit-error rate (BER). The number of states and SNR partitions are determined by the fading speed of the channel. In \cite{biglieri1998fading} a review on the development of fading channel (i.e. channels with states) concerning statistical models, capacity in terms of single user and multiuser transmission and more is presented. More works which are highly relevant are \cite{tse1998multiaccess,hanly1998multiaccess} dealing with the capacity of a time dependent channel and the optimal power allocation under average power constraints.

    \section{Limit laws in time dependent channels}

    The study on the limit laws for \textit{i.i.d.} random variables was fully characterized in many works starting with \cite{gnedenko1943distribution}, therefore studies regarding stationary and non stationary sequences and their extreme laws became the natural extension for this problems. One interesting case is the chain dependent process which aligns with our desire to model the users' channel as described above. Chain dependent process was first introduced by Janssen (1969, \cite{janssen1969processus}) and it is known as the $J-X$ process, where we have the bivariate random variables sequence $(J_n,X_n), n\geqq0$. The sequence $X_n$ depends on the state which the process $J_n$ exist in. Exact formulation for the problem will be given later. We will review some relevant work in this subject. In \cite{resnick1970limit} we can find characterization for the limiting behavior of the maximal value distribution of the sequence $X_n$ while conditioning on $J_n$. With suitable normalization constants, the exact solution to the distribution of the maximum value is given. This solution is computationally difficult and was formed by using the theory of semi-Markov processes, which employs similar formulation for matrices as formed by the $J-X$ process definition, in order to approximate the limiting distribution laws. In addition they establish basic properties of the normalizing constants, but did not give a specific way to derive them. In \cite{resnick1973almost} and in its preceding work \cite{resnick1972stability} necessary and sufficient conditions are found for the distribution of the maximal value to be relatively and almost surely stable. One important work is \cite{denzel1975limit} which stated that the limiting behaviour of the maximal value is not dependent on the initial distribution of the underlying Markov chain, thus the sequence $J_n$ may start with the stationary distribution of the chain which make $X_n$ also stationary process. In addition the strong mixing property of chain dependent sequences was also given leading eventually to the fact that under fairly general conditions the limiting behavior of the maxima for chain dependent processes is identical to that for the \textit{i.i.d.} process with the same marginal distribution. In \cite{turkman1992limit} and in its preceding paper \cite{turkman1983limit} dependency conditions of the $X_n$ sequence known as $D$ and $D'$ which were given in \cite{leadbetter1974extreme} were used while achieving the results for the limiting distributions for chain dependent process. Properties for a suitable normalizing constants were also given under general conditions, in order for the limiting behaviour will converge to one of the limiting distributions introduced in \cite{gnedenko1943distribution}. Rate of convergence for these approximations for the limiting convergence can be found in \cite{mccormick2001rates}. It is important to mention that all the studies above showed convergence to one of the three limiting distribution given in \cite{gnedenko1943distribution} and in some we have specific characterizations for achieving the normalizing constants. In this work the channel model is based on chain dependent process and results regarding the maximal value behavior is relevant for understanding the channel capacity in our scenario, which make the studies above extremely relevant. Similarly to \cite{turkman1983limit} we used the long range and local dependence condition $D$ and $D'$ respectively, for finding the extreme laws for our sequence. However the assumptions made in \cite{turkman1983limit} regarding the marginal distributions are more strict comparing to the ones we did resulting with different normalizing constant for the convergence.

    \section{Queueing system}
    The analysis of the performance evaluation of our random multiple-access system is an extremely interesting and important theoretical problem. The interaction between the queues in such systems makes the analysis to be inherently difficult \cite{fayolle1979two}, and the importance of such analysis was recognized in \cite{kleinrock1981interfering}. As will be described in the sequel, we are dealing with a slotted time system and therefore the most notable system is the slotted aloha. This system was widely studied in the past with two major aspects, the stability region of the system and its QoS properties and performance. Where the first aspect relate to the problem of finding the values of the arrival rates so the system will be stable, and the second aspect relate to the analysis of such system in terms of delay, mean queue size and more. Here we will not engage the stability problem, since existing results are known for the cases of $K=2$ \cite{rao1988stability,tsybakov1979ergodicity } and $K=3$ \cite{szpankowski1994stability}, and also sufficient bounds on the arrival rates can be found in \cite{fayolle1977stability,rao1988stability,luo1999stability} for $K>3$ in order to maintain stability. In fact, we will investigate the performance of such system which have appeared in many works.

    Even toady there is no clear understanding of the performance of such systems with large number of users (more than 3), due to the complexity in describing the strong interdependence between the user's queues.
    Only approximate models were developed in order to describe the system behaviour. In \cite{saadawi1981analysis}, an iterative approximation model was suggested while using decoupled Markov chains. A refined model was presented in \cite{ephremides1987delay}. In \cite{sidi1983two}, the mean delay was given for the case of two identical users, as well as an approximation for a larger population. Extension for slotted CSMA/CD model can be found in \cite{takagi1985mean}.

\chapter{Preliminaries}\label{preliminaries}

    This chapter offers the required technical background which will be used throughout this work. Starting with the Gaussian channel through the definition of a MIMO system and the its capacity under Gaussian channels, statistical tools for analysing independent and dependent sequences of random variables and the review of results regarding sequences which depends on an underlying Markov chain.\\

    \section{The Gaussian Channel}\label{preliminaries-Gaussian channel}

    In information theory the communication between two points means a physical process which the information transfers through it, therefore it is subject to uncontrollable ambient noise and imperfections of the physical signalling process itself. In \cite{Cover:1991} it is proven that the channel capacity is the tightest upper bound on the amount of information that can be reliably transmitted over a communication channel, while taking into account the interferences.\\
    Our communication system is MIMO, and between each pair of antennas the channel model is a Gaussian channel which is defined as follows:\\
    \begin{equation}
     Y=X+Z
    \end{equation}
    Where Y is the output of the channel with input X and with an independent noise Z with variance N.  Without any conditions the capacity of this channel may be infinite Since X can take any value. If the variance of the noise is zero, then the receiver receives the transmitted symbol perfectly, whereas if the variance of the noise is non-zero we can choose an infinite subset of inputs so that they are distinguishable at the output with arbitrarily small probability of error. Such a scheme has an infinity capacity as well.\\
    The most common limitation on the input is an energy or power constraint:
    \begin{equation}
      E[X^2]\leq P
    \end{equation}
    And so the capacity of such channel is:
    \begin{equation}
      C=\max_{E[X^2]\leq P}{\mathcal{I}(X;Y)}=\frac{1}{2}\log{(1+\frac{P}{N})}
    \end{equation}
    The maximum is attained when $ X\sim \mathcal{N}(0,P).$\\

    \section{Capacity of MIMO}\label{preliminaries-capacity of MIMO}
    Consider a MIMO system with $t$ transmit antennas and $r$ receives antennas  as illustrated in Figure \ref{fig-MIMOchannel}. The linear link model between the transmit and receive antennas can be represented in the vector notation as
    \begin{equation}\label{equ-linaer model of MIMO}
      \textit{\textbf{y}=H\textbf{x}+\textbf{n}}
    \end{equation}
    where $\textit{\textbf{y}}$ is the $r\times1$ received vector from the $t\times1$ input vector $\textit{\textbf{x}}$. $\textit{\textbf{n}}$ is $r\times1$ zero mean complex Gaussian noise vector with independent, equal variance real and imaginary parts. And $\textit{H}$ is a $r\times t$ normalized channel matrix,
    \begin{equation}\label{equ-Channel matrix}
      H =
                 \begin{pmatrix}
                  h_{1,1} & h_{1,2} & \cdots & h_{1,t} \\
                  h_{2,1} & h_{2,2} & \cdots & h_{2,t} \\
                  \vdots  & \vdots  & \ddots & \vdots  \\
                  h_{e,1} & h_{r,2} & \cdots & h_{r,t}
                 \end{pmatrix}
    \end{equation}
    Each element $h_{ij}$ n represents the complex gains between the $n^{th}$ transmit and $m^{th}$ receive antennas.
    \begin{figure}[h]
    \centering
    \includegraphics[width=4in]{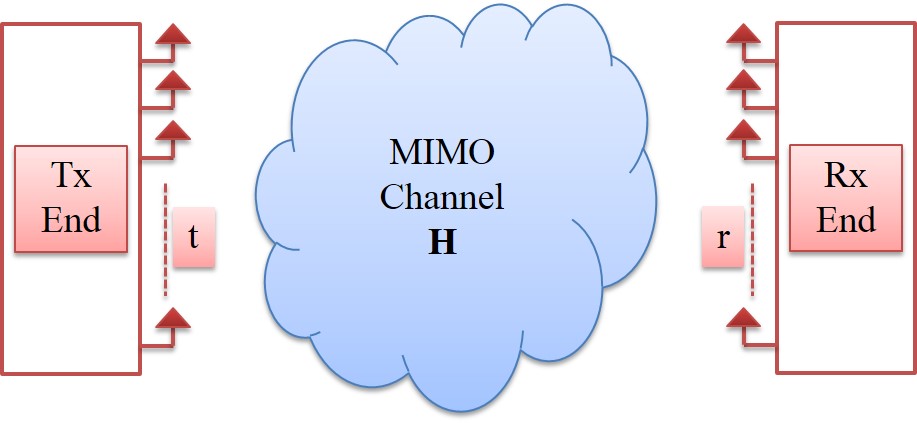}
    \caption[MIMO channel]{MIMO channel}
    \label{fig-MIMOchannel}
    \end{figure}

    \subsection{multi-antenna Gaussian channels}\label{preliminaries-capacity of MIMO-Gaussian channels}
    After describing the system and the capacity of a single Gaussian channel we would like to explore the capacity of the system. In \cite{telatar1999capacity} the researcher investigates the use of multiple antennas at the receiver and at the transmitter for single user communication over the additive Gaussian channel, with or without fading, and calculates the capacity of such a system.
    The linear model of the system is as given in \eqref{equ-linaer model of MIMO}, where now $\textit{\textbf{y}}\in\mathbb{C}^r$, $\textit{\textbf{x}}\in\mathbb{C}^r$. The transmitter is constrained in its total power to P,
    \begin{equation}
      E[\textit{\textbf{x}}^\dagger\textit{\textbf{x}}]\leq P.
    \end{equation}
    The matrix \textit{H} is a random matrix when all its entries are \textit{i.i.d.} Gaussian variables with zero-mean independent real and imaginary parts, each with variance 1/2. This matrix models the Rayleigh fading environment of the channel. In \cite{telatar1999capacity}, the researcher focus in the scenario which the matrix \textit{H}, is a random matrix, but in order to expose the techniques used to calculate the channel capacity he consider the case which \textit{H} is deterministic.\\
    Using the singular value decomposition and consider \textit{H} is deterministic
    we will derive the capacity of this cannel.\\
    This is a vector Gaussian channel, We can compute the capacity by decomposing the vector channel into a set of parallel, independent scalar Gaussian sub-channels using SVD.\\
    The matrix \textit{H} can be written as $\textit{H}=\textit{UDV}^\dag$ where
    $\textit{U}\in\mathbb{C}^{r\times r}$ and $\textit{V}\in\mathbb{C}^{t\times t}$ are unitary, and $\textit{D}\in\mathbb{R}^{r\times t}$ is non-negative and diagonal (the diagonal elements are the ordered singular values of the matrix \textit{H}).\\
    Therefore (4) can be written like this:
    \begin{equation}
      \textit{\textbf{y}}=\textit{UDV}^\dag\textit{\textbf{x}+\textbf{n}}
    \end{equation}
    because \textit{U} and \textit{v} are invertible we can write:
    \begin{equation}
      \textit{U}^\dag \textit{\textbf{y}}=\textit{DV}^\dag \textit{\textbf{x}}+\textit{U}^\dag \textit{\textbf{n}}
    \end{equation}
    and we mark it as: $\textit{\textbf{\~{y}}=D\textbf{\~{x}}+\textbf{\~{n}}}$, where $\textit{\textbf{\~{n}}}$ has the same distribution as $\textit{\textbf{{n}}}$.\\
    Since \textit{H} is of rank at most $min\{r,t\}$, at most $min\{r,t\}$ of the singular values of it are non zero. In addition $\|\textit{\textbf{\~x}}\|^2=\|\textit{\textbf{x}}\|^2$, thus the energy is preserved and we have equivalent representation as a parallel Gaussian channel:
    \begin{equation}\label{}
       \textit{\~y}_{i}=\lambda_i\textit{\~x}_{i}+\textit{\~n}_{i},\hspace*{4 mm} 1\leq i\leq\min{\{r,t\}}
    \end{equation}
    So the capacity can be calculated as a system of parallel independent channels:
    \begin{equation}
       C=\sum_i\log{(1+\frac{{P_{i}}^*{\lambda_{i}}^2}{N_0})}
    \end{equation}
    where ${P_{1}}^*,...,P_{min\{r,t\}}^* $ are the waterfilling power allocations: ${P_{i}}^*=(\mu-\frac{N_0}{\lambda_{i}}^2)^+ $,
    with $\mu$ chosen to satisfy the total power constraint $\Sigma_{i} {P_{i}}^*=P$.\\

    In the case which \textit{H} is random, it is shown that the capacity of the channel is achieved when \textit{\textbf{x}} is a circularly symmetric complex Gaussian with zero-mean and covariance $\frac{P}{t}\textit{I}_t$.
    The capacity is given by
    \begin{equation}
      E[\log{(det(\textit{I}_r+\frac{P}{t}HH^\dag))}],
    \end{equation}
    as t grows the expression $\frac{1}{t}HH^\dag \rightarrow \textit{I}_r $ almost surely (law of large numbers) and therefore the capacity is
    \begin{equation}
      r\log{(1+P)}.
    \end{equation}
    The way to derive the capacity is by maximizing the average mutual information $\mathcal{I}(\textit{\textbf{x}};\textit{\textbf{y}})$.\\

    In \cite{goldsmith2003capacity} we can find an overview of the extensive recent results on the capacity of MIMO channels for both single-user and multiuser systems. The great capacity gains predicted for such systems can be realized in some cases, but realistic assumptions about channel knowledge and the underlying channel model can significantly mitigate these gains.

    \subsection{A Gaussian Approximation to the MIMO Capacity}\label{preliminaries-capacity of MIMO-Gaussian Approximation}
    Asymptotic capacity analysis of MIMO channels with arbitrary large number of antennas is widely used to overcome mathematical complexity and often allows for significant insight. Several works \cite{girko1998refinement,smith2002gaussian,chiani2003capacity} investigate the accuracy of a Gaussian approximation by deriving the variance and mean of the capacity of a MIMO system resulting with the fact that, distribution of the single-user-MIMO capacity follows the Gaussian distribution with mean that grows linearly with $\min(r, t)$, and variance which is mainly in influenced by the power constraint $P$.\\
    According to this, in this work each user capacity is approximated by a Gaussian distribution, i.e., for each user $i$ , $C_i\sim N(\mu_i,\sigma_i)$.

    \section{Extreme Value Theory}\label{preliminaries-EVT}

    As the number of users raises ($K\rightarrow\infty$) the distribution function of the maximal capacity is degenerating, therefore a need to obtain the limit distribution for asymptotic analysis is required.\\
    Extreme Value Theory (EVT) is a common tool used to approximate such limit distributions, for large number of random variables.
    In this sub-section, we briefly review some EVT important results that are used in the asymptotic analysis of the capacity gain(\cite{EVT:Springer1983},\cite{EVT:Springer2001}).\\
    We first review results for \textit{i.i.d.} sequences. We then elaborate on extremes results for stationary sequences which will be used when analysing the time dependent capacity sequence each user experience.\\

    \begin{theorem} (\cite{EVT:Springer1983},\cite{EVT:Springer2001}). Let $(X_1,...,X_n)$ be a sequence of \textit{i.i.d.} random variables with distribution $F(x)$, and let $M_n=\max{(X_1,...,X_n)}$. If there exists a sequence of normalizing constants $a_n>0$ and $b_n$ such that as $n\rightarrow\infty$,
        \begin{equation}
            P\{a_n(M_n-b_n)\leq x\}\rightarrow G(x)
        \end{equation}
        for some non-degenerate distribution G, then G belongs to the type of one of the following three distributions:
                \begin{equation*}
                  \begin{aligned}
                  \text{Type \Rmnum{1}:  }&G(x)=e^{-e^{-x}} \quad \quad -\infty<x<\infty    \\
                  \text{Type \Rmnum{2}:  }&G(x) = \left\{
                                              \begin{array}{l l l}
                                                    0 & & x\leq0 \\
                                                    e^{-x^{-\alpha}} & \quad \text{for some $\alpha$}>0 &x>0
                                              \end{array} \right.\\
                  \text{Type \Rmnum{3}:  }&G(x)=   \left\{
                                              \begin{array}{l l l}
                                                    e^{-(-x)^{\alpha}} & \quad \text{for some $\alpha$}>0 & x\leq0\\
                                                    1 & & x>0
                                              \end{array} \right.\\
                  \end{aligned}
                \end{equation*}

        The three types of distribution limits can be formulated to a generalized extreme value (GEV) distribution type
        \begin{equation}
            G(z)=\exp\{-(1+\xi(\frac{z-\mu}{\sigma}))^{-1/\xi}\}
        \end{equation}
        and we say that $F(x)$ is in the domain of attraction of G, where $\xi,\mu,\sigma$ are the shape, location and scale parameters respectively.\\

        \vspace*{2 mm}In order to determine the shape parameter $\xi$ by the ancestor distribution $F(x)$, let us look on $h(x)$ the following reciprocal hazard function
        \begin{equation*}
          h(x)=\frac{1-F(x)}{f(x)}\ \text{for }x_F\leq x\leq x^F,
        \end{equation*}
        where $x_F:=inf\{x:F(x)>0\}$ and $x^F:=sup\{x:F(x)<1\}$ are the lower and upper
        endpoints of the ancestor distribution, respectively. Then the shape parameter $\xi$ is obtained as the following limit,
        \begin{equation*}
          \frac{d}{dx}h(x)\overset{(x\rightarrow x^F)}{\rightarrow}\xi.
        \end{equation*}
    \end{theorem}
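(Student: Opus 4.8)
\emph{Proof proposal.} The statement bundles three claims: the extremal types theorem (any non-degenerate limit $G$ must be of one of the three listed types), the unification of these into the GEV family, and the von Mises--type recipe that reads $\xi$ off the tail of $F$ through the reciprocal hazard function. I would treat them in that order. The first move is to show that a non-degenerate limit $G$ is necessarily \emph{max-stable}: for every integer $k\ge1$ there exist $\alpha_k>0$ and $\beta_k$ with $G^k(\alpha_k x+\beta_k)=G(x)$. This comes from evaluating $M_{nk}$ two ways --- directly, $P\{a_{nk}(M_{nk}-b_{nk})\le x\}\to G(x)$, and as the maximum of $k$ independent blocks each distributed like $M_n$, $P\{a_n(M_{nk}-b_n)\le x\}=\big(F^n(b_n+x/a_n)\big)^k\to G^k(x)$ --- and then invoking Khintchine's convergence-to-types lemma (two affine rescalings of the same sequence that both converge to non-degenerate limits differ only by an affine change of variable, and the rescaling constants converge), which forces the stated relation between $G$ and $G^k$.

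Next I would classify the max-stable laws. Writing $\phi=-\log G$, a non-increasing right-continuous function on the support of $G$, max-stability becomes $\phi(\alpha_k x+\beta_k)=k\,\phi(x)$ for all $k$; passing from integer to rational multipliers and using monotonicity to supply the regularity needed to exclude pathological Hamel solutions, this is a Cauchy-type functional equation whose monotone solutions are classical. The case $\alpha_k\equiv1$ yields $\phi(x)=c\,e^{-(x-b)/s}$, i.e.\ Type~I; the case $\alpha_k\not\equiv1$ splits according to whether the common fixed point $\beta_k/(1-\alpha_k)$ is the lower or the upper endpoint of the support of $G$, giving a power law on a half-line bounded below (Type~II) or above (Type~III). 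Substituting $\xi=0$, $\xi=1/\alpha>0$, $\xi=-1/\alpha<0$ into $e^{-e^{-x}}$, $e^{-x^{-\alpha}}$, $e^{-(-x)^{\alpha}}$ respectively, and absorbing the resulting affine changes of argument into a location $\mu$ and scale $\sigma$, collapses all three into the single GEV form $G(z)=\exp\{-(1+\xi(z-\mu)/\sigma)^{-1/\xi}\}$, with the $\xi=0$ line read as the limit $\exp\{-e^{-(z-\mu)/\sigma}\}$.

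For the last assertion I would verify the von Mises sufficient condition directly. Since $-\log(1-F(x))=\int^{x}\frac{f(t)}{1-F(t)}\,dt=\int^{x}\frac{dt}{h(t)}$, I would pick $b_n$ with $1-F(b_n)=1/n$ and set $a_n=1/h(b_n)$; then
\[
  \log\frac{1-F(b_n)}{1-F(b_n+x/a_n)}=\int_{b_n}^{b_n+x/a_n}\frac{dt}{h(t)}.
\]
The hypothesis $h'(x)\to\xi$ as $x\to x^F$ gives $h(t)=h(b_n)+\xi(t-b_n)+o(t-b_n)$ uniformly on this shrinking interval, so after the substitution $t=b_n+u/a_n$ the integral tends to $\int_0^x\frac{du}{1+\xi u}$, which is $\tfrac1\xi\log(1+\xi x)$ for $\xi\ne0$ and $x$ for $\xi=0$. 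Hence $n\,(1-F(b_n+x/a_n))\to(1+\xi x)^{-1/\xi}$ and therefore $F^n(b_n+x/a_n)\to\exp\{-(1+\xi x)^{-1/\xi}\}$, so the attracting GEV law has shape parameter exactly $\xi$, with normalising constants $a_n,b_n$ as above.

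I expect the genuinely delicate step to be the classification of the max-stable laws: extracting the three explicit forms requires both the convergence-to-types lemma and a careful analysis of the functional equation --- using monotonicity/measurability to rule out non-measurable solutions and handling the trichotomy on $\alpha_k$. The max-stability reduction and the hazard-function computation are comparatively routine, the only subtlety in the latter being the uniform control of $h'$ near the endpoint $x^F$ (a Potter-bound-type estimate) needed to pass the limit inside the integral.
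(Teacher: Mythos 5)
The paper offers no proof of this statement---it is quoted as background from the cited textbooks---so there is nothing internal to compare against; your sketch is precisely the classical Fisher--Tippett--Gnedenko argument (max-stability via Khintchine's convergence-to-types lemma, classification of the max-stable laws through the functional equation for $-\log G$, and the von Mises reciprocal-hazard computation) found in those references, and it is correct in outline. One small imprecision: when $x^F=\infty$ and $\xi>0$ the interval $[b_n,\,b_n+x/a_n]$ is not shrinking (its length $x\,h(b_n)$ may grow), but the argument survives because the whole interval still drifts toward $x^F$, where $h'$ is uniformly close to $\xi$.
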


    We have considering convergence of probabilities of the form $P\{a_n(M_n-b_n)\leq x\}$, which may be rewritten as $P\{M_n\leq u_n\}$ where $u_n=u_n(x)=x/a_n+b_n$. So the theorem from above can be formulated in a different way:
    \begin{theorem}\label{thm-EVT convergence different formulation } (\cite[Theorem 1.5.1]{EVT:Springer1983})
        Let $\{X_n\}$ be an \textit{i.i.d.} sequence. Let $0\leq \uptau \leq \infty$ and suppose that ${u_n}$ is a sequence of real numbers such that
        \begin{equation*}
          n(1-F(u_n))\rightarrow \uptau \ \ \text{as } \ n\rightarrow \infty.
        \end{equation*}
        Then
        \begin{equation*}
          P\{M_n\leq\ u_n\}\rightarrow e^{-\uptau} \ \ \text{as } \ n\rightarrow \infty.
        \end{equation*}
    \end{theorem}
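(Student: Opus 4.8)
The plan is to exploit the i.i.d.\ structure to reduce the statement to an elementary limit of the form $(1-c_n/n)^n$. First I would write, using independence and identical distribution,
\begin{equation*}
  P\{M_n\leq u_n\}=\prod_{i=1}^{n}P\{X_i\leq u_n\}=F(u_n)^n=\bigl(1-(1-F(u_n))\bigr)^n .
\end{equation*}
Introducing $\uptau_n:=n(1-F(u_n))$, which by hypothesis satisfies $\uptau_n\to\uptau$, this becomes $P\{M_n\leq u_n\}=(1-\uptau_n/n)^n$, and note that $0\leq\uptau_n\leq n$ automatically since $F(u_n)\in[0,1]$.

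Next I would split into the two cases. For $0\leq\uptau<\infty$: the sequence $\uptau_n$ is bounded (being convergent), so for $n$ large enough $\uptau_n/n<1/2$, and taking logarithms gives $n\log(1-\uptau_n/n)=-\uptau_n+O(\uptau_n^2/n)\to-\uptau$, whence $P\{M_n\leq u_n\}\to e^{-\uptau}$; equivalently one may invoke the standard fact that $(1+a_n/n)^n\to e^{a}$ whenever $a_n\to a$, applied with $a_n=-\uptau_n$. For $\uptau=\infty$: using the elementary inequality $1-x\leq e^{-x}$ valid for all real $x$, one gets $0\leq(1-\uptau_n/n)^n\leq e^{-\uptau_n}\to 0=e^{-\infty}$, so the conclusion holds here as well.

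The argument is essentially elementary; the only point requiring a little care is the uniform control of the remainder in the logarithm expansion (equivalently, justifying passage to the limit in $(1+a_n/n)^n$ when $a_n$ varies rather than being constant), which is handled by the boundedness of $\uptau_n$ in the finite case and by the monotone inequality $1-x\leq e^{-x}$ in the infinite case. I would also remark that the hypothesis automatically forces $F(u_n)\to 1$, consistent with $M_n$ concentrating near the right endpoint $x^F$ of $F$, and that the converse implication — recovering $n(1-F(u_n))\to\uptau$ from $P\{M_n\leq u_n\}\to e^{-\uptau}$ — follows by reading the same computation backwards, using the continuity and strict monotonicity of $t\mapsto e^{-t}$.
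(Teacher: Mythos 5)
Your proof is correct and is the standard argument for this classical result; the paper itself states this theorem as a cited preliminary (Theorem 1.5.1 of the Leadbetter--Lindgren--Rootz\'en reference) without reproducing a proof, and your reduction of $P\{M_n\leq u_n\}=F(u_n)^n=(1-\uptau_n/n)^n$ followed by the limit $(1+a_n/n)^n\to e^{a}$ for $a_n\to a$, with the inequality $1-x\leq e^{-x}$ covering $\uptau=\infty$, is exactly the textbook route. The only caveat is your closing remark: the converse direction is not part of the statement as given here, and ``reading the computation backwards'' glosses over the case $e^{-\uptau}\in\{0,1\}$, but this does not affect the correctness of what was actually asked.
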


    For the more interesting and relevant case of normal distribution, we have the following\\
    \begin{theorem}\label{thm-EVT normal sequnce converge to Gumbel} (\cite[Theorem 1.5.3]{EVT:Springer1983})
        If $\{X_n\}$ are an \textit{i.i.d.} standard normal sequence of random variables, then the asymptotic distribution of $M_n=\max{(X_1,...,X_n)}$ is a Gumbel distribution.\\
        Specifically
        \begin{equation}\label{equ-EVTMaxDistribution}
          P\{a_n(M_n-b_n)\leq x\}\rightarrow \exp\{-e^{-x}\}
        \end{equation}
        where
        \begin{equation}\label{equ-parameter a_n(standard)}
          a_n=(2\log n)^{1/2}
        \end{equation}
        and
        \begin{equation}\label{equ-parameter b_n(standard)}
          b_n=(2\log n)^{1/2}-\frac{1}{2}(2\log n)^{-1/2}[\log{\log n}+\log{4\pi}].
        \end{equation}
        If $\{X_n\}$ follows the Gaussian distribution with mean $\mu$ and variance $\sigma^2$ , then the above normalizing constants are
        \begin{equation}\label{equ-parameter a_n}
          a_n=\sigma(2\log n)^{1/2}
        \end{equation}
        and
        \begin{equation}\label{equ-parameter b_n}
          b_n=\sigma\Big[(2\log n)^{1/2}-\frac{1}{2}(2\log n)^{-1/2}[\log{\log n}+\log{4\pi}]\Big]+\mu.
        \end{equation}
    \end{theorem}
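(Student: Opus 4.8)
The plan is to deduce the Gumbel limit from the tail criterion of Theorem~\ref{thm-EVT convergence different formulation }, which reduces the statement to a single asymptotic estimate for the Gaussian tail. Fix $x\in\mathbb{R}$ and set $u_n=x/a_n+b_n$ with $a_n,b_n$ the constants in \eqref{equ-parameter a_n(standard)}--\eqref{equ-parameter b_n(standard)}; then $\{M_n\le u_n\}$ is exactly the event $\{a_n(M_n-b_n)\le x\}$. By Theorem~\ref{thm-EVT convergence different formulation } it suffices to show
\begin{equation*}
n\bigl(1-\Phi(u_n)\bigr)\longrightarrow e^{-x}\qquad(n\to\infty),
\end{equation*}
where $\Phi$ is the standard normal c.d.f., since then $P\{M_n\le u_n\}\to e^{-e^{-x}}$; as the limiting c.d.f. $x\mapsto e^{-e^{-x}}$ is continuous, pointwise convergence in $x$ gives convergence in distribution to the Gumbel law, which is the claim.

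To establish the displayed limit I would use the classical Mills-ratio asymptotics $1-\Phi(u)=\dfrac{\phi(u)}{u}\bigl(1+O(u^{-2})\bigr)$ as $u\to\infty$, with $\phi(u)=(2\pi)^{-1/2}e^{-u^2/2}$, which (since $u_n\to\infty$) turns the problem into evaluating $n\,\phi(u_n)/u_n$. Writing $L_n:=\log\log n+\log 4\pi$, the definition of $b_n$ gives $u_n=a_n+(x-L_n/2)/a_n$, so, using $a_n^2=2\log n$ and that $(x-L_n/2)^2/(2a_n^2)=O\bigl((\log\log n)^2/\log n\bigr)=o(1)$,
\begin{equation*}
\frac{u_n^2}{2}=\frac{a_n^2}{2}+\Bigl(x-\frac{L_n}{2}\Bigr)+\frac{(x-L_n/2)^2}{2a_n^2}=\log n+x-\tfrac12\log\log n-\tfrac12\log 4\pi+o(1).
\end{equation*}
Hence $\phi(u_n)=(2\pi)^{-1/2}\,n^{-1}(\log n)^{1/2}(4\pi)^{1/2}e^{-x}\bigl(1+o(1)\bigr)$, and since $u_n\sim a_n=(2\log n)^{1/2}$,
\begin{equation*}
n\bigl(1-\Phi(u_n)\bigr)=\frac{n\,\phi(u_n)}{u_n}\bigl(1+o(1)\bigr)=\frac{(4\pi)^{1/2}}{(2\pi)^{1/2}}\cdot\frac{(\log n)^{1/2}}{(2\log n)^{1/2}}\,e^{-x}\bigl(1+o(1)\bigr)\longrightarrow e^{-x},
\end{equation*}
because $(4\pi)^{1/2}=(2)^{1/2}(2\pi)^{1/2}$ makes the two prefactors multiply to $1$. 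This is exactly what was needed, so the standard-normal case follows.

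Finally, the general Gaussian case reduces to the standard one by the affine change of variables $Y_n=\sigma X_n+\mu\sim\mathcal{N}(\mu,\sigma^2)$: one has $\max(Y_1,\dots,Y_n)=\sigma M_n+\mu$, so substituting into $\{a_n(M_n-b_n)\le x\}$ and rearranging transfers the Gumbel limit to the $Y$-sequence with the normalizing constants rescaled, yielding \eqref{equ-parameter a_n}--\eqref{equ-parameter b_n}. I expect the only delicate point to be the middle paragraph: one must carry the $o(1)$ error through the exponential with care and observe that the particular correction term $-\tfrac12(2\log n)^{-1/2}[\log\log n+\log 4\pi]$ in $b_n$ is chosen precisely so that the stray $(\log n)^{1/2}$ and $(4\pi)^{1/2}$ factors cancel against the Mills-ratio denominator and the Gaussian normalizing constant, leaving the limit equal to $e^{-x}$ rather than an uncontrolled positive multiple of it (such a multiple would only amount to an additional shift of $b_n$, but pinning down the canonical constants is the content of the theorem). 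The remaining ingredients — the Mills bound, the appeal to Theorem~\ref{thm-EVT convergence different formulation }, and the affine reduction — are routine.
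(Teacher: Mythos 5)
Your argument is correct and is essentially the computation the paper itself relies on: the theorem is quoted from \cite{EVT:Springer1983}, and the same Mills-ratio manipulation appears in Appendix \ref{Appendix A} and in Lemma \ref{lem-condition for poisson convergence on all positive line}, only run in the inverse direction (solving $n(1-F(u_n))\to\uptau$ for $u_n$ rather than verifying it for prescribed $a_n,b_n$). Your standard-normal computation checks out: the $\sqrt{2}$ coming from $(4\pi/2\pi)^{1/2}$ cancels the $1/\sqrt{2}$ coming from $u_n\sim(2\log n)^{1/2}$, exactly as you note, and the appeal to Theorem \ref{thm-EVT convergence different formulation } plus continuity of $x\mapsto e^{-e^{-x}}$ is the right way to close the argument.

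The one step you should not leave implicit is the affine reduction, because carrying it out exposes a discrepancy with the displayed constants. Writing $Y_i=\sigma X_i+\mu$ and $M_n^Y=\sigma M_n+\mu$, the standard-normal limit $P\{a_n^{\text{std}}(M_n-b_n^{\text{std}})\le x\}\to e^{-e^{-x}}$ becomes $P\bigl\{\tfrac{a_n^{\text{std}}}{\sigma}\bigl(M_n^Y-(\sigma b_n^{\text{std}}+\mu)\bigr)\le x\bigr\}\to e^{-e^{-x}}$, so under the convention $P\{a_n(M_n-b_n)\le x\}$ the scale constant for $N(\mu,\sigma^2)$ is $(2\log n)^{1/2}/\sigma$, not $\sigma(2\log n)^{1/2}$ as written in \eqref{equ-parameter a_n}; the location constant \eqref{equ-parameter b_n} is correct. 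This is consistent with how the constants are actually used downstream (Theorem \ref{thm-Capacity distribution in a time dependent channel convergence to a Gumbel} has $a_K=\sqrt{2\log K}/\sigma_g$), so the $\sigma$ in \eqref{equ-parameter a_n} is on the wrong side; your ``rescaled constants,'' if written out explicitly, would produce the correct form rather than the display as stated.
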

    Please note that if $F(x)$ follows a Gaussian distribution we can see that $\uptau=e^{-x}$.\\

    We turn now for results regarding maxima of stationary sequences. Dependence in stationary series can take many different forms, and it is impossible to develop a general characterization of the behavior of extremes unless some constraints are imposed(\cite{EVT:Springer2001}). The following conditions makes precise the notion of extreme events being near-independent if they are sufficiently distant in time.

    \begin{Definition}\label{def-StrongMixing}
    We say $\{X_n\}$ is strongly mixing if there is a function $g$ on the positive integers with $g(k) \rightarrow 0$ as $k \rightarrow \infty$ such that, if $A \in \mathfrak{F}(X_1,...,X_m)$ and $B \in \mathfrak{F}(X_{m+k},X_{m+K+1},...)$ for some $k, m\geq 1$, then $\mid P(A\cap B)-P(A)P(B) \mid \leq g(k)$. Where $\mathfrak{F}(\cdot)$ denotes the $\sigma$-field generated by the indicated random variables. Strong mixing was first introduced by Rosenblatt (1956).
    \end{Definition}

    In weakening the strong mixing condition, one notes that the events of interest in extreme value theory are typically those of the form $\{X_n\leq u\}$. Hence one may be led to propose a condition like mixing but only required to hold for events of this type.

    \begin{Definition}\label{def-D condition}
    (\cite{EVT:Springer2001},\cite{leadbetter1974extreme}) A stationary series $X_1,X_2,...$ is said to satisfy the $D(u_n)$ condition if,\\ for all $i_1<...<i_p<j_1<...<j_q$ with $j_1-i_p>l$,
    \begin{equation*}
        \begin{aligned}
            &|P_r\{X_{i_1}\leq u_n,...,X_{i_p}\leq u_n, X_{j_1}\leq u_n,...,X_{j_q}\leq u_n\} - \\
            &P_r\{X_{i_1}\leq u_n,...X_{i_p}\leq u_n\}P_r\{ X_{j_1}\leq u_n,...,X_{j_q}\leq u_n\}|\leq\alpha(n,l),
        \end{aligned}
    \end{equation*}
    where $\alpha(n,l)\rightarrow 0$ for some sequence $l_n$ such that $l_n/n \rightarrow 0$ as $n\rightarrow \infty$.
    \end{Definition}
    If $u_n$ is a sequence that increases with $n$ the long range dependence condition $D(u_n)$ ensures that, for sets of variables that are far enough apart, the difference of probabilities expressed above, while not zero, is sufficiently close to zero to have no effect on the limit laws for extremes.

    Another condition that is very relevant to us is the local dependence condition $D'(u_n)$ \cite{leadbetter1983extremes}:
    \begin{Definition}\label{def-D' condition}
    \begin{equation*}
      D'(u_n): \limsup_{n \rightarrow \infty} n\sum_{j=2}^{ \lfloor n/k\rfloor } P\{X_1>u_n,X_j>u_n\}\rightarrow 0 \quad \text{as}\quad k\rightarrow \infty
    \end{equation*}
    \end{Definition}
    This condition bounds the probability of more than one exceedance among $X_1,...,X_{ \lfloor n/k\rfloor }$. This will eventually ensure that there are no multiple points in the points process of exceedances which is necessary in obtaining a simple poisson limit for this point process as will be discussed later.  A result regarding to the two conditions $D(u_n),D'(u_n)$ is:
    \begin{theorem} (\cite[Theorem 3.4.1]{EVT:Springer1983} )
        Let ${u_n}$ be constants such that $D(u_n),D'(u_n)$ hold for the stationary sequence $\{X_n\}$. Let $0\leq \uptau \leq \infty$. Then
        \begin{equation*}
          P\{M_n\leq u_n\}\rightarrow e^{-\uptau} \quad \text{if and only if}\quad n(1-F(u_n))\rightarrow \uptau.
        \end{equation*}
    \end{theorem}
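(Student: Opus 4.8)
I would treat the two implications separately, the substantive one being ``$n(1-F(u_n))\to\uptau$ implies $P\{M_n\le u_n\}\to e^{-\uptau}$''. The converse then follows by compactness: if $n(1-F(u_n))$ failed to converge to $\uptau$, some subsequence would converge in $[0,\infty]$ to a limit $\uptau'\neq\uptau$, and applying the forward implication along that subsequence (the conditions $D(u_n),D'(u_n)$ pass to subsequences, the $\limsup$ in $D'$ only decreasing) would force $P\{M_n\le u_n\}\to e^{-\uptau'}\neq e^{-\uptau}$, contradicting the hypothesis. So I focus on the forward direction, whose proof I would organize in three stages: a blocking step powered by $D(u_n)$, a single-block estimate powered by Boole/Bonferroni and $D'(u_n)$, and a limit step.

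\emph{Stage 1 (blocking via $D(u_n)$).} Fix an integer $k\ge 1$, set $m=\lfloor n/k\rfloor$, and partition $\{1,\dots,n\}$ into $k$ consecutive blocks of length $m$ (and a remainder of fewer than $k$ indices, contributing at most $k(1-F(u_n))=(k/n)\,n(1-F(u_n))\to 0$ to any maximum probability). Let $l_n$ be the gap sequence furnished by $D(u_n)$, so $l_n/n\to 0$ and $\alpha(n,l_n)\to 0$; discard the last $l_n$ indices of each block to insert separating gaps of length $l_n$, and apply the $D(u_n)$ inequality $k-1$ times, peeling off one de-gapped block at a time. By stationarity this yields $\bigl|P\{M_n\le u_n\}-\bigl(P\{M_{m-l_n}\le u_n\}\bigr)^k\bigr|\le (k-1)\alpha(n,l_n)+o(1)$. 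Since $0\le P\{M_{m-l_n}\le u_n\}-P\{M_{m}\le u_n\}\le l_n(1-F(u_n))=(l_n/n)\,n(1-F(u_n))\to 0$, I can replace $m-l_n$ by $m$ and obtain, for each fixed $k$, $P\{M_n\le u_n\}=\bigl(P\{M_{\lfloor n/k\rfloor}\le u_n\}\bigr)^k+o_n(1)$.

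\emph{Stages 2 and 3 (one block, then the limit).} Boole's inequality gives $1-P\{M_m\le u_n\}\le m(1-F(u_n))$, hence $\limsup_n k\bigl(1-P\{M_{\lfloor n/k\rfloor}\le u_n\}\bigr)\le\uptau$. In the other direction, Bonferroni together with stationarity gives $1-P\{M_m\le u_n\}\ge m(1-F(u_n))-m\sum_{j=2}^{m}P\{X_1>u_n,X_j>u_n\}$; multiplying by $k$, the first term tends to $\uptau$ while the second is bounded by $n\sum_{j=2}^{\lfloor n/k\rfloor}P\{X_1>u_n,X_j>u_n\}$, whose $\limsup$ over $n$ is a number $\beta_k$ with $\beta_k\to 0$ as $k\to\infty$, precisely by $D'(u_n)$. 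Thus $\uptau-\beta_k\le\liminf_n k\bigl(1-P\{M_{\lfloor n/k\rfloor}\le u_n\}\bigr)\le\limsup_n(\cdots)\le\uptau$. Writing $P\{M_{\lfloor n/k\rfloor}\le u_n\}=1-\tfrac1k(\uptau+\varepsilon_{n,k})$ with $\limsup_n|\varepsilon_{n,k}|\le\beta_k$ and combining with Stage 1, $P\{M_n\le u_n\}=(1-\tfrac{\uptau+\varepsilon_{n,k}}{k})^k+o_n(1)$; letting $n\to\infty$ and then $k\to\infty$ squeezes $P\{M_n\le u_n\}$ between $e^{-(\uptau+\beta_k)}$ and $e^{-(\uptau-\beta_k)}$, so $P\{M_n\le u_n\}\to e^{-\uptau}$. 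The cases $\uptau=0$ and $\uptau=\infty$ are covered by the conventions $e^0=1$, $e^{-\infty}=0$: for $\uptau=\infty$ I would dominate $\{M_n\le u_n\}$ from below by comparison with a level sequence $u_n^{*}\ge u_n$ satisfying $D,D'$ with $n(1-F(u_n^{*}))\to\uptau'<\infty$, giving $\limsup_n P\{M_n\le u_n\}\le e^{-\uptau'}$ for every finite $\uptau'$, hence $\to 0$.

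\emph{Main obstacle.} The technical heart is Stage 1: one must choose the gap length $l_n$ so that simultaneously $l_n/n\to 0$, $(k-1)\alpha(n,l_n)\to 0$ and $l_n(1-F(u_n))\to 0$, and control the accumulation of the $k-1$ mixing errors and the $k$ gap deletions. This is why $k$ is held fixed while $n\to\infty$ and only afterwards sent to infinity; letting $k=k_n\to\infty$ together with $n$ would require a quantitative rate for $\alpha(n,l_n)$ that $D(u_n)$ does not supply. By contrast, the Boole/Bonferroni estimates of Stage 2 and the elementary limit $(1-a/k)^k\to e^{-a}$ of Stage 3 are routine.
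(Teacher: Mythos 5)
The paper does not prove this statement: it is quoted verbatim as background material from \cite[Theorem 3.4.1]{EVT:Springer1983}, so there is no in-paper proof to compare against. Your argument is, in substance, the standard proof from that reference --- the $D(u_n)$-blocking step giving $P\{M_n\le u_n\}=\bigl(P\{M_{\lfloor n/k\rfloor}\le u_n\}\bigr)^k+o_n(1)$ for fixed $k$, the Boole/Bonferroni two-sided estimate on a single block with the cross term killed by $D'(u_n)$, and the iterated limit $n\to\infty$ then $k\to\infty$ --- and it is correct, including the subsequence-compactness argument for the converse. The only soft spot is the case $\uptau=\infty$: your comparison with a higher level sequence $u_n^{*}\ge u_n$ presupposes that $D(u_n^{*})$, $D'(u_n^{*})$ hold for that auxiliary sequence, which is not part of the hypotheses as stated (they are assumed only for the given $u_n$); in applications one verifies the conditions for a whole family $u_n(\uptau)$, which is how this is normally patched, but as written it is an unproved existence claim rather than a consequence of the stated assumptions.
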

    Thus, from the above we can conclude that the particular GEV type which applies is the same as if the sequence $\{X_n\}$ were \textit{i.i.d.} with the same marginal distribution function $F$, and the same normalizing constants may be used.\\

    \section{Point Process Approximation}\label{preliminaries-PPA}
    As mentioned earlier the expected channel capacity was derived in a number of method, one of them was done while using the point process approximation. PPA is a tool for analysing threshold arrival rates and tail distributions. This analysis imposes threshold-based algorithm which discussed earlier as an option to achieve the maximum channel capacity. Again we will start with the basic theory which concern an \textit{i.i.d.} sequences and elaborate on stationary process.\\

    A point process on a set $\mathcal{A}$ is a stochastic rule for the occurrence and position of point events (\cite{EVT:Springer2001}). Our characterization for the statistical properties of the point process is to define a set of non-negative integer-valued random variables, $N(A)$, for each $A\subset \mathcal{A}$, such that $N(A)$ is the number of points in the set $A$ which can represent a period of time $[t_1,t_2]$. The expected number of points in any subset $A \subset \mathcal{A}$, is defined to be the intensity measure $\Lambda(A)$ of the process. The number of points in a given interval follows a Poisson distribution with intensity measure $\Lambda(A)$ and intensity density $\lambda(t,x)$ where:
    \begin{equation*}
      \Lambda(A)=\int_{A}\lambda(t,x)dA
    \end{equation*}
    We assume ${X_1,X_2,...,X_n}$ is a sequence of \textit{i.i.d.} random variables, with common distribution function $F(x)$. We suppose that $F(x)$ is in the domain of attraction of some GEV distribution $G(\cdot)$, with normalizing constants $a_n$ and $b_n$. We then define a sequence of point processes $N_1,N_2,...$  on $[0,1]\times \mathds{R}^2$ by:
    \begin{equation}\label{equ-sequence of point processes N_n}
      N_n=\Big\{\frac{i}{n},\frac{x_i-b_n}{a_n}:i=1,...,n \Big \}
    \end{equation}
    We can notice that the numbers of occurrences counted in disjoint intervals are independent from each other, law value points are normalised to the same value $b_l$ with:
    \begin{equation*}
      b_l=\lim_{n\rightarrow \infty} \frac{x_F-b_n}{a_n}
    \end{equation*}
    and high value points are retained in the limit process. Under all the above we get:
    \begin{theorem}\label{thm-Point process convergance of iid}(\cite{Kampeas2012scheduling},\cite{EVT:Springer2001})
        The sequence $N_n$ on the set $[0,1]\times (b_l+\epsilon, \infty)$, where $\epsilon >0$, converge to a non-homogeneous Poisson process $N$, that is:
        \begin{equation*}
          N_n\rightarrow N \ \ \text{as } \ n\rightarrow \infty,
        \end{equation*}
        with intensity:
        \begin{equation*}
          \lambda(t,x)=(1+\xi x)_{+}^{-1-1/\xi}
        \end{equation*}
    \end{theorem}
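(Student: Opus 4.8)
The plan is to establish the convergence through the classical Kallenberg criterion for point processes: a sequence $N_n$ of point processes on a locally compact, second‑countable space converges in distribution to a simple point process $N$ whose law is determined by its void probabilities, provided that on a dissecting semiring $\mathcal{S}$ generating the Borel $\sigma$‑field one has $E[N_n(A)]\to E[N(A)]<\infty$ for every $A\in\mathcal{S}$ and $P\{N_n(B)=0\}\to P\{N(B)=0\}$ for every finite disjoint union $B$ of members of $\mathcal{S}$. Here the ambient space is $E_\epsilon:=[0,1]\times(b_l+\epsilon,\infty)$, and for $\mathcal{S}$ I take the ``open above'' rectangles $A=(s,t]\times(x,\infty)$ with $0\le s<t\le 1$ and $x>b_l+\epsilon$. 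The candidate limit $N$ is the Poisson process on $E_\epsilon$ with intensity $\lambda(t,x)=(1+\xi x)_+^{-1-1/\xi}$; integrating out the $x$–variable, $\Lambda\big((s,t]\times(x,\infty)\big)=(t-s)(1+\xi x)_+^{-1/\xi}=(t-s)\big(-\log G(x)\big)$, which is finite precisely because $x>b_l+\epsilon$ keeps us away from the lower endpoint of $G$.

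First I would verify the intensity convergence. For $A=(s,t]\times(x,\infty)$ put $u_n=u_n(x):=x/a_n+b_n$, so that $(x_i-b_n)/a_n>x\iff x_i>u_n$; then $N_n(A)=\#\{i: s<i/n\le t,\ X_i>u_n\}$ and, by the \textit{i.i.d.} assumption, $E[N_n(A)]=\big(\lfloor nt\rfloor-\lfloor ns\rfloor\big)\big(1-F(u_n)\big)$. Since $F$ lies in the domain of attraction of $G$ with normalizing constants $a_n,b_n$, the standard equivalence $F^n(u_n)\to G(x)\iff n(1-F(u_n))\to-\log G(x)$ (cf.\ Theorem~\ref{thm-EVT convergence different formulation }, in the $\uptau=-\log G(x)$ form) gives $n\big(1-F(u_n)\big)\to \uptau:=-\log G(x)$, whence $E[N_n(A)]\to(t-s)\uptau=\Lambda(A)=E[N(A)]$. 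Next the void probabilities: a finite union of rectangles can, after subdividing the time axis at all breakpoints, be written as a finite disjoint union $B=\bigcup_{j=1}^{m}(s_j,t_j]\times(x_j,\infty)$ over \emph{disjoint} time intervals (over each sub‑interval the $x$–set is just $(\min_j x_j,\infty)$). Because $\{N_n(B)=0\}$ then splits into the events ``no exceedance of $u_n(x_j)$ among the indices in $(ns_j,nt_j]$'', which involve disjoint blocks of the independent variables $X_i$,
\begin{equation*}
    P\{N_n(B)=0\}=\prod_{j=1}^{m}F\big(u_n(x_j)\big)^{\lfloor nt_j\rfloor-\lfloor ns_j\rfloor}=\prod_{j=1}^{m}\Big(1-n^{-1}\cdot n\big(1-F(u_n(x_j))\big)\Big)^{\lfloor nt_j\rfloor-\lfloor ns_j\rfloor}.
\end{equation*}
Using $n(1-F(u_n(x_j)))\to-\log G(x_j)$ and $(\lfloor nt_j\rfloor-\lfloor ns_j\rfloor)/n\to t_j-s_j$, each factor tends to $\exp\!\big(-(t_j-s_j)(-\log G(x_j))\big)=e^{-\Lambda(A_j)}$, so $P\{N_n(B)=0\}\to e^{-\Lambda(B)}=P\{N(B)=0\}$, the last equality because $N$ is Poisson with mean measure $\Lambda$.

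With these two ingredients, since the rectangles in $\mathcal{S}$ form a dissecting semiring generating $\mathcal{B}(E_\epsilon)$ and $\Lambda$ is Radon and non‑atomic on $E_\epsilon$ (hence $N$ is simple and its law is fixed by its void probabilities), Kallenberg's theorem yields $N_n\Rightarrow N$ on $E_\epsilon$, which is the claim. I expect the main obstacle to be not the two limit computations above — routine once the domain‑of‑attraction tail relation is available — but the justification of the restriction to $E_\epsilon=[0,1]\times(b_l+\epsilon,\infty)$ and the attendant mode of convergence (vague convergence, i.e.\ convergence of the restrictions to relatively compact subsets). On the full strip $[0,1]\times\mathds{R}$ the $N_n$ do \emph{not} converge: the overwhelming majority of the normalized points $(x_i-b_n)/a_n$ collapse toward the lower level $b_l=\lim_{n\to\infty}(x_F-b_n)/a_n$ (possibly $-\infty$), producing an accumulation point there. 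One must therefore argue that $b_l$ is exactly the threshold below which mass piles up, and check that for every $x>b_l+\epsilon$ the normalized level $u_n(x)$ lies in the range where $n(1-F(u_n(x)))$ has a finite, positive limit; this is what makes $\Lambda$ Radon on $E_\epsilon$ and legitimizes the semiring/Kallenberg bookkeeping.
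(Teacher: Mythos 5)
Your argument is correct, but note that the paper offers no proof of this theorem at all: it is quoted verbatim as a background result from the cited references (Coles's book and the Kampeas et al.\ paper), and the only computation the paper performs afterwards is the evaluation of $\Lambda(B_u)$. What you have written is precisely the standard proof from that literature (cf.\ Resnick's treatment of extremal point processes): Kallenberg's criterion on the semiring of rectangles $(s,t]\times(x,\infty)$, with the mean-measure limit supplied by the domain-of-attraction equivalence $n(1-F(u_n(x)))\to-\log G(x)$ and the void probabilities factorizing over disjoint index blocks by independence; your closing remarks on why the restriction to $[0,1]\times(b_l+\epsilon,\infty)$ is what makes $\Lambda$ Radon and the limit simple are also the right justification. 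The only cosmetic point is the paper's own inconsistency between the normalization $(x_i-b_n)/a_n$ in the definition of $N_n$ and the convention $u_n=x/a_n+b_n$ inherited from $P\{a_K(\widetilde{M_K}-b_K)\le x\}$; you follow the latter, which is the one used everywhere else in the paper, so nothing needs to change.
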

    We are interesting in the sets of the form: $B_u=[0,1]\times (u,\infty)$ where $u>b_l$ which reflect the users with a capacity that exceeded a threshold $u$. In this case the intensity measure is:
    \begin{equation*}
        \begin{aligned}
          \Lambda(B_u)&= \Lambda([0,1]\times (u,\infty))\\
                      &=\int_{t=0}^{1}\int_{x=u}^{\infty}\lambda(t,x)dxdt\\
                      &=\int_{t=0}^{1}\Big[-(1+\xi x)_{+}^{-1/\xi}\Big]_{x=u}^{\infty}dt\\
                      &=\int_{t=0}^{1}(1+\xi u)_{+}^{-1/\xi}dt\\
                      &=(1+\xi u)_{+}^{-1/\xi}
        \end{aligned}
    \end{equation*}
    where $a_+$ denotes $\max\{0,a\}$. The occurrences of capacity exceedances of the threshold $u$ can be modeled by a Poisson process, with parameter $\Lambda(B_u)$. In case the sequence $\{X_n\}$ are standard normal random variables:
    \begin{equation*}
      \lim_{\xi \rightarrow 0} (1+\xi u)_{+}^{-1/\xi}=e^{-u}=\uptau \Big|_{x=u}
    \end{equation*}

    Since we are interesting in time dependent environment we would like to explore the point process analysis for stationary sequences. In \cite{leadbetter1976weak} we can find the point process of exceedances of a high level $u_n$ by a stationary sequence $\{X_i\}$ (i.e., points where $X_i>u_n$), obtaining Poisson limits under quite weak dependence restrictions. These involve a long range dependence condition $D(u_n)$ and a local dependence condition $D'(u_n)$ as defined in definitions \ref{def-D condition} and \ref{def-D' condition}.
    \begin{theorem}\label{thm-Point process convergance of stationary process}(\cite{leadbetter1976weak})
        Let $D(u_n),D'(u_n)$ hold ($u_n$ satisfying $u_n=u_n(\uptau)$, such that $1-F_1(u_n)=P\{X_1>u_n\}\sim \uptau/n$ as $n\rightarrow\infty$ for all $\uptau > 0$ ) for the stationary sequence $\{X_i\}$. Let $N_n$ be the point process, consisting of the exceedances of $u_n(\uptau)$. Then $N_n \overset{d}{\rightarrow} N$ as $n\rightarrow\infty$, where $N$ is a Poisson process with parameter $\uptau$.
    \end{theorem}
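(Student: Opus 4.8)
The plan is to deduce weak convergence of the exceedance point processes from the classical Kallenberg criterion for convergence to a simple Poisson process: since the limit $N$ is a simple point process on $[0,1]$, it suffices to check, for every set $B$ in a semiring generating the Borel subsets of $[0,1]$ — here $B$ a finite union of disjoint subintervals — that $\mathbb{E}[N_n(B)]\to\mathbb{E}[N(B)]=\uptau\,|B|$ and $P\{N_n(B)=0\}\to P\{N(B)=0\}=e^{-\uptau|B|}$, where $|B|$ denotes Lebesgue measure. This is the route taken by Leadbetter (1976), and it reuses the machinery already available from the maximum theorem under $D(u_n),D'(u_n)$ stated above (which is exactly the special case $B=[0,1]$ of the void-probability statement).

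The intensity convergence is elementary. For $B$ a union of intervals, $N_n(B)$ counts the indices $i\le n$ with $i/n\in B$ and $X_i>u_n$; the number of admissible indices is $n|B|+O(1)$, so by stationarity $\mathbb{E}[N_n(B)]=(n|B|+O(1))\,P\{X_1>u_n\}$, and the hypothesis $P\{X_1>u_n\}\sim\uptau/n$ yields $\mathbb{E}[N_n(B)]\to\uptau|B|$. (This also matches the intensity density of Theorem \ref{thm-Point process convergance of iid} in the degenerate limit $\xi\to0$, where it equals $\uptau$.)

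The core of the argument is the void-probability convergence, and it is precisely here that $D(u_n)$ and $D'(u_n)$ are used. First I would note that both conditions are inherited by the indices lying in any fixed subinterval, so the block structure of the maximum theorem localizes. Fix $k$ and partition the indices of $B$ into $k$ consecutive blocks of size $\sim n|B|/k$, separating them by gaps of length $l_n$ with $l_n/n\to0$ (the gap permitted by $D(u_n)$); the points discarded in the $k-1$ gaps have expected total number $O(k\,l_n/n)\to0$, hence are asymptotically negligible, so $P\{N_n(B)=0\}=P\{\text{void in all blocks}\}+o(1)$. Iterating $D(u_n)$ shows $P\{\text{void in all blocks}\}$ differs from $\prod_{j=1}^{k}P\{\text{void in block }j\}$ by at most $(k-1)\,\alpha(n,l_n)\to0$. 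Within a single block, Bonferroni gives
\begin{equation*}
1-S_1\ \le\ P\{\text{void in block}\}\ \le\ 1-S_1+S_2,
\end{equation*}
with $S_1=\sum_{i\in\text{block}}P\{X_i>u_n\}\sim\uptau|B|/k$ and $S_2=\sum_{i<j,\ i,j\in\text{block}}P\{X_i>u_n,X_j>u_n\}$; by stationarity the total second-order term over the $k$ blocks satisfies $k\,S_2\le C\,n\sum_{j=2}^{\lfloor n/k\rfloor}P\{X_1>u_n,X_j>u_n\}$ for an absolute constant $C$, which $\to0$ as $k\to\infty$ by $D'(u_n)$. Hence each block is void with probability $1-\uptau|B|/k+o(1)$, so $P\{N_n(B)=0\}=(1-\uptau|B|/k)^k+o(1)$, and letting $k\to\infty$ gives $e^{-\uptau|B|}$.

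The main obstacle is the careful management of the iterated limit: one must send $n\to\infty$ with $k$ fixed to annihilate the $D$-error $(k-1)\alpha(n,l_n)$, the gap losses, and the Bonferroni remainder, and only afterwards send $k\to\infty$; keeping the $O(1)$ block-size discrepancies, the gap contributions, and the $D'(u_n)$ correction simultaneously controlled — and verifying the relevant bounds are uniform in the block index — is the delicate bookkeeping. Once that is settled, Kallenberg's theorem yields $N_n\overset{d}{\rightarrow}N$.
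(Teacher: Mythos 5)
This theorem appears in the paper only as a cited preliminary from \cite{leadbetter1976weak}, with no proof given, and your sketch correctly reconstructs the argument of that reference: Kallenberg's two-part criterion for convergence to a simple Poisson process (convergence of intensities plus void probabilities on finite unions of intervals), with the void probabilities handled by the blocking scheme in which $D(u_n)$ factorizes across blocks up to an error $(k-1)\alpha(n,l_n)\rightarrow 0$ and $D'(u_n)$ together with Bonferroni controls the within-block second-order term. The delicate points you identify — the iterated limit ($n\rightarrow\infty$ with $k$ fixed, then $k\rightarrow\infty$), the negligibility of the gap indices, and the uniform control of the block-size discrepancies — are exactly where the care is required, and your treatment of them is sound.
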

    As a result of the above theorem, the point process under the dependence consideration behaves just like one obtained from an \textit{ i.i.d.} sequence with the same marginal d.f. $F$. If the local condition is weakened or omitted, then clustering of exceedances may occur.

    \section{Chain Dependent Process}\label{preliminaries-Chain Dependent Process}
    In this work we use a specific model for dependent data referred to as chain dependent model. We will review the formulation for this model and some relevant results.\\
    Let $\{J_n\}$ be a finite state space Markov chain with ergodic probability transition matrix $P=(p_{ij}), \ 1\leq i,j\leq m$. Further let $\{X_n\}$ be a chain dependent sequence such that
    \begin{equation}\label{equ-J-X process defenition - perliminaries}
        \begin{aligned}
            P(J_n &= j , X_n \leq x \mid X_0,J_0,...,X_{n-1},J_{n-1}=i) \\
                  &= P(J_n=j,X_n\leq x \mid J_{n-1}=i)=p_{ij}H_i(x)
        \end{aligned}
    \end{equation}
    where $H_i, \ i=1,...,m$ are the underlying non-degenerate and honest ($H_i(+\infty)=1$) distributions, associated with each state of the chain.\\
    Immediate consequences\cite{resnick1970limit}:
    \begin{enumerate}
      \item[i] The stationary probabilities associated with $P$ are $(\pi_1,...,\pi_m)$.
      \item[ii] $P(X_n\leq x \mid J_{n-1}=i)=H_i(x)$.
      \item[iii] $P(X_1\leq x_1,...,X_n\leq x_n \mid J_0,J_1,...,J_{n-1}=i)=\prod_{i=1}^n P(X_i\leq x_i \mid J_{i-1})$\\
      The random variables $\{X_n\}$ are conditionally independent given the chain in precisely the sense given by (iii).
    \end{enumerate}

    Define a nonnegative $m\times m$ matrix by $Q(x)=(p_{ij}H_i(x))$ and the random variable $M_n=\max{(X_1,...,X_n)}$, then In \cite{resnick1970limit} we can find the distribution of the maximum given by
    \begin{equation}\label{equ-Chain dependent Q matrix defenition}
      P(J_n=j,M_n\leq x \mid J_0=i)=Q_{ij}^n(x)
    \end{equation}
    where $Q^n$ denotes $n$th matrix power of $Q$. The matrix $Q$ with suitable normalizing constants $a_{ijn}>0$ and $b_{ijn}, \ i,j=1,..,m, \ n\geqq 1$, convergence to non-degenerate mass matrix as $n \rightarrow \infty$
    \begin{equation*}
      \{Q_{ij}^n(a_{ijn}x+b_{ijn})\} \rightarrow \{U_{ij}(x)\} \quad \text{as } n \rightarrow \infty .
    \end{equation*}
    The Perron-Frobenius eigenvalue $\rho_U(x)$, of $\{U_{ij}(x)\}$, is an extreme value distribution. To attain this result the Semi-Markov matrices theory was used.\\

    A very important results given in \cite{denzel1975limit} states that that the limiting behaviour of $M_n$ is not dependent on the initial distribution of the underlying Markov chain. Thus it may be started with the stationary distribution ($J_0=\pi_i, \ i=1,...,m$), in which case $\{X_n\}$ is stationary. In addition it is stated that every stationary chain-dependent process $\{X_n\}$ is strongly mixing (Lemma 1.1, \cite{denzel1975limit}). Then under any initial distribution for $J_0$ (theorem 3.1, \cite{denzel1975limit}) we have
    \begin{equation}\label{equ-Convergence to l by stationary distribution}
      P(M_n\leq d_n)\rightarrow l \quad \Leftrightarrow \quad H^n(d_n)\rightarrow l \qquad \text{as } n \rightarrow \infty
    \end{equation}
    where $d_n$ is a sequence of reals, $0<l<1$ and $H(x)$ is a distribution function such that $H(x)=\sum_{i=1}^m \pi_i H_i(x)$.\\

    The dependency of the sequence $\{X_n\}$ can be characterized such that it would be easier to attain the limiting distribution of $M_n$. Such dependency conditions are the
    $D(u_n)$ and $D'(u_n)$ as given in definitions \ref{def-D condition} and \ref{def-D' condition}. In \cite{turkman1983limit} a similar analysis was performed for this purpose. They defined a class of sequences, were the chain dependent process is a sub case of it, that fulfill some properties which under them the extreme laws could be achieved. Among those properties, the ones that are relevant to us are that the $D(u_n)$ and $D^*(u_n)$ (an extension for $D'(u_n)$) conditions hold and that there exists a sequence of real numbers $\{u_n\}$ and real numbers $z_i, \ i=1,...,m$, such that for every $H_i, \ i=1,...,m$, as in (ii) we have
    \begin{equation}\label{equ-Condition on u_n for convergence-turkman}
      1-H_i(u_n)=\frac{z_i}{n}+o\left(\frac{1}{n}\right) \quad \text{as } n \rightarrow \infty.
    \end{equation}
    Under the conditions (1a) to (1e) in \cite{turkman1983limit} their basic result is
    \begin{theorem}\label{thm-Convergence of M_n ny chain dependent process}(\cite{turkman1983limit})
        Let $\{X_n\}$, $\{u_n\}$ be sequences of random variables and real numbers respectively, satisfying the conditions (1a) to (1e) given. Then
        \begin{equation}\label{equ-Convergence of M_n ny chain dependent process}
            \lim_{n \rightarrow \infty} P(M_n\leq u_n)= e\{-\sum_{i=1}^m p_i z_i \}
        \end{equation}
        where $p_i, \ i=1,...,m$, are the stationary distribution on the chain and $z_i$ are as given in \eqref{equ-Condition on u_n for convergence-turkman}.
    \end{theorem}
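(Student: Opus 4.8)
The plan is to run the classical big-block/small-block argument of Leadbetter's extremal theory for weakly dependent stationary sequences, tuned so that the per-step exceedance probability is the one averaged against the stationary law of the underlying chain. First I would reduce to the stationary case: by \cite{denzel1975limit} the limiting value of $P(M_n\le u_n)$ is insensitive to the initial distribution of $\{J_n\}$, so I may take $J_0\sim(p_1,\dots,p_m)$, which makes $\{X_n\}$ stationary (and strongly mixing), with marginal d.f.\ $F=\sum_{i=1}^m p_iH_i$. Averaging \eqref{equ-Condition on u_n for convergence-turkman} against $(p_i)$ and using property~(ii) of Section~\ref{preliminaries-Chain Dependent Process},
\begin{equation*}
  n\bigl(1-F(u_n)\bigr)=\sum_{i=1}^m p_i\,n\bigl(1-H_i(u_n)\bigr)\longrightarrow\sum_{i=1}^m p_iz_i=:\uptau,
\end{equation*}
so the standard marginal hypothesis ``$n(1-F(u_n))\to\uptau$'' holds with exactly the constant appearing in \eqref{equ-Convergence of M_n ny chain dependent process}. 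It remains to upgrade this marginal statement to $P(M_n\le u_n)\to e^{-\uptau}$, which is where the conditions (1a)--(1e) of \cite{turkman1983limit} --- in particular the long-range condition $D(u_n)$ and the local condition $D^*(u_n)$ extending $D'(u_n)$ --- enter.

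Then I would pick three sequences $k_n\to\infty$, $r_n=\lfloor n/k_n\rfloor\to\infty$ and gap lengths $l_n$ with $l_n/r_n\to0$ and $l_n$ admissible for $D(u_n)$ (Definition~\ref{def-D condition}), and partition $\{1,\dots,n\}$ into $k_n$ big blocks of length $\simeq r_n$ separated by gaps of length $l_n$. The mixing condition $D(u_n)$ gives $\bigl|P(M_n\le u_n)-\prod_{j=1}^{k_n}P(M^{(j)}\le u_n)\bigr|\to0$, where $M^{(j)}$ is the maximum over the $j$-th big block; the gaps have total length $k_nl_n=o(n)$, so by a union bound and the marginal estimate an exceedance inside a gap occurs with probability $O(l_n/r_n)\to0$ and the gaps may be discarded. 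Stationarity makes the big blocks identically distributed, hence $P(M_n\le u_n)=\bigl(P(M_{r_n}\le u_n)\bigr)^{k_n}+o(1)$.

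The single-block estimate is the crux. An upper bound is immediate from a union bound: $P(M_{r_n}>u_n)\le r_n\max_tP(X_t>u_n)=\uptau r_n/n+o(r_n/n)$, since $P(X_t>u_n)=\sum_ip_i(1-H_i(u_n))=\uptau/n+o(1/n)$. For the matching lower bound I would apply Bonferroni, $P(M_{r_n}>u_n)\ge\sum_{t=1}^{r_n}P(X_t>u_n)-\sum_{1\le s<t\le r_n}P(X_s>u_n,X_t>u_n)$, and bound the double sum by stationarity with $r_n\sum_{j=2}^{r_n}P(X_1>u_n,X_j>u_n)=(r_n/n)\cdot n\sum_{j=2}^{\lfloor n/k_n\rfloor}P(X_1>u_n,X_j>u_n)$; the local condition $D'(u_n)$/$D^*(u_n)$ (Definition~\ref{def-D' condition}) together with a diagonal choice of $k_n\to\infty$ makes the trailing factor $o(1)$, so the double sum is $o(r_n/n)$. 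Hence $P(M_{r_n}\le u_n)=1-\uptau r_n/n+o(r_n/n)$, and since $k_nr_n/n\to1$,
\begin{equation*}
  P(M_n\le u_n)=\Bigl(1-\tfrac{\uptau r_n}{n}+o\bigl(\tfrac{r_n}{n}\bigr)\Bigr)^{k_n}+o(1)\longrightarrow e^{-\uptau}=\exp\Bigl\{-\sum_{i=1}^mp_iz_i\Bigr\}.
\end{equation*}

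I expect the main obstacle to be the simultaneous calibration of $k_n$, $r_n$, $l_n$ so that the $D(u_n)$ mixing defect accumulated over $k_n$ block boundaries, the discarded gaps, and the $D^*(u_n)$ double-exceedance sum all vanish together --- the usual delicate ``three sequences'' bookkeeping --- combined with keeping track of the fact that in the chain-dependent model the relevant one-step exceedance probability is the $p_i$-weighted average of $1-H_i(u_n)$ rather than any individual $1-H_i(u_n)$. For the pure chain-dependent subcase one can actually bypass Bonferroni and $D^*(u_n)$ in the single-block step: conditioning on the path $J_0,\dots,J_{r_n-1}$ the variables $X_1,\dots,X_{r_n}$ are exactly independent by property~(iii) of Section~\ref{preliminaries-Chain Dependent Process}, whence $P(M_{r_n}\le u_n)=E\bigl[\prod_{i=1}^mH_i(u_n)^{N_i}\bigr]$ with $N_i$ the number of visits to state $i$, and $N_i/r_n\to p_i$ by the ergodic theorem yields the same estimate $1-\uptau r_n/n+o(r_n/n)$.
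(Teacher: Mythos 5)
The paper does not actually prove this statement: Theorem \ref{thm-Convergence of M_n ny chain dependent process} is quoted from \cite{turkman1983limit} as background material, with the hypotheses (1a)--(1e) left unstated, so there is no in-paper proof to compare your argument against. That said, your proposal is the standard Leadbetter big-block/small-block argument under $D(u_n)$ and a local condition of $D'$ type, which is indeed how results of this kind are established in the cited literature, and the chain-dependent shortcut you sketch at the end (condition on the path of $\{J_n\}$, exploit conditional independence, and use $E[N_i]=p_i r_n$ under the stationary start) is also sound. Two caveats are worth recording. First, your reduction to the stationary case via \cite{denzel1975limit} is legitimate for chain-dependent processes, but Turkman's theorem is stated for a broader class of sequences satisfying (1a)--(1e), of which the chain-dependent process is only a sub-case; strictly speaking you prove the special case relevant to this work rather than the quoted theorem in full generality. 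Second, you are necessarily guessing at the content of (1a)--(1e), in particular treating $D^*(u_n)$ as interchangeable with $D'(u_n)$, whereas the paper only describes it as ``an extension'' of $D'(u_n)$. Neither caveat undermines the correctness of your argument for the chain-dependent setting that this paper actually uses.
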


    In addition for the above theorem they give a corollary regarding the properties for convergence to the specific extreme value distributions. However assumption (2.2) in their work does not hold for this study marginal distributions, which associated with the states of the chain. Therefore their specific rules for convergence does not apply here exactly.

\chapter{Model Description}\label{model description}

    We consider a slotted MIMO uplink system with $K$ independent users and one base station. The time axis is divided into fixed-length intervals referred to as time slots or, simply, slots, in the sequel. The distribution of the channel capacity a single user experiences, is time varying according to a Gilbert Elliott model \cite{gilbert1960}. We first assume that a single user, with the strongest received signal at the base station, is selected by a centralized scheduler to utilized the channel and transmit in each slot. We then consider the distributed setting as well. Hence, we first wish to analyze the expected channel capacity where the base station chooses the strongest user for transmission using the CSI sent to it from the users. We then turn to analyzing the capacity under a distributed opportunistic user selection algorithm, which select the strongest user to transmit, without coordinating between the users and without all users sending CSI to the base station. We assume the transmitters approximate their channel parameters from a pilot signal sent from the receiver.




    The linear model of the system is:
    \begin{equation}\label{equ-SystemLinearModel}
      \textit{\textbf{y}(n)=H(n)\textbf{x}(n)+\textbf{N}(n)}
    \end{equation}
    $ \textit{\textbf{y}} \in \mathbb{C}^{r} $ is the received vector from the input vector $\textit{\textbf{x}}\in\mathbb{C}^r$ where \textit{H} is a $r\times t$ complex matrix (r - number of receiving antennas, t - number of transmitting antennas) and \textit{\textbf{N}} is zero mean \textit{i.i.d.} complex Gaussian noise with independent, equal variance real and imaginary parts. The transmitter is constrained in its total power to $P$,
    \begin{equation*}
      E[\textit{\textbf{x}}^\dagger\textit{\textbf{x}}]\leq P.
    \end{equation*}

    The matrix $\textit{H}(n)$ is a random matrix which models the Rayleigh fading environment of the time dependent channel.\\

    \section{Time dependent channel}

    the channel distribution each user sees is governed by a Markov chain with two states, G (for Good) and B (for bad), where each state determines the channel distribution (Figure \ref{fig-GoodBadchannel}).
    \begin{figure}[!t]
        \centering
        \includegraphics[width=4cm]{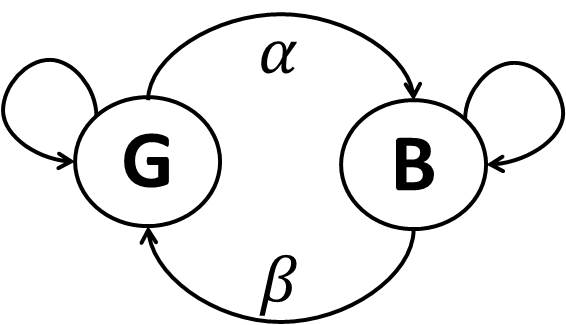}
        \caption[Markovian channel model]{A Good Bad channel model according to \cite{gilbert1960}. The capacity distribution is either good or bad Gaussian random variable according to the state of the system.}
        \label{fig-GoodBadchannel}
    \end{figure}
    Thus, the channel is memoryless conditioned on the state process, and the capacity of the $i$-th user in each slot is determined by the Good Bad Markov process $\{J(n)\}$:
    \begin{equation}\label{equ-UserCapacityProcess}
    C_i(n)=
            \begin{cases}
                N_g    & \text{when  } J_i(n)=Good \\
                N_b    & \text{when  } J_i(n)=Bad,
            \end{cases}
    \end{equation}
    where $N_g$ and $N_b$ are normal random variables with parameters $(\mu_g,\sigma_g)$ and $(\mu_b,\sigma_b)$, respectively. This is due to the Gaussian approximation for the MIMO channel capacity \cite{smith2002gaussian,chiani2003capacity}. The parameters reflect the differences in the channel qualities, that is, good channel parameters maintain $(1) \ \sigma_g>\sigma_b, \ \mu_g,\mu_b\in\mathds{R}$ or $(2) \ \sigma_g=\sigma_b$ and $\mu_g>\mu_b$.\\

    As mentioned, we assume a non trivial arrival process and that the users are not always backlogged. The arrival process of new packages to the system is characterized by a Poisson process. In addition, we assume that all user are homogenous, thus, all users has the same arrival rate $\lambda$. We present several approximate models. Thus, new packages may enter the system at any given (continuous) point on the time axis or at the beginning of a slot, depending on the approximate model. The size of the packages is set such that the transmission time is exactly one slot. We define the service time as the time from the moment a package becomes first in queue, until it is successfully transmitted. Hence, the service time of the packages is composed of the waiting time for transmission and the transmission time. In addition, we assume that services are synchronized to (i.e., can only start and end at) slot boundaries. This of course may cause collisions. 
\chapter{Capacity under time dependent channel}\label{Capacity under time dependent channel}

    In order to exploit user diversity, the user with the best channel must utilizes the channel. Therefore, the problem is in finding the distribution of the random variable $\widetilde{M_K}$, defined as:
    \begin{equation}\label{equ-CapacityDefinition}
            \widetilde{M_K}=\max\{C_1(n),C_2(n),...,C_K(n)\},
    \end{equation}
    where $C_i(n)$ is defined in \eqref{equ-UserCapacityProcess}. We are interested in the expected channel capacity $E[\widetilde{M_K}]$.

    In the beginning of each slot the channel each user experience depends on the state he exists in, resulting with two independent groups of users, a good group and a bad group. The two groups are independent due to the independency between the users, and in each separate group the users identically distributed. The size of these groups is constantly changes according to each user state process which happen in the beginning of each slot. Given the state of all users the groups size are known and we may derive the expected channel capacity in the following way
    \begin{equation}\label{equ-CapacityDefinitionTwoGroups}
        \begin{aligned}
        E[\widetilde{M_K}]=&E\big[\max\{C_1(n),C_2(n),...,C_K(n)\}\big]\\
                          =&\sum_{j=0}^K E\Big[ \max\big\{ \max\{C_1(n),...,C_j(n)\},\max\{C_{j+1}(n),...,C_K(n)\} \big\} \Big\rvert \ j \  \Big] P(j)
        \end{aligned}
    \end{equation}
    The distribution of the maximum for each group can be obtain by using EVT, although the complicated part is to do so for every group size. Unfortunately due to complicated analytical expressions the result remained open, and more details can be found in the appendices regarding this way of analysis.\\

    Nevertheless we were able to achieve a closed form for the channel capacity in more informative ways. The first one is by using the stationarity of the users' capacities sequences, and the second is to analyse the rate of exceedance above some threshold while imposing our distributed algorithm.

    \section{Capacity - Centralized Scheduling}

    In this section we give the expected channel capacity with the assumption that the strongest user utilizes the channel.
    Let us consider the distribution of each $C_i(n)$, as defined in \eqref{equ-UserCapacityProcess}, which is determined by the Good-Bad Markov process $\{J(n)\}$. 
    The stationary distribution of the chain is $(\frac{\beta}{\alpha+\beta},\frac{\alpha}{\alpha+\beta})$, which we will denote as $(p,q)$. Therefore, the distribution of $C_i(n)$ assuming a stationary distribution is as follows:
    \begin{equation}\label{equ-stationary distribition of C_i(n)}
        \begin{aligned}
            F(t)=&P(C_i(n)\leq t)= \\
            & P(N_g(n)\leq t \mid J(n)=G )P(J(n)=G)+\\
            &P(N_b(n)\leq t \mid J(n)=B )P(J(n)=B)\\
            &= pF_g(t)+qF_b(t),
        \end{aligned}
    \end{equation}
    where $F_g(t)$ and $F_b(t)$ are Gaussian distributions with parameters $(\mu_g,\sigma_g)$ and $(\mu_b,\sigma_b)$ respectively which represent the users' different channel states. The distributions parameters maintain $\sigma_g>\sigma_b, \ \mu_g,\mu_b\in\mathds{R}$ or $\sigma_g=\sigma_b$ and $\mu_g>\mu_b$. The distribution of the maximal value is in the form of
    \begin{equation*}
        P(\widetilde{M_K}\leq x)= P(C_1(n)\leq x, C_2(n)\leq x,...,C_n(n)\leq x)=F^n(x),
    \end{equation*}
    due to the independence between the users. We wish to test the behavior of $F^n(x)$as $n \rightarrow \infty$.\\

    The legitimacy to use the stationary distribution also comes from \cite{denzel1975limit} as given in \eqref{equ-Convergence to l by stationary distribution}, the convergence of $F^n(d_n)$ to a real number $l$, where $0<l<1$ and $d_n$ is a sequence of reals, exist if and only if, $P(M_n\leq d_n)\rightarrow l$ as $n \rightarrow \infty$. Thus we can analyze the expected channel capacity using EVT with the distribution above, as the marginal distribution function for each user. In order to do so, first we need to prove that convergence to one of the extreme distributions types $G(x)$ exists, and then find the constants $a_n$ and $b_n$ such that as $n\rightarrow \infty$ , $P\{\widetilde{M_K}\leq\ u_n\}\leq x\}\rightarrow G(x)$. \cite[1.6.1, 1.6.2]{EVT:Springer1983} gives necessary and sufficient conditions on the marginal distribution $F$ to belong to each of the three possible domains of attraction of the extreme value distributions. The first sufficient condition states that if $f$ has a negative derivative $f'$ for all $x$ in some interval $(x_0,x_F),\ (x_F\leq \infty),\ f(x)=0$ for $x\geq x_F$, and
    \begin{equation}\label{equ-Sufficient type 1 condition}
        \lim_{t \uparrow x_F} \frac{f'(t)(1-F(t))}{f^2(t)}=-1
    \end{equation}
    then $F$ is in the domain of attraction of Type \Rmnum{1} extreme value distribution (Gumbel distribution). The second necessary and sufficient condition for $F$ to be in the domain of attraction of Type \Rmnum{1} extreme value distribution, states that there exists some strictly positive function $g(t)$ such that
    \begin{equation}\label{equ-Necessary and sufficient type 1 condition}
     \lim_{t \uparrow x_F} \frac{1-F(t+xg(t))}{1-F(t)}=e^{-x}\\
    \end{equation}
    for all real $x$.\\

    Our main result in this context is the following.
    \begin{theorem}\label{thm-Capacity distribution in a time dependent channel convergence to a Gumbel}
        Let $(C_1,...,C_K)$ be a sequence of random variables with distribution $F(t)$ as given in \eqref{equ-stationary distribition of C_i(n)}, then the asymptotic distribution of $\widetilde{M_K}=\max{(C_1,...,C_K)}$ is of Type \Rmnum{1}. Specifically,
        \begin{equation}
                P\{a_K(\widetilde{M_K}-b_K)\leq x\}\rightarrow e^{-e^{-x}}
        \end{equation}
        where
        \begin{equation}
                a_K= \frac{\sqrt{2\log{K}}}{\sigma_g}
        \end{equation}
        and
        \begin{equation}
                b_K= \sigma_g\left(\sqrt{2\log K}-\frac{\log{\log K}+\log{\frac{4\pi}{p^2}}}{2\sqrt{2\log K}}\right)+\mu_g.
        \end{equation}
    \end{theorem}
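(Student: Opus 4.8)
The plan is to reduce the claim to the classical i.i.d.\ Gaussian extreme‑value statement of Theorem~\ref{thm-EVT normal sequnce converge to Gumbel} by showing that the tail of the mixture $F$ is, to leading order, a constant multiple of the tail of its heavier Gaussian component $F_g$. Since the $K$ users are independent, $P\{\widetilde{M_K}\le x\}=F^{K}(x)$, so by Theorem~\ref{thm-EVT convergence different formulation } it suffices to produce, for every real $x$, a sequence $u_K=u_K(x)$ with $K\bigl(1-F(u_K)\bigr)\to e^{-x}$: the limit law $e^{-e^{-x}}$ follows immediately, and $a_K,b_K$ are read off from the affine dependence of $u_K$ on $x$.

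First I would establish the tail equivalence $1-F(t)\sim p\,(1-F_g(t))$ as $t\to\infty$. Writing $1-F(t)=p\,(1-F_g(t))+q\,(1-F_b(t))$ and using the Mills‑ratio asymptotics $1-\Phi(v)\sim\varphi(v)/v$ (with $\Phi,\varphi$ the standard normal c.d.f.\ and density), one checks that under either parameter hypothesis the ratio $\bigl(1-F_b(t)\bigr)/\bigl(1-F_g(t)\bigr)\to0$: when $\sigma_g>\sigma_b$ because $\log\!\bigl[(1-F_b(t))/(1-F_g(t))\bigr]\sim t^{2}\bigl(\tfrac1{2\sigma_g^{2}}-\tfrac1{2\sigma_b^{2}}\bigr)\to-\infty$, and when $\sigma_g=\sigma_b$, $\mu_g>\mu_b$ because that exponent is $\sim-\tfrac{(\mu_g-\mu_b)(2t-\mu_g-\mu_b)}{2\sigma_g^{2}}\to-\infty$. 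Hence $1-F(t)=p\,(1-F_g(t))(1+o(1))$, and since any candidate $u_K\to\infty$ this yields $K\bigl(1-F(u_K)\bigr)=Kp\,(1-F_g(u_K))(1+o(1))$. As a by‑product, inserting this equivalence into the sufficient condition \eqref{equ-Sufficient type 1 condition} and using that a Gaussian density satisfies it shows directly that $F$ lies in the Type~\Rmnum{1} domain of attraction.

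It then remains to identify $u_K$. Put $v_K=(u_K-\mu_g)/\sigma_g$, so $1-F_g(u_K)=1-\Phi(v_K)$ and the requirement becomes $Kp\,(1-\Phi(v_K))\to e^{-x}$ — the standard‑normal computation underlying Theorem~\ref{thm-EVT normal sequnce converge to Gumbel}, now carried out with $Kp$ in place of the sample size. Solving it via the Mills ratio gives
\begin{equation*}
 v_K=\sqrt{2\log(Kp)}-\frac{\log\log(Kp)+\log 4\pi}{2\sqrt{2\log(Kp)}}+\frac{x}{\sqrt{2\log(Kp)}}+o\!\Bigl(\tfrac1{\sqrt{\log K}}\Bigr).
\end{equation*}
Expanding with $\log(Kp)=\log K+\log p$, i.e.\ $\sqrt{2\log(Kp)}=\sqrt{2\log K}+\tfrac{\log p}{\sqrt{2\log K}}+o\!\bigl(1/\sqrt{\log K}\bigr)$ and $\log\log(Kp)=\log\log K+o(1)$, and collecting the $O(1/\sqrt{\log K})$ terms, the $\log p$ from the first expansion merges with $\log 4\pi$ to give $\log(4\pi/p^{2})$:
\begin{equation*}
 v_K=\sqrt{2\log K}-\frac{\log\log K+\log(4\pi/p^{2})}{2\sqrt{2\log K}}+\frac{x}{\sqrt{2\log K}}+o\!\Bigl(\tfrac1{\sqrt{\log K}}\Bigr).
\end{equation*}
Undoing $u_K=\mu_g+\sigma_g v_K$ identifies $u_K=b_K+x/a_K$ with $a_K=\sqrt{2\log K}/\sigma_g$ and $b_K=\sigma_g\bigl(\sqrt{2\log K}-\tfrac{\log\log K+\log(4\pi/p^{2})}{2\sqrt{2\log K}}\bigr)+\mu_g$, and feeding $u_K$ back into Theorem~\ref{thm-EVT convergence different formulation } gives $F^{K}(u_K)\to e^{-e^{-x}}$, as claimed.

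The main obstacle is the asymptotic bookkeeping in this last step: one must verify that the passage from an integer sample size to the $K$‑dependent, non‑integer quantity $Kp$ perturbs $b_K$ by exactly the amount absorbed into $\log(4\pi/p^{2})$, while every other correction — the $o(1)$ in the tail equivalence $1-F\sim p(1-F_g)$, the $\log p$ shift buried inside $\log\log(Kp)$, and the discrepancy between $\sqrt{2\log(Kp)}$ and $\sqrt{2\log K}$ in the denominator of the $x$‑term — is genuinely $o(1/\sqrt{\log K})$ and hence invisible at the order to which $b_K$ is specified. Once those estimates are pinned down, $F^K(u_K)\to e^{-e^{-x}}$ holds for each fixed $x$, and continuity of the Gumbel c.d.f.\ promotes this to convergence in distribution.
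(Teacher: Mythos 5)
Your proposal is correct and arrives at the constants by the same core computation as the paper (Appendix~\ref{Appendix A}): the tail equivalence $1-F(t)=p\,(1-F_g(t))(1+o(1))$, the Mills‑ratio asymptotics, and the inversion of $K(1-F(u_K))\to e^{-x}$ term by term in $1/\log K$. Where you genuinely diverge is in how Type~\Rmnum{1} membership is established. The paper spends the bulk of its proof on a direct verification of the von Mises sufficient condition \eqref{equ-Sufficient type 1 condition}, sandwiching the mixture's hazard ratio between the upper and lower $\mathrm{Erfc}$ bounds and evaluating four separate limits, and then separately checks \eqref{equ-Necessary and sufficient type 1 condition} with $g(t)=\sigma_g^2/(t-\mu_g)$. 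You instead observe that once a sequence $u_K(x)=x/a_K+b_K$ with $K(1-F(u_K))\to e^{-x}$ is exhibited for every real $x$, Theorem~\ref{thm-EVT convergence different formulation } already delivers $F^K(u_K)\to e^{-e^{-x}}$, making the domain‑of‑attraction verification logically redundant for the i.i.d.\ setting of this theorem; this is a legitimate and leaner route, and your reduction to the standard normal problem with effective sample size $Kp$ cleanly explains where the $\log(4\pi/p^2)$ in $b_K$ comes from. Two small points to tighten: your aside that the equivalence ``inserted into \eqref{equ-Sufficient type 1 condition}'' recovers the von Mises condition also needs the companion equivalences $f(t)\sim p f_g(t)$ and $f'(t)\sim p f_g'(t)$ (which hold by the same exponential comparison, but should be stated if you keep that remark); and you should record explicitly that perturbing $u_K$ by the $o(1/a_K)$ remainder changes $K(1-F(u_K))$ only by a factor $1+o(1)$, which is the standard continuity step that both you and the paper use implicitly.
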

    The proof outline is as follows. First we start by solving directly the limit in condition \eqref{equ-Sufficient type 1 condition}, while using the sandwich rule, replacing the Gaussian CDF with its corresponding complimentary error function and using the upper and lower bounds of it. The second condition \eqref{equ-Necessary and sufficient type 1 condition} has been proven by finding that $g(t)=\sigma^2/(t-\mu_g)$, along with the asymptotic analysis that the ''Bad" Gaussian distribution has little affect in it's tail comparing to the ''Good" distribution. And finally using similar asymptotic analysis as preformed in condition \eqref{equ-Necessary and sufficient type 1 condition} the normalizing constants was derived. For more details, see appendix \ref{Appendix A}, which gives the fulfillment of the two conditions above and the derivation of the normalizing constants $a_K$ and $b_K$.\\

    Using Theorem \ref{thm-Capacity distribution in a time dependent channel convergence to a Gumbel} we can now give the expected channel capacity
    \begin{equation}\label{equ-Expected channel capacity (stationary distribution)}
        \begin{aligned}
            E[\widetilde{M_K}]&= b_K+\frac{\gamma}{a_K}\\
                  &=\sigma_g\left(\left(\sqrt{2\log K}-\frac{\log{\log K}+\log{\frac{4\pi}{p^2}}}{2\sqrt{2\log K}}\right)\right.\left.\quad +\frac{\gamma}{\sqrt{2\log{K}}}\right) +\mu_g
        \end{aligned}
    \end{equation}
    where $\gamma=0.57721$ is Euler-Mascheroni constant. Thus, for $K$ sufficiently large,
    \begin{equation}\label{equ-CapacityExpression}
        E[\widetilde{M_K}] = \sigma_g\sqrt{(2\log K)}+\mu_g+\text{o}\left(\frac{1}{\sqrt{\log{K}}}\right).
    \end{equation}

    That is, for large number of users, the expectation capacity grows like $\sigma_g\sqrt{(2\log K)}$. A word is in place here concerning the values of $p$ and $q$. We assume that $(p,q)$ are bounded away from zero, which implies that the uninteresting case which all users has bad channel state, or the opposite case where all have good channel does not apply. Also, in the proof of Theorem \ref{thm-Capacity distribution in a time dependent channel convergence to a Gumbel} we used this assumption while we made a division by $p$ and $q$. Simulation results for the capacity distribution is given in Figure \ref{fig_MaxialCapacityDistributionForTimeDependentChannel}.
    \begin{figure}[h]
        \centering
        \includegraphics[width=0.8\textwidth]{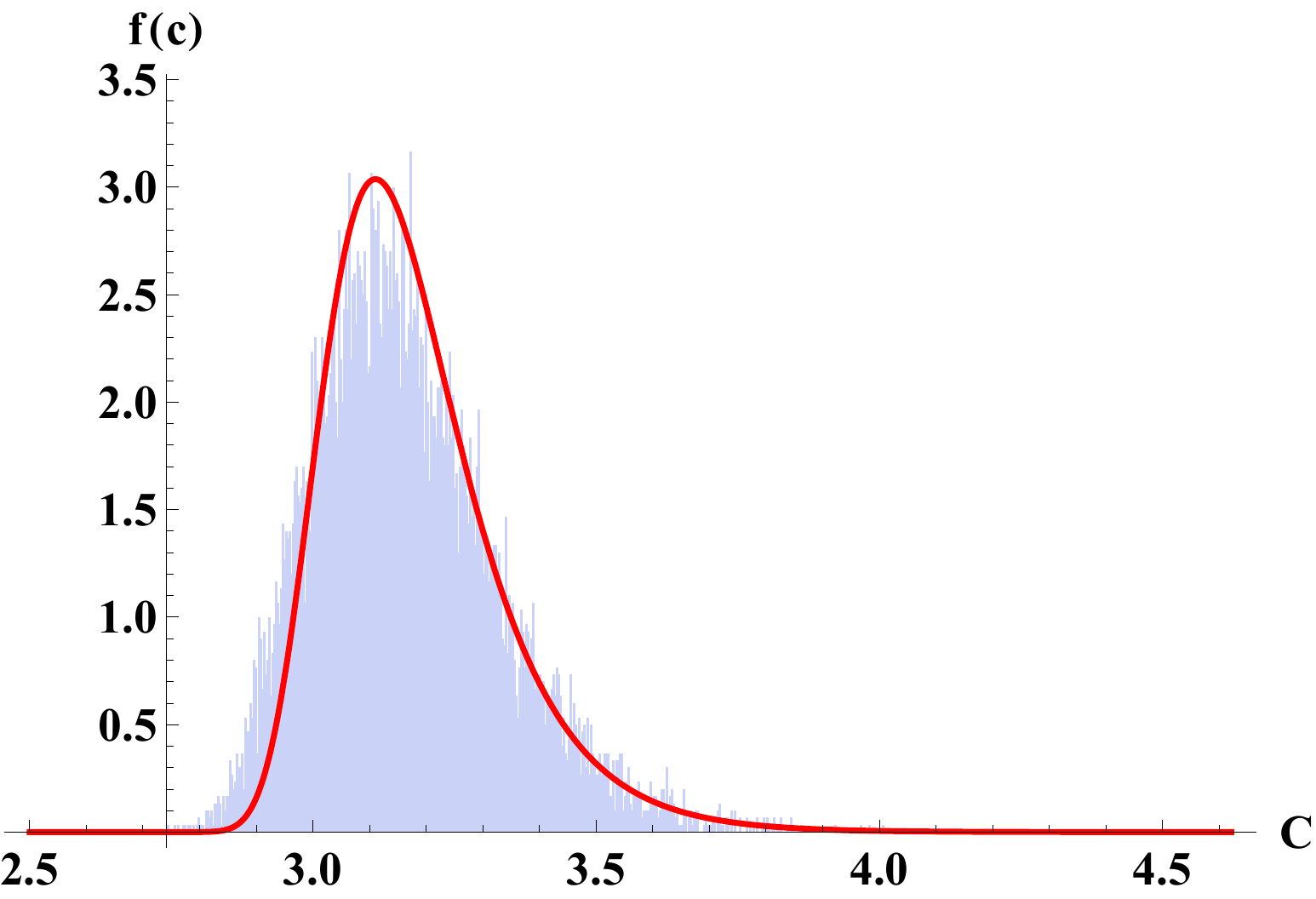}
        \caption[Simulation for capacity distribution]{Simulation for the maximal capacity distribution, when choosing the maximal capacity among 5000 capacities that following the stationary distribution where the red line is the corresponding Gumbel density with the constants $a_K$ and $b_K$. }
        \label{fig_MaxialCapacityDistributionForTimeDependentChannel}
    \end{figure}

    \begin{remark}[The probability $p$]
        The probability $p$, which is the stationary probability to exist in a good state, rules the average number of users which are in good state while considering the whole system. In each time slot we can distinguish between two groups of users, the users that are in good state and the users that are in bad state.  It is easy to show that the number of good users in the good group on average is $pK$. Hence, as $p$ grows the expected capacity grows since there are more users in good state, we can see it in the analytical result \eqref{equ-Expected channel capacity (stationary distribution)} for the channel capacity and in Figure \ref{fig-Capacity comparison as Function of p}.
    \end{remark}

    \begin{figure}[h]
        \centering
        \includegraphics[width=0.6\textwidth]{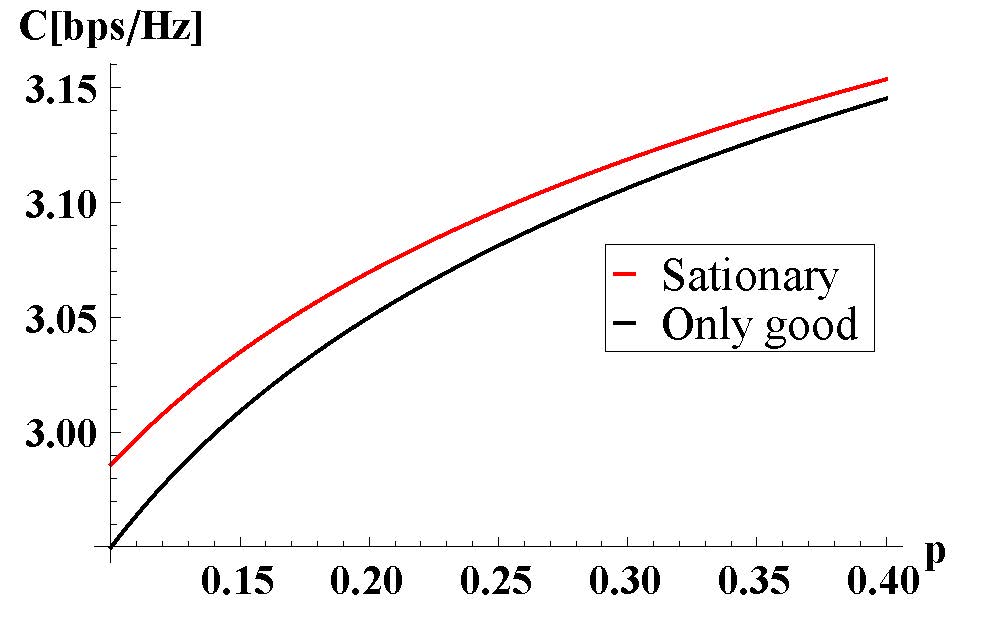}
        \caption[Capacity comparison as Function of p]{The channel capacity gain as given in \eqref{equ-Expected channel capacity (stationary distribution)} in comparison to the gain of choosing only from the good group with size $Kp$, for 5000 users as a function of $p$ and $\mu_g = \sqrt{2}, \sigma_g = 0.5$. }
        \label{fig-Capacity comparison as Function of p}
    \end{figure}

    One may wonder on why considering the bad group at all, meaning, why should a user which belongs to the bad group will be taken into account in the scheduling decision process in the beginning of a slot. Thus leaving only the users with the good channel to compete for the channel. For this case, the capacity with $K$ sufficiently large is
    \begin{equation}\label{equ-CapacityExpressionOnlyGood}
        E[\widetilde{M_{pK}}] = \sigma_g\sqrt{(2\log pK)}+\mu_g+\text{o}\left(\frac{1}{\sqrt{\log{pK}}}\right)
    \end{equation}

    In Figure \ref{fig-Capacity comparison as Function of p} we can see the influence of the bad group on the capacity gain. For rather small values of $p$ it is beneficial to schedule the strongest user from both groups. As $p$ grows the size of the good group grows as well hence the two capacity gains converge to the case were all the users are in good state. In Figure \ref{fig-gain for time dependent channel} we can see the capacity gain as a function of the number of users, specifically, in Figure \ref{fig-capacityGainFforTimeDependentChannel_p=0.2} the difference between the gains is noticeable.

   \begin{figure}
        \centering
        \begin{subfigure}[b]{0.5\textwidth}
                \centering
                \includegraphics[width=\textwidth]{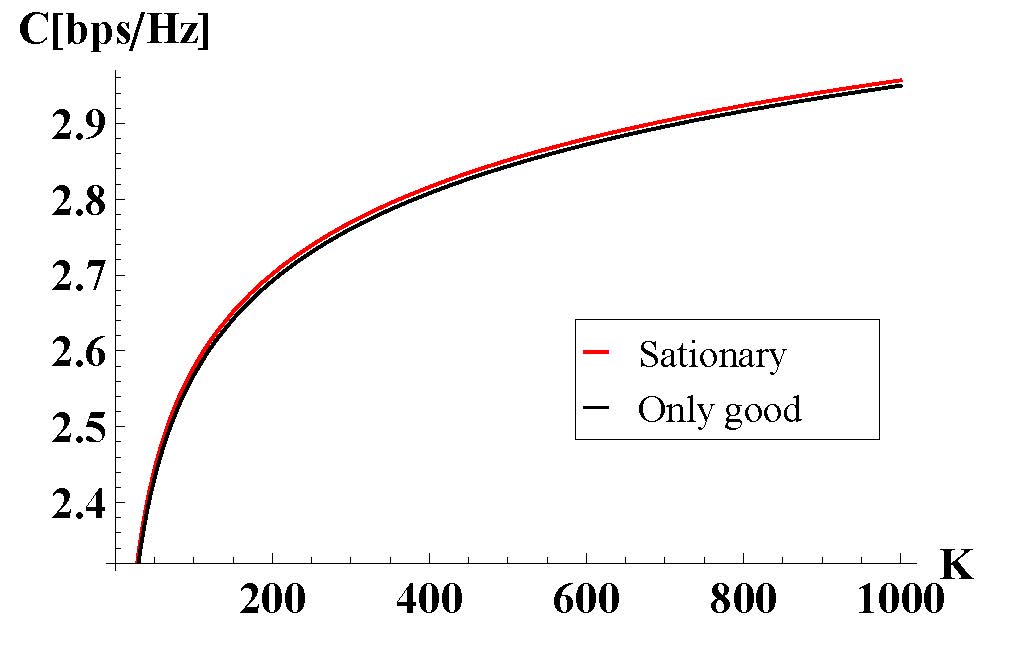}
                \caption{}
                \label{fig-capacityGainFforTimeDependentChannel_p=0.5}
        \end{subfigure}%
        ~ 
        \begin{subfigure}[b]{0.5\textwidth}
                \includegraphics[width=\textwidth]{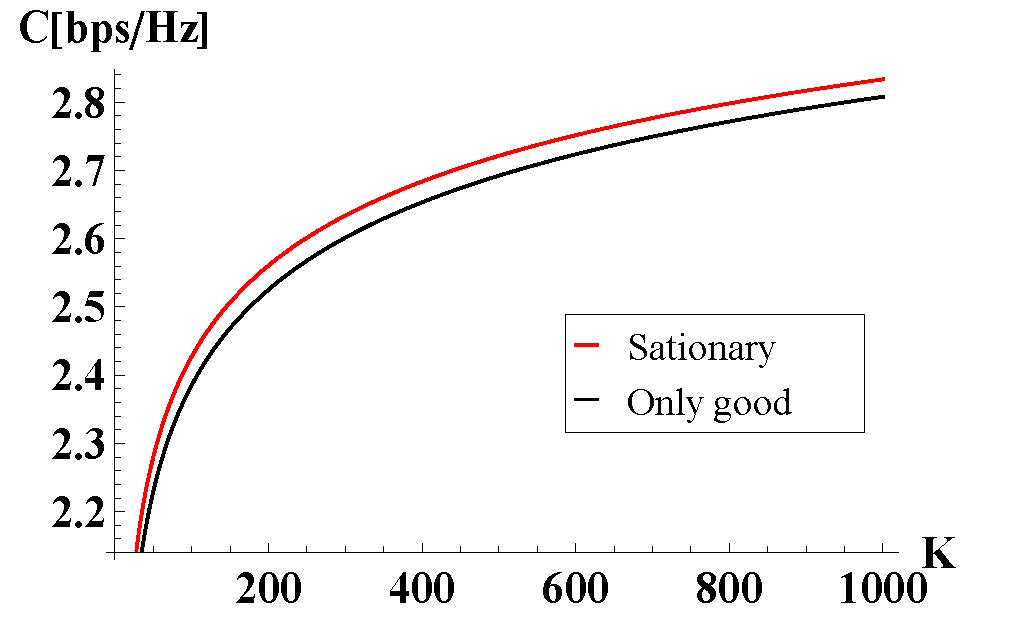}
                \caption{}
                \label{fig-capacityGainFforTimeDependentChannel_p=0.2}
        \end{subfigure}
        \caption[Capacity gain comparison]{The capacity gain for choosing the best user among all the population and only the good where $\sigma_g = 0.5$ and (a) p=0.5 (b) p=0.2 }
        \label{fig-gain for time dependent channel}
\end{figure}

    \section{Capacity - Distributed Scheduling}

    When considering a distributed scheduling algorithm, we derive the expected channel capacity using PPA. PPA is a tool for analysing threshold arrival rates and tail distributions, while imposing a threshold-based algorithm for the channel access, which was discussed earlier as an option for a distributed scheduling mechanism. This method can also be found in \cite{kampeas2014accepted} for the case of \textit{i.i.d.} and heterogeneous users. First we analyze each user separately and examining its sequence of channel capacities over time. This sequence can be modeled as a point process on some interval, using the definition in \eqref{equ-sequence of point processes N_n}, yet with the different that our process of points is formed from a dependent sequence of random variables.  As it turns out, the exceeding points above some threshold $u$ converges to a non-homogeneous Poisson process, which we discuss later, and the question that can be asked is, what is the rate of exceedance above the threshold $u$. We show that our dependent sequence converge to the same Poisson process if the sequence was of an \textit{i.i.d.} random variables, which will eventually allow us to examine the sequence of all users' capacities in a specific time.\\

    The sequence $\{C_i(n)\}$ depends on what state user $i$ exists in, which is, in turn, ruled by the underlying Markov chain process $\{J_i(n)\}$ as defined in \eqref{equ-UserCapacityProcess}. This kind of process have been studied before and is known as a $\tilde{J}-X$ process by Janssen (1969, \cite{janssen1969processus}), or "chain-dependent" process, emphasizing the fact that this is a natural extension of an \textit{i.i.d.} process. The extreme laws for chain dependent processes have showed that, convergence to one of the extreme distributions exist under some conditions. This is important for the formation of the point process and the analysis of the rate of exceedance we do in the sequel.\\

    Before we give the formulation for the chain dependent process in our work, let us remind the formulation of the original $\tilde{J}-X$ process as reviewed also in Chapter \ref{preliminaries-Chain Dependent Process}. The $\tilde{J}-X$ process as in \cite{o1974limit}, \cite{denzel1975limit} satisfies:
    \begin{equation}\label{equ-J-X process defenition}
        \begin{aligned}
          &P(\tilde{J}_n = j , X_n \leq \alpha \mid \tilde{J}_0,X_1,\tilde{J}_1,...,X_{n-1},\tilde{J}_{n-1}=i) \\
                   &= P(\tilde{J}_n=j,X_n\leq \alpha \mid \tilde{J}_{n-1}=i)=P_{ij}\tilde{H}_i(\alpha),
        \end{aligned}
    \end{equation}
    where $P$ is the transition matrix of the chain and $\tilde{H}_i(\alpha)$, where $i$ belongs to the state space, are the distribution functions associated with the chain states, respectively. Note that each state determines the distribution of $X$ for the preceding time transition. This means that given the chain process $\{\tilde{J}_n\}$ The random variables of the $\{X_n\}$ process are conditionally independent. More precisely,
    \begin{equation*}
        \begin{aligned}
            &P(X_1 \leq \alpha_1,X_2 \leq \alpha_2,...,X_n \leq \alpha_n \mid\\
            & \quad \quad \tilde{J}_0=j_0,\tilde{J}_1=j_1,...,\tilde{J}_{n-1}=j_{n-1})\\
            &=P(X_1 \leq \alpha_1 \mid \tilde{J}_0=j_0)P(X_2 \leq \alpha_2 \mid \tilde{J}_1=j_1) \cdots \\
            & \quad \quad P(X_n \leq \alpha_n \mid \tilde{J}_{n-1}=j_{n-1})\\
            &=\tilde{H}_{j_0}(\alpha_1)\tilde{H}_{j_1}(\alpha_2) \cdots \tilde{H}_{j_{n-1}}(\alpha_n).
        \end{aligned}
    \end{equation*}
    As discussed in Chapter \ref{preliminaries-Chain Dependent Process}, every stationary chain-dependent process $\{X_n\}$ is strongly mixing. This is true if the initial distribution of the chain is the stationary distribution, i.e. $P(J_0=i)=\pi_i$ for all $i$ in the finite state space, where $\pi$ is the stationary distribution. Then, the distribution of $X_n$ is $\tilde{H}(x)=\Sigma \pi_i\tilde{H}_i(x)$. This result is very important to us since we are interested in a time dependent environment and we would like to explore the point process analysis for the stationary sequences associated with it. \\

    Returning to our context, we shall consider the sequence $\{X_n\}$ as the sequence of capacities $\{C_i(n)\}$ of the i-th' user over time and the chain process $\{\tilde{J}_n\}$ as the sequence of the irreducible, aperiodic, 2-state Good-Bad Markov chain $\{J_i(n)\}$. The distributions $\tilde{H}_i(\alpha)$ will be the distributions $H_i(\alpha)$, where $i=g,b$, according to the Good-Bad sates, respectively. Since we are analysing the capacity process of one user, which is identically to the other users, we will note $\{C_n\}$ as it's capacity sequence and $\{J_n\}$ as the sequence of the chain.

    Unlike the definition of the $\tilde{J}-X$ process, in our model each state determines the capacity distribution in the current slot. So equation \eqref{equ-J-X process defenition} according to our case can be written:
    \begin{equation}\label{equ-J-C process defenition our model}
        \begin{aligned}
            P(J_n &= j , C_n \leq \alpha \mid J_0,C_1,J_1,...,C_{n-1},J_{n-1}=i) \\
               &= P(J_n=j,C_n\leq \alpha \mid J_{n-1}=i)=P_{ij}H_j(\alpha)
        \end{aligned}
    \end{equation}
    The fact that the capacity in each slot depends on the chain state in the current slot doesn't impact the strong mixing property of the sequence $\{X_n\}$.
    We will show that the $J-C$ process as defined to meet our model, is still strongly mixing by definition \ref{def-StrongMixing}, while using the proof guidelines in \cite{o1974limit}.
    \begin{lemma}\label{lem-C_n is strongly mixing}
        $\{C_n\}$ is strongly mixing with $g(k)=\sum_{j} \pi_i \mid P_{ij}^k -\pi_j \mid$, where $\pi$ is the stationary distribution of the chain and $P_{ij}^k=(P^k)_{ij}$.
    \end{lemma}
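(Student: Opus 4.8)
The plan is to leverage the conditional-independence structure built into the $J$-$C$ process: conditioned on the entire chain trajectory $\{J_n\}_{n\ge 0}$, the capacities $\{C_n\}$ are mutually independent with $C_n\mid\{J_n\}\sim H_{J_n}$. Consequently, for any $A\in\mathfrak{F}(C_1,\dots,C_m)$ and any $B\in\mathfrak{F}(C_{m+k},C_{m+k+1},\dots)$, the conditional probabilities $P(A\mid\{J_n\})$ and $P(B\mid\{J_n\})$ are, respectively, $\mathfrak{F}(J_1,\dots,J_m)$- and $\mathfrak{F}(J_{m+k},J_{m+k+1},\dots)$-measurable; I would write them as $\phi_A(J_1,\dots,J_m)$ and $\phi_B(J_{m+k},J_{m+k+1},\dots)$, both valued in $[0,1]$. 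Taking expectations and using the conditional independence of $A$ and $B$ given $\{J_n\}$ yields $P(A\cap B)=E[\phi_A\,\phi_B]$, so that the whole correlation between a ``past'' and a ``future'' event is carried by the Markov chain alone.

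Next I would condition on $J_m$ and invoke the Markov property: given $J_m$, the blocks $(J_1,\dots,J_{m-1})$ and $(J_{m+1},J_{m+2},\dots)$ are independent, so $\phi_A$ and $\phi_B$ decouple. Setting $a_i:=E[\phi_A\mid J_m=i]$ and $\psi_B(j):=E[\phi_B\mid J_{m+k}=j]$, and using stationarity $P(J_m=i)=\pi_i$ together with $P(J_{m+k}=j\mid J_m=i)=P_{ij}^k$, one obtains $P(A\cap B)=\sum_i\pi_i a_i\sum_j P_{ij}^k\psi_B(j)$. Applying the same steps to $\mathbf 1_A$ and $\mathbf 1_B$ separately gives $P(A)=\sum_i\pi_i a_i$ and $P(B)=\sum_j\pi_j\psi_B(j)$, hence
\[
P(A\cap B)-P(A)P(B)=\sum_i\pi_i a_i\sum_j\bigl(P_{ij}^k-\pi_j\bigr)\psi_B(j).
\]
Bounding $a_i,\psi_B(j)\in[0,1]$ gives $\bigl|P(A\cap B)-P(A)P(B)\bigr|\le\sum_{i,j}\pi_i\bigl|P_{ij}^k-\pi_j\bigr|=:g(k)$, and since the chain is finite, irreducible and aperiodic, $P_{ij}^k\to\pi_j$ geometrically, so $g(k)\to 0$ as $k\to\infty$. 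This is exactly Definition~\ref{def-StrongMixing}.

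I do not anticipate a genuine obstacle, only two points that need care. First, the measurability claims — that $P(A\mid\{J_n\})$ depends only on $(J_1,\dots,J_m)$ and $P(B\mid\{J_n\})$ only on $(J_{m+k},J_{m+k+1},\dots)$ — should be established by a monotone-class (or $\pi$-$\lambda$) argument over the generating rectangular events $\{C_{i_1}\le\alpha_1,\dots,C_{i_r}\le\alpha_r\}$, once conditional independence of the $C_n$'s is used to factor these. Second, the one model-specific feature: in our setup $C_n$ is governed by the \emph{current} state $J_n$ rather than by $J_{n-1}$ as in the classical $\tilde{J}$-$X$ process of \cite{o1974limit,denzel1975limit}. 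This shift is harmless, because the index sets $\{1,\dots,m\}$ and $\{m+k,m+k+1,\dots\}$ remain separated for every $k\ge 1$, so the Markov factorization through $J_m$ and the resulting coefficient $g(k)$ go through verbatim; this is why the proof can follow the guidelines of \cite{o1974limit} essentially unchanged.
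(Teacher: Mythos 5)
Your argument is correct and is essentially the paper's own proof: both decompose $P(A\cap B)-P(A)P(B)$ by conditioning on $(J_m,J_{m+k})$, use the conditional independence of the capacities given the chain together with the Markov property to factor the past and future through those two states, bound the conditional probabilities by one, and arrive at the same $g(k)=\sum_{i,j}\pi_i\mid P_{ij}^k-\pi_j\mid\rightarrow 0$. Your version merely makes the measurability and conditional-independence steps (and the harmlessness of indexing by $J_n$ rather than $J_{n-1}$) more explicit than the paper does.
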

    \begin{proof}
        Let $A$ and $B$ be as definition \ref{def-StrongMixing}. Then
        \begin{equation*}
            \begin{aligned}
                 & \mid P(A \cap B)-P(A)P(B) \mid  \\
                 & \leq \sum_{i,j\in\{0,1\}} \mid P(A \cap B,J_m=i,J_{m+k}=j)-\\
                 & \quad \quad P(A,J_m=i)P(B,J_{m+k}=j) \mid \\
                 &=\sum_{i,j\in\{0,1\}} P(A|J_m=i)P(B|J_{m+k}=j)P(J_m=i) \\
                 & \quad \quad \mid P(J_{m+k}=j|J_m=i)-P(J_{m+k}=j) \mid \\
                 &\leq  \sum_{i,j\in\{0,1\}} \pi_i \mid P_{ij}^k -\pi_j \mid = g(k).\\
            \end{aligned}
        \end{equation*}
        For each $i \in {0,1}$, $\sum_{j} \pi_i \mid P_{ij}^k -\pi_j \mid \rightarrow 0$ as $k \rightarrow \infty$, this was also used in \cite{o1974limit} for showing the strongly mixing property, nevertheless, it is easy to notice that in fact as $k \rightarrow \infty$, $(P^k)_{ij} \rightarrow \pi_j$ regardless to $i$.
     \end{proof}

     Considering Theorem \ref{thm-Point process convergance of stationary process}, we will show that the point process under the chain dependent model convergence to a Poisson process just like one obtained from an \textit{i.i.d.} sequence with the same marginal d.f. $F$, which in our case is $H(x)$. The normalizing constants will be as given in Theorem \ref{thm-Capacity distribution in a time dependent channel convergence to a Gumbel} which maintains the condition $1-H(u_n)=P\{X>u_n\}\sim \uptau/n$ as $n\rightarrow\infty$.
     As mentioned before, the strong mixing condition hold for our sequence of channel capacity, and as a result so does the condition $D(u_n)$ as defined in definition \ref{def-D condition} which is a weakened case of strong mixing \cite{leadbetter1974extreme}. We would like to show that condition $D'(u_n)$ as defined in definition \ref{def-D' condition} also holds so we would be able to characterize the rate of exceedance over the threshold $u_n$. Note that we are only interested in a sequence of reals $\{u_n\}$ which satisfies $1-H(u_n)=\uptau/n+o(1/n)$, in the fulfillment of the $D'(u_n)$ condition.

     \begin{lemma}\label{lem-condition D'(u_n) holds on C_n}
           The local dependence condition $D'(u_n)$ holds for the sequence $\{C_n\}$ as defined in \eqref{equ-UserCapacityProcess}.
    \end{lemma}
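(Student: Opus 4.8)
The plan is to bound the pairwise exceedance probabilities directly, exploiting the conditional independence built into the chain‑dependent model, and then to show the resulting sum over $j$ is $O(1/k)$. Recall from Definition \ref{def-D' condition} that I must establish $\limsup_{n\to\infty} n\sum_{j=2}^{\lfloor n/k\rfloor} P\{C_1>u_n,C_j>u_n\}\to 0$ as $k\to\infty$, for the level sequence $\{u_n\}$ satisfying $1-H(u_n)=\uptau/n+o(1/n)$ with $H=\sum_a\pi_a H_a=pH_g+qH_b$ ($\pi_g=p$, $\pi_b=q$). Writing $\bar H_i:=1-H_i$, the target of the estimate is a bound of the form $P\{C_1>u_n,C_j>u_n\}\le \bar H_g(u_n)\,(1-H(u_n))$, uniform in $j\ge 2$.

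First I would condition on the chain. From \eqref{equ-J-C process defenition our model}, given the whole path $J_1,\dots,J_j$ the capacities $C_1$ and $C_j$ are independent and each depends on the path only through its own state, so $P\{C_1>u_n,C_j>u_n\mid J_1=a,J_j=b\}=\bar H_a(u_n)\bar H_b(u_n)$. Starting the chain from its stationary law, $P\{J_1=a,J_j=b\}=\pi_a(P^{j-1})_{ab}$, hence $P\{C_1>u_n,C_j>u_n\}=\sum_{a,b\in\{g,b\}}\pi_a(P^{j-1})_{ab}\,\bar H_a(u_n)\bar H_b(u_n)$. Then I would use that the good tail dominates the bad one — $\bar H_b(u)\le \bar H_g(u)$ for all $u$ large enough, which holds under either parameter regime ($\sigma_g>\sigma_b$, or $\sigma_g=\sigma_b$ with $\mu_g>\mu_b$) — to replace each $\bar H_a(u_n)$ by $\bar H_g(u_n)$, and the row‑sum identity $\sum_a\pi_a(P^{j-1})_{ab}=\pi_b$ to collapse the remaining double sum to $\bar H_g(u_n)\sum_b\pi_b\bar H_b(u_n)=\bar H_g(u_n)(1-H(u_n))$, which does not even depend on $j$.

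Plugging this uniform bound into the $D'(u_n)$ expression gives $n\sum_{j=2}^{\lfloor n/k\rfloor}P\{C_1>u_n,C_j>u_n\}\le \frac{1}{k}\bigl(n\bar H_g(u_n)\bigr)\bigl(n(1-H(u_n))\bigr)$. Here $n(1-H(u_n))\to\uptau$ by the choice of $u_n$, and since the bad Gaussian tail is asymptotically negligible against the good one, $p\,\bar H_g(u_n)\sim 1-H(u_n)$, so $n\bar H_g(u_n)\to\uptau/p$; thus the right‑hand side tends to $\uptau^2/(pk)$, which vanishes as $k\to\infty$. This is precisely $D'(u_n)$, and together with Lemma \ref{lem-C_n is strongly mixing} (which yields $D(u_n)$) and Theorem \ref{thm-Point process convergance of stationary process} it gives the Poisson convergence, with parameter $\uptau$, of the point process of exceedances needed in the sequel.

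The main obstacle — really the only step requiring care — is the tail‑comparison: I need $\bar H_b(u_n)\le\bar H_g(u_n)$ and $\bar H_b(u_n)=o(\bar H_g(u_n))$ to hold uniformly over the relevant range of $u_n\to x^F=\infty$ under both parameter assumptions. This rests on the same Gaussian tail asymptotics (complementary error function upper/lower bounds) already used in the proof of Theorem \ref{thm-Capacity distribution in a time dependent channel convergence to a Gumbel} in Appendix \ref{Appendix A}, so it can be imported directly; everything else is routine bookkeeping.
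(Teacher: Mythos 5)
Your proof is correct and follows essentially the same route as the paper's: condition on the chain states $(J_1,J_j)$, use the conditional independence of the capacities given the chain, bound the resulting product of tails via the defining property $n(1-H(u_n))\to\uptau$, and sum over at most $\lfloor n/k\rfloor$ terms to obtain an $O(1/k)$ bound. The only (minor) difference is how the per-state tails are controlled: you invoke the good-tail domination $1-H_b(u_n)\le 1-H_g(u_n)$ together with the stationarity identity $\sum_a\pi_a(P^{j-1})_{ab}=\pi_b$, whereas the paper simply uses $\pi_i(1-H_i(u_n))\le 1-H(u_n)=\uptau/n+o(1/n)$ (which needs only nonnegativity of the terms, not the Gaussian tail ordering) and bounds the transition entries by their maximum; both yield a vanishing bound of order $\uptau^2/k$.
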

    \begin{proof}
        \begin{equation*}
            \begin{aligned}
                &\limsup_{n \rightarrow \infty} n\sum_{r=2}^{ \lfloor n/k\rfloor } P(C_1>u_n,C_r>u_n) \\
                &\overset{(a)}{\leq} \limsup_{n \rightarrow \infty} n\sum_{r=2}^{ \lfloor n/k\rfloor } \sum_{i,j\in\{0,1\}} P(C_1>u_n | J_1=i)\\
                &\quad \quad \quad \cdot P(C_r>u_n | J_r=j)  P(J_1=i)P(J_r=j | J_1=i) \\
                &= \limsup_{n \rightarrow \infty} n\sum_{r=2}^{ \lfloor n/k\rfloor } \sum_{i,j\in\{0,1\}} (1-H_i(u_n))(1-H_j(u_n))\pi_i P_{ij}^r\\
                &\overset{(b)}{\leq} \limsup_{n \rightarrow \infty} n\sum_{r=2}^{ \lfloor n/k\rfloor } \sum_{i,j\in\{0,1\}} \left(\frac{\uptau}{n\pi_i}+o\left(\frac{1}{n}\right) \right)\\
                &\quad \quad \quad \cdot \left(\frac{\uptau}{n\pi_j}+o\left(\frac{1}{n}\right) \right) \pi_i P_{ij}^r\\
                &\leq \limsup_{n \rightarrow \infty} n\sum_{r=2}^{ \lfloor n/k\rfloor } \sum_{i,j\in\{0,1\}}\frac{1}{\pi_j\pi_i} \left(\frac{\uptau}{n}+o\left(\frac{1}{n}\right) \right)^2  \pi_i P_{ij}^r\\
                &\leq \limsup_{n \rightarrow \infty} n \left\lfloor \frac{n}{k}\right\rfloor \left(\frac{\uptau}{n}+o\left(\frac{1}{n}\right) \right)^2 \sum_{i,j\in\{0,1\}}\frac{1}{\pi_j\pi_i} \pi_i \max_r\{P_{ij}^r\}\\
                &=(\uptau^2+o(1))^2 \frac{1}{k} \sum_{i,j\in\{0,1\}}\frac{1}{\pi_j\pi_i} \pi_i \max_r\{P_{ij}^r\}  \rightarrow 0 \text{  as  } k \rightarrow \infty
            \end{aligned}
        \end{equation*}

    In the above chain, (a) is since once $J_1$ is known, $C_1$ and $C_r,J_r$ are independent, then we conditioned on $J_r$. (b) is true since $u_n$ maintains $n(1-F_1(u_n))=\uptau +o(1)$ so we have
        \begin{equation*}
         \begin{aligned}
              &1-F_1(u_n)=1-H(u_n)=1-\sum_{i} \pi_i H_i(u_n)\\
              &=\sum_{i} \pi_i(1- H_i(u_n))= \frac{\uptau}{n}+o\left(\frac{1}{n}\right)
         \end{aligned}
        \end{equation*}
        and therefore
        \begin{equation*}
              \pi_l(1-H_l(u_n))=\frac{\uptau}{n}-\sum_{i,i\neq l} \pi_i(1- H_i(u_n))\leq \frac{\uptau}{n} +o\left(\frac{1}{n}\right).
        \end{equation*}
        Note that $\pi_l$ and $\uptau$ are constants.
     \end{proof}

     Thus, in our paradigm a single users' channel capacity process has the same laws of convergence as if the sequence $\{C_n\}$ was \textit{i.i.d.} with marginal distribution $H(x)$. Therefore, since the users are independent and each user sees the same marginal distribution $H(x)$, we can analyze the point process of the sequence of all users' capacities at a specific time (e.g. time slot), resulting in the basic case of \textit{i.i.d.} random variables, as given in Theorem  \ref{thm-Point process convergance of iid}. Considering the above, we now turn to evaluate the expected channel capacity.\\

     \subsection{Distributed Algorithm}
     The distributed algorithm for the channel access is based on the algorithm suggested in \cite{qin2006distributed}. Given the number of users, we set a capacity threshold $u$ such that only a small fraction of the users will exceed it. At the beginning of each slot, each user estimates its capacity for that slot. If the capacity anticipated by the user is greater than the capacity threshold, it transmits in that slot. Otherwise, it keeps silent. Choosing the appropriate threshold value is of great significance. A lower value means more users exceedances, therefore more collisions, while a high value means more idle slots. The optimal threshold value is set such that one user exceeded the threshold on average, and as a result, its transmission is successful. We treat this slot as a utilized slot. Hence the expected channel capacity has the form:
     \begin{equation*}
        C_{av}(u)=P_r(\text{utilized slot})E[C|C>u]
     \end{equation*}

     In order to calculate $E[C|C>u]$, the expected capacity experienced by a user who passed $u$, it is important to understand that the average capacity is ruled by the values of points that exceeds the threshold, so one needs to evaluate the distance of the exceeding points from the threshold. \cite{Kampeas2012scheduling} give the analytical tools to compute the tail distribution of the exceeding points. These points follow the generalized Pareto distribution, hence by using the PPA exceedance rate results, and since we showed that we have the same exceedance as an \textit{i.i.d.} case with $H(x)$ as the marginal distribution, the result is the same and we have
     \begin{equation*}
        E[C|C>u]=u+\frac{1}{a_K}+o(\frac{1}{a_K}),
     \end{equation*}
     where $a_K$ is the normalizing constant as in Theorem \ref{thm-Capacity distribution in a time dependent channel convergence to a Gumbel}.

     We say that a slot is utilized if only one point out of all $K$ points exceeds the threshold. Hence, as $K \to \infty$
     \begin{equation*}
        K\left(\frac{1}{K}\right)\left(1-\frac{1}{K}\right)^{K-1}\to e^{-1},
     \end{equation*}
     where the threshold $u$ was chosen such that $1-H(u)=1/K$. We will elaborate on this value in the next subsection.
     The expected channel capacity thus
     \begin{equation}\label{equ-Expected channel capacity (PPA)}
        C_{av}(u)=e^{-1}\left(u+\frac{1}{a_K}+o\left(\frac{1}{a_K}\right)\right)
     \end{equation}
     In the next subsection we show that $u=b_K$ so when comparing the capacity result in \eqref{equ-Expected channel capacity (PPA)} to the expected capacity $E[\widetilde{M_K}]$ as given in \eqref{equ-Expected channel capacity (stationary distribution)}, for the centralized approach, we see that we have the same scaling laws, only that the distributed approach is smaller only by a factor of $e^{-1}$. The normalizing constant $a_K$ along with the threshold $u=b_K$  is as given in Theorem  \ref{thm-Capacity distribution in a time dependent channel convergence to a Gumbel}, reflect the time dependency of the channel as explained before.

     \subsection{Threshold Estimation}

     The optimal threshold was set such that only one user on average exceeds it in each time slot. This is due to the analysis of \cite{qin2003exploiting}, where it has been shown that the optimal transmission probability for each user to transmit in a specific time slot is $\alpha(n)/n$, where asymptotically $\alpha(n) \rightarrow 1$ hence, the probability $1/n$ is a good approximation. By using this rule we can estimate the optimal threshold value for the time dependent channel as well:
     \begin{equation*}
        1-H(u)=\frac{1}{K}\\
     \end{equation*}
     Hence,
     \begin{equation*}
        1-pF_g(u)-qF_b(u)=\frac{1}{K},\\
     \end{equation*}
     where the above represent the probability that a user capacity will exceed the threshold $u$. With similar derivation as in appendix \ref{Appendix A} we get that
     \begin{equation}\label{equ-Estimated threshold}
        \begin{aligned}
            u=&\sigma_g\sqrt{2\log{K}}\left( 1-\frac{\frac{1}{2}\log{\frac{4\pi}{p^2}}+\frac{1}{2}\log{\log{K}}}{2\log{K}}+o\left( \frac{1}{\log{K}} \right) \right)\\
            & \quad +\mu_g\\
            =&b_K
        \end{aligned}
     \end{equation}
     Figure \ref{fig_Point_Process_Sim} gives simulation results for the users' capacities point process, and the exceedance above the threshold $u$ as given in \eqref{equ-Estimated threshold}. One can see that as the population grows, the threshold value also grows in a way which maintains the average probability of exceedance $1/K$. We can see that in Sub-Figure \ref{fig_Point_Process_Sim100} no exceedance occur among all users where in the other cases one user did manage to exceed alone.

    \begin{figure}
        \centering
        \begin{subfigure}[b]{0.4\textwidth}
                \centering
                \includegraphics[width=\textwidth]{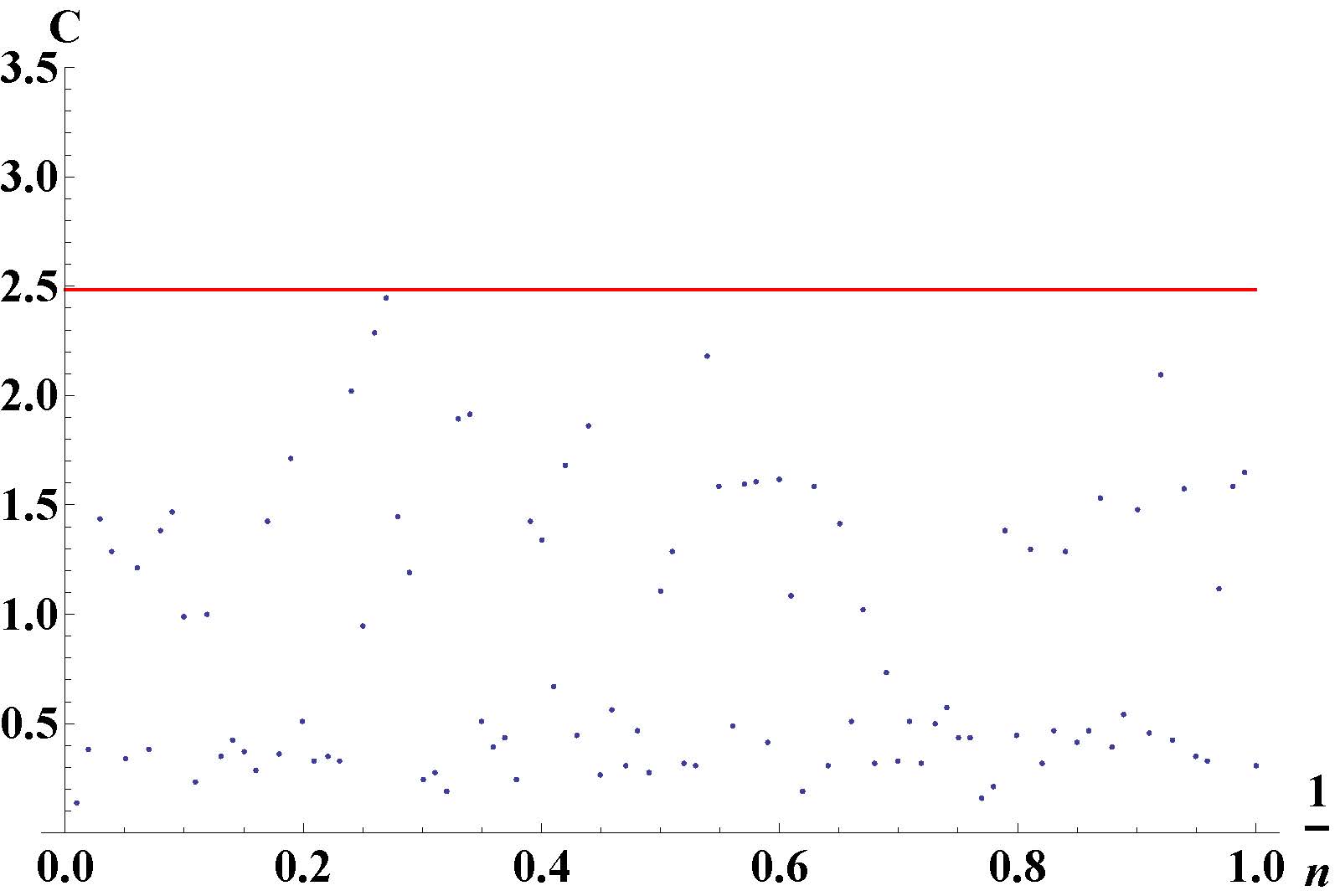}
                \caption{}
                \label{fig_Point_Process_Sim100}
        \end{subfigure}%
        ~ 
        \begin{subfigure}[b]{0.4\textwidth}
                \includegraphics[width=\textwidth]{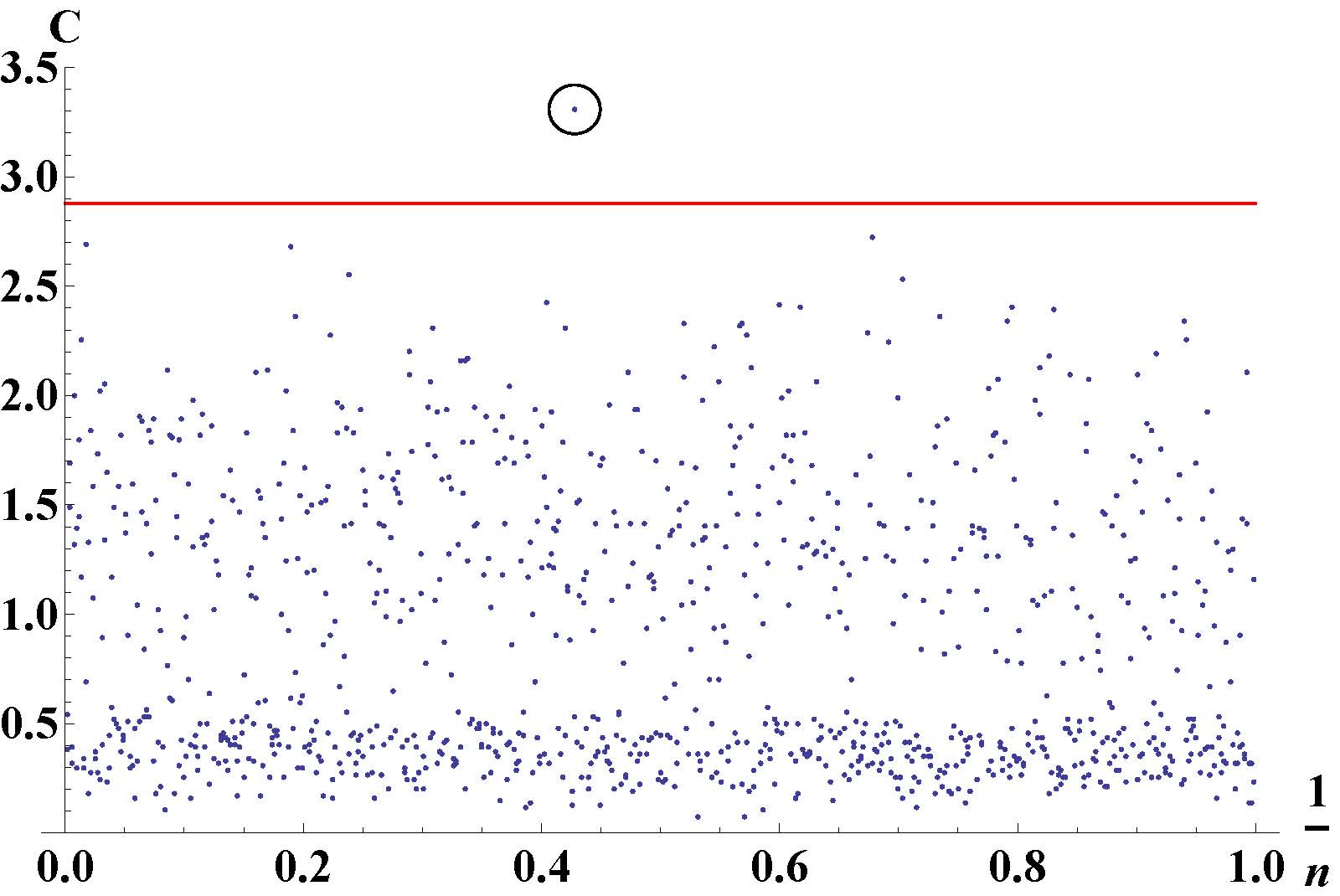}
                \caption{}
                \label{fig_Point_Process_Sim1000}
        \end{subfigure}
        ~ 

        \begin{subfigure}[b]{0.4\textwidth}
                \includegraphics[width=\textwidth]{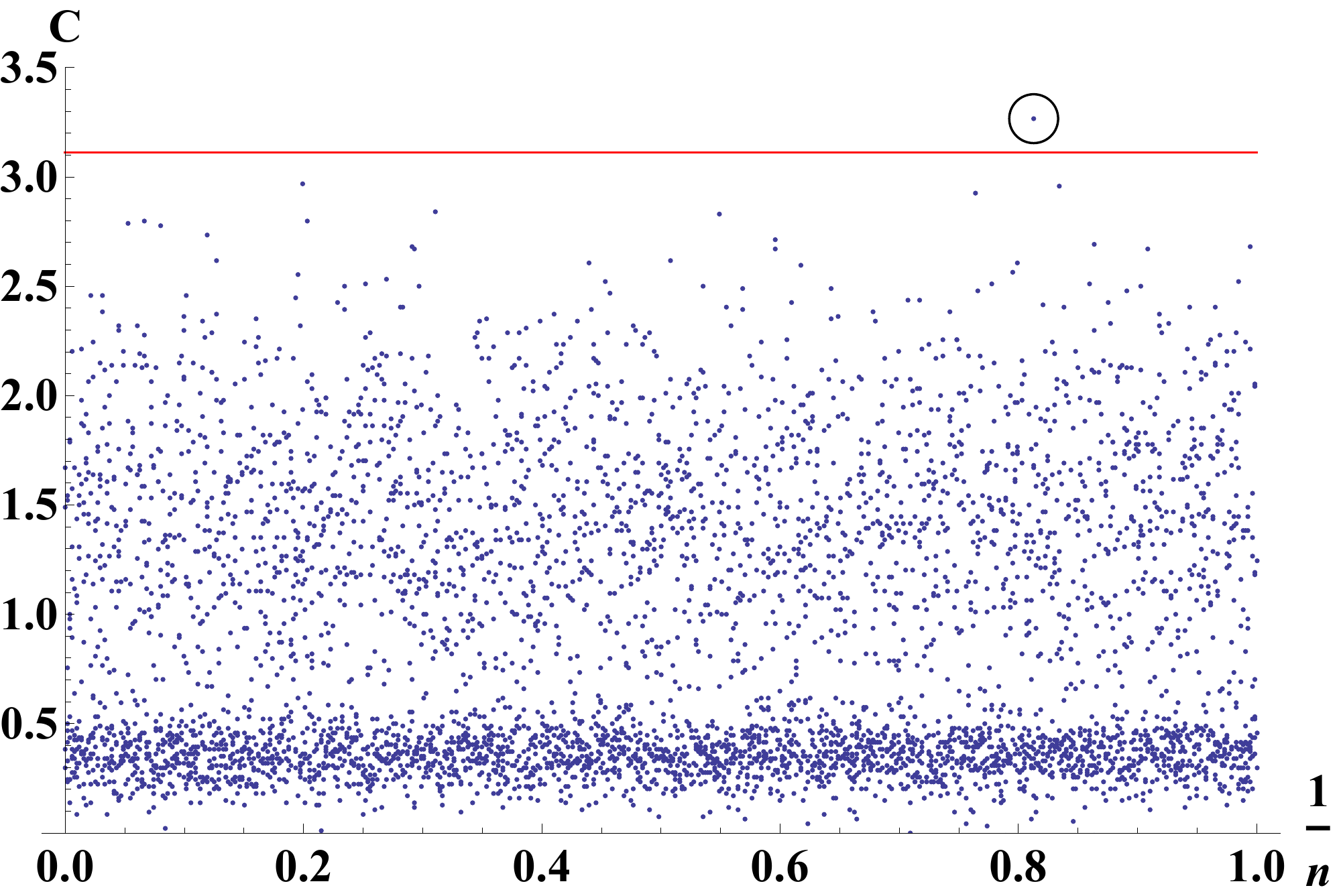}
                \caption{}
                \label{fig_Point_Process_Sim5000}
        \end{subfigure}%
        ~ 
        \begin{subfigure}[b]{0.4\textwidth}
                \includegraphics[width=\textwidth]{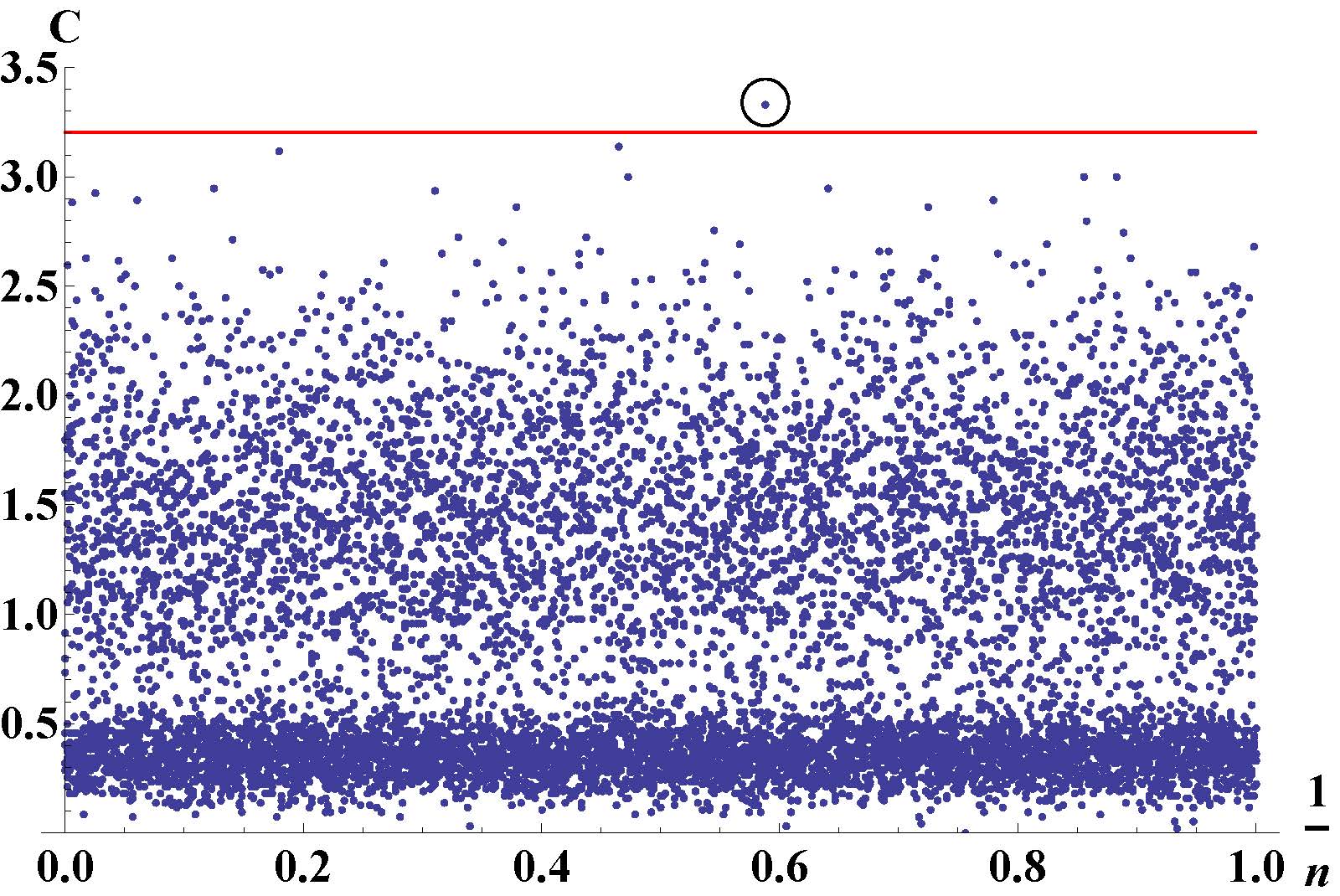}
                \caption{}
                \label{fig_Point_Process_Sim10000}
        \end{subfigure}
        \caption[Point process simulation]{Point process simulation for the exceedance of the threshold $u=b_K$ in a time dependent environment where(a) K=100 (b) K=1000 (c) K=5000 (d) K=10000}
        \label{fig_Point_Process_Sim}
    \end{figure}

    \subsection{Refinement On The Threshold Value}
    While considering \textit{i.i.d.} random variables as the sequence of channel capacity, we can give a more accurate value for the threshold such that only one user exceeds it on average. The threshold given in the previous chapter \eqref{equ-Estimated threshold}, such that $u_n=b_K$, was formed due to the calculations of the mixed Gaussian distribution \eqref{equ-stationary distribition of C_i(n)} which was the marginal distribution for a specific channel sample. As a reminder we are searching $u_n$ such that
    \begin{equation*}
        1-F(u_n)=\frac{1}{K}\\
    \end{equation*}
    where in this chapter $F$ is CDF of normal distribution with parameters $(\mu,\sigma)$, which has a known complementary function in contrast to the mixed Gaussian distribution.\\
    In \cite{kampeas2012capacity} a use in the complimentary error function was done in order find threshold $u_k$ such that $k$ users will exceed it among the $K$ users, in our case $k=1$, hence we have
    \begin{equation}\label{equ-accurate value for the threshold}
      \begin{aligned}
        u_1&=\mu+\sqrt{2}\sigma erfc^{-1}\left(\frac{2}{K}\right)\\
           &=\mu+\sigma\sqrt{2\log(K)-\log\left[-2\pi\left(2\log\left(\frac{1}{K}\right)+\log2\pi\right)\right]}+o\left(K^2\right).
      \end{aligned}
    \end{equation}

\chapter{Performance analysis - distributed algorithm}\label{Delay and QoS under the distributed algorithm}

    In the previous chapter, we concerned ourselves with the question of what is the expected channel capacity in a distributed scheduling system based on the exceedance of some threshold $u_K$. This analysis touched just one part of the information communication chain, as we can ask relevant questions regarding delay and quality of service of the information, before it leaves the transmitter. \\

    $K$ users wish to transmit data to a shared access point. Users are not necessarily fully backlogged, and each of them has a Poisson arrival process with rate $\lambda$. Clearly this means a user may not have any information to send in a certain slot. We focus our attention on the users' queues, as illustrated in Figure \ref{fig-QueuingSystem}, and, specifically, characterize the service process for each user.
    \begin{figure}[h]
            \centering
            \includegraphics[width=2.5in]{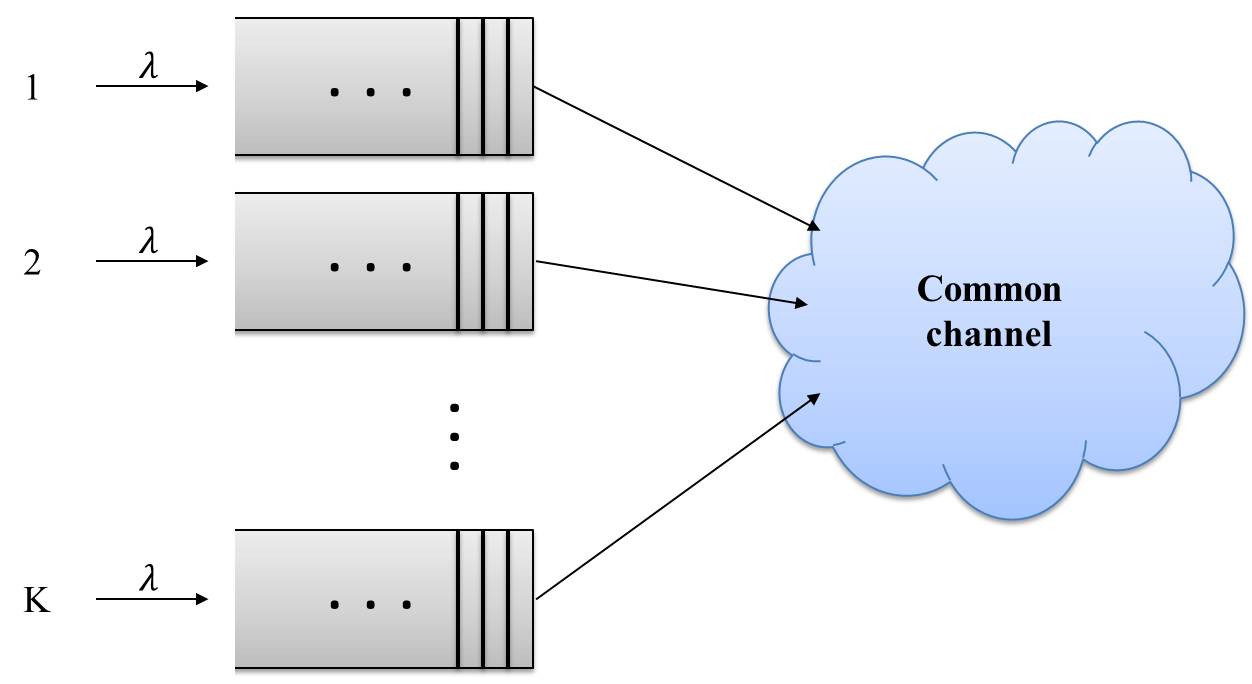}
            \caption[Model of queueing system]{System model. $K$ users access a common channel. Each user has a packet arrival process with rate $\lambda$}
            \label{fig-QueuingSystem}
    \end{figure}
    The users scheduling is based on the distributed algorithm, where a user checks if its channel capacity is above some threshold $u_K$, and transmits for one slot only if it exceeds it. The threshold is chosen such that \emph{the probability of exceedance equals $1/K$}. Assume for now that this probability is a constant variable.

    In the following, we investigate the behaviour of the system, starting with referring our queueing model to the known model of slotted aloha. We then review a known approximate model, which captures the system performance, and extend it to fit our model. Finally, we present two additional approaches, under different assumptions, to approximate this queuing system by the analogy to known types of queues. We do so by considering the process of threshold exceedance, which has Poisson properties and is asymptotically considered as a Poisson process.\\
    In the first two subsections, we assume the users experience a time \emph{independent channel}. Therefore, the rate for exceedance is the same for all users. In the last subsection we consider time dependency, and the possibility that users are in different channel states, hence, we have two different exceedance rates (as there is only value of the threshold).

    \section[Queueing Approximate model \Rmnum{1} ]{Approximation by System and Users' State}\label{Approximate model 1}
    It is not hard to see the resemblance of our model to the well explored slotted aloha system. Yet, the main difference is that in slotted aloha, the transmission scheme used is "immediate first transmission" i.e., if user $i$ has an empty queue when a package arrives, it may transmit the package instantaneously, while in the case of collision, it may transmit again with a retransmission probability $p_i$. Our model follows a "delayed first transmission" scheme, where channel transmission can only happen with probability $p_i$ (i.e., the threshold exceedance probability). When considering such a system consisting, of $K$ queues, each with a memoryless arrival process yet a common server (i.e., the communication channel), the interdependence between the queues is very strong and cannot be ignored. It is reflected, among others, in the users' queue length, probabilities for successful transmission or collision and the delay of packages. Several works in the literature explored the behavior of such systems, with the same terms of QoS. Due to the interdependence between the queues, each of these works considered a different simplified mathematical model, which resulted with an approximation for the required metrics. We elaborate on one specific work \cite{ephremides1987delay}, which introduced a relevant approximate model. Although this work introduced a general model for both "immediate" and "delayed" first transmission, the analytical derivations and numerical results refers only to the "immediate first transmission" scheme, and not the "delayed first transmission" scheme we are interested in. We will briefly review this model and emphasize the differences in the mathematical derivations due to the different schemes.

    \subsection{Review on the approximate model}
    The model presented in \cite{ephremides1987delay} assumed that the arrival rate at user $i$ is $\lambda_i$, and the arrival processes are statistically independent between the users. Time is slotted and it takes exactly one slot to transmit one packet. Thus, $\lambda_i$ is the probability of arrival for user $i$ in any given slot. User $i$ attempts, with probability $p_i$, to transmit the head-of-the-line packet in the queue (if the latter is nonempty) through a common channel to the receiver.

    At the beginning of a given slot the users' status may be one of the three categories: \emph{idle}, \emph{active} or \emph{blocked}. A user is idle if there are no packets in its queue at the end of the preceding slot. It is blocked if its queue is not empty and the latest attempted transmission was unsuccessful. It is active if its queue is not empty but its most recent attempted transmission was successful. We then let
    \begin{equation}\label{equ-probability for transmission queueing }
    p_i= \left\{
          \begin{array}{l l}
            r_i\lambda_i & \ \text{if $i$ is idle},\\
            s_i & \ \text{if $i$ is active},\\
            q_i & \ \text{if $i$ is blocked}.
    \end{array} \right.
    \end{equation}

    As can be seen, the setting of the problem is similar to the one considered in this study when $r_i=s_i=q_i=\frac{1}{K}$. As mentioned, in \cite{ephremides1987delay}, the mathematical analysis and numerical results considered the case where $r_i=s_i=1, q_i=p_i$, which is the original slotted aloha paradigm. Specifically, the model consist of a coupled Markov chains, the system-status chain and the queue-length chain. The transition probabilities of each chain, as will be presented later, depend on the steady-state probabilities of the other chain and thus, all state equations for both chains must be solved simultaneously.
    The system-status chain tries to capture the state of each user in any given time, hence, the status variable $\overline{S}$ consists of $K$ ternary variables, $S_1,S_2,...,S_K$, each of which indicates the status of the corresponding terminal. Namely,
    \begin{equation}\label{equ-system chain ternary variable S }
    S_i= \left\{
          \begin{array}{l l}
            0 & \ \text{if $i$ is idle},\\
            1 & \ \text{if $i$ is active},\\
            2 & \ \text{if $i$ is blocked}.
    \end{array} \right.
    \end{equation}
    The total number of states achievable by this vector is given by
    \begin{equation}\label{equ-total number of states in system chain}
      \sum_{i=0}^K \binom{K}{i}+ K\sum_{i=0}^{K-1} \binom{K-1}{i}=\sum_{i=0}^K \binom{K}{i}i=2^{K-1}(K+2)
    \end{equation}
    where $i$ indicates the number of blocked users, along with the observation which no more than one active user may be present in the system in any given time. The new calculations of the transition probabilities is according to the "delayed first transmission" scheme, which are different from the one presented in \cite{ephremides1987delay} displayed in appendix \ref{Appendix D}. Two quantities which are needed for the calculation of the transition probabilities are:
    \begin{equation}\label{equ-p(1|1) and p(0|2) definition}
      \begin{array}{l}
            P_i(1\mid 1) \triangleq P_r(\text{queue size $>1\mid$ user $i$ is active})\\
            P_i(0\mid 2) \triangleq P_r(\text{queue size $=1\mid$ user $i$ is blocked}).
          \end{array}
    \end{equation}
    Note that these are not the transition probabilities of a specific chain, rather, they reflect the coupling between the two chains. Along with these quantities, one can calculate $P(\overline{S})$, the joint steady-state probability distribution of the random vector $\overline{S}$.

    The queue-length Markov chain tracks both the status and queue length of each user, independently of the status and queue length of the other users. Specifically, the pair $(T_i,N_i)$ represents the state of the user and $\pi(T_i,N_i)$ denotes its steady-state probability. $N_i$ is the total number of packets at queue $i$ and
    \[ T_i= \left\{
    \begin{array}{l l}
            0 & \ \text{if $i$ is blocked},\\
            1 & \ \text{if $i$ is unblocked}.
    \end{array} \right.\]
    The transition probabilities of the chain depend on the following average success probabilities:
    \begin{equation}\label{equ-Avevrage success probabilities model 1}
    \begin{array}{l}
            P_B(i) =  P_r(\text{success $\mid$ user $i$ is blocked})\\
            P_A(i) =  P_r(\text{success $\mid$ user $i$ is active})\\
            P_I(i) =  P_r(\text{success $\mid$ user $i$ is idle}),
    \end{array}
    \end{equation}
    where the averaging is performed over the status of the other users (the status of the system). Namely, in order to calculate the probabilities \eqref{equ-Avevrage success probabilities model 1} one needs the stationary distribution of the system-status chain $P(\overline{S)}$. This leads to the coupling of the two sets of equations. The calculations of this average success probabilities are also different from the one presented in \cite{ephremides1987delay}, and can be found in appendix \ref{Appendix E}.
    The values which are in special interest to us, in order to help understand the system performance, are
    \begin{equation}\label{equ-probability for blocked and empty}
      \pi(0,0)=\frac{\lambda_i\overline{P}_I(i)}{\lambda_iP_A(i)+\overline{\lambda}_iP_B(i)}\pi(1,0)
    \end{equation}
    \begin{equation}\label{equ-probability for idle and empty}
      \pi(1,0)=\frac{\overline{\lambda}_iP_B(i)-\lambda_i\overline{P}_A(i)}{\overline{\lambda}_iP_B(i)-\lambda_i(P_I(i)-P_A(i))}
    \end{equation}
    \begin{equation}\label{equ-probability for active and not empty}
      \pi(1,1)=\frac{\lambda_i}{\overline{\lambda}_i}\pi(0,0)
    \end{equation}
    \begin{equation}\label{equ-probability to be blocked}
      G_{0}^i(1)=\frac{\lambda_i\overline{\lambda}_i\overline{P}_I(i)}{\overline{\lambda}_iP_B(i)-\lambda_i(P_I(i)-P_A(i))}
    \end{equation}
    \begin{equation}\label{equ-probability to be unblocked}
      G_{1}^i(1)=\lambda_i+\overline{\lambda}_i\frac{\overline{\lambda}_iP_B(i)-\lambda_i\overline{P}_A(i)}{\overline{\lambda}_iP_B(i)-\lambda_i(P_I(i)-P_A(i))}
    \end{equation}
    where \ref{equ-probability to be blocked} and \ref{equ-probability to be unblocked} are defined as the conditional moment generating function of the chain as follows
    \begin{equation}\label{equ-conditional moment generating function}
      G_{T_i}^i(z)\triangleq \sum_{N_i=0}^{\infty} \pi(T_i,N_i)z^{N_i}, \ \ \ \ T_i=0,1
    \end{equation}
    and $\overline{\lambda}=1-\lambda, \overline{P}=1-P$. The expressions presented here are the same ones as presented in \cite{ephremides1987delay}, and also the mathematical development to attain them is the same.
    In order to find the steady state of the chains, as explained, the state equations must be solved simultaneously. The way to do so is by an iterative process, each time given the auxiliary quantities $P_i(1\mid 1),P_i(0\mid 2)$ and $P_I(i),P_A(i),P_B(i)$, as shown in Figure \ref{fig-IterativeScheme}.
    The iterative method used is Wegstein's iteration scheme, and after achieving satisfying convergence, one may calculate the performance metrics using the steady state of the chains. More details can be found in \cite{saadawi1981analysis,ephremides1987delay}.

    \begin{figure}[h]
        \centering
        \includegraphics[width=3in]{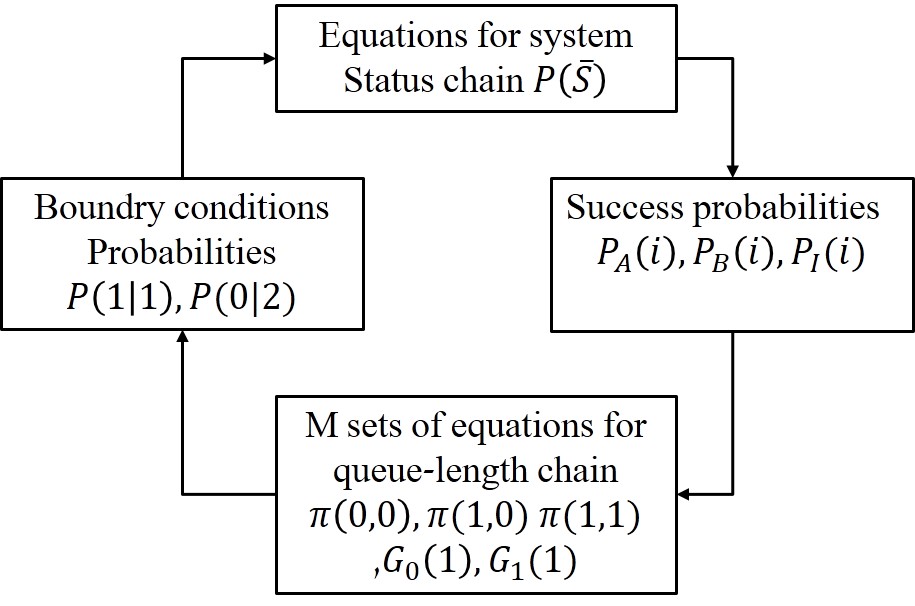}
        \caption[Iterative algorithm]{Iterative algorithm for the solution of the coupled sets of equations for the state probabilities of the system chain and the queue length chains.}
        \label{fig-IterativeScheme}
    \end{figure}

    \subsubsection{The performance metrics}\label{Approximate model 1 - Queueing performance analysis}
    The delay a package endures from the moment it is generated and arrives to a user's queue until it is successfully transmitted, consists of three components, the time in line $W_q(i)$, the service time $W_s(i)$ and transmission time. Each of them is approximated separately and the sum of the three gives the total average delay. The following expression is the delay at a user $i$ \cite{ephremides1987delay}.
    \begin{equation}\label{equ-Delay of user i model 1}
      D_i=W_q(i)+W_s(i)+1
    \end{equation}
    The $1$ added represents the transmission time, which takes one slot. The service time is approximated by
    \begin{equation}\label{equ-service time approximation model 1}
      W_s(i)=\frac{G_0(1)}{1-\overline{\lambda}_i\pi(1,0)} \cdot \frac{1}{P_B(i)},
    \end{equation}
     where the service time, as defined, is the time a package exists in the head of the queue. In case a successful transmission happens in the immediate arrival of a package, the service time is zero. In any other case, meaning the user is in blocked state and has packages in his queue, we counts the number of slots until a successful transmission, i.e., a geometric random variable. Note that in \cite{ephremides1987delay}, the service time was approximated differently, considering idle slots as zero service time. Clearly, there is no meaning to service time with an empty queue, therefore, service time should be \emph{normalized with the probability of not being in idle state with no arrival of a package}. The time in line is calculated by using Little's result, hence
    \begin{equation}\label{equ-time in line approximation model 1}
      W_q(i)=\frac{L_i}{\lambda_i}
    \end{equation}
    where $L_i$ is the average queue length of a user (without considering the blocked head-of-line packet as part of the queue), which is given by
    \begin{equation}\label{equ-mean queue size model 1}
      L_i=\frac{\lambda_i^2\overline{\lambda}_i\overline{P}_I(i)}{(\overline{\lambda}_iP_B(i)-\lambda_i\overline{P}_A(i))(\overline{\lambda}_iP_B(i)-\lambda_i(P_I(i)-P_A(i)))}.
    \end{equation}
    The total weighted average system delay $D$ is therefore
    \begin{equation*}
      D=\frac{\sum_{i=1}^{K}D_i\lambda_i}{\sum_{i=1}^{K}\lambda_i}
    \end{equation*}
    where $\overline{\lambda}=1-\lambda, \overline{P}=1-P$.
    The probability for success given that the user is not idle, i.e., has package to send, and exceeds the threshold is given by
    \begin{equation}\label{equ-success probability model 1}
      p_{succ}(i)= \frac{P_A(i)(G_1^i(1)-\pi(1,0))+P_I(i)\pi(1,0)+P_B(i)G_0^i(1)}{p_i(1-\pi(1,0))}
    \end{equation}
    where the numerator is the success probability regardless the users' state, divided by the probability that exceedance occur with a non empty queue.

    \begin{figure}[H]
        \centering
        \begin{subfigure}[b]{0.5\textwidth}
                \centering
                \includegraphics[width=\textwidth]{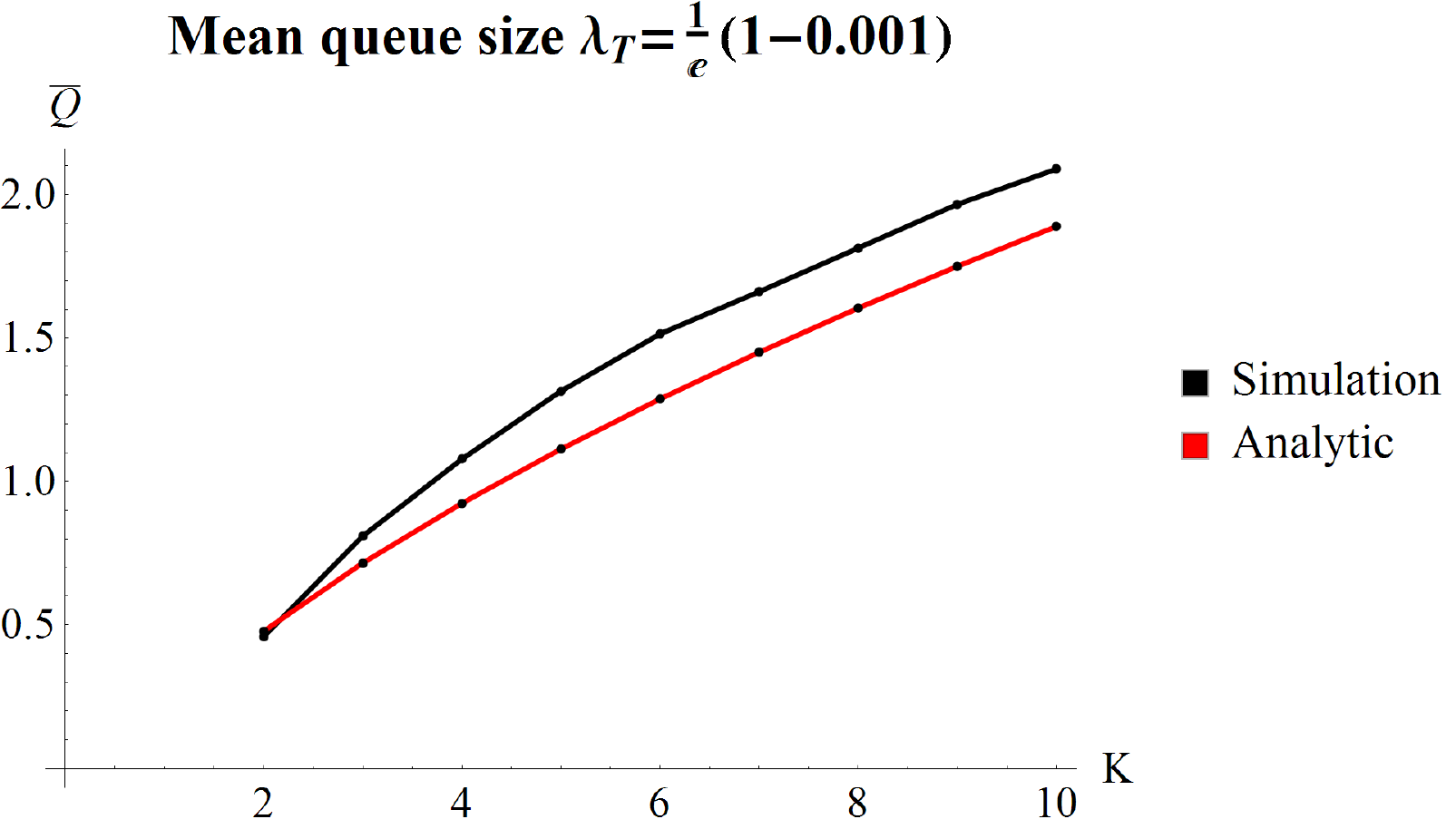}
                \caption{}
                \label{fig-MeanQueueSize_2-10_0366}
        \end{subfigure}%
        ~
        \begin{subfigure}[b]{0.5\textwidth}
                \includegraphics[width=\textwidth]{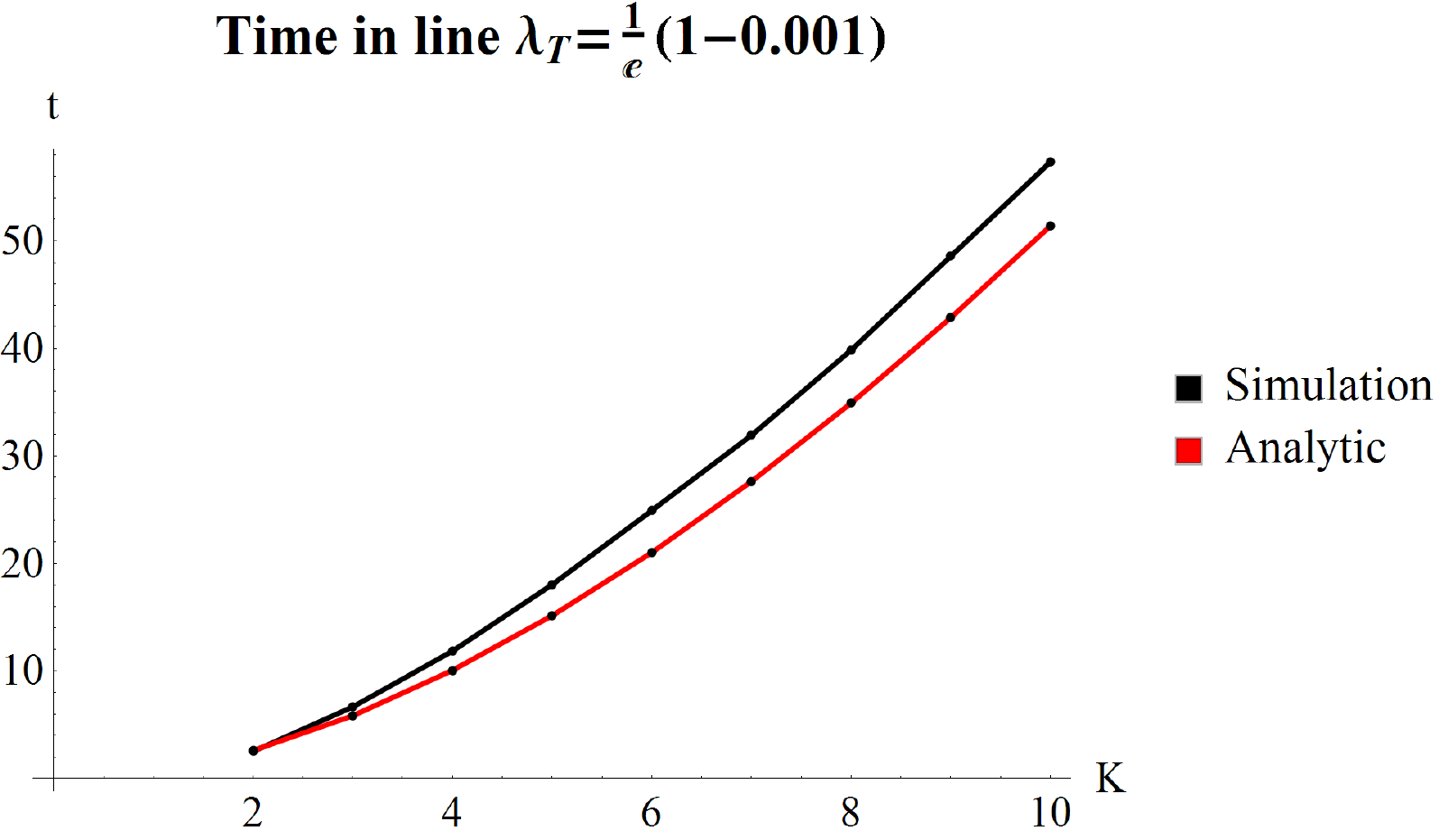}
                \caption{}
                \label{fig-TimeInLine_2-10_0366}
        \end{subfigure}
        \hfil

        \begin{subfigure}[b]{0.5\textwidth}
                \includegraphics[width=\textwidth]{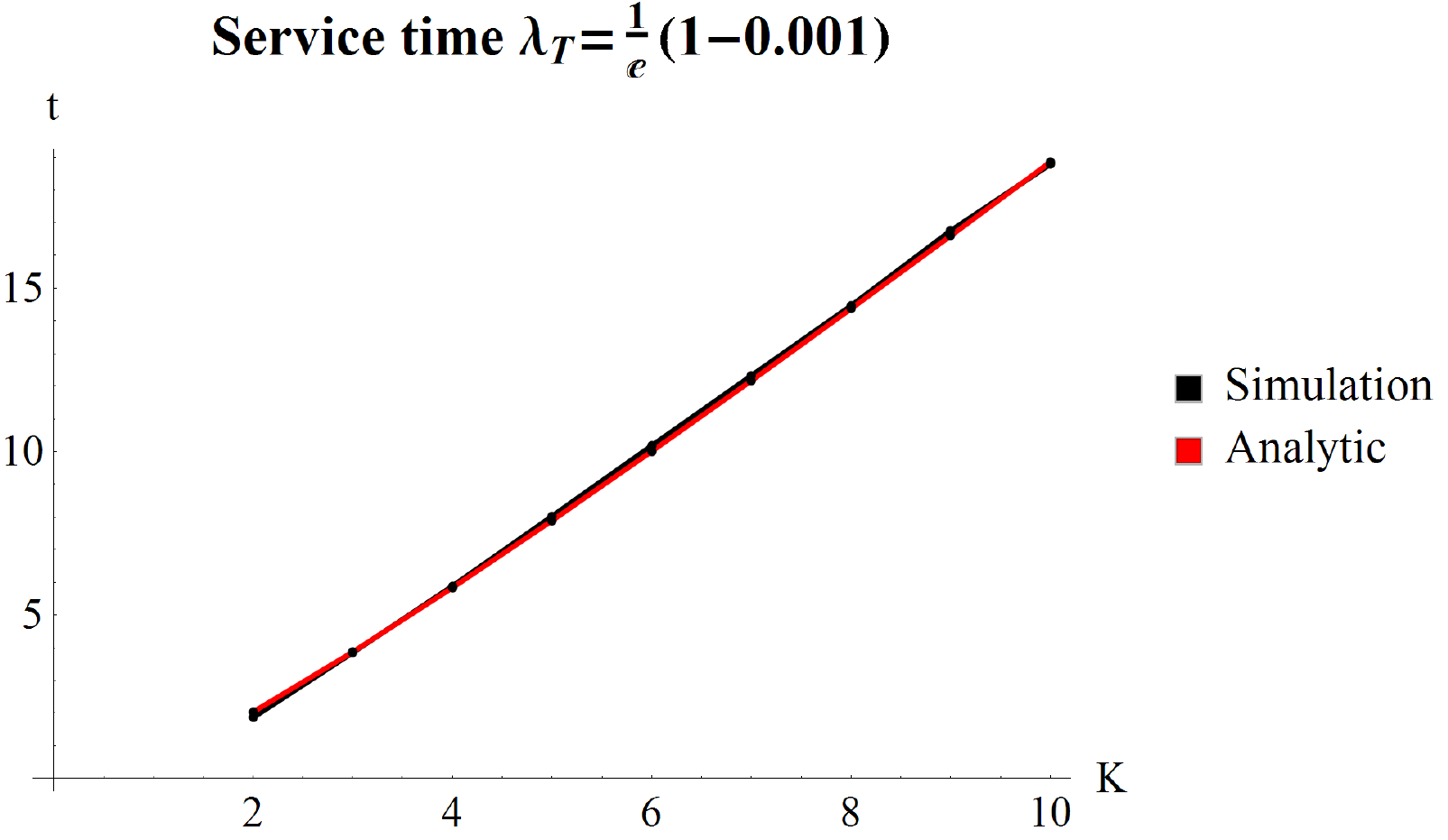}
                \caption{}
                \label{fig-ServiceTime_2-10_0366}
        \end{subfigure}%
        ~
        \begin{subfigure}[b]{0.5\textwidth}
                \includegraphics[width=\textwidth]{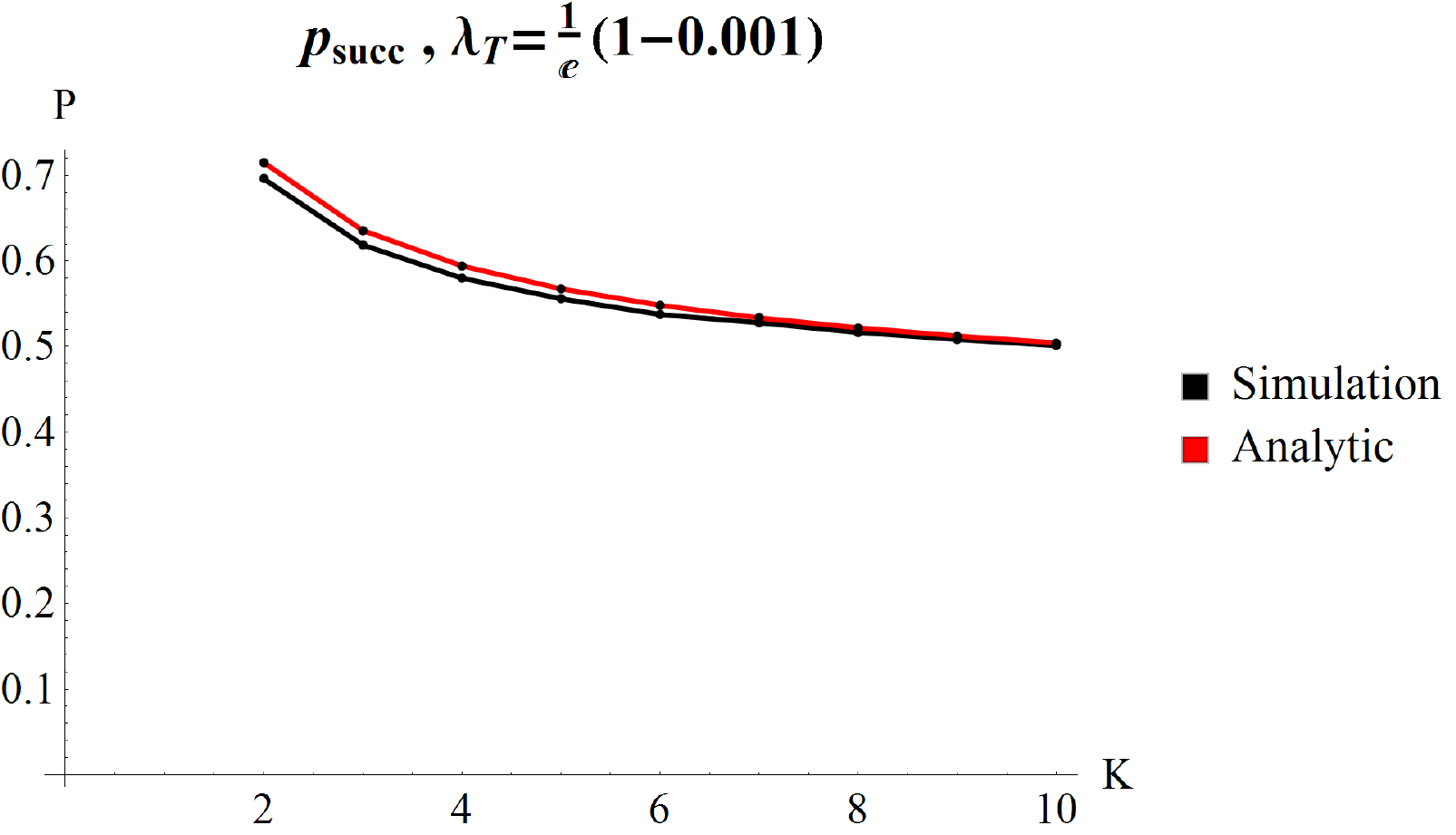}
                \caption{}
                \label{fig-SuccessProbability_2-10_0366}
        \end{subfigure}
        \caption[System performance time independent model]{Anaclitic results and simulation for the system performance as a function of the number of users. Total arrival rate is $\lambda_T=\frac{1}{e}(1-0.001)$. (a) The mean queue size. (b) The time in line. (c) The service time. (d) The probability for success. The red line describes the anaclitic expressions \eqref{equ-mean queue size model 1}, \eqref{equ-time in line approximation model 1}, \eqref{equ-service time approximation model 1}, \eqref{equ-success probability model 1} respectively.}
        \label{fig-QueuingPerformance_2-10_0366}
    \end{figure}

    \subsection{Performance analysis and comparisons}\label{Queueing performance analysis}
    We present numerical results under "delayed first transmission" scheme. We divide the delay to its different components, i.e., the time in queue, service time, average queue size, system throughput and, in addition, the success probability $p_{succ}$ given that a user has a package to send and exceeded the threshold. Due to the fact that there is no exact solution for the slotted aloha system, a comparison was made only with simulation results for this system. As explained earlier, the number of states in the system-status Markov chain grows exponentially with the number of users, therefore, the simulation and the numerical results were compared for $2$ to $10$ users. The performance parameters were also checked as a function of the arrival rates.

    The simulation and analytic calculations were preformed under the settings of a symmetric system. The arrival process and the threshold exceedance process were approximated by a Bernoulli process, due to the discrete nature of the system. Figure \ref{fig-QueuingPerformance_2-10_0366} depicts the performance metrics as a function of the number of users, where the total arrival rate is close to maximum value which still allows the system to be stable. The results depict very good match between the simulations and the analytic results given in \eqref{equ-service time approximation model 1}, \eqref{equ-time in line approximation model 1}, \eqref{equ-mean queue size model 1} and \eqref{equ-success probability model 1}.

     Very important result is the fact that the probability for successful transmission, $p_{succ}=1-p_{coll}$, behaves exactly as predicted by the analytical model (Figure \ref{fig-SuccessProbability_2-10_0366}), which means that it could be regarded as a \emph{constant average probability regardless of the strong interdependence between the queues}. This discovery is the foundation for the next two subsections, which, in contrast to the model considered thus far, are able to capture the system's behavior for large number of users.

    \begin{figure}[H]
        \centering
        \begin{subfigure}[b]{0.5\textwidth}
                \centering
                \includegraphics[width=\textwidth]{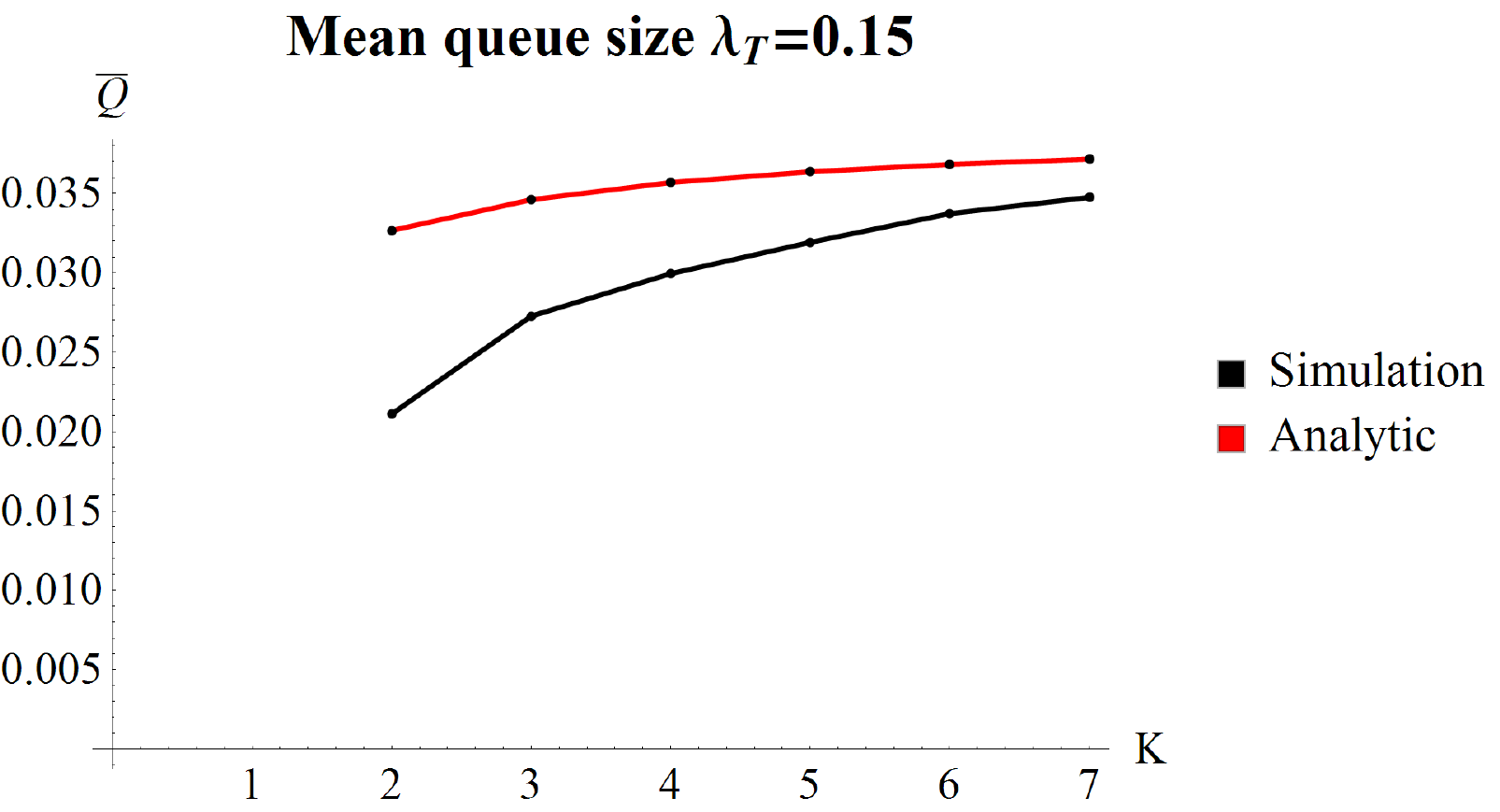}
                \caption{}
                \label{fig-MeanQueueSize_2-7_015}
        \end{subfigure}%
        ~
        \begin{subfigure}[b]{0.5\textwidth}
                \includegraphics[width=\textwidth]{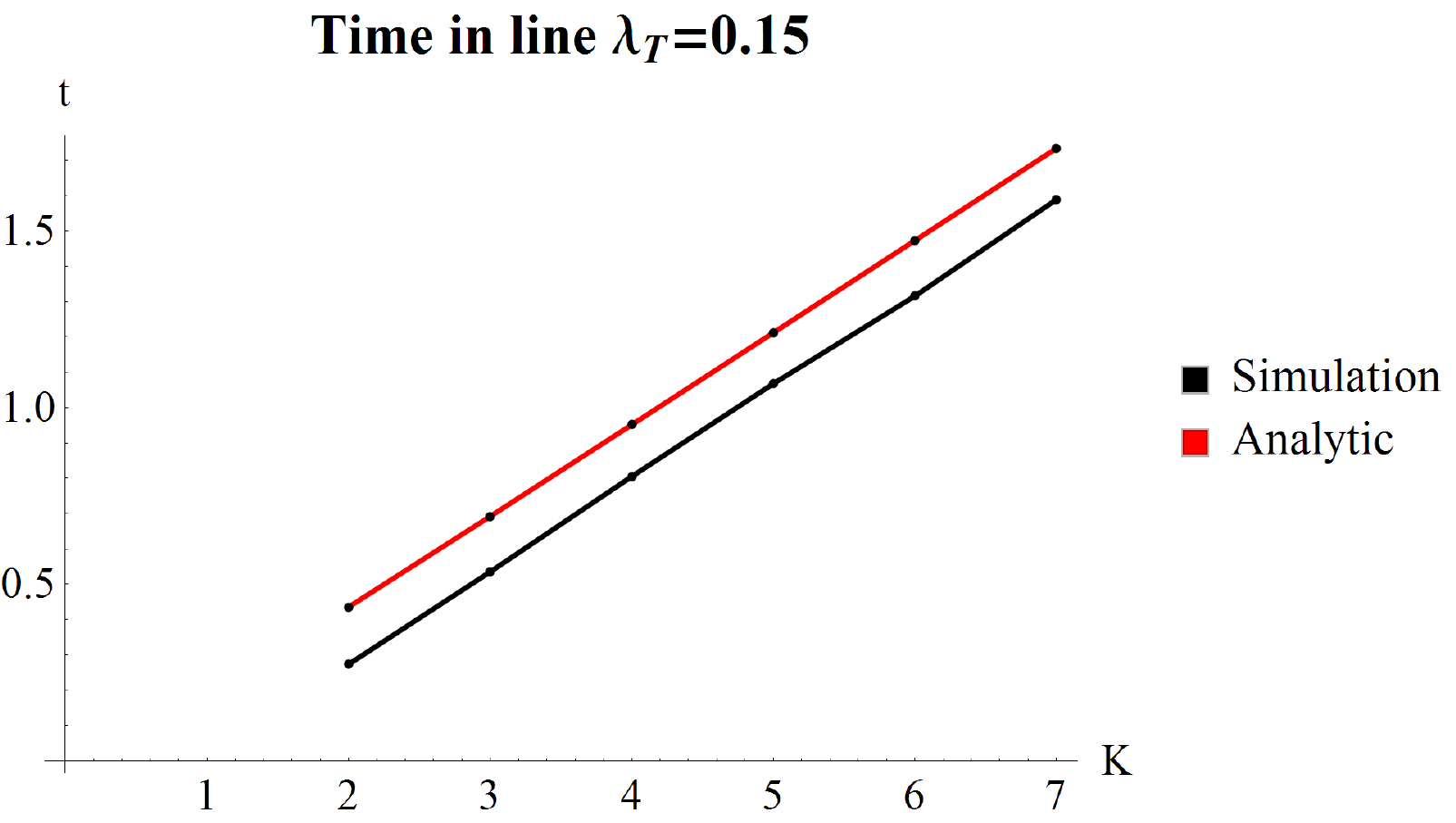}
                \caption{}
                \label{fig-TimeInLine_2-7_015}
        \end{subfigure}
        \hfil

        \begin{subfigure}[b]{0.5\textwidth}
                \includegraphics[width=\textwidth]{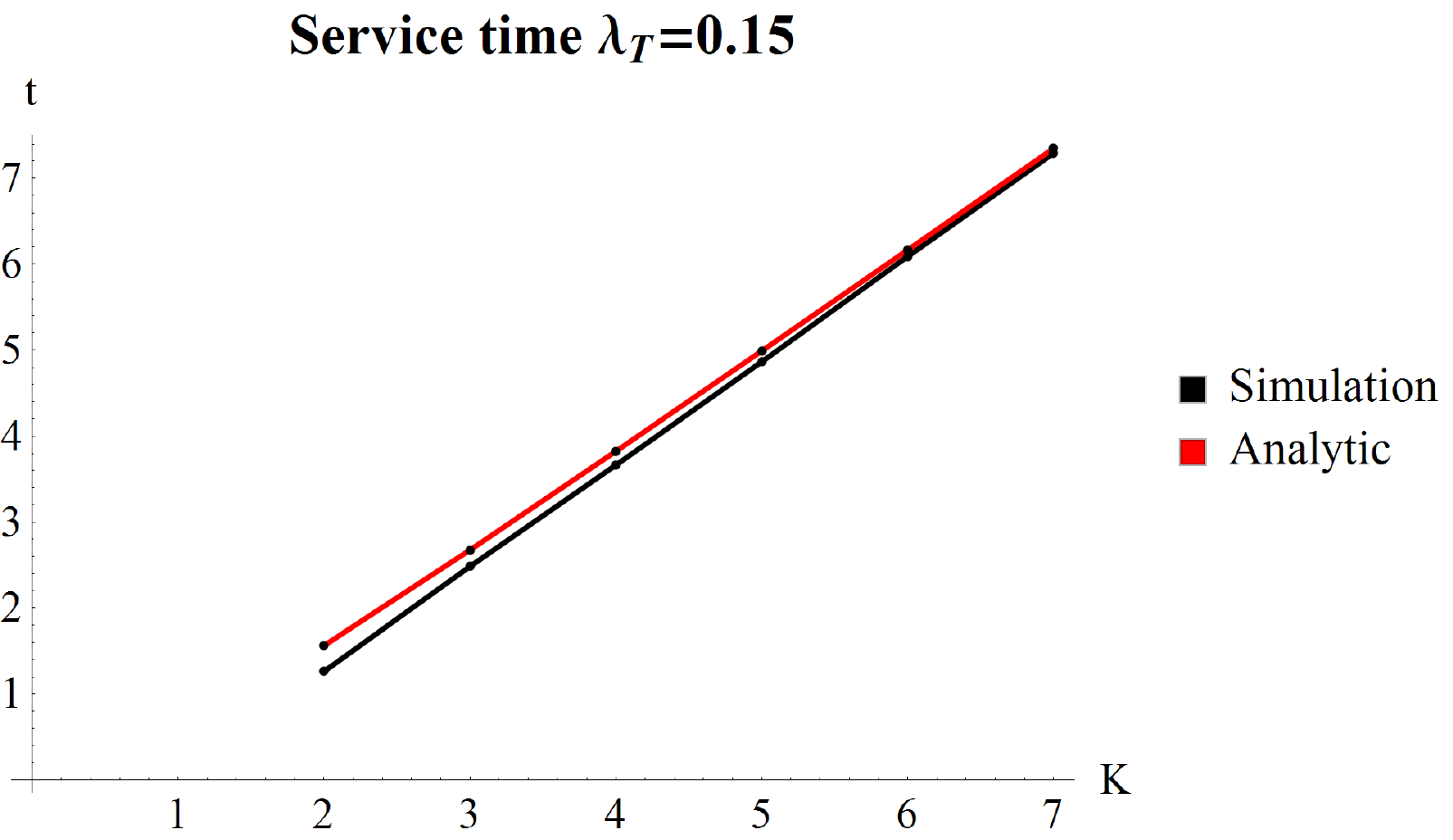}
                \caption{}
                \label{fig-ServiceTime_2-7_015}
        \end{subfigure}%
        ~
        \begin{subfigure}[b]{0.5\textwidth}
                \includegraphics[width=\textwidth]{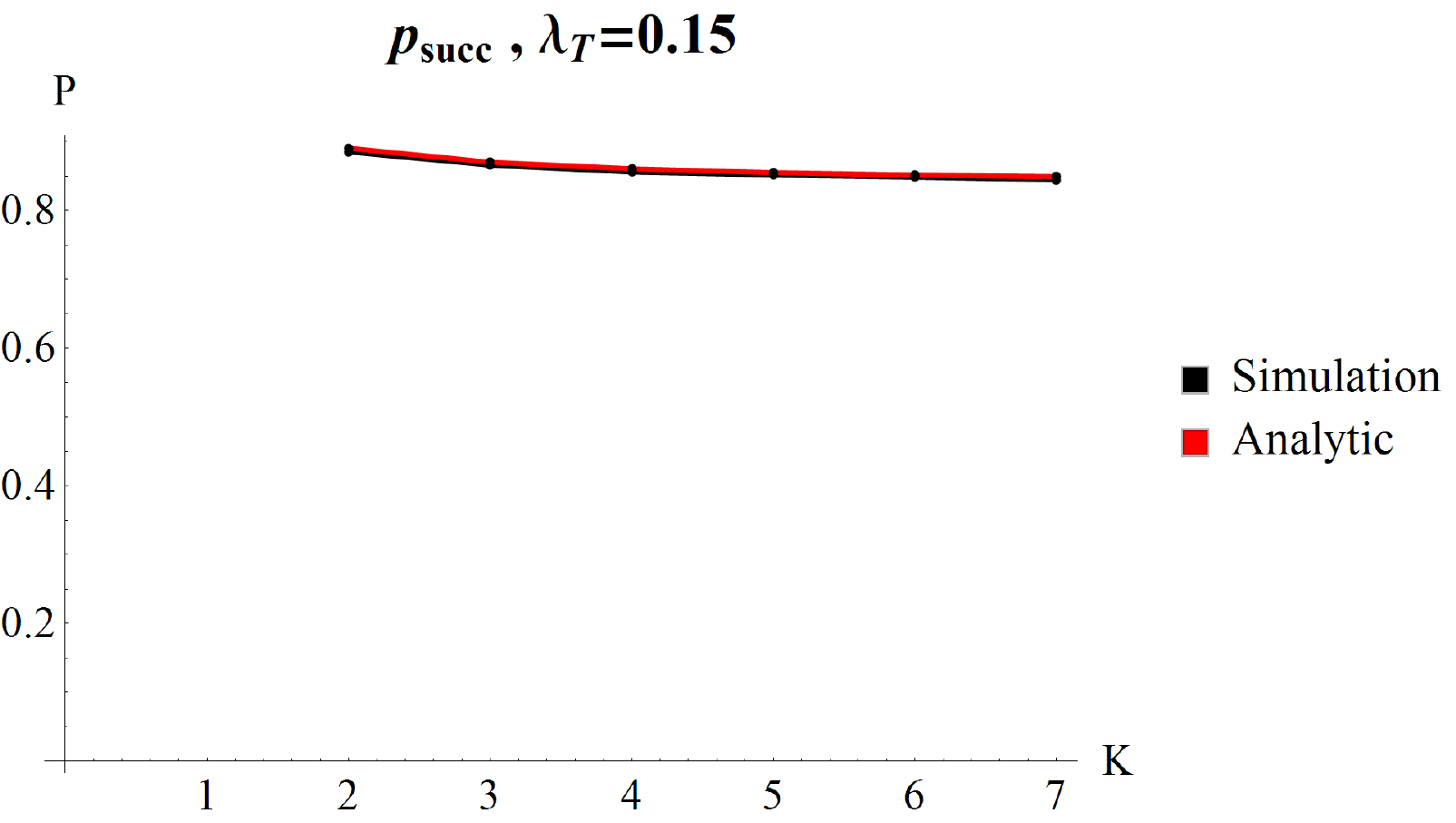}
                \caption{}
                \label{fig-SuccessProbability_2-7_015}
        \end{subfigure}
        \caption[System performance time independent model]{Anaclitic results and simulation for the system performance as a function of the number of users. Total arrival rate is $\lambda_T=0.15$. (a) The mean queue size. (b) The time in line. (c) The service time. (d) The probability for success. The red line describes the anaclitic expressions \eqref{equ-mean queue size model 1}, \eqref{equ-time in line approximation model 1}, \eqref{equ-service time approximation model 1}, \eqref{equ-success probability model 1} respectively. }
        \label{fig-QueuingPerformance_2-7_015}
    \end{figure}

    Figure \ref{fig-QueuingPerformance_K=7} and \ref{fig-QueuingPerformance_K=2} depicts the performance as a function of the total arrival rate, which help us deduce a few interesting observations. First, for small and moderate values of the arrival rate one can see very good agreement between the analytic results and the simulation while near the maximum value, the instability region, the values start to diverge, especially in Figures \ref{fig-MeanQueueSize_K=7} and \ref{fig-TimeInLine_K=7}. These results are consistent with the results of \cite{ephremides1987delay}, which used small values of $\lambda_i's$ as well. The second observation is that as the number of users grows, one can see even stronger compliance with the analytical results. This suggests that this approximate model may be suitable for a large population. Unfortunately, as mentioned, the calculation of the system steady state is intractable due to the exponential growth of the number of states with the number of users. In the following section, we give a simplified model, which utilizes the finding thus far and is able to approximate the performance for a large population as well.

    \begin{figure}[H]
        \centering
        \begin{subfigure}[b]{0.5\textwidth}
                \centering
                \includegraphics[width=\textwidth]{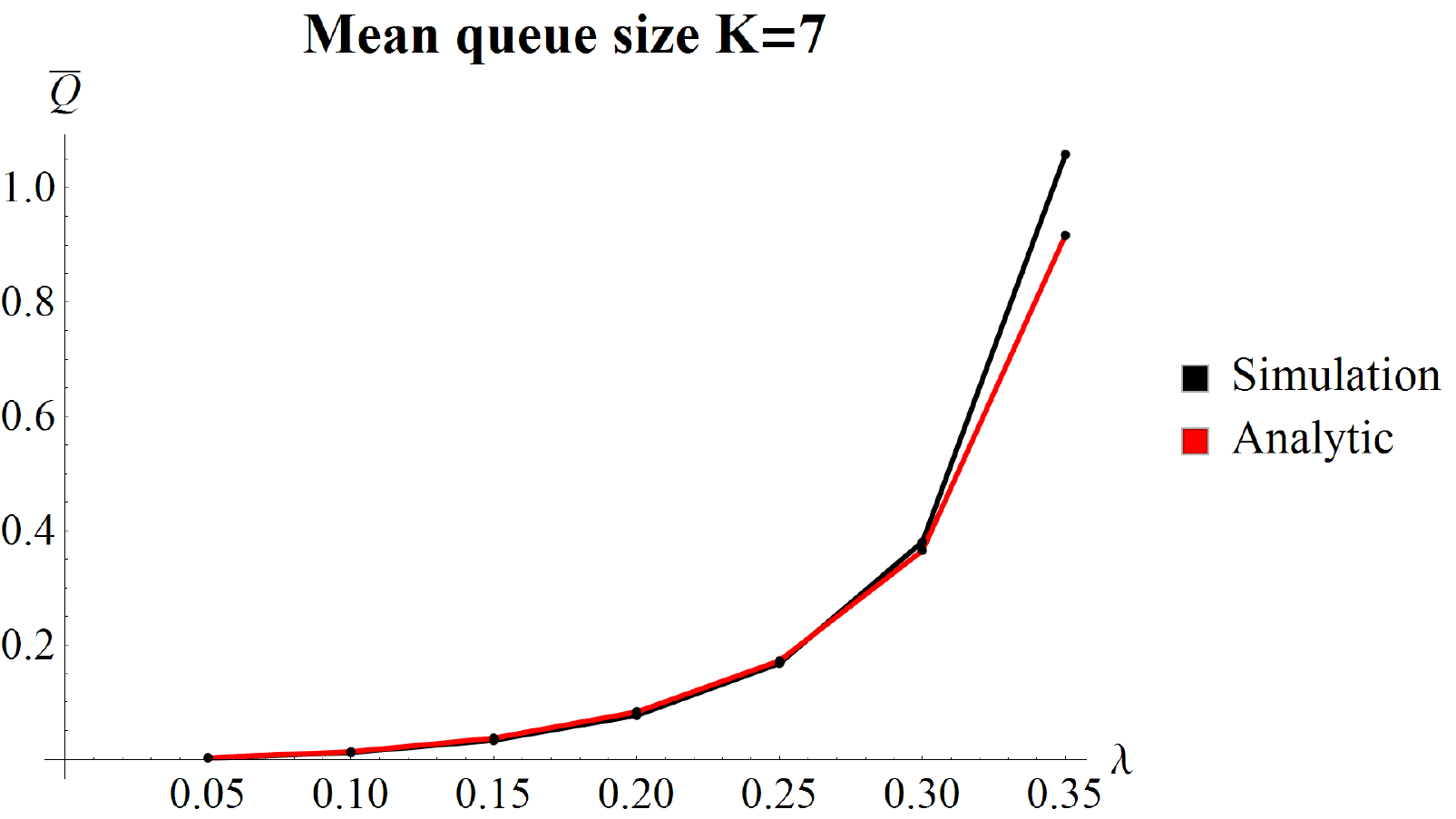}
                \caption{}
                \label{fig-MeanQueueSize_K=7}
        \end{subfigure}%
        ~
        \begin{subfigure}[b]{0.5\textwidth}
                \includegraphics[width=\textwidth]{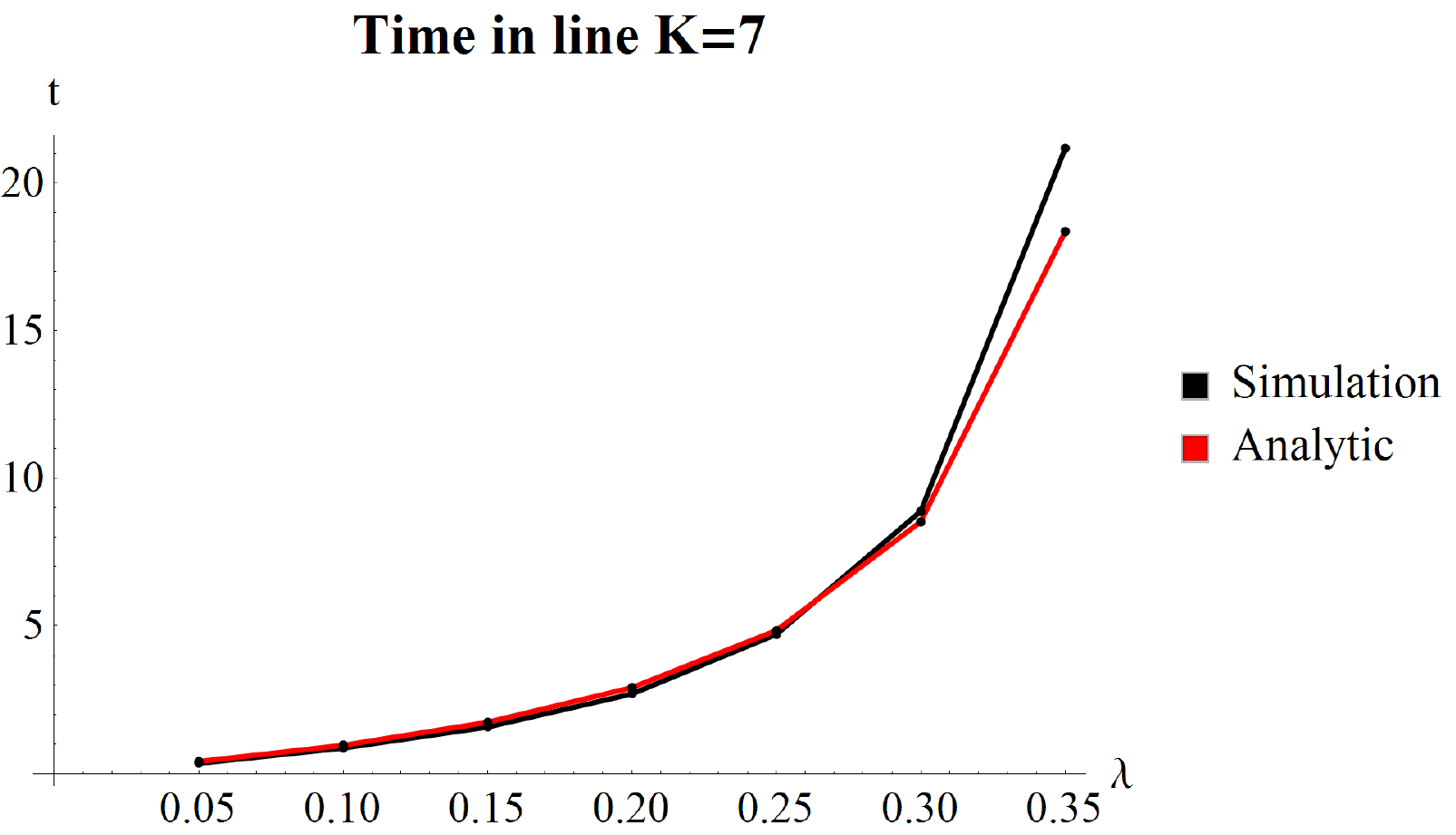}
                \caption{}
                \label{fig-TimeInLine_K=7}
        \end{subfigure}
        \hfil

        \begin{subfigure}[b]{0.5\textwidth}
                \includegraphics[width=\textwidth]{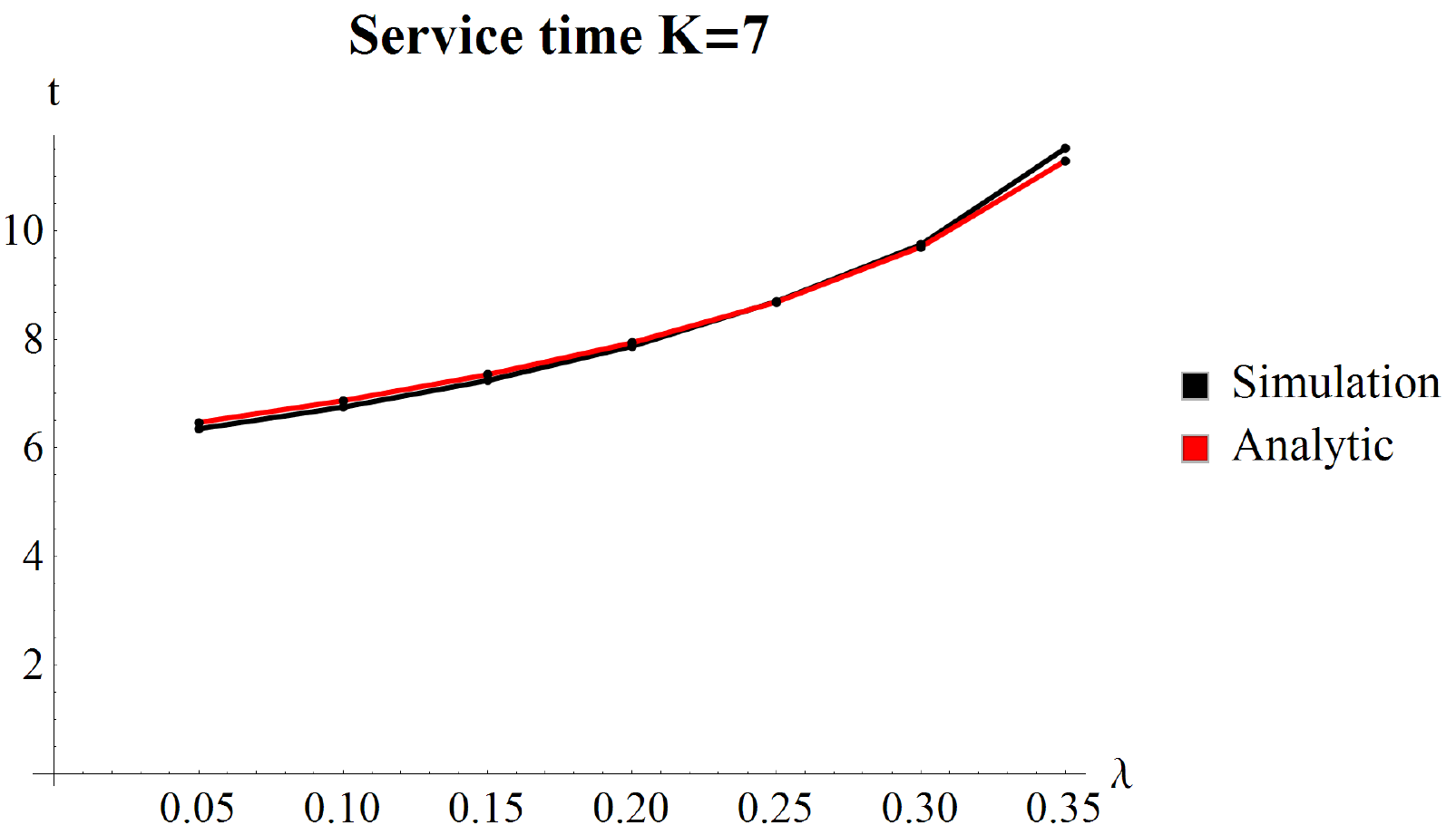}
                \caption{}
                \label{fig-ServiceTime_K=7}
        \end{subfigure}%
        ~
        \begin{subfigure}[b]{0.5\textwidth}
                \includegraphics[width=\textwidth]{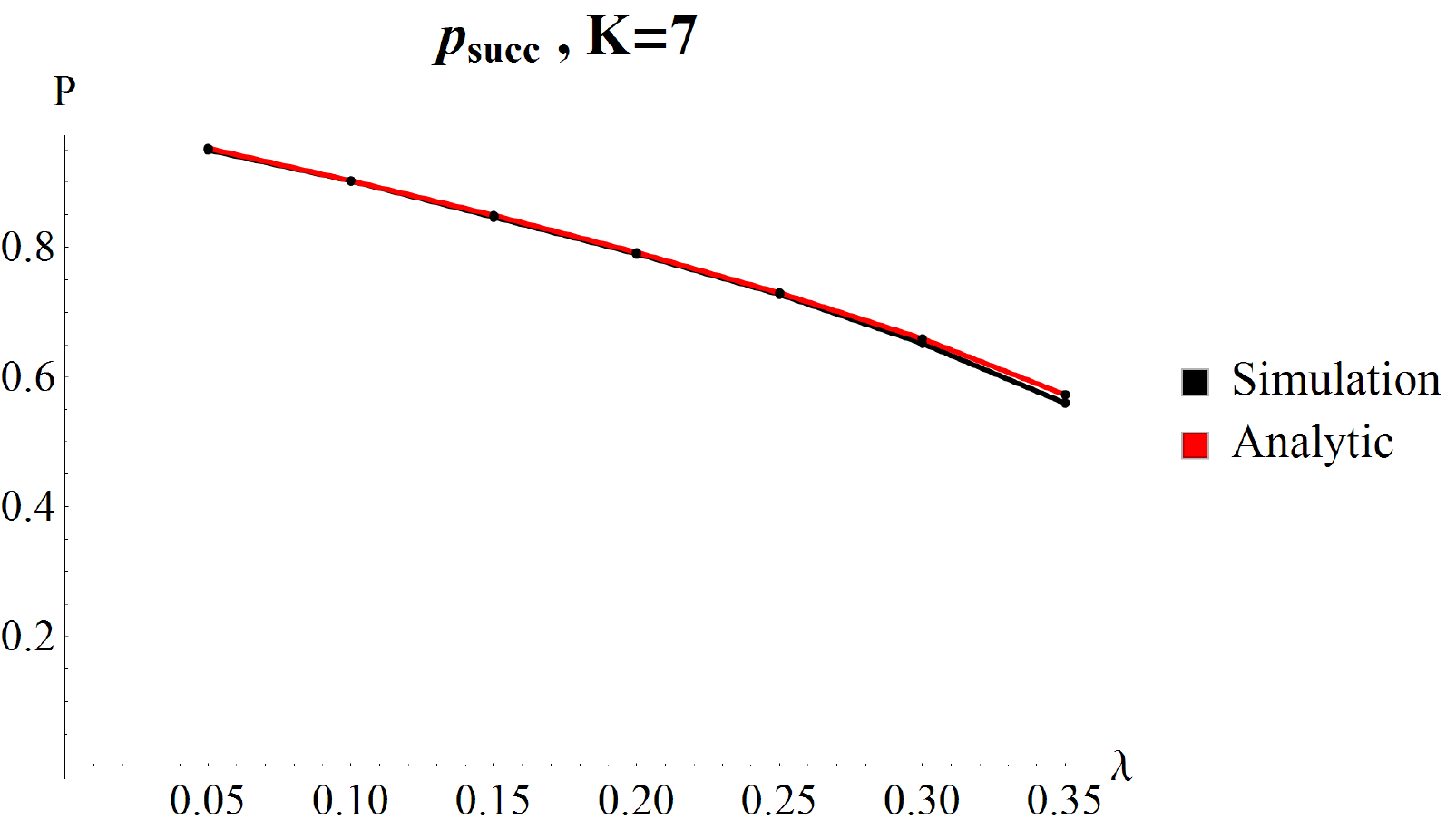}
                \caption{}
                \label{fig-SuccessProbability_K=7}
        \end{subfigure}
        \caption[System performance time independent model]{Anaclitic results and simulation for the system performance as a function of the total arrival rate. The number of users $K=7$. (a) The mean queue size. (b) The time in line. (c) The service time. (d) The probability for success. The red line describes the anaclitic expressions \eqref{equ-mean queue size model 1}, \eqref{equ-time in line approximation model 1}, \eqref{equ-service time approximation model 1}, \eqref{equ-success probability model 1} respectively.}
        \label{fig-QueuingPerformance_K=7}
    \end{figure}

    \begin{figure}[h]
        \centering
        \begin{subfigure}[b]{0.5\textwidth}
                \centering
                \includegraphics[width=\textwidth]{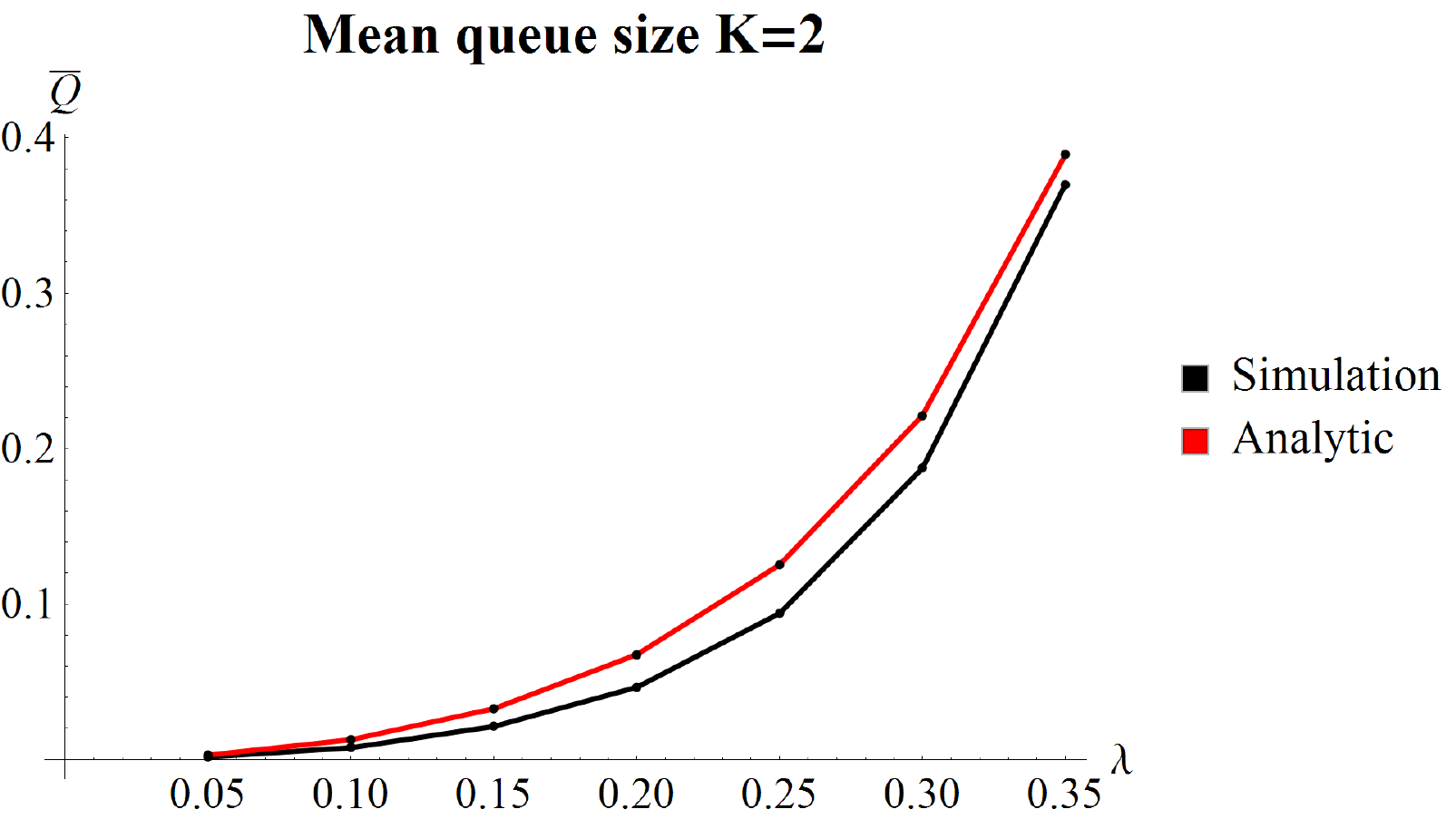}
                \caption{}
                \label{fig-MeanQueueSize_K=2}
        \end{subfigure}%
        ~
        \begin{subfigure}[b]{0.5\textwidth}
                \includegraphics[width=\textwidth]{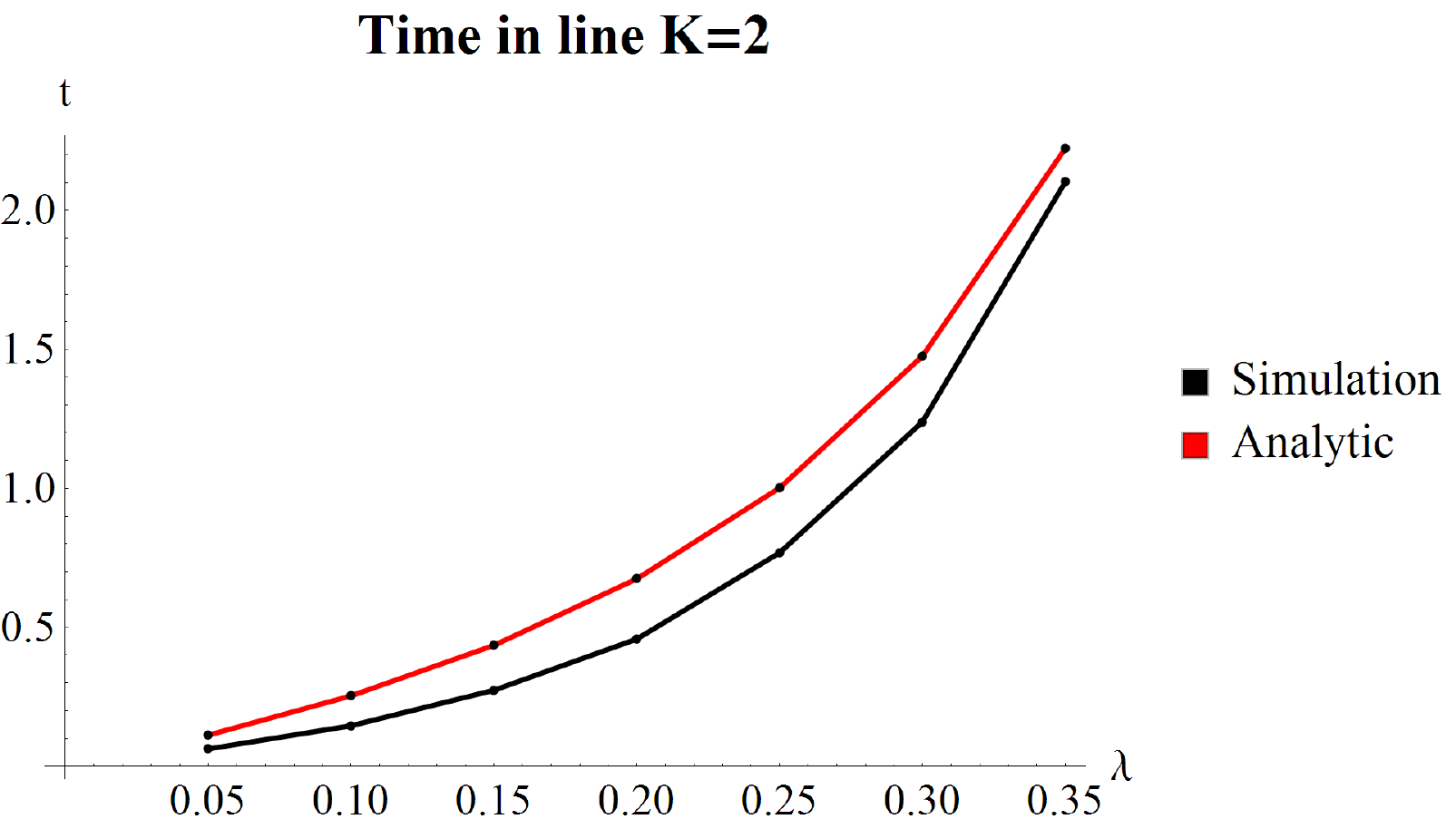}
                \caption{}
                \label{fig-TimeInLine_K=2}
        \end{subfigure}
        \hfil

        \begin{subfigure}[b]{0.5\textwidth}
                \includegraphics[width=\textwidth]{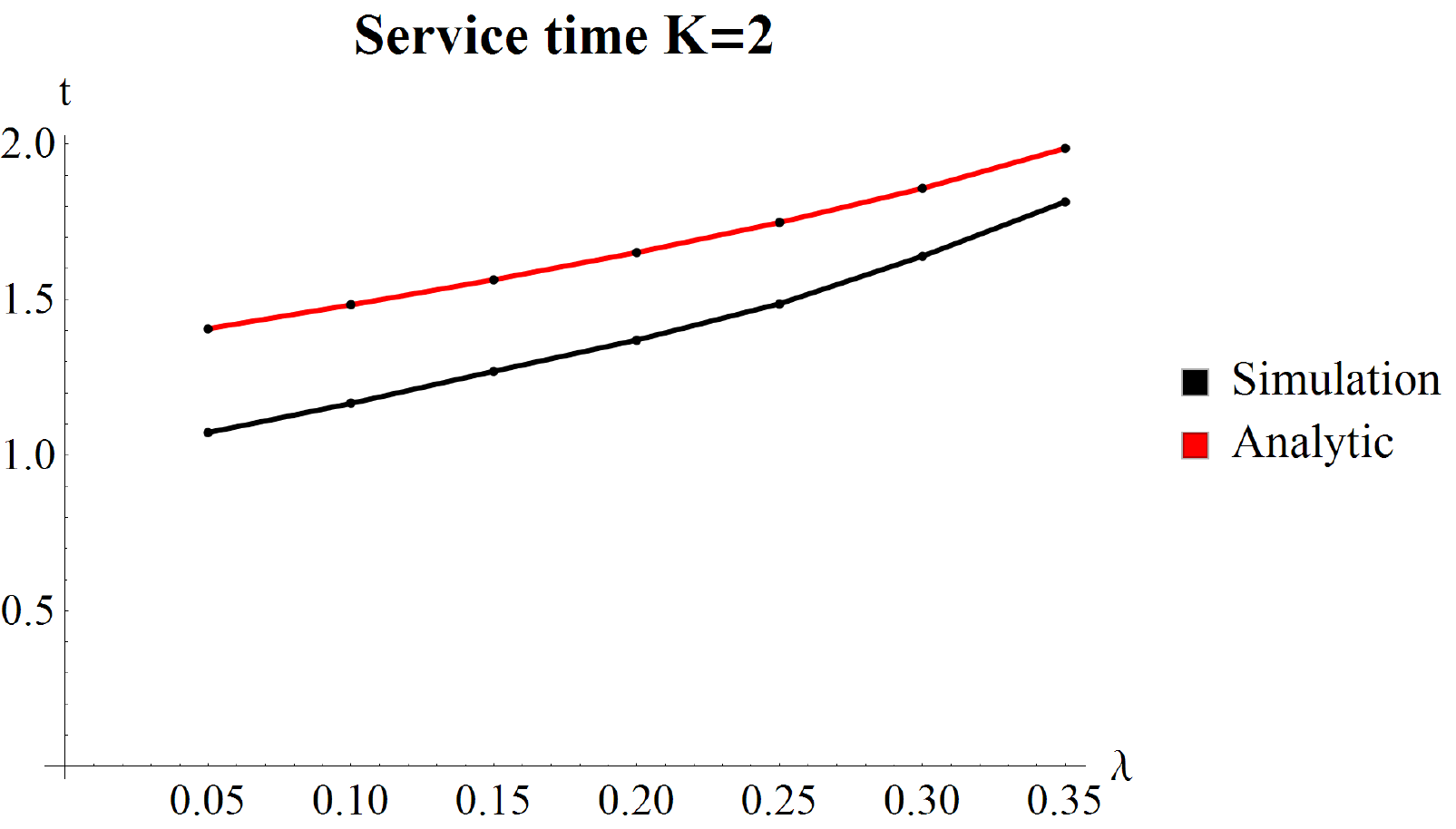}
                \caption{}
                \label{fig-ServiceTime_K=2}
        \end{subfigure}%
        ~
        \begin{subfigure}[b]{0.5\textwidth}
                \includegraphics[width=\textwidth]{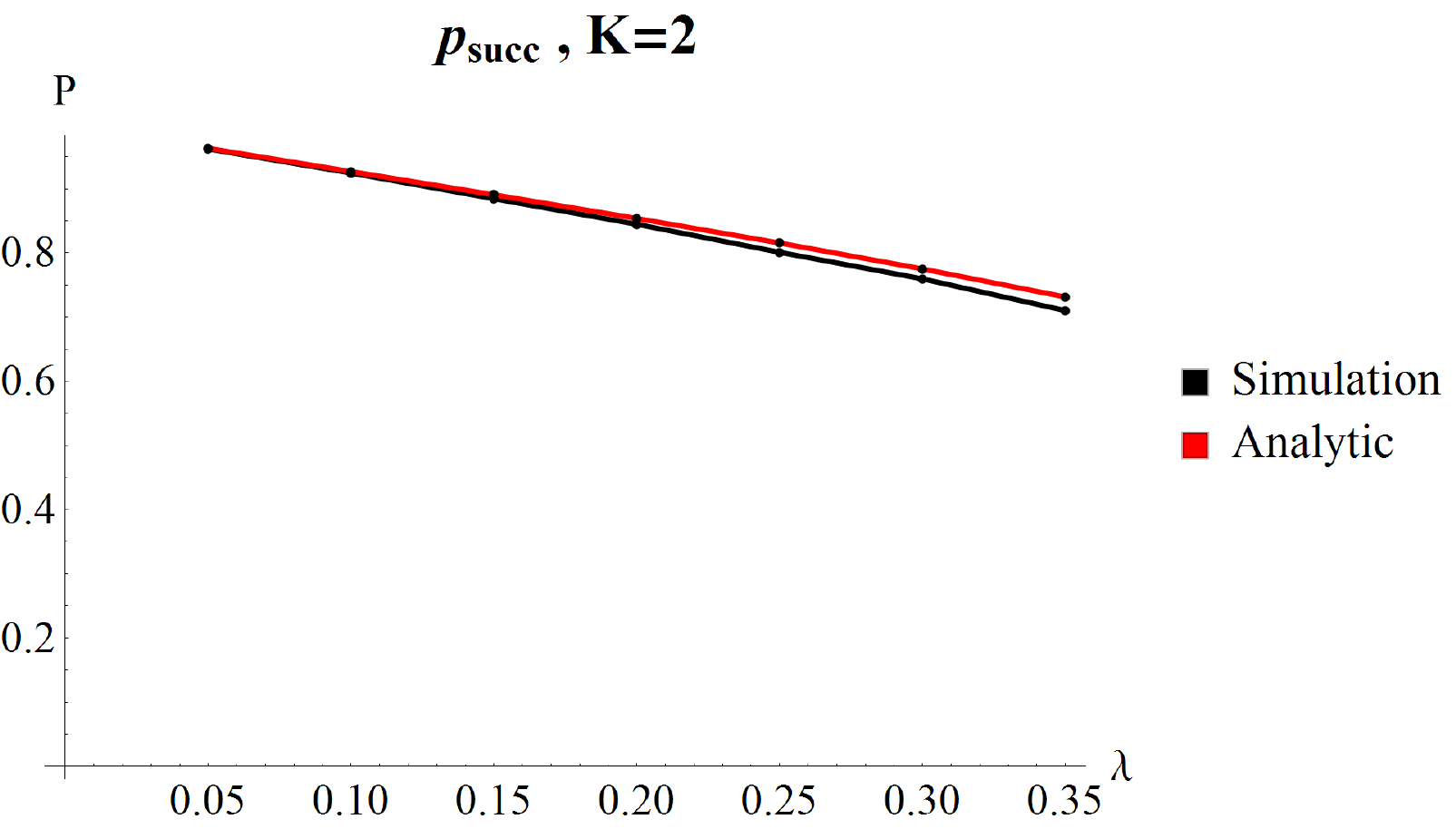}
                \caption{}
                \label{fig-SuccessProbability_K=2}
        \end{subfigure}
        \caption[System performance time independent model]{Anaclitic results and simulation for the system performance as a function of the total arrival rate. The number of users $K=2$. (a) The mean queue size. (b) The time in line. (c) The service time. (d) The probability for success. The red line describes the anaclitic expressions \eqref{equ-mean queue size model 1}, \eqref{equ-time in line approximation model 1}, \eqref{equ-service time approximation model 1}, \eqref{equ-success probability model 1} respectively.}
        \label{fig-QueuingPerformance_K=2}
    \end{figure}

    \section[Queueing Approximate model \Rmnum{2} ]{Approximation by Constant Collision Probability}\label{Approximate model 2}
    From the previous section it is clear that for large population, the performance metrics of the system are hard to calculate. In this subsection we will describe the service time using a different approach, which assumes that the\emph{ probability of collision is constant} and equals the average collision probability. The legitimacy comes from the results given in the previous section. Another work which used a constant collision probability as the key approximation is \cite{bianchi2000performance}. There, a similar model was considered, with $n$ users competing on a shared medium, however the users were \emph{considered backlogged} and only throughput performance was investigated. Nevertheless the results of the approximation in \cite{bianchi2000performance} show good agreement with the simulations, which strengthens its applicability. In \cite{sidi1983two}, independent $M/M/1$ queues were assumed, with a given distribution on the number of backlogged users. This results in an approximation with a varying collision probability. Herein, we use a different relaxation, assuming a \emph{constant collision probability}. This allows us to easily decouple the queues. We start by investigating the exceedance process, by giving the terms for convergence to the Poisson process in the independent case of the channel, which connects directly to the service time. Then we turn to the service time analysis

    \subsection{Threshold exceedance process }

    The point process approximation as described in the preliminaries gives us convergence to a non-homogeneous Poisson process $N$ of the points exceeding $u_n$. Here we shall use the notation given in \cite{EVT:Springer1983} for the Poisson properties of exceedance.\\
    Let $C_1,C_2,...,C_n$, a sequence of \textit{i.i.d.} random variables with distribution $F$, be the channel capacity draws of a specific user. From \cite[Theorem 2.1.1]{EVT:Springer1983}, if $u_n$ satisfies $n(1-F(u_n))\rightarrow \uptau$, then for $k=0,1,2,...,$
    \begin{equation*}
      P(N_n\leq k)\rightarrow e^{-\uptau}\sum_{s=0}^{k}\frac{\uptau^{s}}{s!}
    \end{equation*}
    where $N_n$ is the number of exceedances of a level $u_n$ by $\{C_n\}$. It is important to note that convergence in distribution happens if we change the "time scale" by a factor of $n$ as defined in the definition of $N_n$ in \eqref{equ-sequence of point processes N_n}. Meaning that we have a point process in the interval $(0,1]$ and the exceedance of such points has a limiting Poisson distribution. This restriction may be problematic in terms of the data arrival rate which will also need to be changed in a factor of $n$ for the analysis. However, under certain conditions, there may be convergence of the exceeding points for the entire positive line in place of the unit interval. In \cite[Theorem $5.2.1 (\rmnum{2})$]{EVT:Springer1983}, it is stated that if for each $\uptau >0$, exists a sequence $\{u_n(\uptau)\}$ satisfying $n(1-F(u_n(\uptau)))=nP(C_1>u_n(\uptau))\rightarrow \uptau$ as $n \rightarrow \infty$, and that $D(u_n(\uptau)), D'(u_n(\uptau))$ hold for all $\uptau > 0$, then for any fixed $\uptau$, $N_n$ converges in distribution to a Poisson process $N$ on $(0,\infty)$ with parameter $\uptau$. Clearly the dependence conditions hold for this case, since the sequence $\{C_n\}$ is \textit{i.i.d.} so we only need to show that the first condition holds for each $\uptau$.
    \begin{figure}[h]
        \begin{minipage}{0.6\textwidth}
            \centering
            \includegraphics[width=0.75\textwidth]{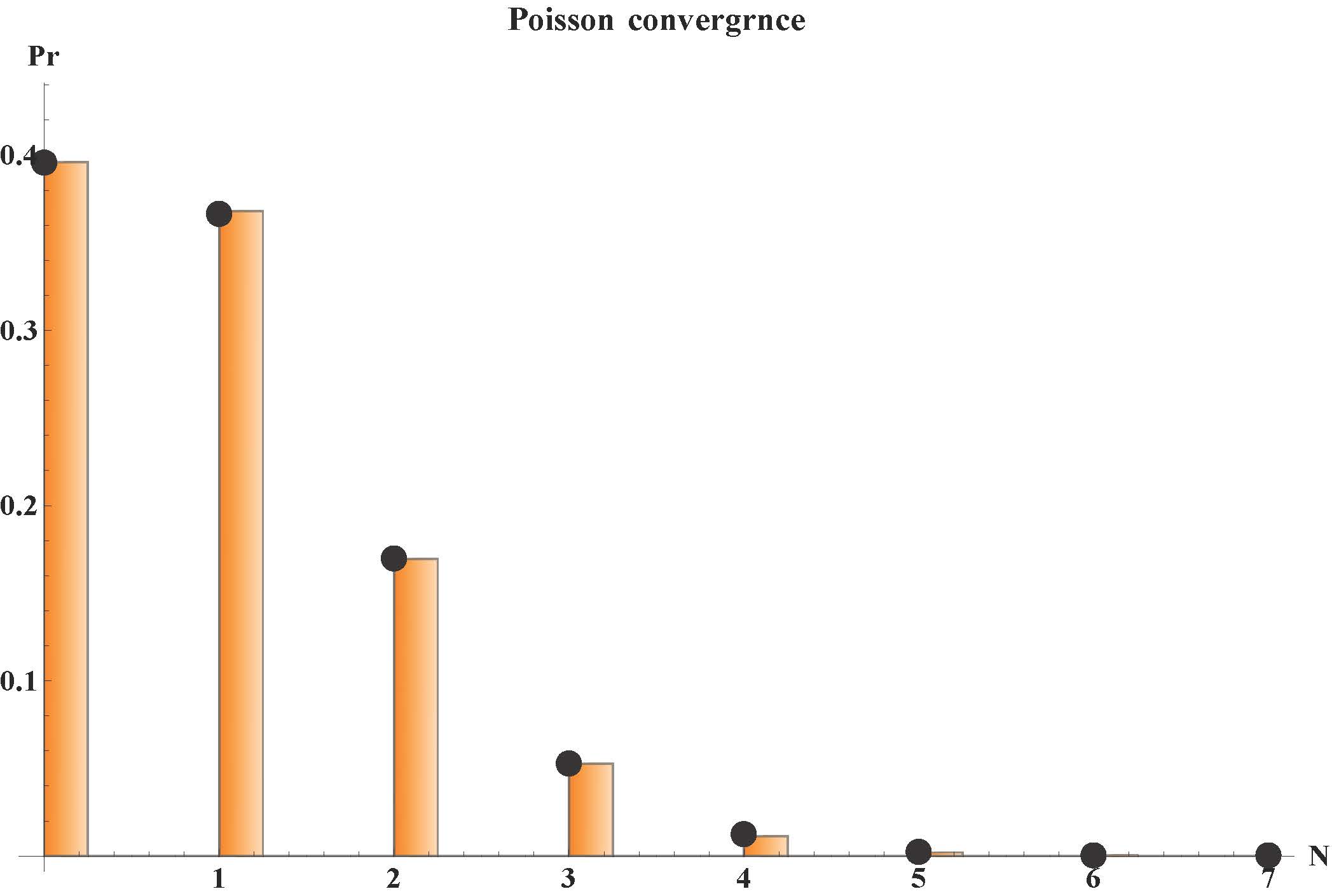}
            \caption[Poisson convergence of exceeding points]{Simulation for 10000 user's capacities which follows an \textit{i.i.d.} Gaussian distribution, showing the behaviour of exceedance which converge to a Poisson distribution (the dots) with parameter $\uptau$.}
            \label{fig-PoissonConvergence}
        \end{minipage} ~
        \begin{minipage}{0.2\textwidth}
        \captionsetup{justification=centering}
        \begin{tabular}{l|c c}
          \hline
           & Simulation & Poisson \\
           &            & distribution \\
          \hline
          Pr(N=0) & 0.3959 & 0.3961 \\
          Pr(N=1) & 0.3682 & 0.3668 \\
          Pr(N=2) & 0.1696 & 0.1698 \\
          Pr(N=3) & 0.0527 & 0.0524 \\
          Pr(N=4) & 0.0112 & 0.0121 \\
          Pr(N=5) & 0.002  & 0.0022 \\
          Pr(N=6) & 0.0004 & 0.0003 \\
          Pr(N=7) & 0.0    & 0.00004 \\
          \hline
        \end{tabular}

        \caption[Poisson convergence of exceeding points values table]{Values table of figure \ref{fig-PoissonConvergence}.}
        \end{minipage}
    \end{figure}

    \begin{lemma}\label{lem-condition for poisson convergence on all positive line}
        Assume F is the Gaussian distribution and let $a_n$ and $b_n$ be given according to \eqref{equ-parameter a_n} and \eqref{equ-parameter b_n} respectively.  Fix any  $\uptau>0$ and set $u_n(\uptau)= \frac{\log{1/\uptau}}{a_n}+b_n$ Then,
        \begin{equation*}
          \lim_{n \to \infty} n(1-F(u_n(\uptau))) = \uptau.
        \end{equation*}

    \end{lemma}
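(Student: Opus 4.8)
The plan is to reduce to the standard normal and then apply the classical Mills-ratio tail estimate. Write $\Phi$ and $\varphi$ for the standard normal CDF and density. Since $1-F(u)=1-\Phi\!\big((u-\mu)/\sigma\big)$ and, by \eqref{equ-parameter a_n}--\eqref{equ-parameter b_n}, $a_n=\sigma\tilde a_n$ and $b_n=\sigma\tilde b_n+\mu$ with $\tilde a_n:=(2\log n)^{1/2}$ and $\tilde b_n:=\tilde a_n-(2\tilde a_n)^{-1}\big(\log\log n+\log 4\pi\big)$, setting $x:=\log(1/\uptau)$ gives $(u_n(\uptau)-\mu)/\sigma=\tilde b_n+x/\tilde a_n=:z_n$. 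So it suffices to prove $n\big(1-\Phi(z_n)\big)\to e^{-x}=\uptau$; in effect the lemma just sharpens Theorem \ref{thm-EVT normal sequnce converge to Gumbel} from the distributional limit $\exp\{-e^{-x}\}$ to the underlying tail condition $n(1-F(b_n+x/a_n))\to e^{-x}$ that feeds Theorem \ref{thm-EVT convergence different formulation }.

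First I would invoke the Mills-ratio sandwich $\big(z^{-1}-z^{-3}\big)\varphi(z)\le 1-\Phi(z)\le z^{-1}\varphi(z)$, valid for $z>1$; since $z_n\to\infty$ it is enough to show $n\varphi(z_n)/z_n\to e^{-x}$, the lower bound differing only by the factor $1-z_n^{-2}\to1$. Next I would expand $z_n^2$. With $L:=(2\log n)^{1/2}$ we have $z_n=L-(2L)^{-1}\big(\log\log n+\log 4\pi-2x\big)$, hence
\[
z_n^2 = L^2 - \big(\log\log n + \log 4\pi - 2x\big) + \frac{\big(\log\log n + \log 4\pi - 2x\big)^2}{4L^2}.
\]
Since $L^2=2\log n$ and the last term is $O\!\big((\log\log n)^2/\log n\big)=o(1)$, this yields $z_n^2/2=\log n-\tfrac12\log\log n-\tfrac12\log 4\pi+x+o(1)$. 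Exponentiating, $\varphi(z_n)=(2\pi)^{-1/2}e^{-z_n^2/2}=(2\pi)^{-1/2}\,n^{-1}(\log n)^{1/2}(4\pi)^{1/2}e^{-x}(1+o(1))=\sqrt{2}\,(\log n)^{1/2}n^{-1}e^{-x}(1+o(1))$, while $z_n=L(1+o(1))=\sqrt{2}\,(\log n)^{1/2}(1+o(1))$. Dividing, $n\varphi(z_n)/z_n=e^{-x}(1+o(1))\to e^{-x}=\uptau$, which is the claim.

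The only delicate point is the bookkeeping in the expansion of $z_n^2/2$: one must retain the $-\tfrac12\log\log n$ term — it is precisely what supplies the $(\log n)^{1/2}$ in $\varphi(z_n)$ that cancels the $z_n$ in the denominator — and keep the constant $\log 4\pi$, which converts $(2\pi)^{-1/2}$ into the $\sqrt{2}$ matching $z_n\sim\sqrt{2\log n}$, while checking that the discarded remainder is genuinely $O\!\big((\log\log n)^2/\log n\big)$. Everything else is routine. Alternatively one may simply cite the remark after Theorem \ref{thm-EVT normal sequnce converge to Gumbel}, namely that for a Gaussian $F$ the associated $\uptau$ equals $e^{-x}$, i.e.\ $n(1-F(b_n+x/a_n))\to e^{-x}$; the computation above is that fact specialized to $x=\log(1/\uptau)$.
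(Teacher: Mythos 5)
Your proof is correct and follows essentially the same route as the paper's: both rest on the Gaussian tail equivalence $1-\Phi(z)\sim\varphi(z)/z$ followed by the identical logarithmic bookkeeping for $z_n^2/2$ (retaining the $-\tfrac12\log\log n$ and $-\tfrac12\log 4\pi$ terms and discarding an $O\big((\log\log n)^2/\log n\big)$ remainder). The only difference is direction — the paper solves the condition $n(1-F(u_n(\uptau)))\to\uptau$ for $u_n(\uptau)$ and recognizes the answer as $\log(1/\uptau)/a_n+b_n$, whereas you substitute the given $u_n(\uptau)$ and verify the limit directly with an explicit two-sided Mills bound, which is the same computation read in the cleaner logical direction.
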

    \begin{proof}
        In a similar way for the derivation of the normalizing constant in appendix \ref{Appendix A}, and based on the derivation in \cite[Theorem 1.5.3]{EVT:Springer1983}.  Let us find $u_n(\uptau)$ which satisfies the equivalence condition for the convergence of the expression $n(1-F(u_n(\uptau)))$
        \begin{equation*}
          \begin{aligned}
                    n(1-F(u_n(\uptau))) &\rightarrow \uptau \qquad \text{as  } n\rightarrow \infty \\
                    \frac{nf(u_n(\uptau))}{u_n(\uptau)}&\rightarrow \uptau \qquad  \text{as  } n\rightarrow \infty
          \end{aligned}
        \end{equation*}
        where the second line is true due to the Gaussian relation $1-\Phi(u)\sim \frac{\phi(u)}{u}$ as $u\rightarrow \infty$, which in our case $u_n(\uptau)$ grows with $n$. So
        \begin{equation*}
          \begin{aligned}
                    \frac{1}{\sqrt{2\pi}}e^{-\frac{u_n^2(\uptau)}{2}} &\rightarrow \frac{\uptau \ u_n(\uptau)}{n} \qquad  \text{as  } n\rightarrow \infty\\
                    -\log\sqrt{2\pi}-\frac{u_n^2(\uptau)}{2}&\rightarrow \log \uptau + \log(u_n(\uptau)) - \log n \qquad  \text{as  } n\rightarrow \infty
          \end{aligned}
        \end{equation*}
        we know that $\log(u_n(\uptau))=\frac{1}{2}(\log 2 +\log{\log{n}})+o(1) $, hence
        \begin{equation*}
          \begin{aligned}
                    &\frac{u_n^2(\uptau)}{2}= \log{ \frac{1}{\uptau}}-\frac{1}{2}\log{4\pi}-\frac{1}{2}\log{\log{n}} + \log n +o(1)\\
                    &u_n^2(\uptau)=2\log n\left( 1+ \frac{\log{ \frac{1}{\uptau}}-\frac{1}{2}\log{4\pi}-\frac{1}{2}\log{\log{n}}}{\log n} +o\left(\frac{1}{\log n} \right) \right)\\
                    &u_n(\uptau)=\sqrt{2\log n}\left( 1+ \frac{\log{ \frac{1}{\uptau}}-\frac{1}{2}\log{4\pi}-\frac{1}{2}\log{\log{n}}}{2\log n} +o\left(\frac{1}{\log n} \right) \right)\\
                    &u_n(\uptau)=\frac{\log{1/\uptau}}{a_n}+b_n
          \end{aligned}
        \end{equation*}
        where the penultimate line is due to Taylor expansion.
    \end{proof}
    After proving that all conditions hold and we have a convergence to a Poisson process on the real line of the exceeding points, we can conclude that a user tries to transmit at a rate of $\uptau$ assuming he has packages to send. We will note this Poisson process of threshold exceedance as $N_{exc}$. Note that the time between each user exceedance is distributed exponentially with parameter $\uptau$. The above leads us to a discussion on the duration of a slot.

    At the limit of large number of users, the capacity seen by a user who exceeds the threshold is high (in fact, it scales like $O(\sigma\sqrt{2\log K}+\mu)$, see \eqref{equ-CapacityExpression}). Hence, for any finite size of a data packet, transmission time tends to zero as the number of users increases. Furthermore, in this case, as convergence to Poisson process $N_{exc}$ exists, events duration also goes to zero. This motivates us to suggest the asymptotic model, where transmission time is negligible but still exists, hence under this assumptions we will refer the slots as mini slots to emphasize this fact.

    \begin{remark}[Zero collisions] \label{rem-zero collisions}
        The users are independent, and each of them is also characterized with the same $N_{exc}$ process. As known, one of the properties of a Poisson process is that no two events can occur in the same time. Therefore, once a user exceeds $u_n$, assuming it's transmission time goes to zero, the probability that other users will exceed $u_n$ goes to zero as well. While considering this scenario, we can observe an approximation for the behavior of the user's queue with Poison arrival process with rate $\lambda$ and exponential service process with parameter $\uptau$, as the familiar $M/M/1$ queue. This setting gives as an upper bound for the service time and other QoS properties. Still this scenario is treated only as a guide line for the analysis, since we are assuming that the services are considered to be synchronized, and hence collisions may occur and result with decline in the rate of service.
    \end{remark}

    \subsection{Service time analysis}
    The service time, which is the time from the moment a package becomes first in queue, until it is successfully transmitted, depends on the rate which a user exceeds the threshold and the probability for success (the probability which collision did not occur), given that the user has a package to transmit. The time between threshold exceedances is exponentially distributed with parameter $\uptau$, due to the Poisson properties of the exceedance process (PPA). Therefore, the remaining parameter for the service time analysis is the probability for success. In fact, this probability is the main key for understanding this complex system. Due to the strong interdependence between the queues, this probability is constantly fluctuating and by quantifying it using average collision probability, one can approximate the service time.

    We start with the assumption that the users are backlogged and give the appropriate conclusions regarding the success probability. Then we turn to our model with memoryless arrival process. From now on, we will refer to the probability of collision which is the complementary probability of success, i.e. $p_{coll}=1-p_{succ}$.\\

    We define a collision scenario, if at least two users exceed the threshold $u_n$ and both has packages to transmit. The analysis is done by considering a specific user $i$ at the time which he exceeds $u_n$ and assuming his queue, $Q_i$, is not empty. Given that, we note $p_{coll}$ as the probability for collision,
    \begin{equation}\label{equ-probability for collision - General}
      p_{colll}=P( \bigcup_{j=1..K,j \neq i} C_j(n)>u_n,Q_j>0 \mid C_i(n)>u_n,Q_i>0)
    \end{equation}
    where $Q_j$ is the number of packages in the queue of user $j$. In general the probability $p_{coll}$ depends on the states of the other users queues. We will analyse $p_{coll}$ of a specific user, starting with the more simple analysis, resulting with an upper bound, assuming all users always have packages to transmit.

    \subsubsection{Backlogged system}
    In the case which all users are backlogged, the probability for collision is constat and independent among the users. The state of the users' queues does not change over time, since their queue is always full, and the only thing that impacts the probability for collision is whether a user exceeds $u_n$ or not. Then, by observing the $N_{exc}$ process, a certain point indicates a successful transmission with probability $p_{succ}$ and a collision with probability $p_{coll}=1-p_{succ}$. Therefore a random selection is made from $N_{exc}$ resulting with a Poisson process of successful transmissions with rate $\uptau \cdot(1- p_{coll})$, which we will note as $N_{tr}$. The proof is quite obvious since the probability that an arrival occurs from the original process in the interval $dt$ is $\uptau dt$, which is independent of the arrivals outside the interval, where after the random selection, the probability for an arrival in the interval $dt$ is $(1-p_{coll}) \cdot \uptau dt$ (independent of the arrivals outside the interval). Thus we conclude that the number of successful transmissions is a Poisson random variable with parameter $(1-p_{coll}) \cdot \uptau$. Nevertheless a well appointed proof can be found in appendix \ref{Appendix C}.\\

    Consider the following problem under the framework of the general problem. We have one user with arrival process with rate $\lambda$ and $N_{tr}^{\text{backlogged}}$ process with rate $\uptau \cdot (1-p_{coll}^{\text{backlogged}})$, where all other users are backlogged. The definition of the problem in this manner will help us obtain an upper bound on $p_{coll}$ by understanding the fact that the condition of a user in the original problem is better, in a way that users may have empty queues and although exceedance occur it will not result in a collision. This setting isolates for us the other users queues' states, from the analysis of the probability for collision.\\

    In a given time the probability that a user exceeds the level $u_n$ is $1-F(u_n)$, so in order to have a collision, at least one user must exceed $u_n$ among all users, hence we get
    \begin{equation}\label{equ-probability of successful transmission - Backlogged system}
      \begin{aligned}
            &p_{coll}^{\text{backlogged}}=P( \bigcup_{j=1..K,j \neq i} C_j(n)>u_n \mid C_i(n)>u_n,Q_i>0)\\
            &=1-P(C_1(n)<u_n,...,C_{K-1}(n)<u_n)\\
            &=1-(F(u_n))^{K-1}=1-\left( 1-\frac{1}{K} \right)^{K-1} \\
            &\overset{K \rightarrow \infty}{\rightarrow}1- e^{-1}
        \end{aligned}
    \end{equation}
    Again the value $u_n$ is set such that only one user on average exceeds it in each slot, hence $1-F(u_n)=1/K$.\\
    This analysis is based on the fact that all the users have packages to send in any given time, so we can conclude that $p_{coll} \leq p_{coll}^{\text{backlogged}}$. Therefore a lower bound for the service process $N_{tr}$ rate is $\uptau \cdot(1- p_{coll}^{\text{backlogged}})$.

    \subsubsection{Memoryless arrival}

    The assumption of constant and independent collision probability relaxes the analysis dramatically. We can now assume that each user's queue behaves like an $M/M/1$ queue, with a Poisson arrival process with rate $\lambda$ and exponential service time with parameter $(1-p_{coll}) \cdot \uptau$, due to random selection with probability $1-p_{coll}$ from the exceedance process. By doing so, we can use the known results for the probability of an empty $M/M/1$ queue, which is
    \begin{equation}\label{equ-the probability for empty queue M/M/1}
      P(Q=0)=1-\rho=1-\frac{\lambda}{(1-p_{coll}) \cdot \uptau},
    \end{equation}
    where $Q$ is the number of packages in the queue. We thus have the following lemma.

    \begin{lemma}\label{lem-Transmission success probability satisfies the equation - Memoryless arrival}
       \textit{The users queue can be modeled as an $M/M/1$ queue with an arrival rate $\lambda$ and service rate $(1-p_{coll}) \cdot \uptau$, where $\uptau$ is the rate of the exceedance process, $N_{exc}$, and the probability for collision $p_{coll}$ satisfies the equation
       \begin{equation}\label{equ-Transmission success probability satisfies the equation - Memoryless arrival}
         p_{coll}=1-e^{-\frac{\lambda}{(1-p_{coll}) \cdot \uptau}}.
       \end{equation}}
    \end{lemma}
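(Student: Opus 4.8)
The plan is to combine the point-process result established earlier with the standard $M/M/1$ machinery and then close the loop with a self-consistency (fixed-point) argument.

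First I would invoke Lemma \ref{lem-condition for poisson convergence on all positive line} together with Theorem \ref{thm-Point process convergance of iid}: for a single user the exceedance process $N_{exc}$ of the level $u_n$ converges, on the whole positive half-line, to a Poisson process of rate $\uptau$, so the inter-exceedance times are i.i.d.\ exponential with parameter $\uptau$. Under the standing assumption that the collision probability is a \emph{constant} $p_{coll}$, independent of the exceedance epoch, a successful transmission is obtained by independently thinning each point of $N_{exc}$ with probability $1-p_{coll}$; by the thinning property of Poisson processes (the same argument used for the backlogged case, Appendix \ref{Appendix C}) the resulting success process $N_{tr}$ is Poisson with rate $(1-p_{coll})\uptau$. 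Consequently the queue of each user sees Poisson arrivals of rate $\lambda$ and exponential service of rate $(1-p_{coll})\uptau$, i.e.\ it is an $M/M/1$ queue with load $\rho=\lambda/\big((1-p_{coll})\uptau\big)$, which is well defined (stable) precisely when $\lambda<(1-p_{coll})\uptau$.

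Next I would evaluate $p_{coll}$ itself. By \eqref{equ-probability for collision - General}, a collision at an exceedance epoch of user $i$ with $Q_i>0$ occurs iff at least one of the other $K-1$ users simultaneously exceeds $u_n$ and has a non-empty queue. The users are mutually independent, so conditioning on user $i$'s event is irrelevant for the others; for a fixed other user $j$, the marginal probability of exceeding the threshold in that slot is $1-F(u_n)=1/K$, while --- by the $M/M/1$ description and a PASTA-type argument --- the probability that its queue is non-empty at a slot boundary is $\rho$, independently of whether it exceeds. Hence user $j$ contributes to a collision with probability $\rho/K$, and
\begin{equation*}
 p_{coll}=1-\Big(1-\frac{\rho}{K}\Big)^{K-1}\xrightarrow[K\to\infty]{}1-e^{-\rho}=1-\exp\!\left(-\frac{\lambda}{(1-p_{coll})\uptau}\right),
\end{equation*}
which is the claimed equation. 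One then notes that $x\mapsto 1-\exp\big(-\lambda/((1-x)\uptau)\big)$ is continuous and strictly increasing on $[0,1)$, maps $[0,1)$ into $[1-e^{-\lambda/\uptau},1)$, and therefore has a unique fixed point in $(0,1)$ in the stable regime, which pins down $p_{coll}$.

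The delicate point --- and the only place where the argument is an approximation rather than a theorem --- is the decoupling step: asserting that, at a tagged exceedance epoch of user $i$, the queue occupancy of another user $j$ is independent of the exceedance events and is busy with probability $\rho$, with the \emph{same} $\rho$ for every user. This is exactly the constant-collision-probability relaxation whose empirical justification was given in Section \ref{Queueing performance analysis} (the success probability behaves as a constant average, Figure \ref{fig-SuccessProbability_2-10_0366}); it also implicitly uses the interchange of the stationary-regime expectation with the limit $K\to\infty$. Once that decoupling is granted, the rest is the routine Poisson-thinning and $M/M/1$ computation sketched above.
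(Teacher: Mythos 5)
Your argument is essentially identical to the paper's proof: thinning the Poisson exceedance process by $1-p_{coll}$ to obtain an $M/M/1$ queue with service rate $(1-p_{coll})\uptau$, taking the probability that another user both exceeds the threshold and is non-empty to be $\frac{1}{K}\cdot\frac{\lambda}{(1-p_{coll})\uptau}$, and letting $1-\bigl(1-\frac{\rho}{K}\bigr)^{K-1}\rightarrow 1-e^{-\rho}$ (the paper writes the same quantity as a binomial sum before collapsing it). The only caveat concerns your added closing remark, which the lemma does not require: monotonicity of $x\mapsto 1-\exp\bigl(-\lambda/((1-x)\uptau)\bigr)$ together with its range does not by itself yield existence or uniqueness of a fixed point (for $\lambda/\uptau$ large enough the map lies strictly above the diagonal on all of $[0,1)$ and no fixed point exists), so that aside would need a separate stability hypothesis and an intermediate-value argument to be made rigorous.
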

    \begin{proof}
     As random selection (with probability $(1-p_{coll})$) is preformed on the exceedance Poisson process, this results in a Poisson process for successful transmission, which leads to an exponential service time with parameter $(1-p_{coll}) \cdot \uptau$. Let us examine the probability $p_{coll}$.\\
     The probability that a specific user exceeds $u_n$, and its queue is not empty is
     \begin{equation*}
       P(C_i>u_n,Q_i>0)=P(C_i>u_n)P(Q_i>0)=\frac{1}{K}\cdot\frac{\lambda}{\uptau\cdot (1-p_{coll})}.
     \end{equation*}
     Hence, given that one user exceeds $u_n$ and its queue is not empty, $p_{coll}$ is the probability that among all the other users which exceeded $u_n$, at least one has a package to send. Thus
        \begin{equation*}
          \begin{aligned}
          p_{coll}&=\sum_{i=1}^{K-1}  \binom {K-1} {i} \left(\frac{1}{K} \frac{\lambda}{\uptau\cdot (1-p_{coll})} \right)^{i}\left(1-\frac{1}{K} \frac{\lambda}{\uptau\cdot (1-p_{coll})} \right)^{K-1-i} \\
          &= 1-\left( 1-\frac{1}{K} \frac{\lambda}{\uptau\cdot (1-p_{coll})} \right)^{K-1} \\
          &\overset{K \rightarrow \infty}{\rightarrow} 1-e^{-\frac{\lambda}{\uptau\cdot (1-p_{coll})}}.
           \end{aligned}
        \end{equation*}
    \end{proof}

    This simple result helps understanding the behaviour of the system's service time. Of course, equation \eqref{equ-Transmission success probability satisfies the equation - Memoryless arrival} is an implicit equation and a numerical method is needed in order to find the value of $p_{coll}$. In Figure \ref{fig-SuccessProbability_2-10_0366_independent}, one can see that the numerical calculation for the success probability (the blue line), as given in \eqref{equ-Transmission success probability satisfies the equation - Memoryless arrival}, coincides with the results of the simulation \emph{and the approximate model from the previous section}. It is important to emphasise that for this comparison, the number of users is relatively small and therefore we used the equation in its explicit form, meaning without taking $K$ to infinity. The good agreement of the results implies that for the case which $K \rightarrow \infty$,  Lemma \ref{lem-Transmission success probability satisfies the equation - Memoryless arrival} is able to describe the behavior of the users' queues, hence the performance metrics of the system as a system of $M/M/1$ queues. Simulations for large number of users show excellent agreement as well, as can be seen in Figure \ref{fig-Service_time_model2}, where a comparison of the average service time was made between an $M/M/1$ queue with service rate $p_{succ}\uptau$, which was calculated according to equation \eqref{equ-Transmission success probability satisfies the equation - Memoryless arrival}, with simulation of the system, which of course includes dependent queues and variable $p_{succ}$

    \begin{figure}[h]
        \centering
        \includegraphics[width=4in]{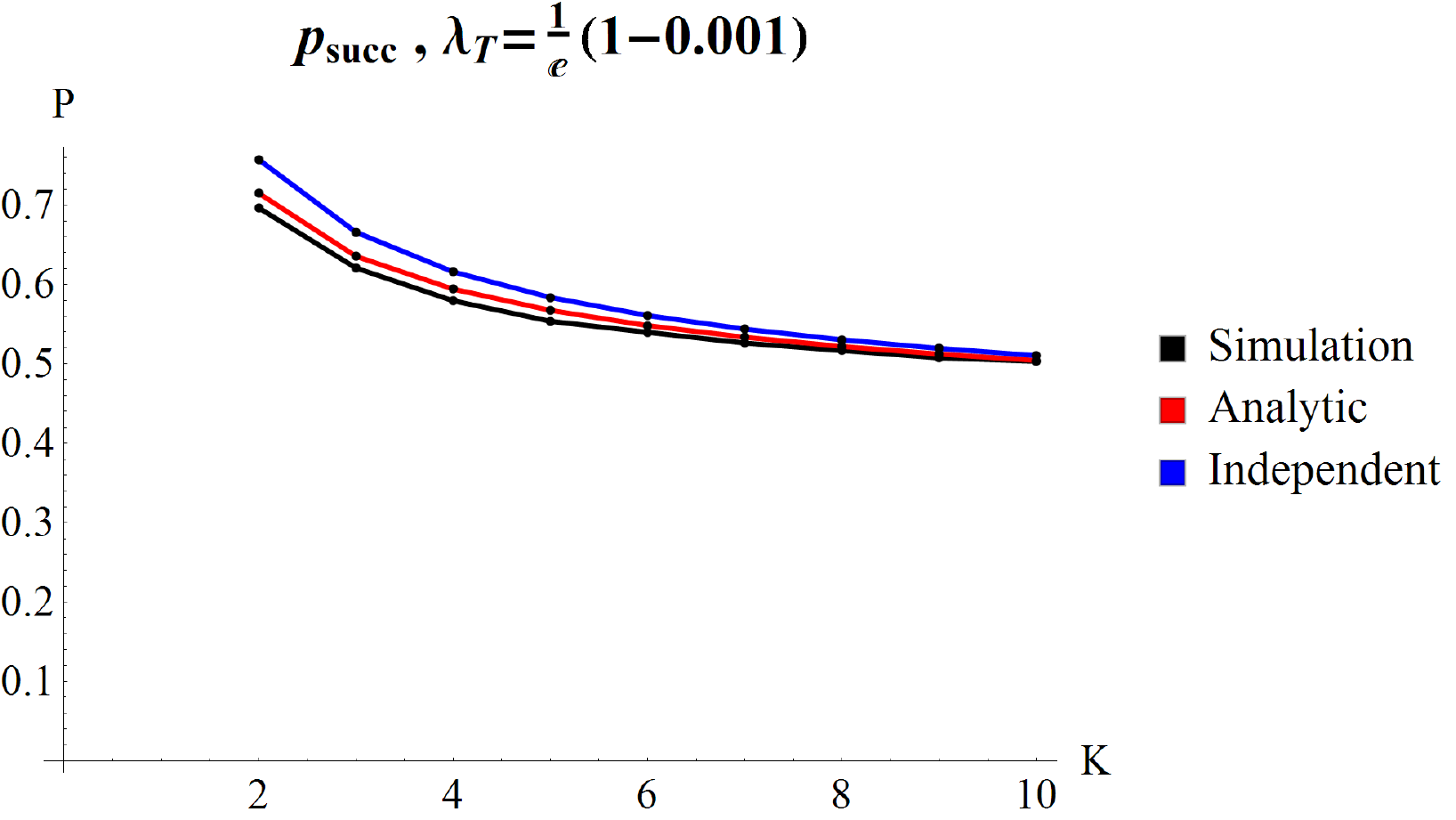}
        \caption[Comparison of success probability]{Comparison of the success probability between the analytic derivations of the approximate model \Rmnum{1}, simulation results, and the approximation of $K$ \emph{independent} M/M/1 queues given in Lemma \ref{lem-Transmission success probability satisfies the equation - Memoryless arrival}.}
        \label{fig-SuccessProbability_2-10_0366_independent}
    \end{figure}
    \begin{figure}[h]
        \centering
        \includegraphics[width=4in]{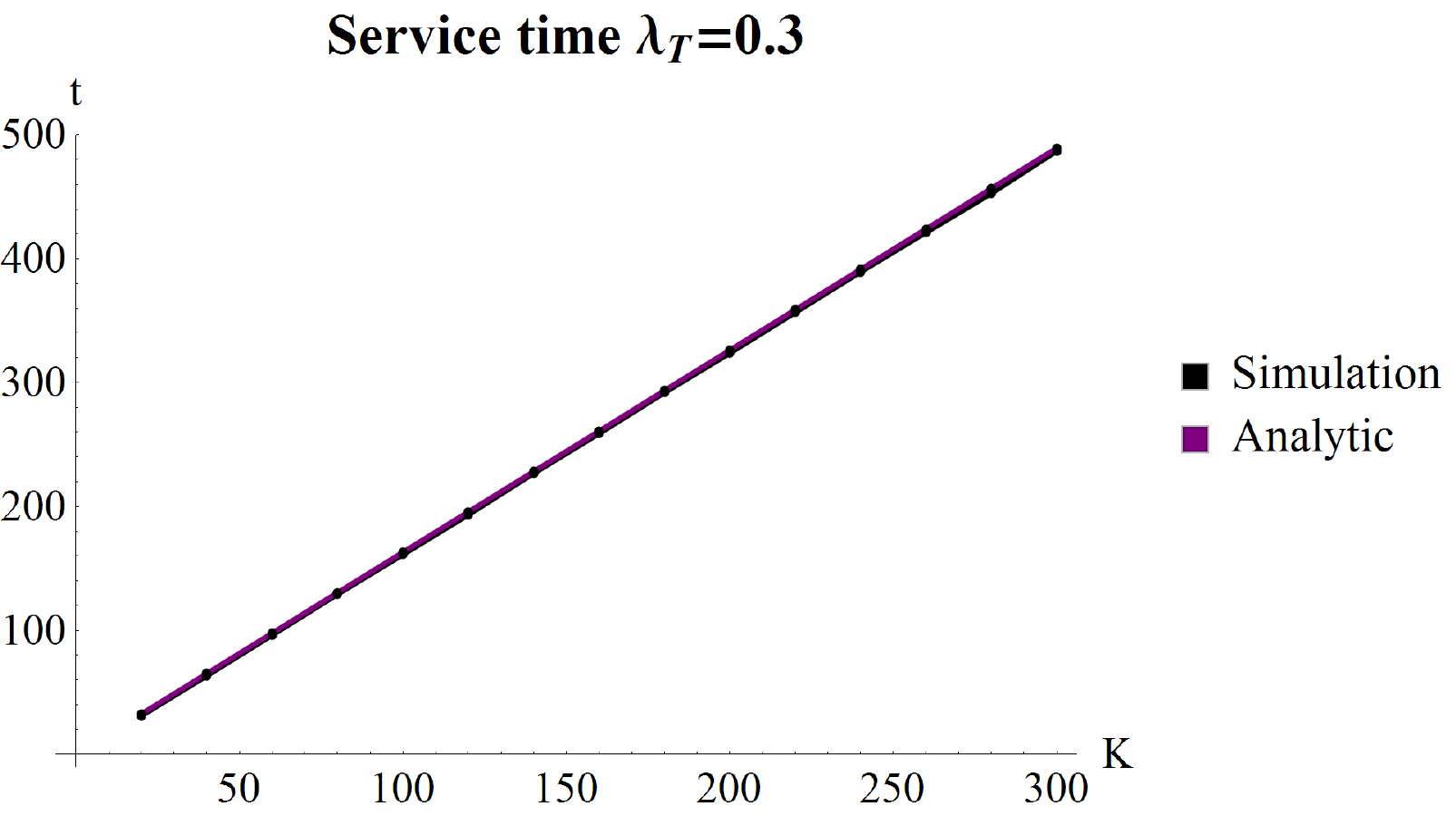}
        \caption[Comparison of service time]{The service time of an M/M/1 queue with service rate $p_{succ}\uptau$, compared to simulation results of system with $K$ \emph{interdependent} queues.}
            \label{fig-Service_time_model2}
        \label{fig-Service_time_model2}
    \end{figure}

    \section[Queueing Approximate model \Rmnum{3}]{Approximation by Constant Collision Probability - Time Dependent Channel}\label{Approximate model 3}
    In chapter \ref{Capacity under time dependent channel} we analyzed the channel capacity, where each user experiences a time dependent process for the channel gain. There we assumed that all users are backlogged, therefore queuing aspects were not taken into account. We showed that under our distributed scheme the exceedance process of a specific user, above some threshold $u_K$, is exponentially distributed with some rate $\uptau$. This rate of exceedance was derived from the stationary distribution of the gain process, and there is no real separation between users that are in good state or in bad state in means of different rates for the corresponding states. The difference is embodied in the stationary distribution. thus $\uptau$ is the average rate of exceedance for such a system, which a user spends some of the time in the good state and some in the bad state, according to the transition probabilities.\\

    Then, in the beginning of chapter \ref{Delay and QoS under the distributed algorithm} we turned to the analysis of the queues in such distributed multiple-access system, when we assumed a memoryless arrival process. In section \ref{Approximate model 1} we gave an elaboration for a known approximate model which gives us performance parameters for such system, and due to its bad scalability for a large population we gave in section \ref{Approximate model 2} another approach to approximate the users' queues in a large population system. We show that the collision probability in each slot may be treated as constant, where this is called decoupling approximation, which works nicely when the number of users is large. Both sections' analysis was preformed while assuming that the users' capacities are \textit{i.i.d.} random variables and therefore the threshold value was set accordingly.\\
    It's an interesting question, considering the stationary good-bad capacity sequence as described earlier in \eqref{equ-UserCapacityProcess}, is it enough to set only the threshold value for the stationary case, as given in \ref{equ-Estimated threshold}, with the relevant normalizing constants, as given in Theorem \ref{thm-Capacity distribution in a time dependent channel convergence to a Gumbel}, to capture the service time behaviour as described in Lemma \ref{lem-Transmission success probability satisfies the equation - Memoryless arrival}.
    We will not try to answer this question now, but rather propose another model which combines the queuing system with the users' good-bad channel process.\\

    Assume now that each user experiences a good or bad channel, according to the time dependent model in Figure \ref{fig-GoodBadchannel}. Assume further that the transitions between the states are exponentially distributed with rates $\alpha$ and $\beta$. Depending on the state a user exists in, it exceeds the threshold with a different rate. Namely, if a user is in a good state, it would exceed the threshold more often than when being in a bad state. Hence, the user's queue is modeled as follows. A Poisson arrival process with rate $\lambda$, and an exceedance process which is assumed to be exponentially distributed according to the user's state. If a user is in a good state, the exceedance process is exponentially distributed with rate $\mu_g$, and if it is in a bad state, the exceedance process is exponentially distributed with rate $\mu_b$.


    Since we consider multiple user system, where each user competes for the channel distributively, collisions may occur. Since we consider this probability as constant, we have a similar conclusion, that is, random selection is preformed from the exceedance process, with probability $p_{succ}$, regardless the state which the user exists in. This results in an exponentially distributed service process with rates $\mu_g\cdot p_{succ}$ or $\mu_b\cdot p_{succ}$, depending on the good-bad state respectively. The individual queue model for this case is presented in Figure \ref{fig-Time Dependent Queue}.

    \begin{figure}[h]
        \centering
        \includegraphics[width=2.5in]{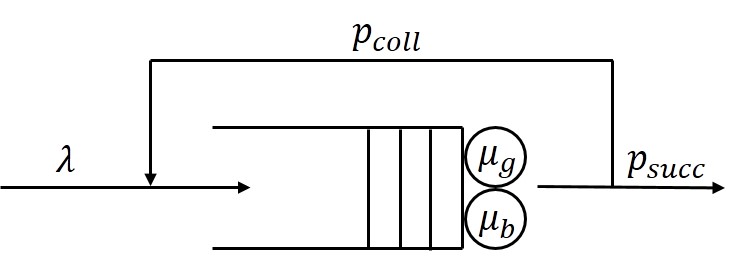}
        \caption[Time dependent queue model]{Queue model with varying service rates according to a time dependent channel.}
        \label{fig-Time Dependent Queue}
    \end{figure}

    This time-dependent queue model can be represented as a continuous Markov process on the set of states $\{\pi^i_m\}$ for $i\in\{b,g\}$, which indicates the good or bad state, and $m=0,1,2...$ the number of packages in the queue. This two dimensional Markov chain is presented in Figure \ref{fig-Time Dependent Queue Markov chain}. To ease notation, we denote $\mu'_g=\mu_g\cdot p_{succ}$ and $\mu'_b=\mu_b\cdot p_{succ}$.
    \begin{figure}[h]
        \centering
        \includegraphics[width=3.5in]{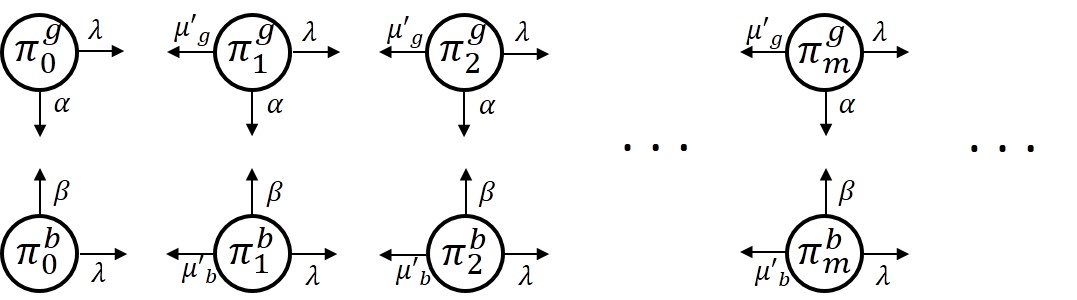}
        \caption[Time dependent queue Markov chain]{Queue model for time dependent user}
        \label{fig-Time Dependent Queue Markov chain}
    \end{figure}

    In \cite{yechiali1971queuing}, the authors investigated a problem with heterogeneous arrivals and service to an $M/M/1$ queue, which was solved by analysing a similar Markov chain to the one presented in Figure \ref{fig-Time Dependent Queue Markov chain}. We make use of the solution in \cite{yechiali1971queuing}, with the right adjustment to our model. The system is solved while using generating-function techniques, resulting with a solution of the steady state probabilities, $\{\pi^i_m\}$, as a function of the transition rate parameters and the root of a third degree polynomial $g(z)$ (only one exist under the assumptions). Here, we will show the relevant fitted steady state results of the chain.
    First, let us define
    \begin{equation*}
            \hat{\mu}=\pi_g \mu'_g+\pi_b\mu'_b
    \end{equation*}
    where
    \begin{equation*}
    \begin{aligned}
      \pi_g&=\frac{\beta}{\alpha+\beta}\\
      \pi_b&=\frac{\alpha}{\alpha+\beta}
      \end{aligned}
    \end{equation*}
    $\hat{\mu}$ is the average service rate. It is important to emphasize that in order to maintain stability of the system, we require that $\hat{\mu}>\lambda$. We define the partial generating functions of the system as
    \begin{equation*}
      G_i(z)=\sum^{\infty}_{m=0}\pi^i_m z^m \quad \quad \mid z \mid \leq 1, \ i=g,b.
    \end{equation*}
    which eventually equals to
    \begin{equation*}
      \begin{aligned}
            G_g(z)&=\left( \beta(\hat{\mu}-\lambda)z+\pi^g_0\mu'_g(1-z)(\lambda z-\mu'_b) \right)/g(z) \\
            G_b(z)&=\left( \alpha(\hat{\mu}-\lambda)z+\pi^b_0\mu'_b(1-z)(\lambda z-\mu'_g) \right)/g(z),
        \end{aligned}
    \end{equation*}
    where $g(z)$ is
    \begin{equation}\label{equ-third degree polynomial}
        \begin{aligned}
            g(z)=&\lambda^2z^3-(\alpha\lambda+\beta\lambda+\lambda^2+\lambda\mu'_b+\lambda\mu'_g)z^2  \\
            &+(\alpha\mu'_b+\beta\mu'_g+\mu'_g\mu'_b+\lambda\mu'_b+\lambda\mu'_g)z-\mu'_g\mu'_b
        \end{aligned}
    \end{equation}
    The steady state probabilities for an empty queue, depending on the system state, are
    \begin{equation}
        \begin{aligned}
            \pi^g_0&=\frac{\beta(\hat{\mu}-\lambda)z_0}{\mu'_g(1-z_0)(\mu'_b-\lambda z_0)} \\
            \pi^b_0&=\frac{\alpha(\hat{\mu}-\lambda)z_0}{\mu'_b(1-z_0)(\mu'_g-\lambda z_0)}
        \end{aligned}
    \end{equation}
    where $z_0$ is the root of the polynomial $g(z)$.
    The remaining steady state probabilities are as follows
    \begin{equation}
        \begin{aligned}
            \pi^g_m&= \pi^g_{m-1}\frac{\lambda}{\mu'_g}+\sum^{m-1}_{j=0}\pi^g_j\frac{\alpha}{\mu'_g}- \sum^{m-1}_{j=0}\pi^b_j\frac{\beta}{\mu'_g} \quad \quad m>0 \\
            \pi^b_m&= \pi^b_{m-1}\frac{\lambda}{\mu'_b}+\sum^{m-1}_{j=0}\pi^b_j\frac{\beta}{\mu'_b} - \sum^{m-1}_{j=0}\pi^g_j\frac{\alpha}{\mu'_b} \quad \quad m>0
        \end{aligned}
    \end{equation}
    From the first derivative of the partial generating functions, we can attain the expected queue size, which includes the head of line package, unlike the analysis in subsection \ref{Approximate model 1}
    \begin{equation*}
      \overline{Q}=\frac{\lambda}{\hat{\mu}-\lambda}+\frac{\mu'_g(\mu'_b-\lambda)\pi^g_0+
      \mu'_b(\mu'_g-\lambda)\pi^b_0-(\mu'_g-\lambda)(\mu'_b-\lambda)}{(\alpha+\beta)(\hat{\mu}-\lambda)}
    \end{equation*}
    Therefore, using Little's theorem we can attain the average waiting time in the queue:
    \begin{equation*}
      W=\frac{\overline{Q}}{\lambda}.
    \end{equation*}
    For the completeness of the above results, we need to calculate $p_{succ}$, which until now is a fixed and unknown parameter. Hence, in the same manner as the previous subsection, we define the probability that a specific user exceeds the threshold and its queue is not empty, to be
    \begin{equation}\label{equ-probability for attempt transmission}
    \begin{aligned}
      P_t &\triangleq Pr(\text{transmission attempt})\\
      &=Pr(\text{exceedance occur \& queue is not empty})\\
      &=(G_g(1)-\pi^g_0)(1-e^{-\mu_g})+(G_b(1)-\pi^b_0)(1-e^{-\mu_b}),
    \end{aligned}
    \end{equation}
    where the first term consists of the probability to be in a good state with a package to transmit, times the probability which an exceedance occurs while existing in the good state; The second term is similar, referring only to the bad state.
    The probability for success, given one user is about to transmit, can be obtained by a similar calculation as in Lemma \ref{lem-Transmission success probability satisfies the equation - Memoryless arrival}, and we have
    \begin{equation}\label{equ-probability for sucss-third model}
      p_{succ}=(1-P_t)^{K-1}.
    \end{equation}

     Now, in order to attain this probability, the above equation and the third degree polynomial \eqref{equ-third degree polynomial} must be solved simultaneously, since the root $z_0$ and $p_{succ}$ are coupled. As $K \rightarrow \infty$, the probability $p_{succ}$ remains constant and does not go to zero since the probability for a transmission attempt, $P_t$, scales like $O(1/K)$. This is true since in the first place, the threshold value was set such that on average, the probability for exceedance equals $1/K$. Thus, $(1-e^{-\mu_g})= O(c_g/K)$ and $(1-e^{-\mu_b})= O(c_b/K)$, where $c_g,c_b$ are some constants which satisfy $c_g>c_b$. Letting $K \rightarrow \infty$, we get a constant probability. In Figure \ref{fig-QueuingPerformance_time dependent_P_succ} one can see the probability for success $p_{succ}$, which coincides with the simulation results and therefore the good agreement of the system performance with the simulations, as presented in Figure \ref{fig-QueuingPerformance_time dependent}. The simulations were preformed while assuming equal transitions rates between the good and bad state, i.e. $\alpha=\beta=0.1$. Since we are considering a symmetric system, the arrival and service rate parameters, as mentioned in the figures, were considered as the total rates of the system and were divided equally among all the users.

    \begin{figure}[H]
        \centering
        \begin{subfigure}[b]{0.48\textwidth}
                \includegraphics[width=\textwidth]{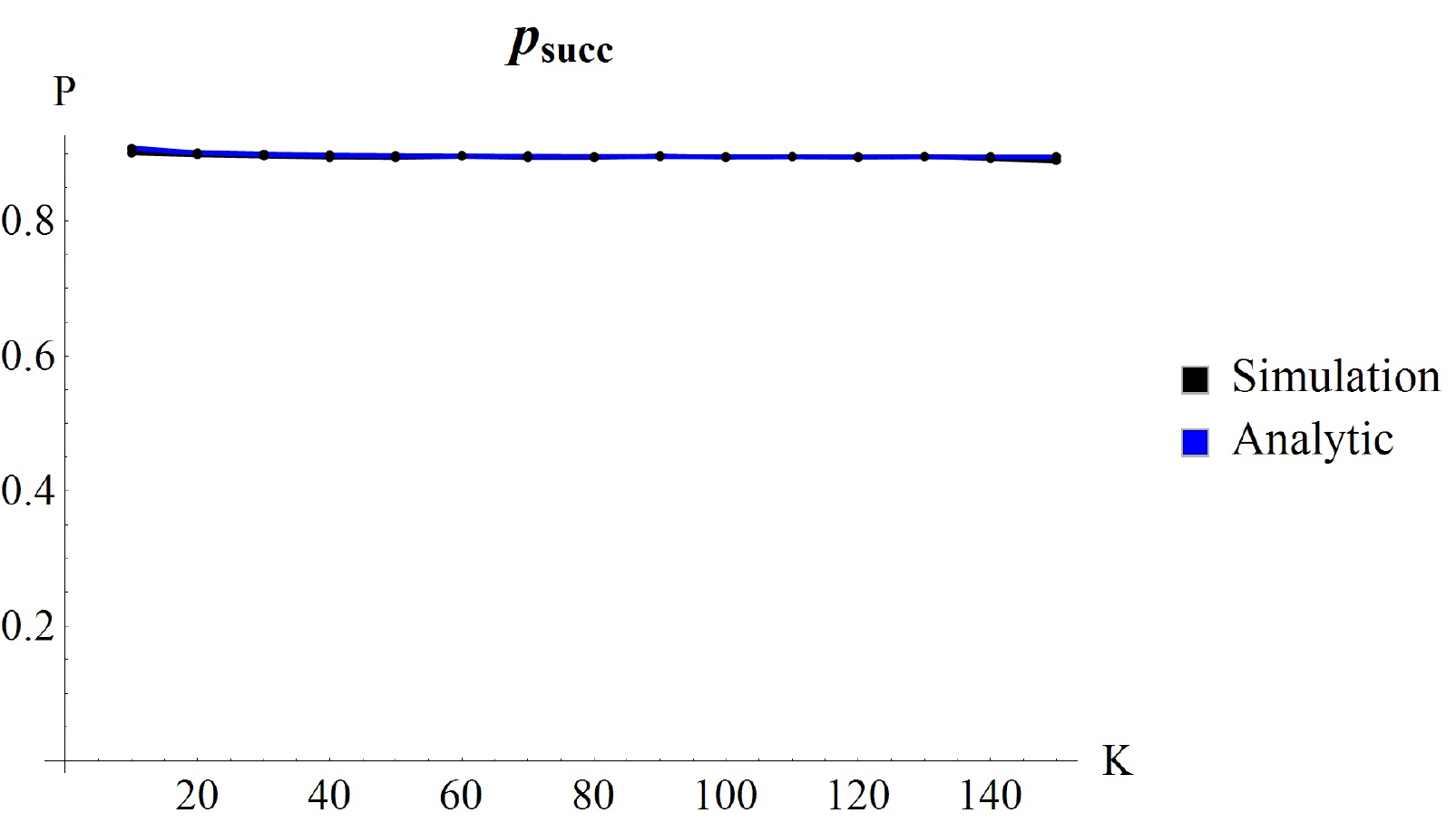}
                \caption{}
                \label{fig-P_succ_L=0.1}
        \end{subfigure}%
        ~
        \begin{subfigure}[b]{0.48\textwidth}
                \includegraphics[width=\textwidth]{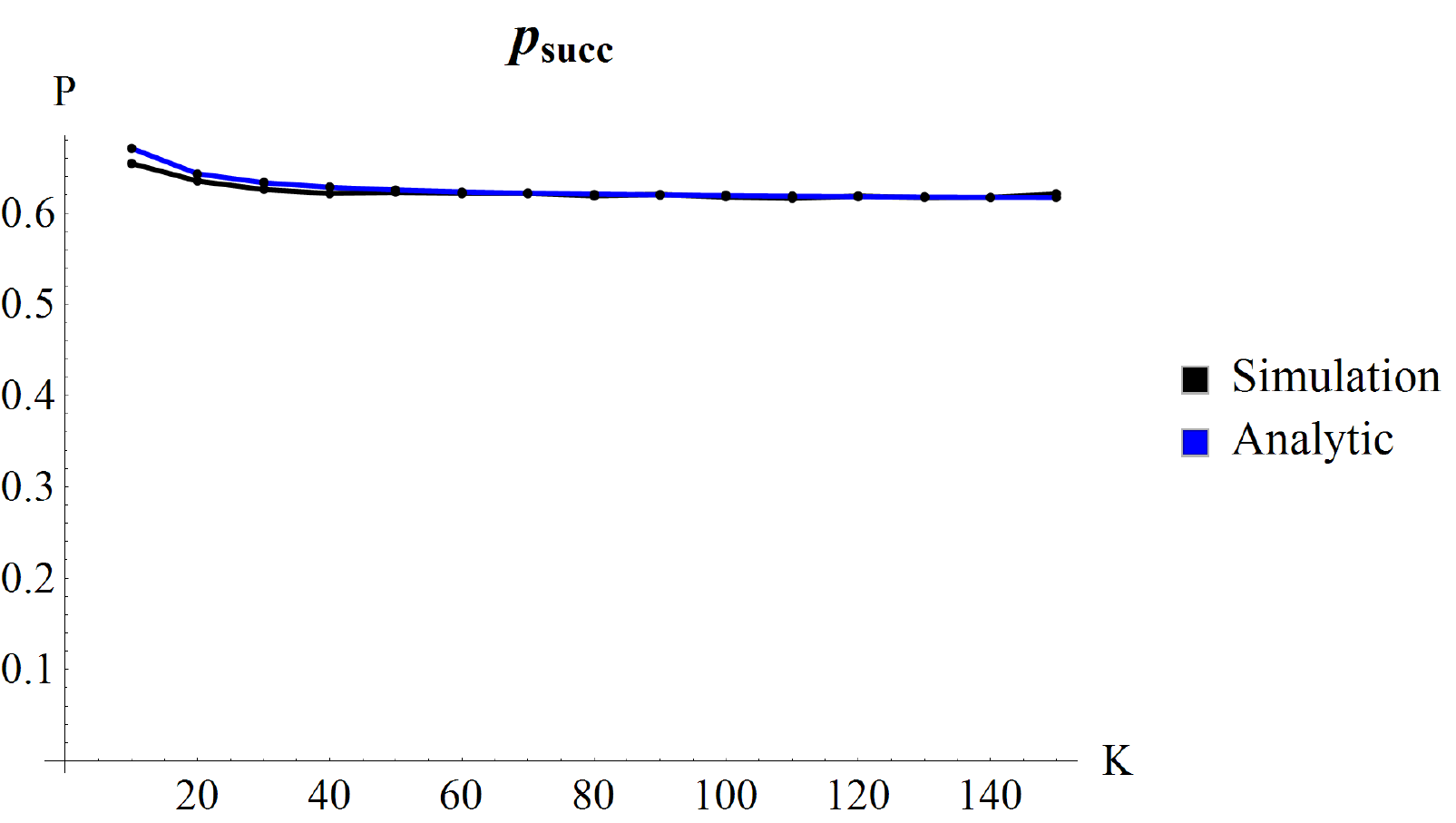}
                \caption{}
                \label{fig-P_succ_L=0.3}
        \end{subfigure}
        \caption[Success probability time dependent model]{Queuing system success probability results as a function of the number of users in a time dependent queue model. $\lambda_T=\{0.1,0.3\},\mu_g=0.7,\mu_b=0.5$}
        \label{fig-QueuingPerformance_time dependent_P_succ}
    \end{figure}

    \begin{figure}[H]
        \centering
        \begin{subfigure}[b]{0.48\textwidth}
                \centering
                \includegraphics[width=\textwidth]{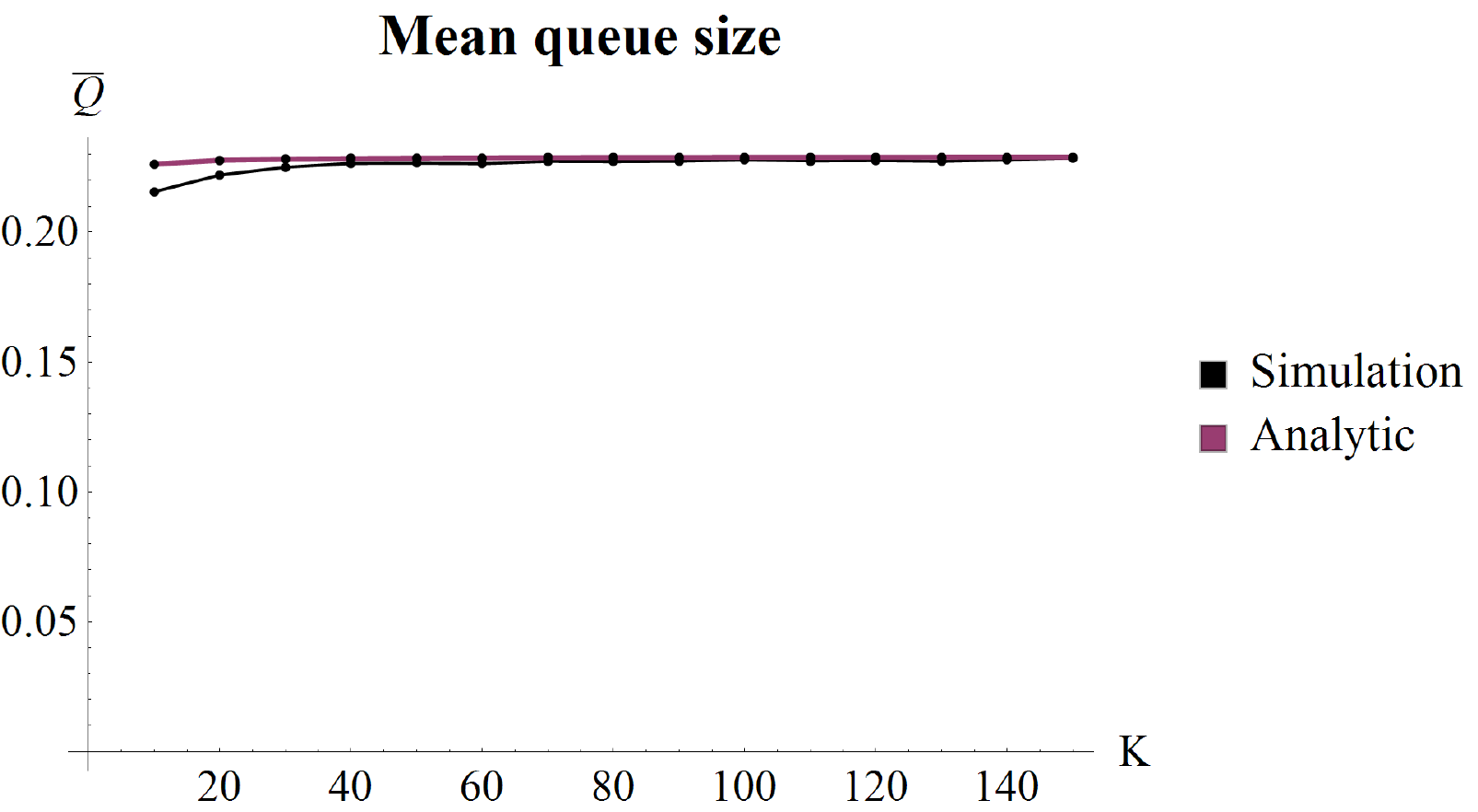}
                \caption{}
                \label{fig-MeanQueueSize_L=0.1}
        \end{subfigure}%
        ~
        \begin{subfigure}[b]{0.48\textwidth}
                \centering
                \includegraphics[width=\textwidth]{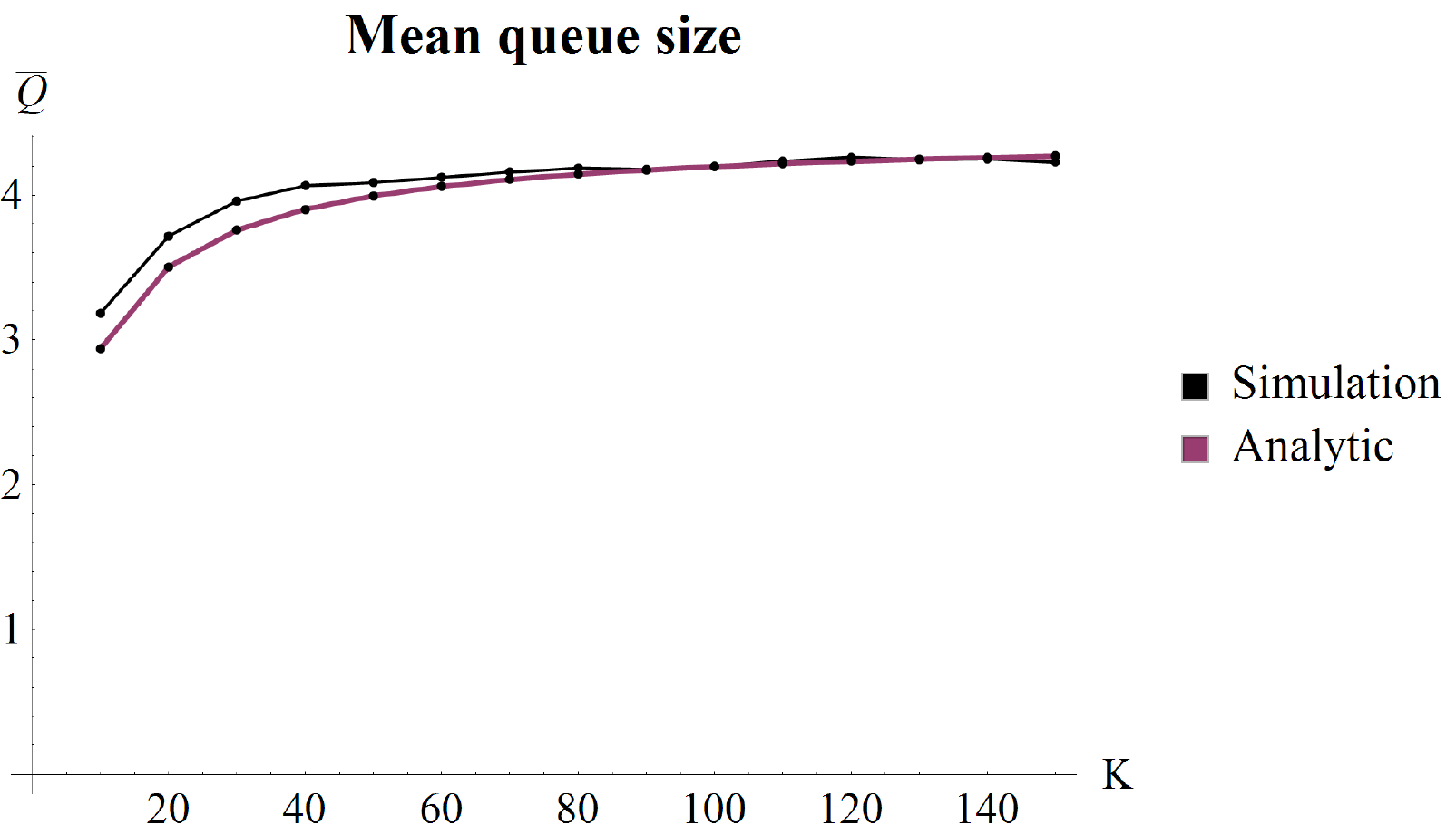}
                \caption{}
                \label{fig-MeanQueueSize_L=0.3}
        \end{subfigure}
        \hfil
        \begin{subfigure}[b]{0.48\textwidth}
                \includegraphics[width=\textwidth]{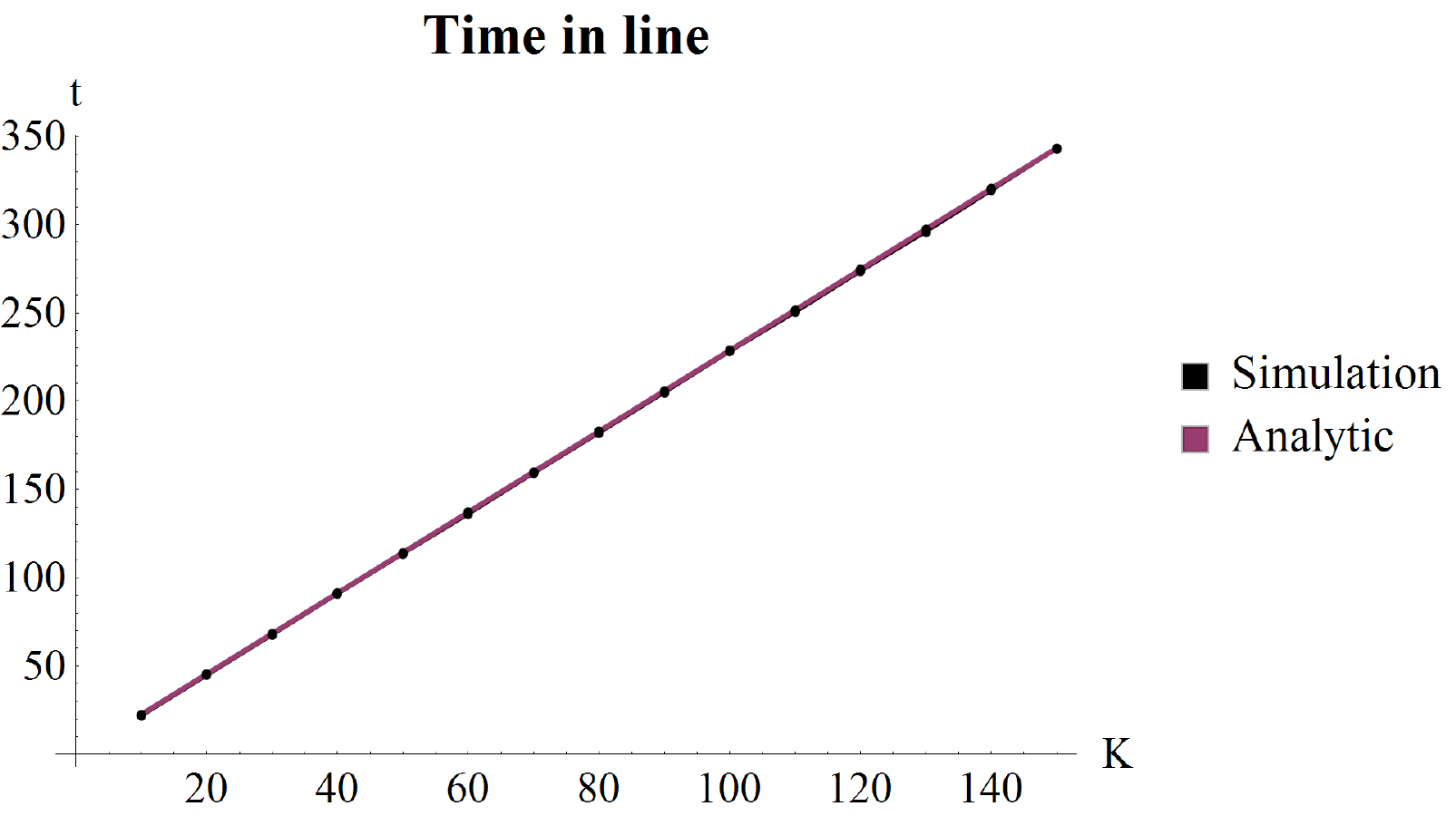}
                \caption{}
                \label{fig-TimeInLine_L=0.1}
        \end{subfigure}
        ~
        \begin{subfigure}[b]{0.48\textwidth}
                \includegraphics[width=\textwidth]{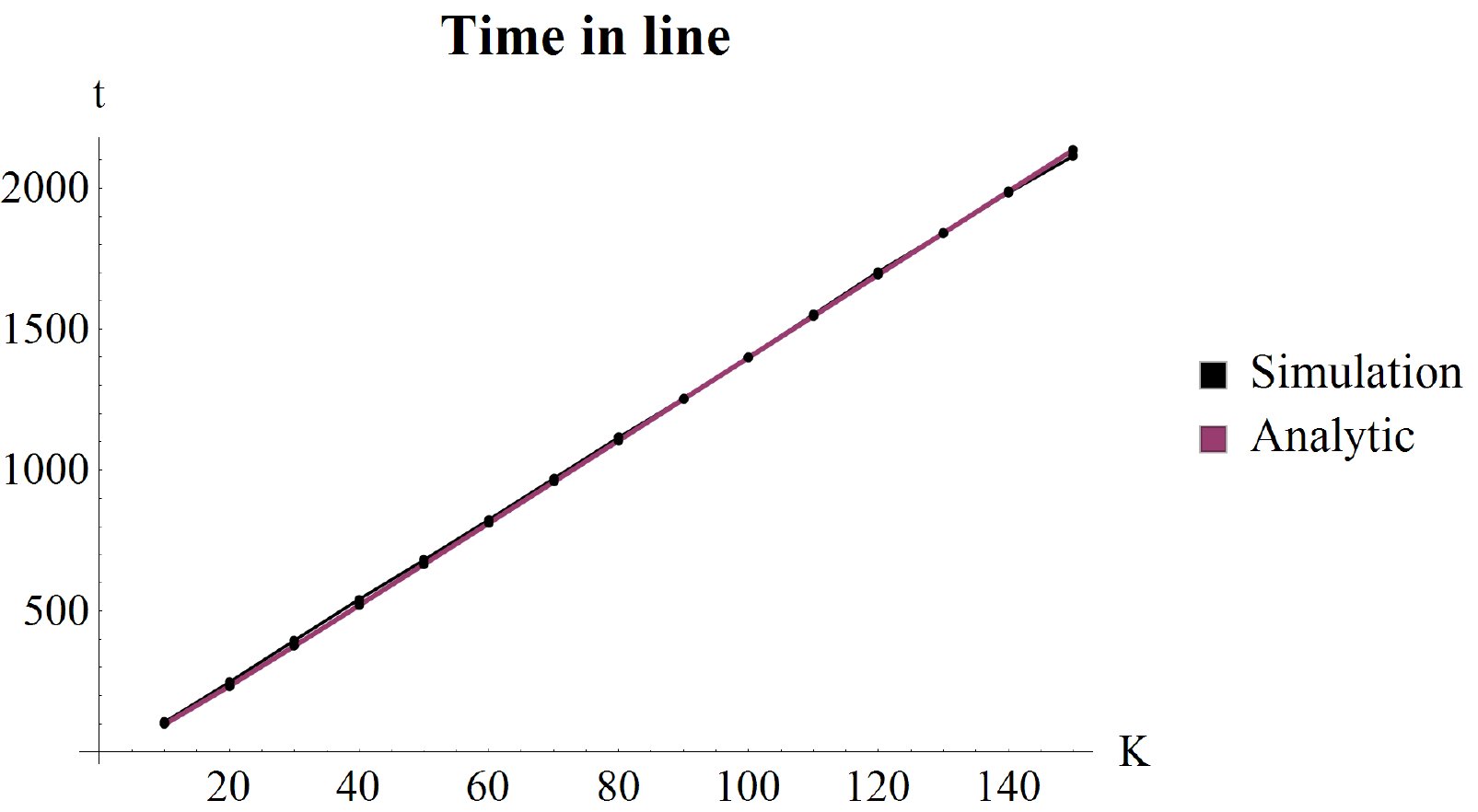}
                \caption{}
                \label{fig-TimeInLine_L=0.3}
        \end{subfigure}
        \caption[System performance time dependent model]{Queuing system performance results as a function of the number of users in a time dependent queue model. $\lambda_T=\{0.1,0.3\},\mu_g=0.7,\mu_b=0.5$}
        \label{fig-QueuingPerformance_time dependent}
    \end{figure}

\chapter{Conclusions}

    This work investigated the channel capacity and the performance of a multi-user MAC system in a time dependent environment under both centralized and distributed scheduling. Specifically, the expected channel capacity gain was derived in the case which the dependent capacity sequence was modeled as a stationary process characterized by the good bad channel Markov process. If the long range dependence condition $D(u_n)$ and the local dependence condition $D'(u_n)$ hold, the extreme laws are the same as if the sequence was \textit{i.i.d.} but with different normalizing constants. A distributed scheduling algorithm for this system was suggested as well, and we showed that both approaches have the same scaling laws, the distributed approach being smaller only by a factor of $e^{-1}$, so there was no loss of optimality due to the distributed algorithm.
    
    The performance of the system was derived while considering queueing theory aspects. In fact this precise characterization is a very difficult mission, which up until today was not solved. Therefore we presented approximation models to describe its behaviour. First we address to the \textit{i.i.d.} case where the users do not experience a time varying channel. For that case we elaborated an existing model and showed results for our paradigm. In addition we gave another approach, which relates the queues as independent, concerning the probability of collision in the random access mechanism, and enables us to consider them each as a much more simple queue. Lastly we tied the capacity part of this work with the queuing part when we suggested a queue model, which is time dependent and modeled by our good-bad channel model. We showed good agreement between the analytic model and the simulation results.
    
\appendix
    \chapter{Convergence to Extreme Value Distribution}\label{Appendix A}
        \section{Convergence proof}
        \begin{proof}
           In the following proof we give compliance for the convergence conditions and the derivation of $a_K$ and $b_K$.
           The stationary distribution  $F(t)=pF_g(t)+qF_b(t)$ as shown earlier has a negative derivative $f'$ from $x_0=\max\{\mu_g,\mu_b\}$ till $\infty$. So we only need to show that \eqref{equ-Sufficient type 1 condition} holds,
           \begin{equation*}
                \begin{aligned}
                    &\lim_{t \rightarrow \infty} \frac{f'(t)\left(1-F(t)\right)}{f^2(t)} =
                    \lim_{t \rightarrow \infty} \frac{\left(pf_g'(t)+qf_b'(t)\right)\left(1-\left(pF_g(t)+qF_b(t)\right)\right)}{\left(pf_g(t)+qf_b(t)\right)^2}\\
                    &=\lim_{t \rightarrow \infty} \frac{\frac{1}{2}\left(pf_g'(t)+qf_b'(t)\right)\left(pErfc\left(\frac{t-\mu_g}{\sqrt{2}\sigma_g}\right) +qErfc\left(\frac{t-\mu_b}{\sqrt{2}\sigma_b}\right)\right)}{\left(pf_g(t)+qf_b(t)\right)^2}\\
                \end{aligned}
               \end{equation*}
               In \cite{abramowitz2012handbook}, 7.1.13, we can find upper and lower bounds for the complementary error function,
               \begin{equation*}
                \frac{2}{\sqrt{\pi}}\frac{e^{-t^2}}{t+\sqrt{t^2+2}}< Erfc(t)\leq  \frac{2}{\sqrt{\pi}}\frac{e^{-t^2}}{t+\sqrt{t^2+\frac{4}{\pi}}}.
           \end{equation*}
           where these inequalities are true for $t>0$ which fits our case for $t \rightarrow \infty$. Using these bounds we will show with the sandwich rule that the limit above converge to $-1$. Let us consider first the lower bound of the complementary error function,
           \begin{equation*}
                 \begin{aligned}
                     &\lim_{t \rightarrow \infty} \frac{\frac{1}{2}\left(pf_g'(t)+qf_b'(t)\right)
                     \left( p\frac{2}{\sqrt{\pi}}\frac{e^{-\frac{(t-\mu_g)^2}{2\sigma_g^2}}}{\frac{t-\mu_g}{\sqrt{2}\sigma_g}+\sqrt{\frac{(t-\mu_g)^2}{2\sigma_g^2}+2}} + q\frac{2}{\sqrt{\pi}}\frac{e^{-\frac{(t-\mu_b)^2}{2\sigma_b^2}}}{\frac{t-\mu_b}{\sqrt{2}\sigma_b}+\sqrt{\frac{(t-\mu_b)^2}{2\sigma_b^2}+2}}\right)}
                     {\left(pf_g(t)+qf_b(t)\right)^2}\\
                     &=  \lim_{t \rightarrow \infty}\frac{1}{\sqrt{\pi}} \frac{\left(pf_g'(t)+qf_b'(t)\right)
                     \left( p\frac{\sqrt{2\pi}\sigma_g f_g(t)}{\frac{t-\mu_g}{\sqrt{2}\sigma_g}+\sqrt{\frac{(t-\mu_g)^2}{2\sigma_g^2}+2}} +
                            q\frac{\sqrt{2\pi}\sigma_b f_b(t)}{\frac{t-\mu_b}{\sqrt{2}\sigma_b}+\sqrt{\frac{(t-\mu_b)^2}{2\sigma_b^2}+2}}   \right)}
                     {\left(pf_g(t)+qf_b(t)\right)^2}\\
                     &=  \lim_{t \rightarrow \infty}-\sqrt{2} \frac{\left(pf_g(t)\frac{t-\mu_g}{\sigma_g^2}+qf_b(t)\frac{t-\mu_b}{\sigma_b^2}\right)
                     \left(p \frac{\sigma_g f_g(t)}{\frac{t-\mu_g}{\sqrt{2}\sigma_g}+\sqrt{\frac{(t-\mu_g)^2}{2\sigma_g^2}+2}} +
                           q \frac{\sigma_b f_b(t)}{\frac{t-\mu_b}{\sqrt{2}\sigma_b}+\sqrt{\frac{(t-\mu_b)^2}{2\sigma_b^2}+2}}   \right)}
                     {\left(pf_g(t)+qf_b(t)\right)^2}\\
                 \end{aligned}
           \end{equation*}
           The Limit above can be break to four different limits,
           \begin{equation*}
                 \begin{aligned}
                     &\lim_{t \rightarrow \infty}-\sqrt{2}p^2 \frac{f_g(t)\frac{t-\mu_g}{\sigma_g^2}
                      \frac{\sigma_g f_g(t)}{\frac{t-\mu_g}{\sqrt{2}\sigma_g}+\sqrt{\frac{(t-\mu_g)^2}{2\sigma_g^2}+2}}}
                     {\left(pf_g(t)+qf_b(t)\right)^2}+
                     \lim_{t \rightarrow \infty}-\sqrt{2}q^2 \frac{f_b(t)\frac{t-\mu_b}{\sigma_b^2}
                      \frac{\sigma_b f_b(t)}{\frac{t-\mu_b}{\sqrt{2}\sigma_b}+\sqrt{\frac{(t-\mu_b)^2}{2\sigma_b^2}+2}}}
                     {\left(pf_g(t)+qf_b(t)\right)^2}+\\
                     &\lim_{t \rightarrow \infty}-\sqrt{2}pq \frac{f_g(t)\frac{t-\mu_g}{\sigma_g^2}
                      \frac{\sigma_b f_b(t)}{\frac{t-\mu_b}{\sqrt{2}\sigma_b}+\sqrt{\frac{(t-\mu_b)^2}{2\sigma_b^2}+2}}}
                     {\left(pf_g(t)+qf_b(t)\right)^2}+
                     \lim_{t \rightarrow \infty}-\sqrt{2}pq \frac{f_b(t)\frac{t-\mu_b}{\sigma_b^2}
                      \frac{\sigma_g f_g(t)}{\frac{t-\mu_g}{\sqrt{2}\sigma_g}+\sqrt{\frac{(t-\mu_g)^2}{2\sigma_g^2}+2}}}
                     {\left(pf_g(t)+qf_b(t)\right)^2}\\
                 \end{aligned}
           \end{equation*}
           please notice that the first and the second limits are similar with the exception of their indexes, and so does the third and the fourth limit. We start with the first limit calculation,
           \begin{equation*}
                 \begin{aligned}
                     &\lim_{t \rightarrow \infty}-\sqrt{2}p^2 \ \frac{f_g(t) \ \frac{t-\mu_g}{\sigma_g^2} \
                      \frac{\sigma_g f_g(t)}{\frac{t-\mu_g}{\sqrt{2}\sigma_g}+\sqrt{\frac{(t-\mu_g)^2}{2\sigma_g^2}+2}}}
                     {\left(pf_g(t)+qf_b(t)\right)^2}
                     =\lim_{t \rightarrow \infty}-\sqrt{2} \ \frac{p^2f_g^2(t) \ (t-\mu_g)\sigma_g}
                     {\left(pf_g(t)+qf_b(t)\right)^2 \ \sigma_g^2 \ \frac{t-\mu_g}{\sqrt{2}\sigma_g}\left(1+\sqrt{1+\frac{4\sigma_g^2}{(t-\mu_g)^2}} \right) }\\
                     &=\lim_{t \rightarrow \infty} -2 \ \frac{p^2f_g^2(t)}
                     {\left(pf_g(t)+qf_b(t)\right)^2 \ \left(1+\sqrt{1+\frac{4\sigma_g^2}{(t-\mu_g)^2}} \right)}
                     =\lim_{t \rightarrow \infty} -2 \ \frac{p^2f_g^2(t)}
                     {\left(pf_g(t)+qf_b(t)\right)^2} \cdot
                     \lim_{t \rightarrow \infty} \frac{1}
                     {\left(1+\sqrt{1+\frac{4\sigma_g^2}{(t-\mu_g)^2}} \right)}\\
                     &=\lim_{t \rightarrow \infty} -2 \ \frac{p^2f_g^2(t)}
                     {\left(pf_g(t)+qf_b(t)\right)^2 } \cdot \frac{1}{2}
                     =\lim_{t \rightarrow \infty} - \ \frac{p^2f_g^2(t)}
                     {\left(pf_g(t)+qf_b(t)\right)^2}
                     =-\left(\lim_{t \rightarrow \infty} \frac{pf_g(t)}
                     {\left(pf_g(t)+qf_b(t)\right)}\right)^2\\
                     &\overset{(a)}{=}-\left(\lim_{t \rightarrow \infty} \frac{1}
                     {1+\frac{qf_b(t)}{pf_g(t)}}\right)^2
                     =-\left( \frac{1}
                     {1+\lim_{t \rightarrow \infty}\frac{qf_b(t)}{pf_g(t)}}\right)^2 \ =
                \end{aligned}
           \end{equation*}
           \begin{equation*}
                \begin{aligned}
                     &\Rightarrow \lim_{t \rightarrow \infty}\frac{qf_b(t)}{pf_g(t)}
                     =\frac{\sigma_g}{\sigma_b} \frac{q}{p} \lim_{t \rightarrow \infty} e^{-\frac{(t-\mu_b)^2}{2\sigma_b^2}+\frac{(t-\mu_g)^2}{2\sigma_g^2}}
                     = \frac{\sigma_g}{\sigma_b}\frac{q}{p} e^{\lim_{t \rightarrow \infty} -\frac{(t-\mu_b)^2}{2\sigma_b^2}+\frac{(t-\mu_g)^2}{2\sigma_g^2}}\\
                     &\Rightarrow \lim_{t \rightarrow \infty}\frac{t^2(\sigma_b^2-\sigma_g^2)+t(2\mu_b\sigma_g^2-2\mu_g\sigma_b^2)+C}{2\sigma_b^2\sigma_g^2}
                     = \left\{
                                \begin{array}{l l}
                                   \infty  & \quad \sigma_g^2 < \sigma_b^2\\
                                   -\infty  & \quad \sigma_g^2 \geq \sigma_b^2 \ \text{assuming} \ \mu_g>\mu_b
                                \end{array} \right.
                \end{aligned}
           \end{equation*}
           So,
           \begin{equation*}
                 \lim_{t \rightarrow \infty}\frac{qf_b(t)}{pf_g(t)} =
                        \left\{
                            \begin{array}{l l}
                               \infty  & \quad \sigma_g^2 < \sigma_b^2\\
                               0  & \quad \sigma_g^2 \geq \sigma_b^2 \ \text{assuming} \ \mu_g>\mu_b
                        \end{array} \right.
           \end{equation*}
           Note that in (a) we assume that $p \neq 0$. This assumption implies that the situation which all the users are in the bad group is not taken in consideration. For that case all the users have the same channel and the analysis is known and not in our interest. In the same manner we assume also that $q \neq 0$ for the opposite situation.
           Hence, the first limit result is
           \begin{equation*}
                 \lim_{t \rightarrow \infty}-\sqrt{2}p^2 \ \frac{f_g(t) \ \frac{t-\mu_g}{\sigma_g^2} \
                  \frac{\sigma_g f_g(t)}{\frac{t-\mu_g}{\sqrt{2}\sigma_g}+\sqrt{\frac{(t-\mu_g)^2}{2\sigma_g^2}+2}}}
                 {\left(pf_g(t)+qf_b(t)\right)^2}=
                        \left\{
                            \begin{array}{l l}
                               0  & \quad \sigma_g^2 < \sigma_b^2\\
                               -1  & \quad \sigma_g^2 \geq \sigma_b^2 \ \ \text{assuming} \ \mu_g>\mu_b
                        \end{array} \right.
           \end{equation*}
           As mentioned earlier the first and the second limits different only in their indexes, therefore the result for the second limit is
           \begin{equation*}
                 \lim_{t \rightarrow \infty}-\sqrt{2}q^2 \frac{f_b(t)\frac{t-\mu_b}{\sigma_b^2}
                  \frac{\sigma_b f_b(t)}{\frac{t-\mu_b}{\sqrt{2}\sigma_b}+\sqrt{\frac{(t-\mu_b)^2}{2\sigma_b^2}+2}}}
                 {\left(pf_g(t)+qf_b(t)\right)^2}=
                        \left\{
                            \begin{array}{l l}
                               -1  & \quad \sigma_g^2 < \sigma_b^2\\
                               0  & \quad \sigma_g^2 \geq \sigma_b^2 \ \ \text{assuming} \ \mu_g>\mu_b
                        \end{array} \right.
           \end{equation*}
           We turn now for the third limit calculation
           \begin{equation*}
                \begin{aligned}
                     &\lim_{t \rightarrow \infty}-\sqrt{2}pq \frac{f_g(t)\frac{t-\mu_b}{\sigma_g^2}
                      \frac{\sigma_b f_b(t)}{\frac{t-\mu_b}{\sqrt{2}\sigma_b}+\sqrt{\frac{(t-\mu_b)^2}{2\sigma_b^2}+2}}}
                     {\left(pf_g(t)+qf_b(t)\right)^2}
                     =\lim_{t \rightarrow \infty}-\sqrt{2} \ \frac{pqf_g(t) \ (t-\mu_g)\sigma_b \ f_b(t) \sqrt{2} \sigma_b}
                     {\left(pf_g(t)+qf_b(t)\right)^2 \ \sigma_g^2 \ (t-\mu_b) \ \left(1+\sqrt{1+\frac{4\sigma_b^2}{(t-\mu_b)^2}} \right) }\\
                     &=-2\frac{\sigma_b^2}{\sigma_g^2}\lim_{t \rightarrow \infty} \ \frac{pf_g(t)}{\left(pf_g(t)+qf_b(t)\right)} \cdot
                     \lim_{t \rightarrow \infty}\ \frac{qf_b(t)}{\left(pf_g(t)+qf_b(t)\right)} \cdot
                     \lim_{t \rightarrow \infty}\ \frac{(t-\mu_g)}{(t-\mu_b)}\cdot
                     \lim_{t \rightarrow \infty}\ \frac{1}
                     {\left(1+\sqrt{1+\frac{4\sigma_g^2}{(t-\mu_g)^2}} \right) }\\
                     &=-\frac{\sigma_b^2}{\sigma_g^2}\lim_{t \rightarrow \infty}\ \frac{pf_g(t)}{\left(pf_g(t)+qf_b(t)\right)} \cdot
                     \lim_{t \rightarrow \infty}\ \frac{qf_b(t)}{\left(pf_g(t)+qf_b(t)\right)}\\
                     &=-\frac{\sigma_b^2}{\sigma_g^2}
                     \left\{
                        \begin{array}{l l}
                             0  & \quad \sigma_g^2 < \sigma_b^2\\
                             1  & \quad \sigma_g^2 \geq \sigma_b^2 \ \ \text{assuming} \ \mu_g>\mu_b
                        \end{array} \right.
                     \cdot\left\{
                        \begin{array}{l l}
                             1  & \quad \sigma_g^2 < \sigma_b^2\\
                             0  & \quad \sigma_g^2 \geq \sigma_b^2 \ \ \text{assuming} \ \mu_g>\mu_b\
                        \end{array} \right.
                     =0
                 \end{aligned}
           \end{equation*}
           The fourth limit also share the same result, and if we add all the parts we can see that the limit converge to $-1$. If we consider the upper bound of the complementary error function we will find also that the limit convergence to $-1$ since the analytical development is the same and the limit
           \begin{equation*}
                 \lim_{t \rightarrow \infty} \frac{1}
                 {\left(1+\sqrt{1+\frac{4\sigma_g^2}{(t-\mu_g)^2}} \right)}
                 =\lim_{t \rightarrow \infty} \frac{1}
                 {\left(1+\sqrt{1+\frac{8\sigma_g^2}{\pi(t-\mu_g)^2}} \right)}
                 =\frac{1}{2}
           \end{equation*}
           Therefore we can conclude that  condition \eqref{equ-Sufficient type 1 condition} holds for our stationary distribution.
       \end{proof}

       We now show that the second condition \eqref{equ-Necessary and sufficient type 1 condition} also holds for the stationary distribution. Let us examine the expression:
       \begin{equation*}
            1-F(t)=p(1-F_g(t))+q(1-F_b(t))
       \end{equation*}
       Since $F_i(t)$ is a Gaussian distribution we shall use the asymptotic relation:
       \begin{equation}\label{equ-Asymptotic relation of Gaussian distribution}
            1-\Phi(t) \sim \frac{\phi(t)}{t} \quad \text{as } t \rightarrow \infty
       \end{equation}
       \begin{equation*}
            \begin{aligned}
                1-F(t)&=p\left(\frac{\sigma_g}{t-\mu_g}\phi(\frac{t-\mu_g}{\sigma_g}) \right)+q\left(\frac{\sigma_b}{t-\mu_b}\phi(\frac{t-\mu_b}{\sigma_b}) \right)      \\
                      &=\frac{1}{\sqrt{2\pi}}\left(\frac{p}{t-\mu_g}e^{-\frac{(t-\mu_g)^2}{2\sigma_g^2}} +\frac{q}{t-\mu_b}e^{-\frac{(t-\mu_b)^2}{2\sigma_b^2}} \right)  \\
                      &\overset{(a)}{=}\frac{1}{\sqrt{2\pi}}\frac{p}{t-\mu_g}e^{-\frac{(t-\mu_g)^2}{2\sigma_g^2}}(1+o(1)) \quad \text{as } t \rightarrow \infty
            \end{aligned}
       \end{equation*}
       where (a) is true since
       \begin{equation*}
       \lim_{t \rightarrow \infty} \frac{\frac{\frac{q}{t-\mu_b}e^{-\frac{(t-\mu_b)^2}{2\sigma_b^2}}}{\frac{p}{t-\mu_g}e^{-\frac{(t-\mu_g)^2}{2\sigma_g^2}}}}{1}=
       \lim_{t \rightarrow \infty} \frac{e^{-\frac{(t-\mu_b)^2}{2\sigma_b^2}}}{e^{-\frac{(t-\mu_g)^2}{2\sigma_g^2}}} = 0
       \end{equation*}
       assuming $\sigma_g>\sigma_b$.
       So taking in consideration condition \eqref{equ-Necessary and sufficient type 1 condition}:
       \begin{equation*}
            \begin{aligned}
                &\frac{1-F(t+xg(t))}{1-F(t)}=\frac{\frac{1}{\sqrt{2\pi}}\frac{p}{t+xg(t)-\mu_g}e^{-\frac{(t+xg(t)-\mu_g)^2}{2\sigma_g^2}}(1+o(1))}
                                             {\frac{1}{\sqrt{2\pi}}\frac{p}{t-\mu_g}e^{-\frac{(t-\mu_g)^2}{2\sigma_g^2}}(1+o(1))}                     \\
                &=\frac{t-\mu_g}{t+xg(t)-\mu_g} e^{\frac{-(t+xg(t)-\mu_g)^2+(t-\mu_g)^2}{2\sigma_g^2}} (1+o(1))            \\
                &=\frac{1}{1+\frac{xg(t)}{t-\mu_g}}e^{-\frac{g(t)x(t-\mu_g)}{\sigma_g^2}}e^{-\frac{g^2(t)x^2}{2\sigma_g^2}}(1+o(1))=\\
            \end{aligned}
       \end{equation*}
       By choosing $g(t)=\frac{\sigma_g^2}{t-\mu_g}$ as the strictly positive function for $t \rightarrow \infty$ we get
       \begin{equation*}
                =\frac{1}{1+\frac{x\sigma_g^2}{(t-\mu_g)^2}}e^{-x}e^{-\frac{\sigma_g^2 \ x^2}{2(t-\mu_g)^2}}(1+o(1))=e^{-x} \quad \quad \text{as } t \rightarrow \infty
       \end{equation*}
       That conclude that the distribution function $F(x)$ belongs to the domain of attraction of Type \Rmnum{1}. Similar analysis can be found in \cite{mladenovic1999extreme}, where some examples for convergence of sequences of independent random variables with the same mixed distribution is investigated.\\
       \section{Normalizing constants}
       Derivation of $a_K$ and $b_K$:\\
       From Theorem \ref{thm-EVT convergence different formulation } we know that $u_K$ is a sequence of real numbers such that $K(1-F(u_K))\rightarrow \uptau$ as $K\rightarrow \infty$, therefore in our case:
       \begin{equation*}
         1-pF_g(u_K)-qF_b(u_K)\rightarrow \frac{1}{K}e^{-x}, \quad K\rightarrow \infty
       \end{equation*}
       where $\uptau=e^{-x}$. The same way as the previous proof using \eqref{equ-Asymptotic relation of Gaussian distribution} we obtain that
       \begin{equation*}
         \left(\frac{p\sigma_g}{u_K-\mu_g}\phi(\frac{u_K-\mu_g}{\sigma_g}) \right)+\left(\frac{q\sigma_b}{u_K-\mu_b}\phi(\frac{u_K-\mu_b}{\sigma_b}) \right)\rightarrow \frac{1}{K}e^{-x}
       \end{equation*}
       \begin{equation*}
         \frac{1}{\sqrt{2\pi}}\frac{p}{u_K-\mu_g}e^{-\frac{(u_K-\mu_g)^2}{2\sigma_g^2}}(1+o(1)) \rightarrow \frac{1}{K}e^{-x}
       \end{equation*}
       the last step is true since $u_K \rightarrow \infty$ as $K \rightarrow \infty$, similar to the pervious proof.
       \begin{equation}\label{equ-Proof for a_n b_n (1)}
                 -\frac{1}{2}\log2\pi+\log p-\log{(u_K-\mu_g)}-\frac{(u_K-\mu_g)^2}{2\sigma_g^2}+\log K+x+o(1) \rightarrow 0
       \end{equation}
       It follows at once that $(t-\mu_g)^2 / 2\log K \rightarrow 1$, and hence
       \begin{equation*}
         \log{(u_K-\mu_g)}=\frac{1}{2}(\log 2 +\log{\log{K}})+o(1)
       \end{equation*}
       Putting this in \eqref{equ-Proof for a_n b_n (1)} ,we obtain
       \begin{equation*}
                 \frac{(u_K-\mu_g)^2}{2\sigma_g^2}=-\frac{1}{2}\log2\pi+\log p-\frac{1}{2}(\log 2 +\log{\log{K}})+\log K +x+o(1)
       \end{equation*}
       or
       \begin{equation*}
                 \frac{(u_K-\mu_g)^2}{\sigma_g^2}=
                  2\log K\left(1+\frac{x-\frac{1}{2}\log{\frac{4\pi}{p^2}}-\frac{1}{2}\log{\log{K}}}{\log K}+o\left(\frac{1}{\log K}\right)\right)
       \end{equation*}
       and hence
       \begin{equation*}
                 \frac{(u_K-\mu_g)}{\sigma_g}=
                  \sqrt{2\log K}\left(1+\frac{x-\frac{1}{2}\log{\frac{4\pi}{p^2}}-\frac{1}{2}\log{\log{K}}}{2\log K}+o\left(\frac{1}{\log K}\right)\right)
       \end{equation*}
       so by using expansion we have,
       \begin{equation*}
                 u_K=\sigma_g\sqrt{2\log K}
                  \left(1+\frac{x-\frac{1}{2}\log{\frac{4\pi}{p^2}}-\frac{1}{2}\log{\log{K}}}{2\log K}+o\left(\frac{1}{\log K}\right)\right)+\mu_g
       \end{equation*}
       since we know that $u_K=x/a_K+b_K$ we conclude that
       \begin{equation*}
           \begin{aligned}
                 &a_K=\frac{\sqrt{2\log{K}}}{\sigma_g}\\
                 &b_K=\sigma_g\left((2\log K)^{1/2}-\frac{\log{\log K}+\log{\frac{4\pi}{p^2}}}{2(2\log K)^{1/2}}\right)+\mu_g
           \end{aligned}
       \end{equation*}

\chapter{Capacity - Direct expression analysis}\label{Appendix B}

    \section{Capacity - analyzing change in group sizes}

        The system can be modeled in a different way, still taking in consideration our channel model, we can analyse the channel capacity by addressing separately to the different groups which are differ by their channel state and evaluate the maximum for each group separately. Therefore the system is modeled as a first order time-homogeneous Markov chain with $K+1$ states, each state represent The number of users which are in G state or in B state. In other words, The state $S_i(n)$ represent the system state in discrete time $n$, where $i$ users are in Bad channel state and $K-i$ users are in Good channel state.\\

        The Markov chain is irreducible and aperiodic chain. We can notice that it is possible to go with positive probability from any state of the Markov chain to any other state in a finite number of steps since the chain is a complete graph. And it is easy to see that since each state has positive probability to remain at the same state all the states are aperiodic.\\
        we will give now some properties for the system behavior which will be used later.

        \subsection{Transition probability matrix}

        The transition probability matrix $P$ is given:
        \begin{equation}\label{equ-probabilityMatrix}
        P_{ij}=
                \begin{cases}
                    \sum\limits_{n=0}^{{\min\{i,K-i,j,K - j\}}} {K-i \choose j-i+n}{i \choose n} \alpha^{j-i+n}(1-\alpha)^{K-j-n}\beta^{n}(1-\beta)^{i-n}    & \text{if } i \leq j \\
                    \sum\limits_{n=0}^{\min\{i,K-i,j,K - j\}} {K-i \choose n}{i \choose i-j+n} \alpha^{n}(1-\alpha)^{K-i-n}\beta^{i-j+n}(1-\beta)^{j-n}    & \text{if } i > j
                \end{cases}
                \text{for } 0 \leq i,j \leq K
        \end{equation}
        where $\alpha$ and $\beta$ represent the probability that user changes his channel state from G to B or the opposite as shown in Figure \ref{fig-GoodBadchannel}.\\

        The entries in the transition matrix are binomials probabilities, the probability to get from state $i$ to state $j$ depends on the number of users which in a given time $n$ changed their channel state. The probability $P_{ij}$ sums up all the probabilities of the cases that eventually lead to having $j$ users in bad state and $K-j$ users in good state. Each case describes the number of user transition from good to bad state with success probability $\alpha$ and the number of user transition from bad to good state with success probability $\beta$ in a given time $n$.\\
        \begin{proposition}
            The number of transition cases possible, which begins in state $i$ and ends in state $j$ in a given time $n$, are ${\min\{i,K-i,j,K - j\}}+1$.
        \end{proposition}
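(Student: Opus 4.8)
The plan is to put the ``transition cases'' into bijection with the integer points of an interval and then count. A transition case from state $i$ to state $j$ is specified by a pair $(x,y)$, where $x$ is the number of users whose channel flips from Bad to Good during the slot and $y$ the number that flip from Good to Bad. First I would write down the two constraints every feasible pair must obey. Since state $i$ has $i$ users in Bad and $K-i$ in Good, and state $j$ has $j$ in Bad, conservation of the Bad population gives $i-x+y=j$, i.e.\ $y-x=j-i$; and one cannot flip more users than are currently in a state, so $0\le x\le i$ and $0\le y\le K-i$. Thus a case is determined by a single free integer, and counting cases amounts to counting integers in an interval.

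Next I would split into the two regimes appearing in \eqref{equ-probabilityMatrix}. If $i\le j$, take $n:=x$ as the free parameter; then $y=n+(j-i)$, the inequality $y\ge 0$ is automatic, and the remaining conditions $0\le n\le i$ and $n+(j-i)\le K-i$ reduce to $0\le n\le\min\{i,K-j\}$, so there are $\min\{i,K-j\}+1$ cases. If $i>j$, take $n:=y$; then $x=n+(i-j)$, $x\ge 0$ is automatic, and $0\le n\le K-i$ together with $x\le i$ give $0\le n\le\min\{j,K-i\}$, so there are $\min\{j,K-i\}+1$ cases. These ranges coincide exactly with the summation indices of the two branches of \eqref{equ-probabilityMatrix}, which doubles as a sanity check that the enumeration is complete and non-redundant.

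Finally I would match the two answers to the claimed formula using an elementary identity for minima: when $i\le j$ we have $\min\{i,j\}=i$ and $\min\{K-i,K-j\}=K-j$, hence $\min\{i,K-i,j,K-j\}=\min\{i,K-j\}$; symmetrically, when $i>j$ we get $\min\{i,K-i,j,K-j\}=\min\{j,K-i\}$. In either regime the number of cases is therefore $\min\{i,K-i,j,K-j\}+1$, as claimed.

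The argument is pure bookkeeping, so I do not expect a genuine obstacle; the one point that needs care is verifying that the interval endpoints are tight --- that every $n$ in the stated range yields a feasible and distinct case, and that no feasible case is missed --- which comes down to checking that in each regime the four bounds $0\le x$, $x\le i$, $0\le y$, $y\le K-i$ are each either active or automatically implied, exactly as used above.
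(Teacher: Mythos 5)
Your proof is correct, and it is noticeably more rigorous than the one in the paper. The paper's own argument proceeds by two worked examples ($K=6$, $i=2$, $j=3$, and the boundary cases $i$ or $j\in\{0,K\}$) followed by a verbal claim that $\min\{i,K-i\}$ bounds the number of cases ``regardless of the target state'' and $\min\{j,K-j\}$ bounds it ``in order to obtain the new sizes,'' after which the four-way minimum is taken; this establishes at best an upper bound on the count and never verifies that every value in the range is actually realized. Your route --- parametrizing a case by the flip pair $(x,y)$, extracting the conservation constraint $y-x=j-i$ together with the box constraints $0\le x\le i$, $0\le y\le K-i$, and then counting lattice points on the resulting segment --- proves exactness in both directions and, as a bonus, recovers the precise summation limits of the two branches of \eqref{equ-probabilityMatrix} ($\min\{i,K-j\}$ for $i\le j$ and $\min\{j,K-i\}$ for $i>j$) before reconciling them with the symmetric four-way minimum via the identity you state. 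The only thing the paper's version buys is brevity and some intuition from the examples; your version is the one I would keep, since it also certifies that the transition probabilities in \eqref{equ-probabilityMatrix} sum over exactly the feasible configurations, neither missing nor double-counting any.
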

        \begin{proof}
            The initial state $i$ represents that there are $i$ users in the bad group and $K-i$ users in the good group and the target state $j$ represent that there are $j$ users in the bad group and $K-j$ users in the good group. For example, the number of possible transitions, for $K=6, i=2, j=3$, is 3; the first one is the transition of one user from the good group to the bad group and none from the bad group to the good group, the second is the transition of two users from the good group to the bad group and one user from the bad group to the good group and the third one is the transition of three users from the good group to the bad group and two users from the bad group to the good group. Another example, if $i$ or $j$ are $0$ or $K$ we would have only one possible transition.\\
            The expression $\min\{i,K-i\}$ defines the maximum transitions possible from the group regardless the target state $j$, there can't be more transition cases than the size of the smallest group. The expression $\min\{j,K-j\}$ defines the maximum transitions possible in order to obtain the new sizes of the groups, since again we cannot count more transition cases than the size of the smallest group. Finally the minimal value between the two should be chosen, so it would hold for both conditions. Hence, we get : $\min\{\min\{i,K-i\},\min\{j,K-j\}\} = \min\{i,K-i,j,K - j\}$
        \end{proof}

        Let us consider the assumption that the transition probabilities in the good-bad channel are symmetric ($\alpha$ equals $\beta$).\\
        The transition matrix $P_{ij}$ under the condition $\alpha=\beta$, \eqref{equ-probabilityMatrix}, can be written as follow:
        \begin{equation*}
        P_{ij}=
                \begin{cases}
                    \sum\limits_{n=0}^{\min\{i,K-i,j,K - j\}} {K-i \choose j-i+n}{i \choose n} \alpha^{j-i+2n}(1-\alpha)^{K-j+i-2n}    & \text{if } i \leq j \\
                    \sum\limits_{n=0}^{\min\{i,K-i,j,K - j\}} {K-i \choose n}{i \choose i-j+n} \alpha^{i-j+2n}(1-\alpha)^{K-i+j-2n}    & \text{if } i > j
                \end{cases}
                \text{for } 0 \leq i,j \leq K
        \end{equation*}
        Looking on the structure of the equation above we can notice that $P_{ij}=P_{(K-i)(K-j)}$ assuming $i\leq j$ and without loss of generality. In both entries we get the probability:
        \begin{equation*}
          \sum\limits_{n=0}^{\min\{i,K-i,j,K - j\}} {K-i \choose j-i+n}{i \choose n} \alpha^{j-i+2n}(1-\alpha)^{K-j+i-2n}
        \end{equation*}
        This applies also to the case where $i>j$.\\
        So the matrix has a structure as shown below:
        \begin{equation*}
        P_{i,j} =
            \begin{pmatrix}
                    a_{0,0} & a_{0,1}   & \cdots & a_{0,K-1} & a_{0,K} \\
                    a_{1,0} & a_{1,1}   & \cdots & a_{1,K-1} & a_{1,K} \\
                    \vdots  & \vdots    & \ddots & \vdots    & \vdots \\
                    a_{1,K} & a_{1,K-1} & \cdots & a_{1,1}   & a_{1,0} \\
                    a_{0,K} & a_{0,K-1} & \cdots & a_{0,1}   & a_{0,0}
            \end{pmatrix}
        \end{equation*}

        \begin{proposition}
            The transition matrix P is a Centrosymmetric (Cross-Symmetric) Matrix.
        \end{proposition}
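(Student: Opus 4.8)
The plan is to prove the proposition by verifying the entrywise identity $P_{i,j}=P_{(K-i)(K-j)}$ for all $0\le i,j\le K$ directly from the explicit formula \eqref{equ-probabilityMatrix} under the standing assumption $\alpha=\beta$. Recall that a $(K+1)\times(K+1)$ matrix (with rows and columns indexed by $0,\dots,K$) is \emph{centrosymmetric} precisely when it is invariant under the reflection $(i,j)\mapsto(K-i,K-j)$, equivalently when $JPJ=P$ for the exchange (anti-identity) matrix $J$ with $J_{ij}=1$ iff $i+j=K$. So establishing the reflection identity is exactly the claim, and the observation $P_{ij}=P_{(K-i)(K-j)}$ made just before the proposition is the statement that must be formalized.

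First I would note that the reflection swaps the two branches of the piecewise definition: $i\le j$ is equivalent to $K-i\ge K-j$, so the $(i,j)$-entry, read off the "$i\le j$" row, must be compared with the $(K-i,K-j)$-entry read off the "$i>j$" row (the diagonal $i=j$ is self-paired and can be read off either branch). Next I would substitute $i'=K-i$, $j'=K-j$ into the "$i'>j'$" expression $\sum_{n}\binom{K-i'}{n}\binom{i'}{i'-j'+n}\alpha^{i'-j'+2n}(1-\alpha)^{K-i'+j'-2n}$ and simplify: one has $i'-j'=j-i$, $K-i'=i$, $K-i'+j'=i+K-j$, $\binom{K-i'}{n}=\binom{i}{n}$, $\binom{i'}{i'-j'+n}=\binom{K-i}{j-i+n}$, and the summation bound $\min\{i',K-i',j',K-j'\}=\min\{i,K-i,j,K-j\}$ since that quantity is symmetric under the reflection. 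Carrying this out turns the $(K-i,K-j)$-entry into $\sum_{n=0}^{\min\{i,K-i,j,K-j\}}\binom{K-i}{j-i+n}\binom{i}{n}\alpha^{j-i+2n}(1-\alpha)^{K-j+i-2n}$, which is term-by-term identical to the "$i\le j$" formula for $P_{i,j}$; hence the two entries coincide.

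Finally I would dispatch the boundary case $i=j$ (trivial: then $K-i=K-j$ and both entries use the same branch), and conclude that $P_{i,j}=P_{(K-i)(K-j)}$ for all $i,j$, so $JPJ=P$ and $P$ is centrosymmetric, which also accounts for the mirrored banded layout displayed above. The main obstacle is purely clerical rather than conceptual: correctly matching the binomial coefficients and the exponents of $\alpha$ and $1-\alpha$ across the two cases after the index substitution, and checking the invariance of the summation range $\min\{i,K-i,j,K-j\}$; once written out carefully the identity is immediate, and it is essential here that $\alpha=\beta$, since otherwise the two transition directions carry different weights and the symmetry fails.
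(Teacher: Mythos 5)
Your proposal is correct and follows essentially the same route as the paper: the paper's proof also reduces the claim to the entrywise identity $P_{i,j}=P_{(K-i)(K-j)}$, which it asserts by inspection of the $\alpha=\beta$ formula in the paragraph preceding the proposition, and then invokes the definition of centrosymmetry (adjusted for $0$-based indexing). You merely carry out the index substitution $i'=K-i$, $j'=K-j$ and the matching of binomial coefficients, exponents, and the summation bound explicitly, which the paper leaves to the reader.
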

        \begin{proof}
             An ($n\times n$) matrix $A = [ A_{i,j} ]$ is centrosymmetric when its entries satisfy $ A_{i,j} = A_{n-i+1,n-j+1}$ for $1 \leq i,j \leq n$ (\cite{Weaver1985Centrosymmetric}). In our case, $0 \leq i,j \leq n$ and the matrix is ($(n+1)\times (n+1)$), so we would get that the entries satisfy $ A_{i,j} = A_{n-i,n-j}$ for $0 \leq i,j \leq n$. This consist with the structure of the transition matrix explained above.
        \end{proof}

        Let us consider the ($n\times n$) exchange matrix $J$, which satisfies $I^2=J$ where $I$ is the identity matrix, and it entries defined such that:
        \begin{equation*}
        J_{i,j} =
          \begin{cases}
            1    & \text{if } j =n-i \\
            0    & \text{if } j \neq n-i
          \end{cases}
          \text{for } 0 \leq i,j \leq n
        \end{equation*}
        \begin{proposition}[\cite{Weaver1985Centrosymmetric}]
            An ($n \times n$) matrix A is centrosymmetric if and only if $JA = AJ$ .
            \label{proposition-centrosymmetric}
        \end{proposition}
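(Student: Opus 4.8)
The plan is to reduce the statement to the elementary fact that the exchange matrix $J$ is an involution, together with a direct computation of how multiplication by $J$ permutes rows and columns. First I would record the basic properties of $J$: it is the permutation matrix of the index-reversing permutation, so it is symmetric, $J=J^{\top}$, and it satisfies $J^{2}=I$ (reversing the order of the coordinates twice is the identity). This is immediate from the definition $J_{i,j}=1 \iff j=n-i$.

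Next I would compute the effect of multiplying an arbitrary matrix $A$ by $J$ on either side, using the same $0\le i,j\le n$ indexing convention used for the transition matrix $P$ and for $J$. Since $J_{i,k}=1$ precisely when $k=n-i$ (and is $0$ otherwise), one gets $(JA)_{i,j}=\sum_{k}J_{i,k}A_{k,j}=A_{n-i,j}$; similarly $(AJ)_{i,j}=\sum_{k}A_{i,k}J_{k,j}=A_{i,n-j}$, and composing the two, $(JAJ)_{i,j}=A_{n-i,n-j}$. In words: left multiplication by $J$ reverses the row order, right multiplication reverses the column order, and conjugation by $J$ reverses both. Hence the defining relation of centrosymmetry, $A_{i,j}=A_{n-i,n-j}$ for all admissible $i,j$, is equivalent to the single matrix identity $JAJ=A$.

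It then remains to show $JAJ=A \iff JA=AJ$, and here the only input needed is $J^{2}=I$. If $JAJ=A$, multiplying on the right by $J$ gives $JAJ^{2}=AJ$, i.e. $JA=AJ$; conversely, if $JA=AJ$, multiplying on the right by $J$ gives $JAJ=AJ^{2}=A$. Combining this equivalence with the previous step establishes the proposition.

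There is no genuine obstacle here; the argument is pure bookkeeping. The only points that require a little care are fixing the indexing convention consistently and noting explicitly that $J^{2}=I$, which is exactly what makes the two-sided identity $JAJ=A$ collapse to the commutation relation $JA=AJ$.
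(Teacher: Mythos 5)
Your proof is correct and complete. Note, however, that the paper itself does not prove this proposition at all: it is stated with a citation to Weaver (1985) and used as an imported fact, so there is no in-paper argument to compare against. Your route is the standard elementary one, and it is exactly what one would write if a proof were to be supplied: the computation $(JA)_{i,j}=A_{n-i,j}$, $(AJ)_{i,j}=A_{i,n-j}$, hence $(JAJ)_{i,j}=A_{n-i,n-j}$, identifies centrosymmetry with the identity $JAJ=A$, and the involution property $J^{2}=I$ converts that two-sided identity into the commutation relation $JA=AJ$. You were right to be careful with the indexing convention, since the paper uses $0\leq i,j\leq n$ for an $(n+1)\times(n+1)$ matrix rather than the $1$-based convention of the general definition; your formulas are consistent with the paper's $J_{i,j}=1\iff j=n-i$. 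One incidental observation: the paper introduces $J$ as satisfying ``$I^{2}=J$'', which is evidently a typo for $J^{2}=I$ --- the property your argument (correctly) relies on.
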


    \subsection{Stationary distribution}

        If a Markov chain is a time-homogeneous, so that the process is described by a single, time-independent matrix $P_{ij}$ then the vector $\overline{\pi}$ is called a stationary distribution and it satisfies the global balance condition : $\overline{\pi}=\overline{\pi}P$.\\
        In the case that the finite state Markov chain is irreducible and aperiodic, then the stationary distribution exist and it is unique \cite{serfozo2009basics}.\\
        The characteristics mentioned above exist in our system model. We will investigate our stationary distribution $\overline{\pi}$.

        \begin{theorem}\label{thm-Staionary distribution is symmetric}
            The stationary distribution $\overline{\pi}$ of the Markov chain with transition matrix $P$ is a symmetric distribution around its middle.
        \end{theorem}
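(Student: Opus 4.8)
The plan is to read the claim in matrix form and then exploit the centrosymmetry of $P$ together with uniqueness of the stationary distribution. Saying that $\overline{\pi}$ is symmetric around its middle means precisely that $\pi_i = \pi_{K-i}$ for all $0 \le i \le K$; using the $(K+1)\times(K+1)$ exchange matrix $J$ (with $J_{i,j}=1$ iff $j=K-i$), this is the single identity $\overline{\pi}J = \overline{\pi}$, since $(\overline{\pi}J)_j = \pi_{K-j}$. So the whole proof reduces to showing that the reversed vector $\overline{\pi}J$ coincides with $\overline{\pi}$.

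First I would recall that, since the chain is irreducible and aperiodic on the finite state space $\{0,\dots,K\}$, the stationary distribution $\overline{\pi}$ exists, is a probability vector, and is \emph{unique}, satisfying $\overline{\pi}P = \overline{\pi}$. Next, consider $\overline{\pi}J$: because $J$ only reverses the order of the coordinates, $\overline{\pi}J$ still has nonnegative entries summing to one, hence it is again a probability vector. Using that $P$ is centrosymmetric (established in the preceding proposition from $P_{ij}=P_{(K-i)(K-j)}$), Proposition \ref{proposition-centrosymmetric} gives $JP = PJ$, and therefore
\[
(\overline{\pi}J)P = \overline{\pi}(JP) = \overline{\pi}(PJ) = (\overline{\pi}P)J = \overline{\pi}J,
\]
so $\overline{\pi}J$ is itself a stationary distribution of $P$. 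By uniqueness of the stationary distribution we conclude $\overline{\pi}J = \overline{\pi}$, i.e. $\pi_{K-i} = \pi_i$ for every $i$, which is exactly the asserted symmetry.

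The step I expect to matter most is not analytic at all but bookkeeping: making sure that the $0$-indexed convention for the state space is handled consistently, so that left-multiplication by $J$ really implements the reversal $i \mapsto K-i$ and that the relation $JP=PJ$ is applied to the matrix $P$ as indexed here; this is already underwritten by the explicit form of $P_{ij}$ and Proposition \ref{proposition-centrosymmetric}. Once that is fixed, the argument is the two-line computation above plus the uniqueness theorem for finite irreducible aperiodic chains, so there is no genuine obstacle beyond this identification.
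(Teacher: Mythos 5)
Your argument is correct and is essentially identical to the paper's own proof: both exploit the centrosymmetry of $P$ via the exchange matrix $J$ (so that $JP=PJ$), show that the reversed vector $\overline{\pi}J$ again satisfies the global balance equation, and invoke uniqueness of the stationary distribution to conclude $\overline{\pi}J=\overline{\pi}$. Your version is, if anything, slightly more careful in noting explicitly that $\overline{\pi}J$ remains a probability vector and in spelling out the indexing convention.
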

        \begin{proof}
            Let $\overline{\pi}$ be the unique stationary distribution of the Markov chain which hold the global balance condition:
            \begin{equation*}
              \begin{aligned}
                \overline{\pi}   & =\overline{\pi}P \\
                \overline{\pi}J  & \overset{(a)}{=}\overline{\pi}PJ  \\
                \overline{\pi^*} & \overset{(b)}{=}\overline{\pi}PJ  \\
                \overline{\pi^*}   & \overset{(c)}{=}\overline{\pi}JP  \\
                \overline{\pi^*}   & \overset{(d)}{=}\overline{\pi^*}P \\
              \end{aligned}
            \end{equation*}
         Where (a) since the equation hold we can to multiply both sides with the matrix $J$ as defined earlier, (b) any vector $\overline{v}$ multiplied by the matrix $J$ turns to be the same vector $\overline{v}$ in its revers order, which mean $v^*_i=v_{|v|-i}$, (c) $P$ is a centrosymmetric matrix and according to proposition \ref{proposition-centrosymmetric}, $PJ=JP$ (d) same as (b).\\
         As a result we can see that the vector $\overline{\pi^*}$ satisfies the global balance equation therefore in order not to contradict the singularity of the stationary distribution the obvious conclusion is that $\overline{\pi}=\overline{\pi^*}$. From this we can conclude that the stationary distribution is a symmetric distribution.
        \end{proof}

        A random variable $\xi$ and its distribution function $F(x)$ are said to be unimodal with mode $M$ if $F(x)$ is convex on $(-\infty,M)$ and concave on $(M,\infty)$. The notion of unimodality has proved to be very fruitful, therefore its analogs have been introduced for discrete and multivariate distributions. In \cite{keilson1971some} some results for discrete unimodality can be found, we will consider some of these results for better understanding the stationary distribution $\overline{\pi}$. A distribution $\{p_n\}$ with $k\in \mathbb{Z}$ as its support will be said to be unimodal if there exists at least one integer $M$ such that:
        \begin{equation*}
          \begin{aligned}
            &p_n     \geq p_{n-1},\ \text{ for all } n \leq M,\\
            &p_{n+1} \leq p_{n},  \quad \text{ for all } n \geq M.
          \end{aligned}
        \end{equation*}
        The distribution may consist of a single point mass, or it may be bounded on one side, both sides, or be unbounded. A discrete distribution $\{h_n\}$ is strongly unimodal if the convolution of $\{h_n\}$ with any unimodal $\{p_n\}$ is unimodal. In addition, a necessary and sufficient condition that $\{h_n\}$ is strongly unimodal is that (\cite{keilson1971some}, \textit{Theorem 3}):
        \begin{equation*}
          {h_n}^2 \geq h_{n-1}h_{n+1} \quad \text{  for all }n
        \end{equation*}
        Another theorem presented is: The limit of a convergent sequence of strongly unimodal discrete distributions is strongly unimodal (\cite{keilson1971some}, \textit{Theorem 1}).\\
        Also, it has been proven that any binomial distribution is strongly unimodal since it is a convolution of a finite number of Bernoulli distributions which are strongly unimodal. This derives from the \emph{Lemma} given in \cite{keilson1971some}, that the convolution of two strongly unimodal discrete distributions is strongly unimodal.

        \begin{theorem}\label{thm-Staionary distribution is discrete strongly unimodal}
            The stationary distribution $\overline{\pi}$ of the Markov chain with transition matrix $P$ is a discrete strongly unimodal distribution.
        \end{theorem}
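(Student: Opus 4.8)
The plan is to identify the stationary distribution $\overline{\pi}$ in closed form and then invoke the convolution criterion for strong unimodality recalled above from \cite{keilson1971some}; once the closed form is in hand, the statement is almost immediate.

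First I would observe that the $K$ users evolve as independent copies of the same two-state Good--Bad Markov chain, with transition probabilities $\alpha$ (Good $\to$ Bad) and $\beta$ (Bad $\to$ Good). Hence the chain $S(n)$ on $\{0,1,\dots,K\}$ that counts the number of users currently in the Bad state is a superposition of $K$ independent two-state chains, and its stationary law is the law of a sum of $K$ i.i.d. Bernoulli random variables, each equal to $1$ with probability $q=\frac{\alpha}{\alpha+\beta}$, the stationary probability of the Bad state. That is,
\begin{equation*}
  \pi_i=\binom{K}{i}q^{\,i}(1-q)^{K-i},\qquad 0\le i\le K,
\end{equation*}
a Binomial$(K,q)$ distribution. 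If one prefers not to appeal to the product structure, the same identification follows by checking directly that this vector satisfies the global balance equation $\overline{\pi}=\overline{\pi}P$ for the transition matrix in \eqref{equ-probabilityMatrix}; the required identity is a Vandermonde-type convolution of binomial coefficients, but the probabilistic argument is cleaner and I would present that one.

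Next I would recall the two facts already cited from \cite{keilson1971some}: a Bernoulli distribution is strongly unimodal, since it trivially satisfies $h_n^2\ge h_{n-1}h_{n+1}$ for all $n$ (on a two-point support, when $h_n>0$ at most one of $h_{n-1},h_{n+1}$ is nonzero), and the convolution of strongly unimodal discrete distributions is strongly unimodal. Since $\overline{\pi}$ is the $K$-fold convolution of the Bernoulli$(q)$ law, it is strongly unimodal, which is the claim. Alternatively one may verify the criterion of \cite[Theorem 3]{keilson1971some} by hand: $\frac{\pi_{i+1}}{\pi_i}=\frac{K-i}{i+1}\cdot\frac{q}{1-q}$ is non-increasing in $i$, so $\{\pi_i\}$ is log-concave, hence $\pi_i^2\ge\pi_{i-1}\pi_{i+1}$ for all $i$.

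There is essentially no hard step here: the only point requiring care is the identification of $\overline{\pi}$ with the Binomial$(K,q)$ law, and the main obstacle one would face is purely bookkeeping---verifying $\overline{\pi}=\overline{\pi}P$ directly from \eqref{equ-probabilityMatrix} if the independence argument is not used. Note also that when the symmetric assumption $\alpha=\beta$ is in force (as in Theorem \ref{thm-Staionary distribution is symmetric}) one has $q=\tfrac12$, so $\overline{\pi}$ is the symmetric Binomial$(K,\tfrac12)$ and strong unimodality refines the symmetry established there; the argument above does not use $\alpha=\beta$, so it also covers the general case.
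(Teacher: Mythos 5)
Your proposal is correct, but it reaches the conclusion by a genuinely different route than the paper. The paper never identifies $\overline{\pi}$ in closed form: it shows that each row of $P$ is the law of $i+X-Y$ with $X\sim B(K-i,\alpha)$ and $Y\sim B(i,\alpha)$ independent, hence a convolution of strongly unimodal laws, and then appeals to \cite[Theorem 1]{keilson1971some} (limits of convergent sequences of strongly unimodal distributions are strongly unimodal) together with $\overline{\pi}=\lim_{n}P^{n}_{i\cdot}$. You instead observe that the Bad-user count is a lumping of $K$ independent two-state chains, so its stationary law is exactly the Binomial$\left(K,\frac{\alpha}{\alpha+\beta}\right)$ distribution, whose strong unimodality is immediate from the convolution-of-Bernoullis fact (or from log-concavity of the binomial ratios) that the paper itself records. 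Your route is arguably tighter: as written, the paper's limit argument only establishes strong unimodality of the rows of $P$, whereas Keilson's Theorem 1 needs every member of the sequence $\{P^{n}_{i\cdot}\}_{n}$ to be strongly unimodal, and the transition operation is a mixture, not a convolution, so this does not transfer automatically; one must rerun the independence argument at each $n$, which your closed form renders unnecessary. Your identification also buys more than the theorem asks for --- it subsumes Theorem \ref{thm-Staionary distribution is symmetric} (for $\alpha=\beta$ one gets the symmetric Binomial$\left(K,\tfrac12\right)$), covers the asymmetric case explicitly, and would make the subsequent capacity bounds (e.g.\ $\max_i\pi_i=\pi_{K/2}$) fully explicit. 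The one point you should not leave implicit is the lumpability of the product chain onto the count chain (valid here because all users share the same $\alpha,\beta$), since that is what guarantees the pushforward of the product stationary measure is the stationary vector of the matrix in \eqref{equ-probabilityMatrix}.
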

        \begin{proof}
            Since the stationary distribution is defined as the limit distribution in an ergodic Markov chain (\cite{serfozo2009basics}, \textit{Theorem 59}), according to (\cite{keilson1971some}, \textit{Theorem 1}), in order to prove that $\overline{\pi}$ is a discrete strongly unimodal distribution it is sufficient to prove that each row of the transition matrix $P$ is strongly unimodal.\\
            Let us consider the following situation to describe the system model. Given the system state is $i$ ($i$ users in the bad group and $K-i$ users in the good group), we define two binomial random variables, $X\sim B(K-i,\alpha)$ and $Y\sim B(i,\alpha)$ which represent the number of users that traversed from the good group to the bad group and from the bad group to the good group in a time unit, respectively. Therefore, the distribution for the next system state $J$, i.e. the row $i$ of the transition matrix $P$ is:
            \begin{equation*}
              Pr(J=j)=Pr(i+X-Y=j)=Pr(X-Y=j-i)
            \end{equation*}
            The next system state is ruled from the given state $i$ the amount of users that traversed to the bad group minus the users that traversed to the good group.
            The difference between the two binomial random variables $X$ and $Y$ gives the next state distribution, hence we only need to show that this distribution is strongly unimodal. In the case of sum of two independent binomial random variables it is very simple, since each of them has strongly unimodal distribution and the distribution of the sum is the convolution of the two strongly unimodal distributions which is strongly unimodal. Let us define a new random variable $W$ such that $W=-Y$. The distribution of $W$ is defined:
            \begin{equation*}
              Pr(W=-k)= \binom {i} {k}\alpha^{k}(1-\alpha)^{i-k} \quad \text{  for }k=0,1,2,...,i
            \end{equation*}
            This distribution is clearly strongly unimodal since the necessary and sufficient condition in (\cite{keilson1971some}, \textit{Theorem 3}) holds. For $Y$, from its binomial characteristics, the inequality $p_{k}^2 \geq p_{k+1}p_{k-1}$ holds. Since $p(Y=k)=p(W=-k)$, $p(Y=k+1)=p(W=-k-1)$ and $p(Y=k-1)=p(W=-k+1)$, the inequality $p_{-k}^2 \geq p_{-k+1}p_{-k-1}$ for $W$ holds as well.\\
            Therefore, we get that the distribution:
            \begin{equation*}
              Pr(J=j)=Pr(X-Y=j-i)=Pr(X+W=j-i)
            \end{equation*}
            is strongly unimodal, since $X$ and $W$ each has a strongly unimodal distribution, the convolution between them will also give a strongly unimodal distribution \cite{keilson1971some}.
        \end{proof}

        From Theorems \ref{thm-Staionary distribution is symmetric} and \ref{thm-Staionary distribution is discrete strongly unimodal}, we can conclude that the stationary distribution of the Markov chain is a symmetric unimodal function. See Figure \ref{fig-StationaryDistribution_a} for the case of symmetric transition probabilities.\\
        We will derive the capacity for this scenario given the current state of the system, while this state determines the amount of users experiencing the different channel distributions.

          \begin{figure}
            \centering
            \begin{subfigure}[b]{0.3\textwidth}
                    \centering
                    \includegraphics[width=\textwidth]{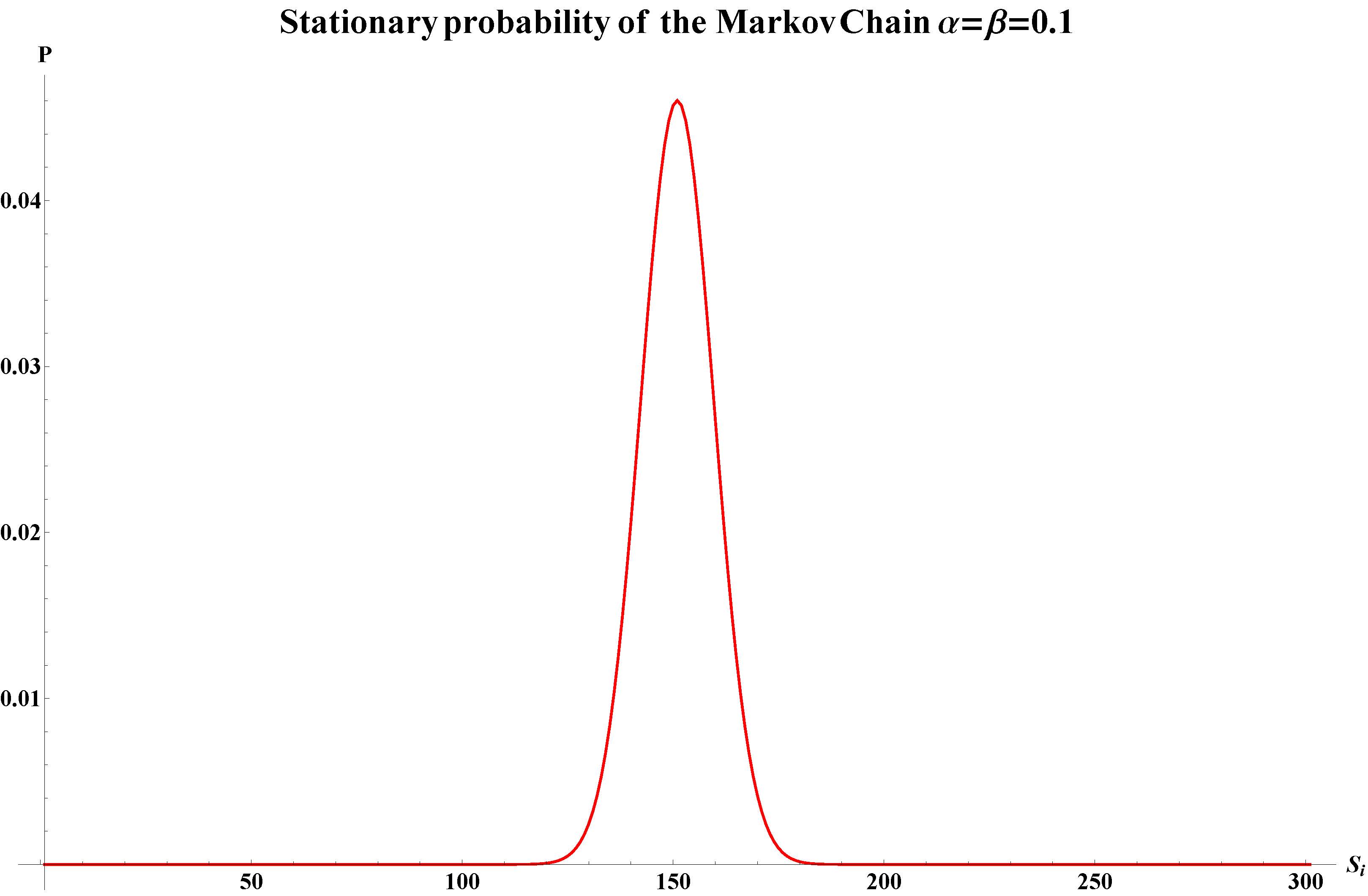}
                    \caption{}
                    \label{fig-StationaryDistribution_a}
            \end{subfigure}%
            ~ 
            \begin{subfigure}[b]{0.3\textwidth}
                    \includegraphics[width=\textwidth]{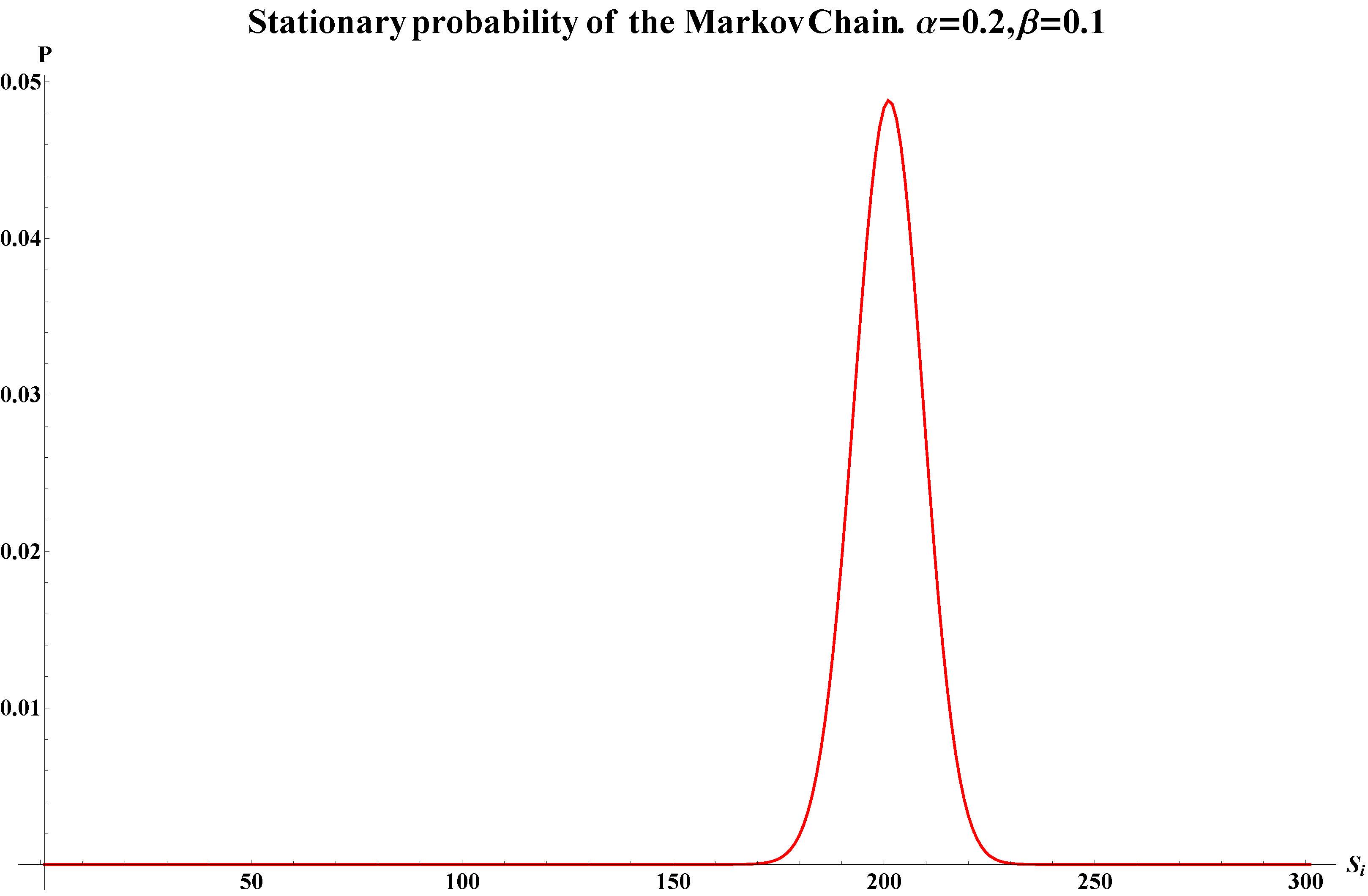}
                    \caption{}
                    \label{fig-StationaryDistribution_b}
            \end{subfigure}
            ~ 
            \begin{subfigure}[b]{0.3\textwidth}
                    \includegraphics[width=\textwidth]{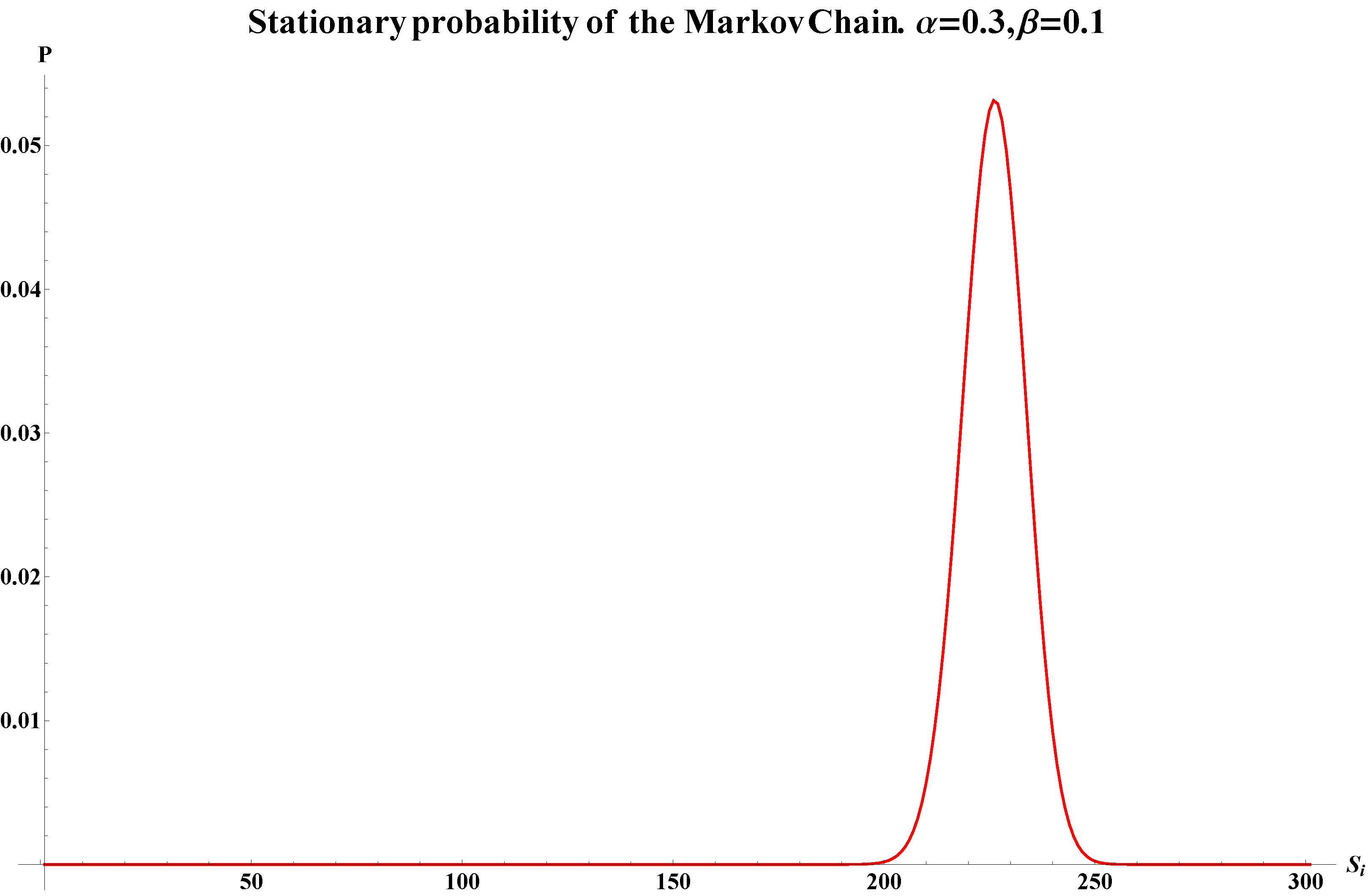}
                    \caption{}
                    \label{fig-StationaryDistribution}
            \end{subfigure}
            \caption[Stationary distribution for MC system model]{Stationary distribution of the Markov chain system model. Where the transition probabilities in (a)$\alpha=\beta=0.1$ (b)$\alpha=0.2\ \beta=0.1$ (c)$\alpha=0.3\ \beta=0.1$}\label{fig-StationaryDistribution_c}
        \end{figure}

        \subsection{Capacity analyze}

        We consider the problem of finding $\widetilde{M_K}$ as defined in \eqref{equ-CapacityDefinition} and the distribution of $\{C_i\}$ as defined \eqref{equ-UserCapacityProcess}.
        As mentioned earlier at first we will assume that the system is in a specific state $S_i(n)$ and therefore we know the amount of users which are in good and bad channel state. It is important to understand that given the system state $S_i(n)$, there are $i$ users in Bad channel state which we denote by $K_1$, and $K-i$ users in Good channel state which we denote by $K_2$, note that $K=K_1+K_2$.\\
        In order to find the expected capacity we shall derive the capacity distribution of $\widetilde{M_K}$:
        \begin{equation}\label{equ-CapacityDistribution}
          \begin{aligned}
            &P_r\big(\widetilde{M_K}<x\big)=\sum \limits_{i=0}^K P_r\big(\widetilde{M_K}<x\;\big|\; S_i(n)\big)P\big(S_i(n)\big)=\\
            &\sum \limits_{i=0}^K P_r\big(\max\big\{C_1(n),C_2(n),...,C_K(n)\big\}<x\;|\; S_i(n)\big)P\big(S_i(n)\big)
          \end{aligned}
        \end{equation}
        Given $S_i(n)$ we get two independent groups $A=\{a_1,...,a_{K_1}\}$ and $B=\{b_1,...,b_{K_2}\}$ where $a_j$ and $b_j$ represents the users' capacities according to the group each user belongs to. Since every state of the system need to be taken in consideration, we cannot use the EVT for all the group sizes. Let us define a parameter $\varphi$ for the minimal group size which we can use EVT to achieve the asymptotic distribution for the maximum capacity. Assuming $A$ and $B$ are big enough, the distribution of the maximum capacity in each group is the asymptotic distribution $G_j(x,K_i)$ as in equation \eqref{equ-EVTMaxDistribution}, where the parameters $a_K$ and $b_K$ will be a function of $K_1$ and $K_2$. And for the states where one group is too small for us to use EVT we will treat in a straightforward manner.\\
        The distribution given the system state $S_i(n)$:
        \begin{equation*}
          \begin{aligned}
            &P_r\big(\widetilde{M_K}<x\;\big|\; S_i(n)\big)=P_r\big(\max\big\{C_1(n),C_2(n),...,C_K(n)\big\}<x\;\big|\; S_i(n)\big)\\
            &=P_r(\max\big\{\max\{A\},\max\{B\}\big\}<x\;\big|\; S_i(n))\\
            &=P_r(\max\{A\}<x,\max\{B\}<x\;\big|\; S_i(n))\\
            &\overset{(a)}{=}P_r\big(\max\{A\}<x\;\big|\; S_i(n)\big)P_r\big(\max\{B\}<x\;\big|\; S_i(n)\big)\\
          \end{aligned}
        \end{equation*}
        where (a) is due to the independency between the groups. So \eqref{equ-CapacityDistribution} can be written:
        \begin{equation}\label{equ-CapacityDistribution-cont}
            \begin{aligned}
            P_r\big(\widetilde{M_K}<x\big) &=\sum \limits_{i=0}^K P_r\big(\max\{A\}<x\;\big|\; S_i(n)\big)P_r\big(\max\{B\}<x\;\big|\; S_i(n)\big)P(S_i(n))\\
                                           &=\sum \limits_{i=0}^\varphi F_B^{K_1}(x)G_2(x,K_2)+\sum \limits_{i=\varphi+1}^{K-\varphi} G_1(x,K_1)G_2(x,K_2)+\sum \limits_{i=K-\varphi+1}^K G_1(x,K_1)F_G^{K_2}(x)
            \end{aligned}
        \end{equation}

        Thus the expected channel capacity is:
        \begin{equation}\label{equ-ExpectedCapacity}
            \begin{aligned}
                E\Big[\widetilde{M_K}\Big] &=\sum \limits_{i=0}^K E\Big[\widetilde{M_K}\;\big|\; S_i(n)\Big]P(S_i(n))\\
                &=\sum \limits_{i=0}^\varphi P(S_i(n))\int \limits_{-\infty}^{\infty} x\frac{\partial}{\partial x}\Big[F_B^{K_1}(x)G_2(x,K_2)\Big]\, dx\\
                &+\sum \limits_{i=\varphi+1}^{K-\varphi} P(S_i(n))\int \limits_{-\infty}^{\infty} x\frac{\partial}{\partial x}\Big[G_1(x,K_1)G_2(x,K_2)\Big]\, dx\\
                &+\sum \limits_{i=K-\varphi+1}^K P(S_i(n))\int \limits_{-\infty}^{\infty} x\frac{\partial}{\partial x}\Big[G_1(x,K_1)F_G^{K_2}(x)\Big]\, dx\\
            \end{aligned}
        \end{equation}

         In Figure \ref{fig-CapacityDistributionTwoGroups} we can see the capacity PDF in the case $N_1=150$ and $N_2=150$ where each group experience different channel distribution with $\mu_1=1.25$,$\mu_2=0.25$ and $\sigma_1=\sigma_2=0.19$ respectively.
        \begin{figure}[h]
            \centering
            \includegraphics[width=0.6\textwidth]{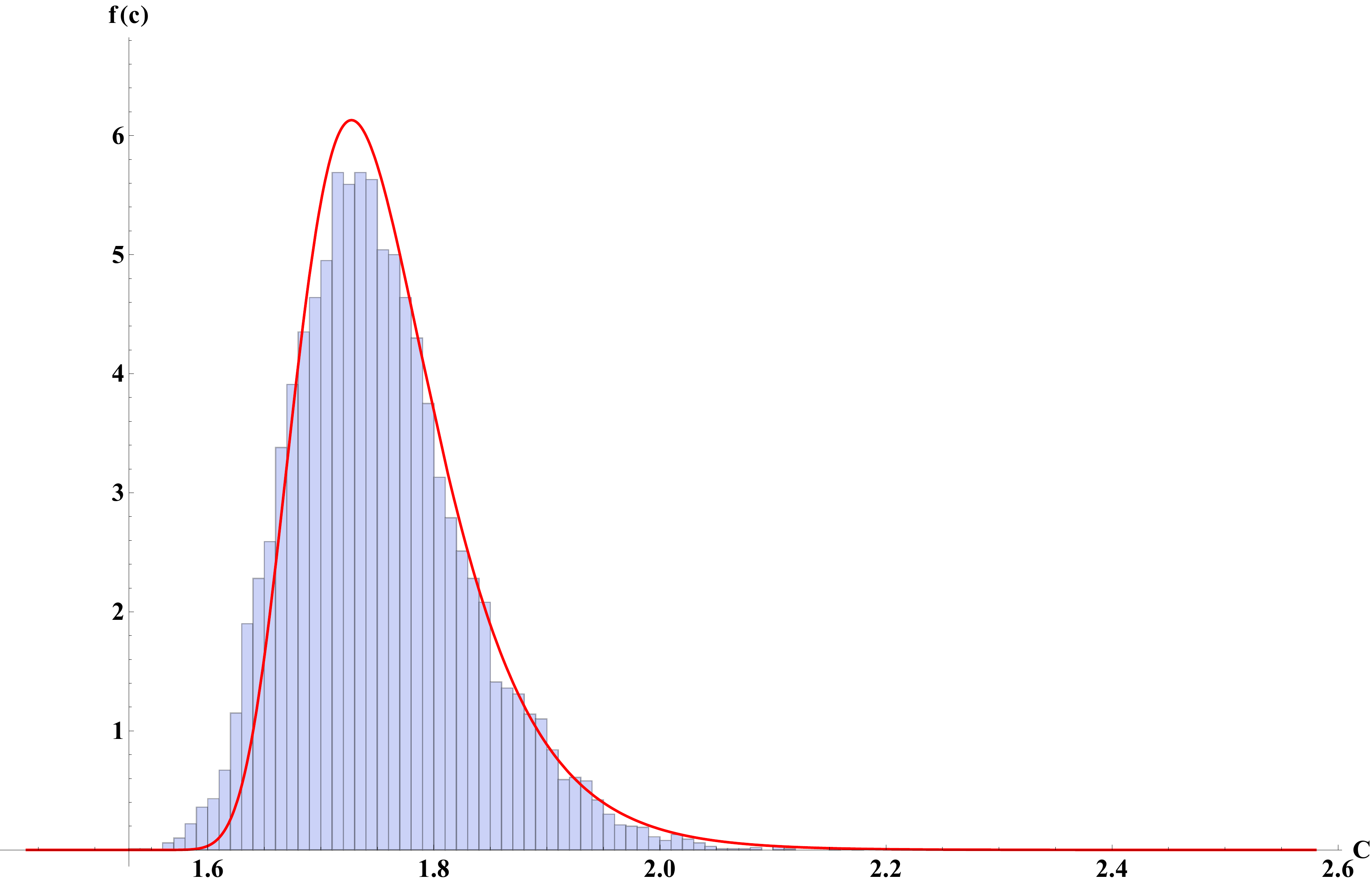}
            \caption[Capacity distribution - two groups]{Capacity Distribution with Two Groups distributed differently }
            \label{fig-CapacityDistributionTwoGroups}
        \end{figure}

        A more realistic simulation is in Figure \ref{fig-CapacityDistributionTwoGroupsVariable}, we can see the maximal MIMO capacity among 300 users, with time variable size for each group (bad or good), and time variable state for each user according to the good bad channel Markov model. Achieved by saving a state vector for the users and updating it following the transition probabilities for each drawn.
        \begin{figure}
            \centering
            \begin{subfigure}[b]{0.5\textwidth}
                    \centering
                    \includegraphics[width=\textwidth]{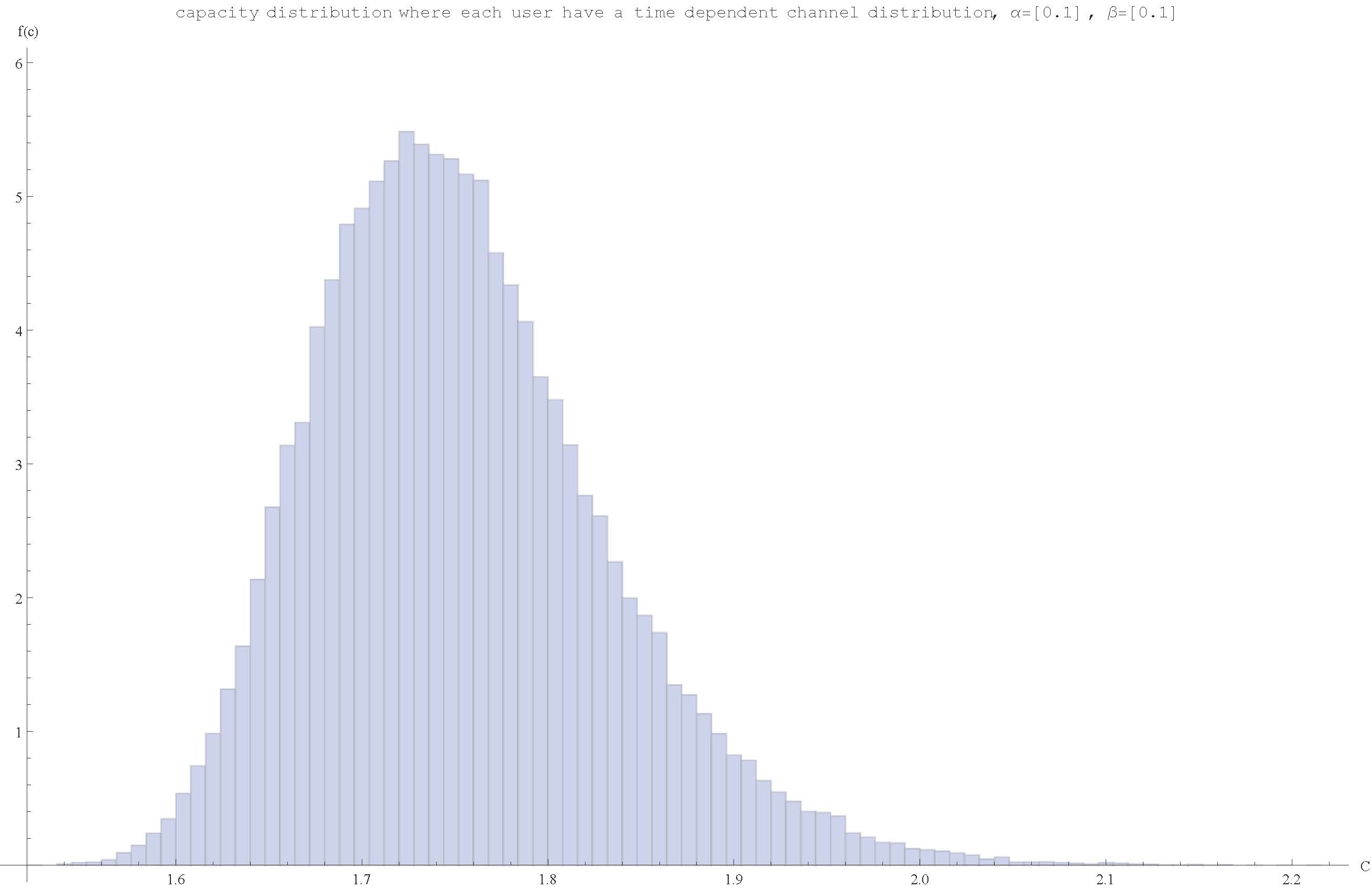}
                    \caption{}
            \end{subfigure}%
            ~ 
            \begin{subfigure}[b]{0.5\textwidth}
                    \includegraphics[width=\textwidth]{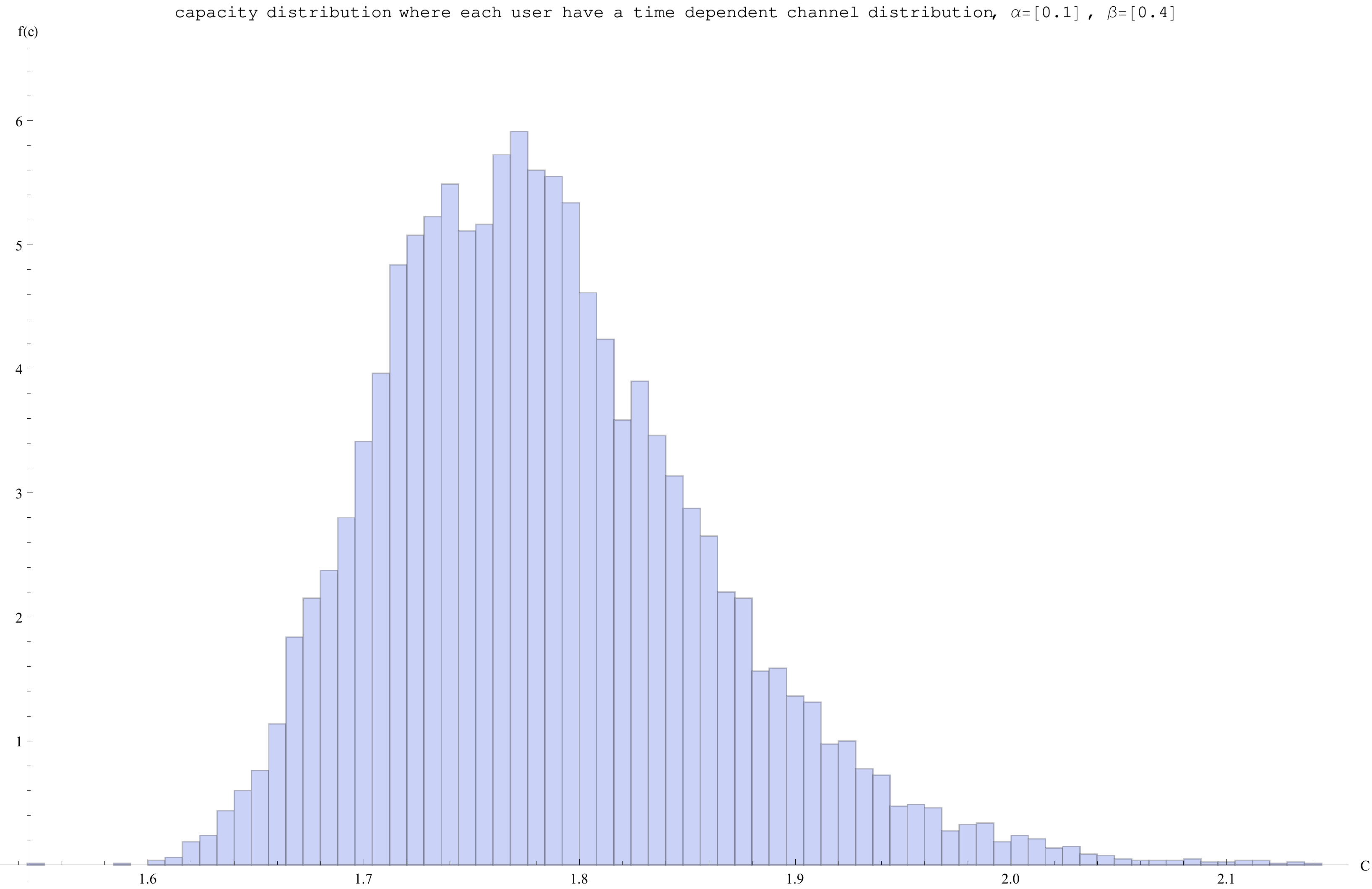}
                    \caption{}
            \end{subfigure}
            \caption[Capacity distribution - two varies groups]{Capacity distribution for variable size of Groups distributed differently. Where the transition probabilities in (a)$\alpha=\beta=0.1$ (b)$\alpha=0.1\ \beta=0.4$ }\label{fig-CapacityDistributionTwoGroupsVariable}
        \end{figure}

    \subsection{Capacity Bounds}
        Finding a closed solution for the expected channel capacity for the expression given in \eqref{equ-ExpectedCapacity} is very complicated, because each $C_i(n)$ is determined according to a Markov process. One can think of some upper and lower bounds for the expected  channel capacity in the analytical development in equation \eqref{equ-ExpectedCapacity}, the first naive lower bound would be to choose $i$ such that gives the maximum probability from the stationary distribution $\overline{\pi}$ of the system model (figure \ref{fig-StationaryDistribution}) in order to simplify the expression in \eqref{equ-ExpectedCapacity}, once we know the specific $i$ we know the parameters $K_1$ and $K_2$:
        \begin{equation}\label{equ-CapacityBound1}
            \begin{aligned}
                E\Big[\widetilde{M_K}\Big] &\overset{(a)}{=}\sum \limits_{i=0}^K E\Big[\widetilde{M_K}\;\big|\; \pi_i\Big]\; \pi_i\\
                                           &\overset{(b)}{\geq} \max_i\{\pi_i\}\; E\Big[\widetilde{M_K}\;\big|\; \pi_i\Big]
            \end{aligned}
        \end{equation}
        where (a) is from equation \eqref{equ-ExpectedCapacity} and (b) is to choose the maximum probability in $\overline{\pi}$, note that $P(S_i(n))=\pi_i$ .

        The index $i$ which satisfy the $\max_i\{\pi_i\}$ in the case where the transition probabilities in the good-bad channel are symmetric is $i=\frac{K}{2}$ as shown earlier in the analysis of the stationary distribution of the system.

        Since $i=\frac{K}{2}$  which mean that the groups are still big as $K \rightarrow \infty$, its fair to say that the EVT can be applied on the distribution of the two groups.\\
        Therefore, we can achieve a closed bound for this case by setting $i=\frac{K}{2}$ in equation \eqref{equ-CapacityBound1}:
        \begin{equation}\label{equ-CapacityBound1-closed}
            \begin{aligned}
                E\Big[\widetilde{M_K}\Big] &\geq \pi_{\frac{K}{2}}\; E\Big[\widetilde{M_K}\;\big|\; \pi_{\frac{K}{2}}\Big]\\
                                           &= {\Huge\pi}_{\frac{K}{2}} \int \limits_{-\infty}^{\infty} x\frac{\partial}{\partial x}\Big[G_1(x,\frac{K}{2})G_2(x,\frac{K}{2})\Big]\, dx\\
                                           &\overset{(a)}{=} {\Huge\pi}_{\frac{K}{2}} \int \limits_{-\infty}^{\infty} x\frac{\partial}{\partial x}\Big[\big(G_1(x,\frac{K}{2})\big)^2\Big]\, dx\\
                                           &= {\Huge\pi}_{\frac{K}{2}} \int \limits_{-\infty}^{\infty} x\frac{\partial}{\partial x}\Big[\Big(e^{-e^{-\frac{x-b_{\frac{K}{2}}}{a_{\frac{K}{2}}}}}\Big)^2\Big]\, dx\\
                                           &= 2{\Huge\pi}_{\frac{K}{2}} \int \limits_{-\infty}^{\infty} x e^{-e^{-\frac{x-b_{\frac{K}{2}}}{a_{\frac{K}{2}}}}} e^{-\frac{x-b_{\frac{K}{2}}}{a_{\frac{K}{2}}}} \frac{1}{a_{\frac{K}{2}}}\, dx\\
                                           &\overset{(b)}{=} 2{\Huge\pi}_{\frac{K}{2}} \int \limits_{0}^{\infty} \big(a_{\frac{K}{2}}\ln\big(\frac{1}{y}\big)+ b_{\frac{K}{2}} \big)e^{-y}\, dy\\
                                           &=2{\Huge\pi}_{\frac{K}{2}}\big(a_{\frac{K}{2}} \gamma + b_{\frac{K}{2}} \big)
            \end{aligned}
        \end{equation}
        where $a_{\frac{K}{2}}$ and $b_{\frac{K}{2}}$ as defined in \eqref{equ-parameter a_n} and \eqref{equ-parameter b_n}.\\
        The first lower bound given above can be improved. In this bound we only took in consideration the maximum probability from the sum of probabilities in equation \eqref{equ-CapacityDistribution-cont}. As mentioned we Know that the two groups sizes are $O(K/2)$, hence EVT can be used, and another bound can be taken in consideration which includes only the states that represent similar group sizes. Let us consider the situation that there is more good users than bad users, i.e. $\frac{K}{2}-\delta$ bad users and $\frac{K}{2}+\delta$ good users, where $\delta$ can be chosen as we please in order to maximize the offered bound. Now if we discard $2\delta$ users from the good group we would get a lower bound for the expected channel capacity, since the users who give the most contribution are the good users, so it is clear that if we reduce these users we reduce the expected channel capacity.
        \begin{equation}\label{equ-CapacityBound2}
            \begin{aligned}
                E\Big[\widetilde{M_K}\Big] &=\sum \limits_{i=0}^K E\Big[\widetilde{M_K}\;\big|\; \pi_i\Big]\; \pi_i\\
                                           &\geq  \sum \limits_{i=\frac{K}{2}-\delta}^{\frac{K}{2}} E\Big[\widetilde{M_K}\;\big|\; \pi_i\Big]\; \pi_i\\
                                           &=  \sum \limits_{i=\frac{K}{2}-\delta}^{\frac{K}{2}} E\left[G_1(x,K_1)G_2(x,K_2) \right] \pi_i\\
                                           &\geq  \sum \limits_{i=\frac{K}{2}-\delta}^{\frac{K}{2}} E\left[G_1(x,K_1)G_2(x,K_1) \right] \pi_i\\
                                           &=  \sum \limits_{i=\frac{K}{2}-\delta}^{\frac{K}{2}} \left(a_{K_1}\gamma+b_{K_1}\right) \pi_i\\
                                           &\geq  \left(a_{\frac{K}{2}-\delta}\gamma+b_{\frac{K}{2}-\delta}\right) \pi_{\frac{K}{2}-\delta}\delta\\
            \end{aligned}
        \end{equation}

\chapter{Random selection Proof}\label{Appendix C}

    Let $N$ be a Poisson process with intensity $\lambda$ and $N^\ast$ be a point process obtained from $N$ such that each event is removed or retained independently with probabilities $1-p$ and $p$ respectively. So for any Borel set $B$,
    \begin{equation*}
      \begin{aligned}
        P(N^\ast(B)=k)&=\sum_{i=k}^{\infty} P(N(B)=i)P(N^\ast(B)=k|N(B)=i)\\
                      &=\sum_{i=k}^{\infty} \frac{e^{-\lambda(B)}(\lambda(B))^i}{i!} \binom{i}{k}p^k(1-p)^{i-k}\\
                      &= \frac{e^{-p\lambda(B)}(p\lambda(B))^k}{k!}
      \end{aligned}
    \end{equation*}
    so that $N^\ast(B)$ is a Poisson random variable with mean $p\lambda(B)$.

\chapter{System-status chain transition probabilities}\label{Appendix D}

    Here we show the calculation of the transition probabilities of the system chain. We use the guide lines in \cite{ephremides1987delay} to simplify things and use also their notions but under our case which state that $r_i=s_i=q_i=p_i$. Due to this fact there are many more possible transition is the state space and therefore the following calculations would be more complicated. In the same manner the calculations are based on the change in the number of active users, thus the transitions can be classified into four types as follows:
    \begin{enumerate}
      \item \underline{The number of active users transits from 0 to 1:}\\
      Only a blocked user may become active, and only one, say $j$. All other blocked blocked must not exceed the threshold. In addition all idle users which may receive a package must not exceed the threshold and therefore, say $w$, becomes blocked. The latter will repeat it self during this calculation. The probability of such a transition is expressed by:
      \begin{equation*}
        P(\Delta A=1, \Delta B=-1+w, \Delta I=-w)=\prod^{K-n-w-1}_{k}\overline{\lambda}_k\prod^{w}_{l}\lambda_l\overline{p}_l
        \prod^{n+1}_{i}\overline{p}_i\frac{p_j}{\overline{p}_j}\left[(1-P_j(0|2))+\lambda_jP_j(0|2)\right]
      \end{equation*}
      where the product is on the sets of users by their transition and $n$ is the number of blocked users after the transition. Note that the user $j$ will be active if his queue is not empty after the successful transmission or a package arrived in the beginning of the slot. Also note that $\overline{p}$ and $\overline{\lambda}$ stand for $1-p$ and $1-\lambda$, respectively.

      \item \underline{The number of active users remains 1:}\\
      This case divides to two subcases, the same active user $j$ remains active or another blocked user $s$ becomes active. Nevertheless $w$ idle users may still become blocked.
      \begin{enumerate}
        \item the probability for the first case is:
        \begin{equation*}
            P(\Delta A=0, \Delta B=+w, \Delta I=-w,j\rightarrow j)= \prod^{K-n-w-1}_{k}\overline{\lambda}_k\prod^{w}_{l}\lambda_l\overline{p}_l
            \prod^{n+1}_{i}\overline{p}_i\frac{p_j}{\overline{p}_j}\left[P_j(1|1)+\lambda_j(1-P_j(1|1))\right]
      \end{equation*}
        \item the probability for the second case is:
        \begin{equation*}
            P(\Delta A=0, \Delta B=+w, \Delta I=-w,j\rightarrow s)= \\ \prod^{K-n-w-1}_{k}\overline{\lambda}_k\prod^{w}_{l}\lambda_l\overline{p}_l
            \prod^{n+1}_{i}\overline{p}_i\frac{p_s}{\overline{p}_s}\left[P_s(1|1)+\lambda_s(1-P_s(1|1))\right]
      \end{equation*}
      \end{enumerate}

      \item \underline{The number of active users transits from 1 to 0:}\\
      This case is subdivided to three subcases,
      \begin{enumerate}
        \item The first subcase is that the active user $j$ becomes idle:
        \begin{equation*}
            P(\Delta A=-1, \Delta B=+w, \Delta I=-w)= \prod^{K-n-w-1}_{k}\overline{\lambda}_k\prod^{w}_{l}\lambda_l\overline{p}_l
            \prod^{n+1}_{i}\overline{p}_i\frac{p_j}{\overline{p}_j}\left[\overline{\lambda}_j(1-P_j(1|1))\right]
        \end{equation*}
        \item The second subcase is that the active user $j$ becomes blocked and blocked user $s$ becomes idle:
        \begin{equation*}
            P(\Delta A=-1, \Delta B=+1-1+w, \Delta I=+1-w)= \prod^{K-n-w-1}_{k}\overline{\lambda}_k\prod^{w}_{l}\lambda_l\overline{p}_l
            \prod^{n+1}_{i}\overline{p}_i\frac{p_s}{\overline{p}_s}\left[\overline{\lambda}_s(1-P_s(1|1))\right]
        \end{equation*}
        \item The third subcase is that the active user $j$ becomes blocked:
        \begin{equation*}
        \begin{aligned}
            &P(\Delta A=-1, \Delta B=+1+w, \Delta I=-w)= \\
            &\prod^{K-n-w-1}_{k}\overline{\lambda}_k\prod^{w}_{l}\lambda_l\overline{p}_l\prod^{n+1}_{i}\overline{p}_i\\
            +&\prod^{K-n-w-1}_{k}\overline{\lambda}_k\prod^{w}_{l}\lambda_l\overline{p}_l \sum^{K-n-w-1}_{s}\frac{\lambda_sp_s}{\overline{\lambda}_s}  \prod^{n+1}_{i}\overline{p}_i\\
            +&\prod^{K-n-w-1}_{k}\overline{\lambda}_k\prod^{w}_{l}\lambda_l\left( 1-\prod^{w}_{s}\overline{p}_s \right)\left(1-\prod^{n}_{i}\overline{p}_i\right)\\
            +&\prod^{K-n-w-1}_{k}\overline{\lambda}_k\prod^{w}_{l}\lambda_l\overline{p}_l\left[ 1-\prod^{n}_{s}\overline{p}_s - \overline{p}_j\prod^{n}_{i}\overline{p}_i\sum^{n}_{q}\frac{p_q}{\overline{p}_q} \right]\\
            +&\prod^{K-n-w-1}_{k}\overline{\lambda}_k\prod^{n}_{i}\overline{p}_i\prod^{w}_{l}\lambda_l\left[ 1-\prod^{w}_{s}\overline{p}_s - \overline{p}_j\prod^{w}_{b}\overline{p}_b\sum^{w}_{q}\frac{p_q}{\overline{p}_q} \right]
        \end{aligned}
        \end{equation*}
        The first expression is for the case which non of the users exceeds the threshold. The second is for the case which one idle user managed to successfully transmits. The third case describes the case which one or more from the blocked and the idle groups tries to transmit, therefore what happens with user $j$ is meaningless. The forth expression describes the situation which user $j$ collided with one or more users from the blocked group, or collision happened between the blocked users and $j$ didn't exceed the threshold. The fifth expression is the same as the fourth concerning the group of idle users which receives a package.
      \end{enumerate}

      \item \underline{The number of active users remains 0 :}\\
      This case is subdivided to three subcases,
      \begin{enumerate}
        \item The first subcase is that all users maintain their status without change:
        \begin{equation*}
            P(\Delta B=0, \Delta I=0)= \left[ 1-\prod^{n}_{i}\overline{p}_i\sum^{n}_{q}\frac{p_q}{\overline{p}_q} \right]\prod^{K-n}_{k}\overline{\lambda}_k + \prod^{n}_{i}\overline{p}_i\prod^{K-n}_{k}\overline{\lambda}_k \sum^{K-n}_{q}\frac{\lambda_qp_q}{\overline{\lambda}_q}
        \end{equation*}
        where the first term is the probability that no packet arrives at the idle users while no blocked user, or at least two blocked users, transmit, and the second term is the probability that no blocked user exceeds the threshold while only one of the idle users successfully transmits.
        \item The second subcase is that one blocked user $j$ becomes idle:
        \begin{equation*}
            P(\Delta B=-1+w, \Delta I=+1-w)= \prod^{K-n-w-1}_{k}\overline{\lambda}_k\prod^{w}_{l}\lambda_l\overline{p}_l
            \prod^{n+1}_{i}\overline{p}_i\left[\overline{\lambda}_j\frac{p_j}{\overline{p}_j}P_j(0|2)\right]
        \end{equation*}
        \item The third subcase describes some situation for $w$ idle users becomes blocked:
        \begin{equation*}
        \begin{aligned}
            &P(\Delta B=+w, \Delta I=-w)= \\
            &\prod^{K-n-w}_{k}\overline{\lambda}_k\prod^{w}_{l}\lambda_l\overline{p}_l\left[ 1-\prod^{n}_{i}\overline{p}_i\sum^{n}_{q}\frac{p_q}{\overline{p}_q} \right]\\
            +&\prod^{K-n-w}_{k}\overline{\lambda}_k\prod^{w}_{l}\lambda_l\left[ 1-\prod^{w}_{s}\overline{p}_s -\prod^{w}_{b}\overline{p}_b\sum^{w}_{q}\frac{p_q}{\overline{p}_q} \right]\\
            +&\prod^{K-n-w}_{k}\overline{\lambda}_k\prod^{w}_{l}\lambda_l\overline{p}_l\sum^{w}_{q}\frac{p_q}{\overline{p}_q} \left[ 1-\prod^{n}_{i}\overline{p}_i\right]\\
            +&\prod^{K-n-w}_{k}\overline{\lambda}_k\sum^{K-n-w}_{s}\frac{\lambda_sp_s}{\overline{\lambda}_s} \prod^{w}_{l}\lambda_l\overline{p}_l\prod^{n}_{i}\overline{p}_i
        \end{aligned}
        \end{equation*}
        The first expression is for the case which non of the $w$ users exceeds the threshold and no blocked user, or at least two blocked users, transmit, the second is for the case which at least two users from $w$ exceeds the threshold, the third case describes collision between one of the users from $w$ with at least one from the blocked users and the forth expression describes the situation which one idle user succussed to transmit while all the other does't exceed the threshold.
      \end{enumerate}
    \end{enumerate}
    According to these transition probabilities we can calculate the steady state of the chain given the auxiliary quantities $p_i(1|1)$ and $p_i(0|2)$, the probability of exceedance and the users arrival rates.

\chapter{Average success probabilities}\label{Appendix E}

    Once the steady state of the system chain is known, the average success probabilities can be calculated. We do so by calculate the values of $P_B(i),P_A(i)$ and $P_I(i)$ in the same manner as calculated in \cite{ephremides1987delay} while taking in consideration that user must exceed the threshold in order to transmit.
    \begin{enumerate}
      \item For Blocked user $i$:
      \begin{equation*}
        \begin{aligned}
        P_B(i)&=Pr(\text{user $i$ success $\mid$ user $i$ is blocked})
              &=\frac{Pr(\text{user $i$ success and user $i$ is blocked})}{Pr(\text{user $i$ is blocked})}
        \end{aligned}
      \end{equation*}
      where
      \begin{equation*}
        Pr(\text{user $i$ success and user $i$ is blocked})=p_i\sum_{\substack{S_i=2 \\ j\neq i \\ S_j=0,1,2}}P(S_1,...,S_K)\prod_{j\neq i} (\lambda_j\overline{p}_j+\overline{\lambda}_j)^{\delta_{S_j=0}}(\overline{p}_j)^{\delta_{S_j=1}} (\overline{p}_j)^{\delta_{S_j=2}}
      \end{equation*}
      The exponent $\delta$ equals $1$ when it's condition holds.
      \begin{equation*}
        Pr(\text{user $i$ is blocked})=\sum_{S_i=2}P(S_1,..,S_i,..,S_K)
      \end{equation*}

      \item For Active user $i$:
      \begin{equation*}
        \begin{aligned}
        P_A(i)&=Pr(\text{user $i$ success $\mid$ user $i$ is active})
              &=\frac{Pr(\text{user $i$ success and user $i$ is active})}{Pr(\text{user $i$ is active})}
        \end{aligned}
      \end{equation*}
      where
      \begin{equation*}
        Pr(\text{user $i$ success and user $i$ is active})=p_i\sum_{\substack{S_i=1 \\ j\neq i \\ S_j=0,2}}P(S_1,...,S_K)\prod_{j\neq i} (\lambda_j\overline{p}_j+\overline{\lambda}_j)^{\delta_{S_j=0}}(\overline{p}_j)^{\delta_{S_j=2}}
      \end{equation*}
      the difference here is due to the fact that no more than one user may be active.
      \begin{equation*}
        Pr(\text{user $i$ is active})=\sum_{S_i=1}P(S_1,..,S_i,..,S_K)
      \end{equation*}

      \item For Idle user $i$:
      \begin{equation*}
        \begin{aligned}
        P_I(i)&=Pr(\text{user $i$ success $\mid$ user $i$ is idle})
              &=\frac{Pr(\text{user $i$ success and user $i$ is idle})}{Pr(\text{user $i$ is idle})}
        \end{aligned}
      \end{equation*}
      where
      \begin{equation*}
        Pr(\text{user i success and user $i$ is idle})=\lambda_ip_i\sum_{\substack{S_i=0 \\ j\neq i \\ S_j=0,1,2}}P(S_1,...,S_K)\prod_{j\neq i} (\lambda_j\overline{p}_j+\overline{\lambda}_j)^{\delta_{S_j=0}}(\overline{p}_j)^{\delta_{S_j=1}} (\overline{p}_j)^{\delta_{S_j=2}}
      \end{equation*}
      and
      \begin{equation*}
        Pr(\text{user $i$ is idle})=\sum_{S_i=0}P(S_1,..,S_i,..,S_K)
      \end{equation*}
    \end{enumerate}
    Using the above success probability it is not difficult to attain the expression of \eqref{equ-probability for blocked and empty}, \eqref{equ-probability for idle and empty},\eqref{equ-probability for active and not empty},\eqref{equ-probability to be blocked},\eqref{equ-probability to be unblocked} from the steady state of the queue-length Markov chain. After knowing the these values the boundary conditions can be calculated:
    \begin{equation*}
      \begin{aligned}
        P_i(1\mid 1)=&1-\frac{\pi(1,1)}{G^i_1(1)-\pi(1,0)} \\
        P_i(0\mid 2)=&\frac{\pi(0,0)}{G^i_0(1)}
      \end{aligned}
    \end{equation*}

\thispagestyle{empty}
\newpage \bibliographystyle{unsrt}
\bibliography{My_bib}

\begin{thebibliography}{10}

\bibitem{kim2005scheduling}
Jinsu Kim, Sungwoo Park, Jae~Hong Lee, Joonho Lee, and Hanwook Jung.
\newblock A scheduling algorithm combined with zero-forcing beamforming for a
  multiuser mimo wireless system.
\newblock In {\em IEEE Vehicular Technology Conference}, volume~1, pages
  211--215, 2005.

\bibitem{yoo2006optimality}
Taesang Yoo and Andrea Goldsmith.
\newblock On the optimality of multiantenna broadcast scheduling using
  zero-forcing beamforming.
\newblock {\em Selected Areas in Communications, IEEE Journal on},
  24(3):528--541, 2006.

\bibitem{ephremides1987delay}
Anthony Ephremides and Rong-Zhu Zhu.
\newblock Delay analysis of interacting queues with an approximate model.
\newblock {\em Communications, IEEE Transactions on}, 35(2):194--201, 1987.

\bibitem{telatar1999capacity}
Emre Telatar.
\newblock Capacity of multi-antenna gaussian channels.
\newblock {\em European transactions on telecommunications}, 10(6):585--595,
  1999.

\bibitem{goldsmith2003capacity}
Andrea Goldsmith, Syed~Ali Jafar, Nihar Jindal, and Sriram Vishwanath.
\newblock Capacity limits of mimo channels.
\newblock {\em Selected Areas in Communications, IEEE Journal on},
  21(5):684--702, 2003.

\bibitem{goldsmith2005wireless}
Andrea Goldsmith.
\newblock {\em Wireless communications}.
\newblock Cambridge university press, 2005.

\bibitem{foschini1998limits}
Gerard~J Foschini and Michael~J Gans.
\newblock On limits of wireless communications in a fading environment when
  using multiple antennas.
\newblock {\em Wireless personal communications}, 6(3):311--335, 1998.

\bibitem{knopp1995information}
Raymond Knopp and Pierre~A Humblet.
\newblock Information capacity and power control in single-cell multiuser
  communications.
\newblock In {\em Communications, 1995. ICC'95 Seattle,'Gateway to
  Globalization', 1995 IEEE International Conference on}, volume~1, pages
  331--335. IEEE, 1995.

\bibitem{bettesh1998low}
Ido Bettesh and Shlomo Shamai.
\newblock A low delay algorithm for the multiple access channel with rayleigh
  fading.
\newblock In {\em Personal, Indoor and Mobile Radio Communications, 1998. The
  Ninth IEEE International Symposium on}, volume~3, pages 1367--1372. IEEE,
  1998.

\bibitem{choi2010user}
Jinho Choi and Fumiyuki Adachi.
\newblock User selection criteria for multiuser systems with optimal and
  suboptimal lr based detectors.
\newblock {\em Signal Processing, IEEE Transactions on}, 58(10):5463--5468,
  2010.

\bibitem{weingarten2006capacity}
Hanan Weingarten, Yossef Steinberg, and Shlomo Shamai.
\newblock The capacity region of the gaussian multiple-input multiple-output
  broadcast channel.
\newblock {\em Information Theory, IEEE Transactions on}, 52(9):3936--3964,
  2006.

\bibitem{sharif2007comparison}
Masoud Sharif and Babak Hassibi.
\newblock A comparison of time-sharing, dpc, and beamforming for mimo broadcast
  channels with many users.
\newblock {\em Communications, IEEE Transactions on}, 55(1):11--15, 2007.

\bibitem{zakhour2011min}
Randa Zakhour and Stephen Hanly.
\newblock Min-max fair coordinated beamforming via large systems analysis.
\newblock In {\em Information Theory Proceedings (ISIT), 2011 IEEE
  International Symposium on}, pages 1990--1994. IEEE, 2011.

\bibitem{viswanath2002opportunistic}
Pramod Viswanath, David N.~C. Tse, and Rajiv Laroia.
\newblock Opportunistic beamforming using dumb antennas.
\newblock {\em Information Theory, IEEE Transactions on}, 48(6):1277--1294,
  2002.

\bibitem{caire2006mimo}
Giuseppe Caire.
\newblock Mimo downlink joint processing and scheduling: a survey of classical
  and recent results.
\newblock In {\em Proc. Workshop on Information Theory and Its Applications}.
  Citeseer, 2006.

\bibitem{pun2007opportunistic}
Man-On Pun, Visa Koivunen, and H~Vincent Poor.
\newblock Opportunistic scheduling and beamforming for mimo-sdma downlink
  systems with linear combining.
\newblock In {\em Personal, Indoor and Mobile Radio Communications, 2007. PIMRC
  2007. IEEE 18th International Symposium on}, pages 1--6. IEEE, 2007.

\bibitem{chen2006enhancing}
Chiung-Jang Chen and Li-Chun Wang.
\newblock Enhancing coverage and capacity for multiuser mimo systems by
  utilizing scheduling.
\newblock {\em Wireless Communications, IEEE Transactions on}, 5(5):1148--1157,
  2006.

\bibitem{airy2003spatially}
Manish Airy, S~Shakkattai, and Robert~W Heath~Jr.
\newblock Spatially greedy scheduling in multi-user mimo wireless systems.
\newblock In {\em Signals, Systems and Computers, 2003. Conference Record of
  the Thirty-Seventh Asilomar Conference on}, volume~1, pages 982--986. IEEE,
  2003.

\bibitem{choi2004downlink}
Young-June Choi, Jongtack Kim, and Saewoong Bahk.
\newblock Downlink scheduling with fairness and optimal antenna assignment for
  mimo cellular systems.
\newblock In {\em Global Telecommunications Conference, 2004. GLOBECOM'04.
  IEEE}, volume~5, pages 3165--3169. IEEE, 2004.

\bibitem{primolevo2005channel}
Giuseppe Primolevo, Osvaldo Simeone, and Umberto Spagnolini.
\newblock Channel aware scheduling for broadcast mimo systems with orthogonal
  linear precoding and fairness constraints.
\newblock In {\em Communications, 2005. ICC 2005. 2005 IEEE International
  Conference on}, volume~4, pages 2749--2753. IEEE, 2005.

\bibitem{yoo2006finite}
Taesang Yoo, Nihar Jindal, and Andrea Goldsmith.
\newblock Finite-rate feedback mimo broadcast channels with a large number of
  users.
\newblock In {\em Information Theory, 2006 IEEE International Symposium on},
  pages 1214--1218. IEEE, 2006.

\bibitem{hassibi2007fundamental}
Babak Hassibi and Masoud Sharif.
\newblock Fundamental limits in mimo broadcast channels.
\newblock {\em Selected Areas in Communications, IEEE Journal on},
  25(7):1333--1344, 2007.

\bibitem{song2004joint}
Guocong Song, Ye~Li, Leonard~J Cimini~Jr, and Haitao Zheng.
\newblock Joint channel-aware and queue-aware data scheduling in multiple
  shared wireless channels.
\newblock In {\em Wireless Communications and Networking Conference, 2004.
  WCNC. 2004 IEEE}, volume~3, pages 1939--1944. IEEE, 2004.

\bibitem{eryilmaz2005fair}
Atilla Eryilmaz and R~Srikant.
\newblock Fair resource allocation in wireless networks using
  queue-length-based scheduling and congestion control.
\newblock In {\em INFOCOM 2005. 24th Annual Joint Conference of the IEEE
  Computer and Communications Societies. Proceedings IEEE}, volume~3, pages
  1794--1803. IEEE, 2005.

\bibitem{swannack2004low}
Charles Swannack, Elif Uysal-Biyikoglu, and Gregory Wornell.
\newblock Low complexity multiuser scheduling for maximizing throughput in the
  mimo broadcast channel.
\newblock In {\em Proc. Allerton Conf. Communications, Control and Computing},
  2004.

\bibitem{wang2007coverage}
Li-Chun Wang, Cheng-Wei Chiu, Chu-Jung Yeh, and Chi-Fang Li.
\newblock Coverage enhancement for ofdm-based spatial multiplexing systems by
  scheduling.
\newblock In {\em Wireless Communications and Networking Conference, 2007. WCNC
  2007. IEEE}, pages 1439--1443. IEEE, 2007.

\bibitem{liu2003opportunistic}
Yonghe Liu and Edward Knightly.
\newblock Opportunistic fair scheduling over multiple wireless channels.
\newblock In {\em INFOCOM 2003. Twenty-Second Annual Joint Conference of the
  IEEE Computer and Communications. IEEE Societies}, volume~2, pages
  1106--1115. IEEE, 2003.

\bibitem{zhang2008low}
Yangyang Zhang, Chunlin Ji, Yi~Liu, W~Malik, D~O'Brien, and D~Edwards.
\newblock A low complexity user scheduling algorithm for uplink multiuser mimo
  systems.
\newblock {\em Wireless Communications, IEEE Transactions on}, 7(7):2486--2491,
  2008.

\bibitem{bandemer2007capacity}
Bernd Bandemer and Samuli Visuri.
\newblock Capacity-based uplink scheduling using long-term channel knowledge.
\newblock In {\em Communications, 2007. ICC'07. IEEE International Conference
  on}, pages 785--790. IEEE, 2007.

\bibitem{lee2009novel}
Byong Lee, Hui Je, Oh-Soon Shin, and Kwang Lee.
\newblock A novel uplink mimo transmission scheme in a multicell environment.
\newblock {\em Wireless Communications, IEEE Transactions on},
  8(10):4981--4987, 2009.

\bibitem{qin2003exploiting}
Xiangping Qin and Randall Berry.
\newblock Exploiting multiuser diversity for medium access control in wireless
  networks.
\newblock In {\em INFOCOM 2003. Twenty-Second Annual Joint Conference of the
  IEEE Computer and Communications. IEEE Societies}, volume~2, pages
  1084--1094. IEEE, 2003.

\bibitem{qin2006distributed}
Xiangping Qin and Randall~A Berry.
\newblock Distributed approaches for exploiting multiuser diversity in wireless
  networks.
\newblock {\em Information Theory, IEEE Transactions on}, 52(2):392--413, 2006.

\bibitem{bai2006opportunistic}
Kai Bai and Junshan Zhang.
\newblock Opportunistic multichannel aloha: distributed multiaccess control
  scheme for ofdma wireless networks.
\newblock {\em Vehicular Technology, IEEE Transactions on}, 55(3):848--855,
  2006.

\bibitem{qin2008distributed}
Xiangping Qin and Randall~A Berry.
\newblock Distributed power allocation and scheduling for parallel channel
  wireless networks.
\newblock {\em Wireless Networks}, 14(5):601--613, 2008.

\bibitem{qin2004opportunistic}
Xiangping Qin and Randall Berry.
\newblock Opportunistic splitting algorithms for wireless networks.
\newblock In {\em INFOCOM 2004. Twenty-third AnnualJoint Conference of the IEEE
  Computer and Communications Societies}, volume~3, pages 1662--1672. IEEE,
  2004.

\bibitem{to2010exploiting}
Toan To and Jinho Choi.
\newblock On exploiting idle channels in opportunistic multichannel aloha.
\newblock {\em Communications Letters, IEEE}, 14(1):51--53, 2010.

\bibitem{garcia2012distributed}
Andres Garcia-Saavedra, Albert Banchs, Pablo Serrano, and Joerg Widmer.
\newblock Distributed opportunistic scheduling: A control theoretic approach.
\newblock In {\em INFOCOM, 2012 Proceedings IEEE}, pages 540--548. IEEE, 2012.

\bibitem{kampeas2012capacity}
Joseph Kampeas, Asaf Cohen, and Omer Gurewitz.
\newblock Capacity of distributed opportunistic scheduling in heterogeneous
  networks.
\newblock In {\em Communication, Control, and Computing (Allerton), 2012 50th
  Annual Allerton Conference on}, pages 1246--1253. IEEE, 2012.

\bibitem{choi2008capacity}
Wan Choi and Jeffrey~G Andrews.
\newblock The capacity gain from intercell scheduling in multi-antenna systems.
\newblock {\em Wireless Communications, IEEE Transactions on}, 7(2):714--725,
  2008.

\bibitem{xiao2010extreme}
Weiyao Xiao and David Starobinski.
\newblock Extreme value fec for reliable broadcasting in wireless networks.
\newblock {\em Selected Areas in Communications, IEEE Journal on},
  28(7):1180--1189, 2010.

\bibitem{xiao2012reliable}
Weiyao Xiao, Sachin Agarwal, David Starobinski, and Ari Trachtenberg.
\newblock Reliable rateless wireless broadcasting with near-zero feedback.
\newblock {\em Networking, IEEE/ACM Transactions on}, 20(6):1924--1937, 2012.

\bibitem{gilbert1960}
Edgar~N Gilbert et~al.
\newblock Capacity of a burst-noise channel.
\newblock {\em Bell Syst. Tech. J}, 39(9):1253--1265, 1960.

\bibitem{elliott1963}
EO~Elliott.
\newblock Estimates of error rates for codes on burst-noise channels.
\newblock {\em Bell Syst. Tech. J}, 42(9):1977--1997, 1963.

\bibitem{zhang1999finite}
Qinqing Zhang and Saleem~A Kassam.
\newblock Finite-state markov model for rayleigh fading channels.
\newblock {\em Communications, IEEE Transactions on}, 47(11):1688--1692, 1999.

\bibitem{biglieri1998fading}
Ezio Biglieri, John Proakis, and Shlomo Shamai.
\newblock Fading channels: Information-theoretic and communications aspects.
\newblock {\em Information Theory, IEEE Transactions on}, 44(6):2619--2692,
  1998.

\bibitem{tse1998multiaccess}
David N.~C. Tse and Stephen~V Hanly.
\newblock Multiaccess fading channels. i. polymatroid structure, optimal
  resource allocation and throughput capacities.
\newblock {\em Information Theory, IEEE Transactions on}, 44(7):2796--2815,
  1998.

\bibitem{hanly1998multiaccess}
Stephen~V Hanly and David N.~C. Tse.
\newblock Multiaccess fading channels. ii. delay-limited capacities.
\newblock {\em Information Theory, IEEE Transactions on}, 44(7):2816--2831,
  1998.

\bibitem{gnedenko1943distribution}
Boris Gnedenko.
\newblock Sur la distribution limite du terme maximum d'une serie aleatoire.
\newblock {\em Annals of mathematics}, pages 423--453, 1943.

\bibitem{janssen1969processus}
Jacques Janssen.
\newblock Les processus (jx).
\newblock {\em Cahiers du CERO}, 11(1969):181--214, 1969.

\bibitem{resnick1970limit}
Sidney~I Resnick and Marcel~F Neuts.
\newblock Limit laws for maxima of a sequence of random variables defined on a
  markov chain.
\newblock {\em Advances in Applied Probability}, pages 323--343, 1970.

\bibitem{resnick1973almost}
Sidney~I Resnick and RJ~Tomkins.
\newblock Almost sure stability of maxima.
\newblock {\em Journal of Applied Probability}, pages 387--401, 1973.

\bibitem{resnick1972stability}
Sidney~I Resnick.
\newblock Stability of maxima of random variables defined on a markov chain.
\newblock {\em Advances in Applied Probability}, pages 285--295, 1972.

\bibitem{denzel1975limit}
GE~Denzel and GL~O'Brien.
\newblock Limit theorems for extreme values of chain-dependent processes.
\newblock {\em The Annals of Probability}, pages 773--779, 1975.

\bibitem{turkman1992limit}
KF~Turkman and MF~Oliveira.
\newblock Limit laws for the maxima of chain-dependent sequences with positive
  extremal index.
\newblock {\em Journal of applied probability}, pages 222--227, 1992.

\bibitem{turkman1983limit}
KF~Turkman and AM~Walker.
\newblock Limit laws for the maxima of a class of quasi-stationary sequences.
\newblock {\em Journal of applied probability}, pages 814--821, 1983.

\bibitem{leadbetter1974extreme}
MR~Leadbetter.
\newblock On extreme values in stationary sequences.
\newblock {\em Zeitschrift f{\"u}r Wahrscheinlichkeitstheorie und verwandte
  Gebiete}, 28(4):289--303, 1974.

\bibitem{mccormick2001rates}
William~P McCormick and Lynne Seymour.
\newblock Rates of convergence and approximations to the distribution of the
  maximum of chain-dependent sequences.
\newblock {\em Extremes}, 4(1):23--52, 2001.

\bibitem{fayolle1979two}
Guy Fayolle and Roudolf Iasnogorodski.
\newblock Two coupled processors: the reduction to a riemann-hilbert problem.
\newblock {\em Zeitschrift f{\"u}r Wahrscheinlichkeitstheorie und verwandte
  Gebiete}, 47(3):325--351, 1979.

\bibitem{kleinrock1981interfering}
Leonard Kleinrock and Yechiam Yemini.
\newblock Interfering queuing-processes in packet-switched broadcast
  communication.
\newblock In {\em COMPUTER NETWORKS AND ISDN SYSTEMS}, volume~5, pages 58--58.
  ELSEVIER SCIENCE BV PO BOX 211, 1000 AE AMSTERDAM, NETHERLANDS, 1981.

\bibitem{rao1988stability}
Ramesh~R Rao and Anthony Ephremides.
\newblock On the stability of interacting queues in a multiple-access system.
\newblock {\em Information Theory, IEEE Transactions on}, 34(5):918--930, 1988.

\bibitem{tsybakov1979ergodicity}
Boris~Solomonovich Tsybakov and Viktor~Alexandrovich Mikhailov.
\newblock Ergodicity of a slotted aloha system.
\newblock {\em Problemy Peredachi Informatsii}, 15(4):73--87, 1979.

\bibitem{szpankowski1994stability}
Wojciech Szpankowski.
\newblock Stability conditions for some distributed systems: Buffered random
  access systems.
\newblock {\em Advances in Applied Probability}, pages 498--515, 1994.

\bibitem{fayolle1977stability}
Guy Fayolle, Erol Gelenbe, and Jacques Labetoulle.
\newblock Stability and optimal control of the packet switching broadcast
  channel.
\newblock {\em Journal of the ACM (JACM)}, 24(3):375--386, 1977.

\bibitem{luo1999stability}
Wei Luo and Anthony Ephremides.
\newblock Stability of n interacting queues in random-access systems.
\newblock {\em Information Theory, IEEE Transactions on}, 45(5):1579--1587,
  1999.

\bibitem{saadawi1981analysis}
TAREK~N Saadawi and Anthony Ephremides.
\newblock Analysis, stability, and optimization of slotted aloha with a finite
  number of buffered users.
\newblock {\em Automatic Control, IEEE Transactions on}, 26(3):680--689, 1981.

\bibitem{sidi1983two}
Moshe Sidi and A~Segall.
\newblock Two interfering queues in packet-radio networks.
\newblock {\em Communications, IEEE Transactions on}, 31(1):123--129, 1983.

\bibitem{takagi1985mean}
Hideaki Takagi and Leonard Kleinrock.
\newblock Mean packet queueing delay in a buffered two-user csma/cd system.
\newblock {\em IEEE transactions on communications}, 33:1136--1139, 1985.

\bibitem{Cover:1991}
{J}oy {A}.~{T}homas {T}homas {M}.~{C}over.
\newblock {\em {E}lements of {I}nformation {T}heory}.
\newblock {J}ohn {W}iley \& sons, 2nd edition, 1991.

\bibitem{girko1998refinement}
Vyacheslav~L Girko.
\newblock A refinement of the central limit theorem for random determinants.
\newblock {\em Theory of Probability \& Its Applications}, 42(1):121--129,
  1998.

\bibitem{smith2002gaussian}
Peter~J Smith and Mansoor Shafi.
\newblock On a gaussian approximation to the capacity of wireless mimo systems.
\newblock In {\em Communications, 2002. ICC 2002. IEEE International Conference
  on}, volume~1, pages 406--410. IEEE, 2002.

\bibitem{chiani2003capacity}
Marco Chiani, Moe~Z Win, and Alberto Zanella.
\newblock On the capacity of spatially correlated mimo rayleigh-fading
  channels.
\newblock {\em Information Theory, IEEE Transactions on}, 49(10):2363--2371,
  2003.

\bibitem{EVT:Springer1983}
{M}.~{R} Leadbetter.
\newblock {\em {E}xtreme and related properties of random sequences and
  processes}.
\newblock Springer-verlag, 1983.

\bibitem{EVT:Springer2001}
{S}.Coles.
\newblock {\em An introduction to statistical modeling of extreme values}.
\newblock Springer-verlag, 2001.

\bibitem{leadbetter1983extremes}
MR~Leadbetter.
\newblock Extremes and local dependence in stationary sequences.
\newblock {\em Probability Theory and Related Fields}, 65(2):291--306, 1983.

\bibitem{Kampeas2012scheduling}
{O}.{G}urewitz {J}.~{K}ampeas, {A}.~{C}ohen.
\newblock {O}pportunistic {S}cheduling in {H}eterogeneous {N}etworks:
  {D}istributed {A}lgorithm and {S}ystem {C}apacity.
\newblock 2012.

\bibitem{leadbetter1976weak}
MR~Leadbetter.
\newblock Weak convergence of high level exceedances by a stationary sequence.
\newblock {\em Probability Theory and Related Fields}, 34(1):11--15, 1976.

\bibitem{kampeas2014accepted}
Joseph Kampeas, Asaf Cohen, and Omer Gurewitz.
\newblock Capacity of distributed opportunistic scheduling in non-homogeneous
  networks.
\newblock {\em Accepted to IEEE Transactions on Information Theory (arXiv
  preprint arXiv:1202.0119)}, 2014.

\bibitem{o1974limit}
GL~O'Brien.
\newblock Limit theorems for sums of chain-dependent processes.
\newblock {\em Journal of applied probability}, pages 582--587, 1974.

\bibitem{bianchi2000performance}
Giuseppe Bianchi.
\newblock Performance analysis of the ieee 802.11 distributed coordination
  function.
\newblock {\em Selected Areas in Communications, IEEE Journal on},
  18(3):535--547, 2000.

\bibitem{yechiali1971queuing}
Ury Yechiali and Pinhas Naor.
\newblock Queuing problems with heterogeneous arrivals and service.
\newblock {\em Operations Research}, 19(3):722--734, 1971.

\bibitem{abramowitz2012handbook}
Milton Abramowitz and Irene~A Stegun.
\newblock {\em Handbook of mathematical functions: with formulas, graphs, and
  mathematical tables}.
\newblock Courier Dover Publications, 2012.

\bibitem{mladenovic1999extreme}
Pavle Mladenovi{\'c}.
\newblock Extreme values of the sequences of independent random variables with
  mixed distributions.
\newblock {\em Matemati{\v{c}}ki Vesnik}, (212):29--37, 1999.

\bibitem{Weaver1985Centrosymmetric}
James~R. Weaver.
\newblock Centrosymmetric (cross-symmetric) matrices, their basic properties,
  eigenvalues, and eigenvectors.
\newblock {\em The American Mathematical Monthly}, 92(10):pp. 711--717, 1985.

\bibitem{serfozo2009basics}
R.~Serfozo.
\newblock {\em Basics of Applied Stochastic Processes}.
\newblock Applied probability. Springer, 2009.

\bibitem{keilson1971some}
Julian Keilson and Hans Gerber.
\newblock Some results for discrete unimodality.
\newblock {\em Journal of the American Statistical Association},
  66(334):386--389, 1971.

\end{thebibliography}

\end{document}